\definecolor{lightgray}{gray}{0.9}  %
\theoremstyle{plain}
	\newtheorem{theorem}{Theorem}[section]
	\newtheorem{lemma}[theorem]{Lemma} 
	\newtheorem{corollary}[theorem]{Corollary} 
	\newtheorem{proposition}[theorem]{Proposition}
\theoremstyle{definition} 
	\newtheorem{definition}[theorem]{Definition}
  \newtheorem{assumption}{Assumption}[section]
\def\ddefloop#1{\ifx\ddefloop#1\else\ddef{#1}\expandafter\ddefloop\fi}
\def\ddef#1{\expandafter\def\csname bb#1\endcsname{\ensuremath{\mathbb{#1}}}}
\def\ddefloop#1{\ifx\ddefloop#1\else\ddef{#1}\expandafter\ddefloop\fi}
\def\ddef#1{\expandafter\def\csname frak#1\endcsname{\ensuremath{\mathfrak{#1}}}}
\def\ddefloop#1{\ifx\ddefloop#1\else\ddef{#1}\expandafter\ddefloop\fi}
\def\ddef#1{\expandafter\def\csname fr#1\endcsname{\ensuremath{\mathfrak{#1}}}}
\def\ddefloop#1{\ifx\ddefloop#1\else\ddef{#1}\expandafter\ddefloop\fi}
\def\ddef#1{\expandafter\def\csname eul#1\endcsname{\ensuremath{\EuScript{#1}}}}
\def\ddefloop#1{\ifx\ddefloop#1\else\ddef{#1}\expandafter\ddefloop\fi}
\def\ddef#1{\expandafter\def\csname scr#1\endcsname{\ensuremath{\mathscr{#1}}}}
\def\ddefloop#1{\ifx\ddefloop#1\else\ddef{#1}\expandafter\ddefloop\fi}
\def\ddef#1{\expandafter\def\csname b#1\endcsname{\ensuremath{\mathbf{#1}}}}
\def\ddefloop#1{\ifx\ddefloop#1\else\ddef{#1}\expandafter\ddefloop\fi}
\def\ddef#1{\expandafter\def\csname bhat#1\endcsname{\ensuremath{\hat{\mathbf{#1}}}}}
\def\ddefloop#1{\ifx\ddefloop#1\else\ddef{#1}\expandafter\ddefloop\fi}
\def\ddef#1{\expandafter\def\csname btil#1\endcsname{\ensuremath{\tilde{\mathbf{#1}}}}}
\def\ddefloop#1{\ifx\ddefloop#1\else\ddef{#1}\expandafter\ddefloop\fi}
\def\ddef#1{\expandafter\def\csname bst#1\endcsname{\ensuremath{\mathbf{#1}^\star}}}
\def\ddefloop#1{\ifx\ddefloop#1\else\ddef{#1}\expandafter\ddefloop\fi}
\def\ddef#1{\expandafter\def\csname bst#1\endcsname{\ensuremath{\mathbf{#1}^\star}}}
\def\ddefloop#1{\ifx\ddefloop#1\else\ddef{#1}\expandafter\ddefloop\fi}
\def\ddef#1{\expandafter\def\csname bhat#1\endcsname{\ensuremath{\hat{\mathbf{#1}}}}}
\def\ddefloop#1{\ifx\ddefloop#1\else\ddef{#1}\expandafter\ddefloop\fi}
\def\ddef#1{\expandafter\def\csname b#1\endcsname{\ensuremath{\mathbf{#1}}}}
\def\ddefloop#1{\ifx\ddefloop#1\else\ddef{#1}\expandafter\ddefloop\fi}
\def\ddef#1{\expandafter\def\csname barb#1\endcsname{\ensuremath{\bar{\mathbf{#1}}}}}
\def\ddef#1{\expandafter\def\csname c#1\endcsname{\ensuremath{\mathcal{#1}}}}
\def\ddef#1{\expandafter\def\csname h#1\endcsname{\ensuremath{\widehat{#1}}}}
\def\ddef#1{\expandafter\def\csname hc#1\endcsname{\ensuremath{\widehat{\mathcal{#1}}}}}
\def\ddef#1{\expandafter\def\csname t#1\endcsname{\ensuremath{\widetilde{#1}}}}
\def\ddef#1{\expandafter\def\csname tc#1\endcsname{\ensuremath{\widetilde{\mathcal{#1}}}}}
\renewcommand{\phi}{\varphi}
\newcommand{\dif}{\mathrm{d}} 
\newcommand{\norm}[1]{\left\|#1\right\|}
\newcommand{\inprod}[2]{\left\langle #1, #2 \right\rangle}
\newcommand{\rr}{\mathbb{R}}
\newcommand{\En}{\mathbb{E}}
\newcommand{\pp}{\mathbb{P}}
\newcommand{\algfont}[1]{\mathsf{#1}}
\newcommand{\algcomment}[1]{{\small \hfill \textcolor{blue}{$\#$~#1}}}
\newcommand{\hypothesis}{\bH}
\newcommand{\hnull}{\hypothesis_{\mathbf{0}}}
\newcommand{\halt}{\hypothesis_{\mathbf{A}}}
\newcommand{\prob}{p}
\newcommand{\probstar}{\prob^\star}
\newcommand{\Dist}{\Delta}
\newcommand{\gaussmark}{\algfont{GaussMark}}
\newcommand{\gaussmarkg}{\gaussmark.\algfont{Generate}}
\newcommand{\gaussmarkd}{\gaussmark.\algfont{Detect}}
\newcommand{\eye}{\mathbf{I}}
\newcommand{\kld}{\mathrm{D}_{\mathrm{KL}}}
\newcommand{\llama}{\algfont{Llama}\algfont{3.1}\textsf{--}\algfont{8B}}
\newcommand{\mistral}{\algfont{Mistral}\textsf{--}\algfont{7B}}
\newcommand{\phimini}{\algfont{Phi3.5\textsf{--}Mini}}
\newcommand{\superglue}{\algfont{SuperGLUE}}
\newcommand{\gsmk}{\algfont{GSM\textsf{--}8K}}
\newcommand{\alpaca}{\algfont{AlpacaEval\textsf{--}2.0}}
\newcommand{\indicator}{\bbI}  
\newcommand{\CDF}{{\normalfont CDF~}} 
\newcommand{\ystar}{y^\star}
\DeclareMathOperator{\argmax}{argmax}
\newcommand{\Detect}{\textsf{Detect}} 
\newcommand{\Generate}{\textsf{Generate}} 
\newcommand{\downproj}{\textsf{down}\_\textsf{proj}}
\newcommand{\upproj}{\textsf{up}\_\textsf{proj}}
\newcommand{\gateproj}{\textsf{gate}\_\textsf{proj}}
\newcommand{\gateupproj}{\textsf{gate}\_\textsf{up}\_\textsf{proj}} 
\newcommand{\kgwone}{\algfont{KGW}\textsf{-}1}
\newcommand{\kgwtwo}{\algfont{KGW}\textsf{-}2}
\newcommand{\kgw}{\algfont{KGW}}
\newcommand{\gen}{\algfont{Generations}}
\newcommand{\sep}{\mid{}}
\definecolor{mistralcol}{rgb}{0.19, 0.55, 0.91}
\definecolor{llamacol}{rgb}{0.93, 0.53, 0.18}
\definecolor{phicol}{rgb}{0.31, 0.78, 0.47}
\newcommand{\promptbox}[1]{
\begin{tcolorbox}[colback=gray!10, colframe=black, rounded corners, boxrule=0.6mm, 
                  title={}, fonttitle=\bfseries]
\textbf{\underline{Input Prompt}}: \vspace{2mm} \newline \texttt{#1}
\end{tcolorbox}
}
\newcommand{\mistralbox}[2]{
\begin{tcolorbox}[title=$\mistral~\gen$, breakable, enhanced, colframe=black, colback=mistralcol!10!white, coltitle=black, colbacktitle=mistralcol!50!white, boxrule=0.4mm, fonttitle=\bfseries]
{\textbf{\underline{Base Model}:} #1}
\tcblower 
{\textbf{\underline{Watermarked Model}:} #2}
\end{tcolorbox}
}
\newcommand{\llamabox}[2]{
\begin{tcolorbox}[title=$\llama~\gen$, breakable, enhanced, colframe=black, breakable, colback=llamacol!10!white,  boxrule=0.4mm, coltitle=black, colbacktitle=llamacol!50!white, fonttitle=\bfseries]
{\textbf{\underline{Base Model}:} #1}
\tcblower 
{\textbf{\underline{Watermarked Model}:} #2}
\end{tcolorbox}
}
\newcommand{\phibox}[2]{
\begin{tcolorbox}[title= $\phimini~\gen$, breakable, enhanced,   colframe=black,  boxrule=0.4mm, colback=phicol!10!white, coltitle=black, colbacktitle=phicol!50!white, fonttitle=\bfseries, enlarge bottom by=2mm] 
{\textbf{\underline{Base Model}:} #1}
\tcblower 
{\textbf{\underline{Watermarked Model}:} #2}
\end{tcolorbox}
}
\newcommand{\test}{\rho}
\let\vec\undefined
\theoremstyle{plain}
\theoremstyle{definition}  %
\xpatchcmd{\proof}{\itshape}{\normalfont\proofnameformat}{}{}
\newcommand{\proofnameformat}{\bfseries}
\newcommand{\pref}[1]{\prettyref{#1}}
\newcommand{\savehyperref}[2]{\texorpdfstring{\hyperref[#1]{#2}}{#2}}
\newcommand\numberthis{\addtocounter{equation}{1}\tag{\theequation}}
 \DeclarePairedDelimiter{\abs}{\lvert}{\rvert} 
\DeclarePairedDelimiter{\brk}{[}{]}
\DeclarePairedDelimiter{\crl}{\{}{\}}
\DeclarePairedDelimiter{\prn}{(}{)}
\DeclarePairedDelimiter{\nrm}{\|}{\|}
\DeclarePairedDelimiter{\tri}{\langle}{\rangle}
\let\Pr\undefined
\DeclareMathOperator{\ee}{\mathbb{E}}
\DeclareMathOperator{\Pr}{Pr}
\newcommand{\ls}{\ell}
\newcommand{\ldef}{\vcentcolon=}
\newcommand{\wt}[1]{\widetilde{#1}}
\newcommand{\wh}[1]{\widehat{#1}}
\def\ddefloop#1{\ifx\ddefloop#1\else\ddef{#1}\expandafter\ddefloop\fi}
\def\ddef#1{\expandafter\def\csname bb#1\endcsname{\ensuremath{\mathbb{#1}}}}
\def\ddefloop#1{\ifx\ddefloop#1\else\ddef{#1}\expandafter\ddefloop\fi}
\def\ddef#1{\expandafter\def\csname b#1\endcsname{\ensuremath{\mathbf{#1}}}}
\def\ddef#1{\expandafter\def\csname c#1\endcsname{\ensuremath{\mathcal{#1}}}}
\def\ddef#1{\expandafter\def\csname h#1\endcsname{\ensuremath{\widehat{#1}}}}
\def\ddef#1{\expandafter\def\csname hc#1\endcsname{\ensuremath{\widehat{\mathcal{#1}}}}}
\def\ddef#1{\expandafter\def\csname t#1\endcsname{\ensuremath{\widetilde{#1}}}}
\def\ddef#1{\expandafter\def\csname tc#1\endcsname{\ensuremath{\widetilde{\mathcal{#1}}}}}
\newcommand{\grad}{\nabla}
\title{GaussMark: A Practical Approach for \\ Structural Watermarking of Language Models} 
\author[1]{Adam Block}
\author[2]{Alexander Rakhlin}
\author[2]{Ayush Sekhari}
\affil[1]{Microsoft Research}
\affil[2]{MIT} 
\date{}
\begin{document}

\maketitle 

\begin{abstract}
    Recent advances in Large Language Models (LLMs) have led to significant improvements in natural language processing tasks, but their ability to generate human-quality text raises significant ethical and operational concerns in settings where it is important to recognize whether or not a given text was generated by a human.  Thus, recent work has focused on developing techniques for \emph{watermarking} LLM-generated text, i.e., introducing an almost imperceptible signal that allows a provider equipped with a secret key to determine if given text was generated by their model.  Current watermarking techniques are often not practical due to concerns with generation latency, detection time, degradation in text quality, or robustness.  Many of these drawbacks come from the focus on \emph{token-level} watermarking, which ignores the inherent structure of text.  In this work, we introduce a new scheme, $\gaussmark$, that is simple and efficient to implement, has formal statistical guarantees on its efficacy, comes at no cost in generation latency, and embeds the watermark into the weights of the model itself, providing a \emph{structural} watermark.  Our approach is based on Gaussian independence testing and is motivated by recent empirical observations that minor additive corruptions to LLM weights can result in models of identical (or even improved) quality.  We show that by adding a small amount of Gaussian noise to the weights of a given LLM, we can watermark the model in a way that is statistically detectable by a provider who retains the secret key. We provide formal statistical bounds on the validity and power of our procedure.  Through an extensive suite of experiments, we demonstrate that $\gaussmark$ is reliable, efficient, and relatively robust to corruptions such as insertions, deletions, substitutions, and roundtrip translations and can be instantiated with essentially no loss in model quality.

\end{abstract}

\begingroup
\renewcommand\thefootnote{} %
\footnotetext{Emails: $\texttt{blockadam@microsoft.com}$, $\texttt{\{sekhari, rakhlin\}@mit.edu}$} 
\endgroup

 \section{Introduction}\label{sec:intro}

\begin{figure}[ht] 
	\centering
	\subfigure[Median p-values of $\gaussmark$.]{ 
		\includegraphics[width=0.41\textwidth]{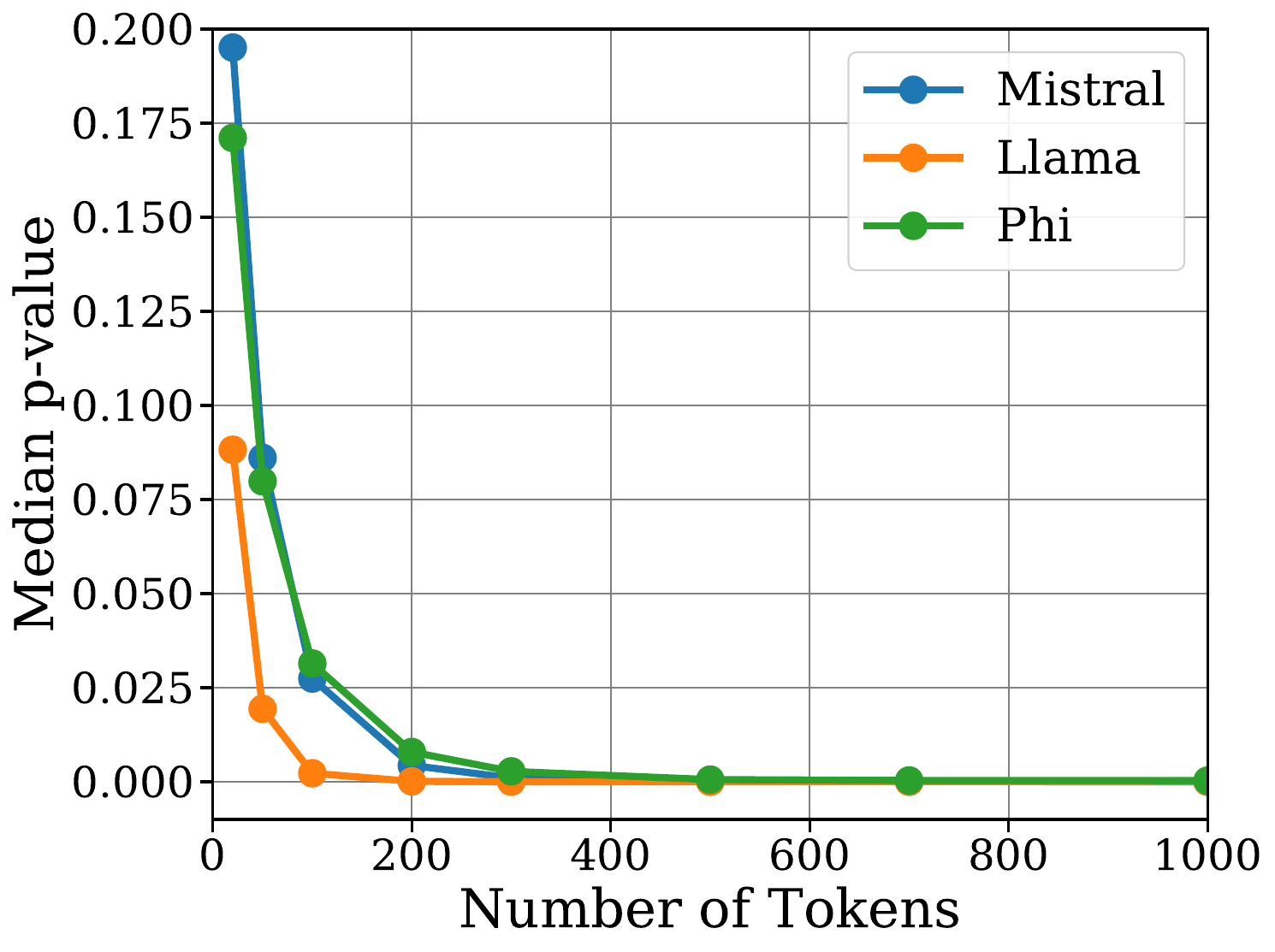}
		\label{fig1:sub1} 
	} \hfill \subfigure[Example watermarked and un-watermarked text.]{
\includegraphics[width=0.53\textwidth, trim={1.5cm 15.6cm 1.5cm 1.5cm}, clip]{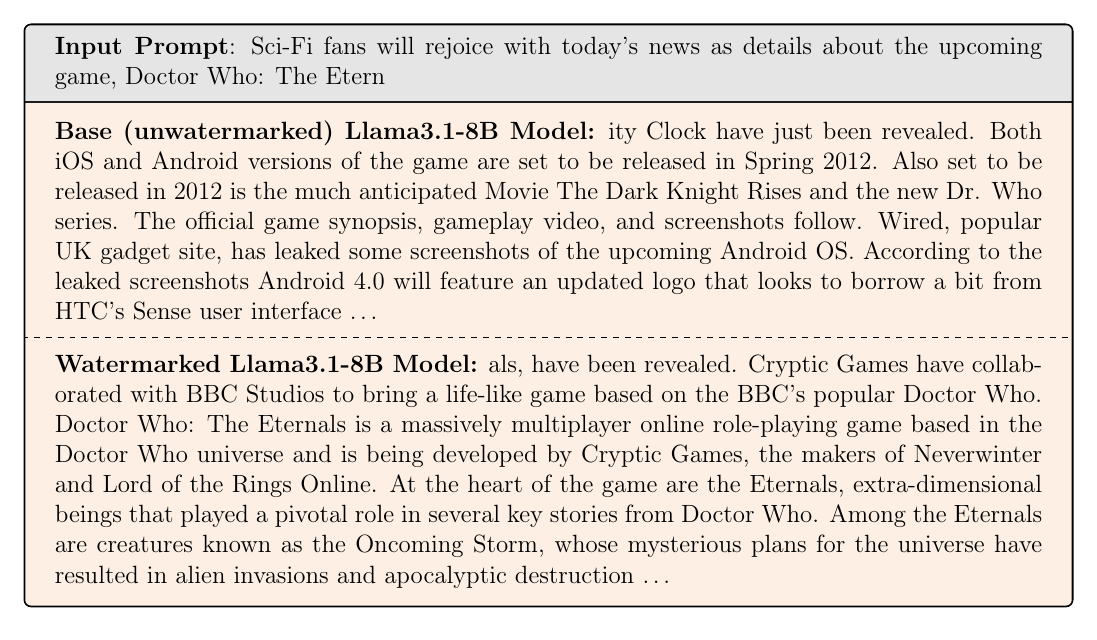} 
		\label{fig1:sub2}
	}
	\caption{Demonstration of the efficacy of \textsf{GaussianMark} on C4 prompts.  (a) The median p-value of our detection procedure on 1K watermarked generations for different numbers of generated tokens averaged across 3 seeds.  (b) Examples of watermarked text generated via  $\gaussmarkg$ on Llama3.1-8B model. The full text completions are given in \Cref{ssec:main_example}}  
	\label{fig:intro} 
	\end{figure}

\noindent  
Recent advancements in Language Models (LMs) have significantly transformed the field of Natural Language Processing, offering unprecedented capabilities in generating high-quality artificial text \citep{achiam2023gpt, schulman2022chatgpt}.  While these models have democratized access to information and become essential tools for applications ranging from automated communication to solving mathematical and programming tasks  \citep{heyueya2023solvingmathwordproblems, ahn2024large, jiang2024survey, imani2023mathprompter, wang2023mathcoder}, their rapid evolution and widespread deployment also pose significant risks to social, political, and economic institutions \citep{mirsky2023threat}. The ability of LLMs to produce human-like text has raised concerns about potential misuse, including academic plagiarism \citep{kasneci2023chatgpt}, large-scale disinformation campaigns on social media \citep{islam2020deep}, and the spread of tailored misinformation that could undermine democratic processes. 

To address these issues, a substantial body of research has focused on \emph{watermarking} LMs, i.e. embedding a signal into generated text that the provider can use to determine whether or not a given example was machine-generated. In order for watermarking to be practical, it must possess a few key properties: (a) watermarks should be easily detectable with formal statistical guarantees; (b) watermarking should not degrade LM performance both in quality of generated text and latency of generation; (c) watermarks should remain detectable after common token- and sequence-level corruptions are applied.  Among these desiderata, formal statistical guarantees on the validity of the watermark are particularly important in situations where authorship claims are sensitive, with a recent example being a lawsuit against a Massachusetts school by a student's parents after the school wrongly accused a student of cheating with ChatGPT \citep{tenbarge2024}; while such claims do not always rise to the level of litigation, examples of students falsely accused of cheating due to unreliable detection abound \citep{jimenez2023professors, klee2023she, giray2024problem, mathewson2023ai, verma2023chatgpt}.  

Several prior works, discussed in detail in \Cref{sec:related_work}, achieve rigorous statistical guarantees for watermarking by embedding statistical patterns into the autoregressive next-token sampling process of language models \citep{kuditipudi2023robust, christ2024undetectable, golowich2024edit}. These methods emphasize indistinguishability or distortion-freeness, requiring the watermarked text to closely resemble unwatermarked text in Total Variation (TV) distance. However, this approach often overlooks the inherent structure of language, treating text as merely a sequence of tokens generated by an unstructured autoregressive process. While this focus on distortion-freeness helps safeguard text quality, it may come at the cost of other crucial properties, such as robustness to corruption and efficiency of watermarking. In practice, such stringent distortion-freeness is not always necessary, as the primary goal is to generate text that appears indistinguishable from unwatermarked text to human evaluators. Recent empirical evidence suggests that humans have limited ability to differentiate between high-quality generated text and subtle variations  \citep{dugan2023real, clark2021all, ippolito2019automatic, elangovan2024considers, lee2022evaluating}, which suggests that instead of enforcing strict closeness under TV distance, weaker metrics like Maximum Mean Discrepancy (MMD) \citep{gretton2012kernel}, evaluated using the implicit criteria humans employ to assess text quality, may be sufficient. Furthermore, natural language exhibits intricate implicit structures that modern language models learn and encode within their weights. While some recent work has introduced distortionary watermarks \citep{kirchenbauer2023watermark,kirchenbauer2023reliability}, they tend to underperform empirically when assessed in terms of the quality of the generated text \citep{kuditipudi2023robust,dathathri2024scalable}. 

In this work, we {propose $\gaussmark$}, a novel approach {that takes advantage of the structure inherent to text by modifying} the model weights through a Gaussian perturbation at generation time and tests for statistical independence at detection time. This approach is motivated by empirical observations that language models are robust to {small} perturbations of their weights, as evidenced by phenomena such as linear interpolation through model merging and model fine-tuning by thresholding or modifying weights along low-rank components, which do not degrade output quality \citep{sharmatruth2024, gong2024makes, sun2023simple, hu2021lora}. Our method is simple to implement and efficient, providing both significant detection ability and essentially no loss in model performance. Theoretically,  we demonstrate that $\gaussmark$ always provides a statistically valid test and has reasonable statistical power under natural simplifying assumptions.  Through an extensive empirical suite, we show that $\gaussmark$ is very detectable, does not result in a loss of quality in the generated text, and is somewhat robust to both token-level and semantic corruptions.  We summarize our key contributions below:

\begin{itemize} 
\item In \Cref{sec:theory} we introduce $\gaussmark$, a novel semantic watermarking scheme that is both practical and efficient and has no effect on generation latency.  In \Cref{ssec:motivation}, we motivate $\gaussmark$ using classical statistical theory and present intuition as to why our approach is likely to be effective.
\item In \Cref{sec:theoretical_analysis}, we provide rigorous guarantees on the statistical validity of $\gaussmark$ under essentially no assumptions.  We also provide theoretical guarantees on the power of $\gaussmark$ to be detected under modeling assumptions that approximate modern language models.
\item In \Cref{sec:experiments}, we present a comprehensive empirical evaluation of $\gaussmark$ on a variety of modern LMs, demonstrating its detectability (\Cref{ssec:detectability}), lack of effect on model quality (\Cref{ssec:quality}), and robustness (\Cref{ssec:robustness}).  In the appendices (\Cref{app:experiment_details,app:detectability,app:modelquality,app:robustness,app:rankreduced}), we present a number of ablations demonstrating the effects that different parameter choices have on both detectability and model quality as well as many further empirical results.  We also include in \Cref{app:sample_text} examples of watermarked text generated by $\gaussmark$.
\item In \Cref{ssec:lowrank} we suggest a small modification to $\gaussmark$ that further reduces the watermark's impact on model quality while preserving detectability and extensively investigate the empirical effect of this intervention in \Cref{app:rankreduced}.
\item Finally, in \Cref{app:kgw}, we provide an empirical comparison of $\gaussmark$ to the scheme of \citet{kirchenbauer2023watermark,kirchenbauer2023reliability}, demonstrating that while $\gaussmark$ is somewhat less detectable and robust to corruptions, it has a much smaller impact on the quality of the generated text and allows for significantly faster generation. 
\end{itemize}

\section{Problem Setup and Prerequisites}\label{sec:prelims}
In this section, we formally define the watermarking problem within the framework of statistical hypothesis testing. We start by outlining the general principles of hypothesis testing and introducing the relevant notation. We then present the necessary definitions for language models. Finally, we characterize the problem of watermarking text generated by language models as a specific instance of hypothesis testing, which will serve as the foundation for the remainder of the paper. 
 
\subsection{Hypothesis Testing} 
We begin by recalling the formalism of hypothesis testing from classical statistics (see \citet{casella2024statistical, lehmann1986testing} for a thorough introduction to this topic). 

\begin{definition}\label{def:hypotheses}
    Given a domain $\cZ$, hypotheses $\hnull$ and $\halt$ are nonempty collection of  probability measures on $\cZ$, with $\hnull$ called the \emph{null hypothesis} and $\halt$ called the \emph{alternative hypothesis}.  If $\hnull$ (or $\halt$) is a singleton, we call it \emph{simple}; otherwise, we call it \emph{composite}.  A \emph{test} is any measurable function $\test: \cZ \to \left\{ 0,1 \right\}$. 
\end{definition} 
In hypothesis testing we assume there exists a ground-truth distribution $\prob \in \hnull \cup \halt$ and, given a sample $Z \sim \prob$, we wish to use the test $\test$ to inform us as to whether we can confidently say that $\prob \in \halt$ (i.e.~\(\test(Z) = 1\)) or if we, by default, do not have sufficient evidence to discount the hypothesis that $\prob \in \hnull$ (i.e.~\(\test(Z) = 0\)).  As such, there are two relevant notions of error: the probability that we \emph{falsely reject} the null hypothesis (Type-I error) and the chance that we \emph{falsely accept} the null hypothesis (Type-II error).  These errors are controlled by the \emph{level} and the \emph{power} of test $\test$. 

\begin{definition}\label{def:level_and_power} Let \(\alpha, \beta \in (0, 1)\). 
    Given hypotheses $\hnull$ and $\halt$, and a test $\test: \cZ \to \left\{ 0,1 \right\}$, we say that $\test$ has \emph{level} $\alpha$ if
    \begin{align}
        \sup_{\prob \in \hnull} \Pr_{Z \sim \prob}\left( \test(Z) = 1 \right) \leq \alpha,
    \end{align}
    i.e., for any distribution in $\hnull$, the probability we falsely classify a sample from that distribution as coming from a distribution in $\halt$ is at most $\alpha$.  We say that $\test$ has \emph{power} $1 - \beta$ if
    \begin{align}
        \sup_{\prob \in \halt} \Pr_{Z \sim \prob}\left( \test(Z) = 0 \right) \leq \beta,
    \end{align}
    i.e., the probability that we accept the null hypothesis despite the fact that the true distribution is in $\halt$ is bounded above by $\beta$.  Finally, a function $p: \cZ \to [0,1]$ is called a \emph{p-value} if for all $\alpha \in [0,1]$ it holds that $\sup_{q \in \hnull} \Pr_{Z \sim q}\left( p(Z) \leq \alpha \right) \leq \alpha$. 
\end{definition}

We aim to design tests $\test$ that maintain a low probability of rejecting the null hypothesis when it is true (i.e. $\alpha \approx 0$) while achieving a high probability of rejecting the null hypothesis under the alternative hypothesis (i.e. $\beta \approx 0$).  The 
p-value is a crucial tool in hypothesis testing, as it provides a measure of evidence against the null hypothesis. Specifically, it represents the probability of obtaining a result at least as extreme as the one observed, assuming the null hypothesis is true. Thus, smaller p-values imply that $\hnull$ is less likely.
In what follows, we formalize the watermarking problem as a hypothesis testing problem where the watermarker develops a statistical test to distinguish between the hypothesis that a given text is generated by a specified watermarked model, and the null hypothesis that it is not. We first introduce the necessary notation for language models. 

\subsection{Language Models}

A Language Model (LM), parameterized by  weights \(\theta \in \Theta\)  and a token space \(\cV\), is represented by a function \(p_\theta : \cV^\star \to \Delta(\cV)\) that maps a sequence of past tokens to a distribution over the next-token, where $\cV^\star$ is the set of all strings of tokens in $\cV$.\footnote{Throughout the paper, $\theta$ denotes the parameters of the language model where the watermarking signal is embedded. Parameters unaffected by the watermarking process are omitted from the notation for simplicity. In all of our experiments, we embed the watermarking signal in only a single feedforward layer of the language model.}  Given a prompt \(x \in \cV^\star\) and a generation length \(T \geq 1\), the response text \(y = \prn*{v_1, \dots, v_T}  \in \cV^T\) is generated by autoregressively sampling from the language model for \(T\) tokens using the  process 
\begin{align} 
v_t \sim p_\theta(\cdot \mid x \circ v_{1:t-1}) \hspace{0.6in} \text{for \(t \in [T]\)}. 
\end{align}

For simplicity of notation, we often use the notation  \(y= v_{1:T} \sim p_\theta(\cdot \mid x)\) to represent the above sampling process marginalized over \(T\) timesteps. %

\subsection{Watermarking Text Generated by Language Models}\label{ssec:watermarking} 
We next formalize watermarking as a hypothesis testing problem, a framework that has also been used in prior rigorous works on watermarking language models \citep{huang2023towards,kuditipudi2023robust,kirchenbauer2023watermark}. A key benefit of this formulation is that it allows us to get rigorous statistical guarantees for detection, which are essential in practical scenarios such as detecting plagiarism etc., where reliable guarantees are crucial for validating claims about generated text. Formally, a watermarking scheme consists of \(\Generate\) and \(\Detect\)  procedures. Given a prompt space $\cX \subseteq \cV^\star$, a key space $\Xi \in \bbR^d$, and a sample space $\cY \subseteq \cV^T$ (consisting of strings from the token space), watermarking takes place in two steps:  
\begin{enumerate}
    \item The $\Generate$ is parameterized by  language model \(p_\theta: \cX \to  \Dist(\cY)\) and a distribution \(\nu \in \Delta(\Xi)\). When given a prompt \(x \in \cX\), $\Generate$ samples a key \(\xi \sim \nu\) and returns the watermarked text \(y \sim p_{\theta(\xi)}(\cdot \mid{} x)\),  where $\theta(\xi)$ denotes the language model watermarked using the key \(\xi\). 
    \item The $\Detect$ procedure takes as input a watermarking key $\xi' \in \Xi$ and some candidate text $y' \in \cY$ and tests the following statistical hypotheses: \begin{align} 
        \hnull&:\quad\text{Given \(x\), the key and the sample are independent, i.e., } (\xi', y') \sim  \nu \otimes q \text{ with } q \in \Dist(\cY). \\ 
        \halt&:\quad\text{The sample \(y'\) is produced by \Generate~using the key \(\xi'\), i.e.,~} y' \sim p_{\theta(\xi')}(\cdot \mid{} x). 
    \end{align} 
\end{enumerate} 

Our goal in designing an effective watermarking scheme is to choose a \(\xi\) (or the corresponding distribution \(\nu\)) such that for any \(x\), the modified language model \(p_{\theta(\xi)}\) generates text that is sufficiently distinct from that of the original language model \(p_\theta\) when the key $\xi$ is known, but at the same time has good quality and is relatively indistinguishable when $\xi$ is unknown. These desiderata preserve the utility of the unwatermarked model while also allowing for a test that can reliably determine whether a given sequence of text is generated by the aforementioned LM. We emphasize that the null hypothesis considered above does \emph{not} assume that \(y' \sim p_\theta(\cdot \mid{} x)\), as this would be overly restrictive and would allow for human-generated text to be classified as machine-generated as long as the human's distribution is sufficiently different from the model's distribution.  Instead, $\hnull$ is a \emph{composite} hypothesis that places no structural assumption on the distribution $Q$ of responses and thus encompasses text generated by humans or even other LMs.  We note that this formulation is very similar to that of \citet{huang2023towards}, except that we drop the additional requirement in that work that the marginal of $y$ in $\halt$ is $\epsilon$-close in TV to the original LM's distribution; as we stated above, we believe TV to be an unnecessarily strong metric for bounding degradation in model quality.

\paragraph{Additional Notation.} 
For any \(T \geq 0\), \([T]\) denotes the set \(\crl{1, \dots, T}\). For a vector \(v\), \(\nrm{v}_2\) denotes its \(\ls_2\) norms. We denote by $\eye_d$ the identity matrix in $\rr^{d \times d}$. For any set \(\cY\), \(\Delta(\cY)\) denotes the set of all distributions over \(\cY\). \(\cN(0, \sigma^2 \eye_d)\) represents an isotropic Gaussian distribution in 
$d$ with variance \(\sigma^2\).  The symbol $\circ$ is used to denote the concatenation of strings of tokens, and \(\otimes\) denotes the product of two distributions (i.e. samples are independent).

\section{Watermarking Scheme: $\gaussmark$}\label{sec:theory} 

We are now ready to present $\gaussmark$---our novel watermarking scheme for language models.  As introduced above, our scheme consists of two components: a $\Generate$ procedure called $\gaussmarkg$ and \(\Detect\) procedure called $\gaussmarkd$.  After we introduce $\gaussmark$, we provide some motivation for the approach in \Cref{ssec:motivation} as well as formal theoretical guarantees on the power of our detection test in \Cref{sec:theoretical_analysis}.  

$\gaussmark$ applies to an arbitrary parameterized model, where $p_\theta: \cV^\star \to \Delta(\cV^T)$  is a parameterized  distribution\footnote{While throughout the text we suppose that $\Theta = \rr^d$, adaptation to open subsets of Euclidean space are immediate simply by truncating $\xi$ if $\theta + \xi \not\in \Theta$.} with parameters $\theta \in \Theta \subset \rr^d$. In $\gaussmarkg$ (\Cref{alg:generation}), we generate the watermarking key $\xi$ by sampling from a centered multivariate Gaussian distribution with covariance $\sigma^2 \eye$ for some small value of $\sigma > 0$, i.e.~\(\xi \sim \nu\) with \(\nu = \cN(0, \sigma^2 \bbI_d)\).  We then produce the watermarked model \(\theta(\xi)\) by additively perturbing the given parameters $\theta$ by the key $\xi$, i.e. \(\theta(\xi) = \theta + \xi\). On the given input prompt \(x\), we then generate the watermarked response by sampling $y \sim p_{\theta(\xi)}(\cdot \mid{} x)$.  
\begin{algorithm}[t]
    \begin{algorithmic}[1]
    \State{}\textbf{Input}  Language Model $p_\theta:\cV^\star \to \Delta(\cY)$ with  parameters $\theta \in \Theta$, prompt \(x\), variance $\sigma^2 > 0$. 
    \State{}\textbf{Sample} watermark key $\xi \sim \cN(0, \sigma^2 \eye)$.  
    \State{}\textbf{Watermark} model by setting $\theta(\xi) = \theta + \xi$.
    \State{}\textbf{Generate} using the perturbed model, i.e. $y \sim p_{\theta(\xi)}(\cdot \mid{} x)$.  
    \end{algorithmic}
      \caption{$\gaussmarkg$: Generating watermarked text}
      \label{alg:generation}
\end{algorithm}

In $\gaussmarkd$ (\Cref{alg:detection}), we test the composite hypothesis $\hnull$, that $\xi \sim \cN(0, \sigma^2 \eye)$ and $y \sim q \in \Delta(\cY)$ are independent of each other,  against the simple alternative hypothesis that $y \sim p_{\theta + \xi}(\cdot \mid{} x)$ using the test statistic $\psi(y, \xi \sep x)$ given in \eqref{eq:alg_test}. Letting $\Phi$ denote the cumulative distribution function (CDF) of a standard Gaussian random variable, we reject the null hypothesis (interpreted as determining that the text is watermarked) if $\psi(y, \xi \sep x) \geq \Phi\prn*{1 - \alpha}$, where $\alpha$ denotes the acceptable false positive (Type-I error) rate. Although $\gaussmarkd$ is formulated as a statistical test, in our experiments we report the $p$-value, $1 - \Phi(\psi(y, \xi \sep x))$. 

We emphasize that it is critical that our test is valid for the composite null hypothesis $\hnull$, allowing any $q \in \Delta(\cY)$ without imposing assumptions on the distribution of $y$ under the null.   A test that is valid only for a simple null, such as $y \sim p_\theta(\cdot \mid{} x)$, against $\halt$ may fail to have a small level when applied to arbitrary human-generated text that does not originate from the model $p_\theta$. This would undermine the benefits of a watermarking scheme with formal statistical guarantees. As explained in the sequel, the choice of Gaussian $\xi$ is motivated by the fact that Gaussians (a) are rotationally invariant and (b) have independent directions and norms. In principle, any distribution with these properties could replace the Gaussian, and with minor modifications, the test would still maintain statistical validity.

\begin{algorithm}[t]
    \begin{algorithmic}[1]
    \State{}\textbf{Input} Language Model $p_\theta:\cV^\star \to \Delta(\cY)$ with  parameters $\theta \in \Theta$, prompt \(x\), variance $\sigma^2 > 0$, watermark key $\xi$, candidate text $y \in \cY$, level $\alpha \in (0, 1)$. 
    \State{}\textbf{Evaluate} the test statistic   
    \begin{align}\label{eq:alg_test} 
        \psi(y, \xi \sep x) = \frac{\inprod{\xi}{\nabla_\theta \log p_\theta(y \mid x)}}{\sigma \norm{\nabla_\theta \log p_\theta(y \mid x)}} 
    \end{align}
    \If{$\psi(y, \xi \sep x) \geq \Phi^{-1}\prn*{1 - \alpha}$}  \algcomment{$\Phi$ is the CDF of standard Normal distribution} 
    \State{}\textbf{Reject Null} and output "watermarked". 
    \Else
    \State{}\textbf{Fail to reject Null}
    \EndIf
    \end{algorithmic}
      \caption{$\gaussmarkd$: Detecting watermarked text}
      \label{alg:detection}
\end{algorithm} 

For an illustrative example of why $\gaussmark$ works, suppose \(\cX = \cY = \bbR^d\) and let \(p_\theta(y \mid x) = \cN(\theta, \bbI_d)\).  Because under $\hnull$, the watermarking key \(\xi \sim \cN(0, \sigma^2 \bbI_d)\) is independent of \(y\), it is easy to see that \(\psi(y, \xi \sep x) \sim \cN(0, 1)\) for any fixed \(x, y\). In particular, \(\En_{\hnull} \brk*{\psi(y, \xi \sep x)} = 0.\) On the other hand, under \(\halt\), \(y \sim p_{\theta + \xi}(\cdot \mid{} x) = \cN(\theta + \xi, \bbI_d)\) and a simple computation shows that
\begin{align}
\En_{\halt} \brk*{\psi(y, \xi\sep x)} \gtrsim \frac{\sigma^2 \sqrt{d}}{\sqrt{1 + \sigma^2}} 
\end{align}
for \(\sigma \geq \nicefrac{1}{d}\) (see \pref{lem:gaussian_details} for the details).  Applying a standard concentration of Gaussian random variables yields bounds on the power of the test, with higher dimensions leading to an improved test.
In  
\pref{sec:theoretical_analysis}, we provide a rigorous analysis of the level and the power of our test under more complicated modeling assumptions that capture real-world LLM settings.

\paragraph{Practical Implementation.} Modern language models are parameterized by billions of parameters \citep{kaplan2020scaling, touvron2023llama}. While $\gaussmark$ could be implemented across the entire model, we find that effective watermarking can be achieved by modifying a rank-reduced single MLP weight (a matrix) within a single layer. This approach offers two key advantages: First, perturbing parameters in a single feedforward layer minimizes potential degradation in model quality. Specifically, $\gaussmark$ relies on the statistical distinguishability of the distributions $p_{\theta + \xi}(\cdot \mid{} x)$ and $p_\theta(\cdot \mid{} x)$, which we observe can be achieved without compromising model performance. Recent empirical studies demonstrate that much of the signal in the feedforward layers of language models resides in a low-dimensional manifold. Consequently, any added noise \(\xi\) to a single layer is likely almost orthogonal to the underlying signal, preserving model quality. Second, $\gaussmarkd$ incurs minimal memory overhead as it requires only a single forward pass through the model and a backward pass through the watermarked parameter $\theta$. When $\theta$ is a single MLP layer, the gradient buffer's memory cost is negligible compared to the memory demands of a forward pass.

Beyond the statistical validity of our method, discussed below, a primary advantage is its practicality. On the inference side, $\gaussmarkg$ imposes \emph{no additional computational cost compared to an unwatermarked model} and can seamlessly integrate into generation pipelines such as vLLM \citep{kwon2023efficient} and HuggingFace \citep{huggingface2020}. This is because the watermarking process alters the model weights and, consequently, the sampled distribution, without affecting inference latency. Detection is similarly efficient, as previously discussed. We now turn to the motivation for our approach and present formal results on its level and power.

\subsection{Motivation}\label{ssec:motivation}
Recall from \pref{def:level_and_power} that there are two sources of error in hypothesis testing one desires to control: false positives (controlled by the level of the test) and false negatives (controlled by the test's power).  Because these two desiderata are at odds, statisticians either wish to find tests with maximal power subject to a constraint on the level \citep{neyman1933on} or wish to control the \emph{risk} of a test, defined for weights $a,b \in \rr_{\geq 0}$ as $a \alpha + b \beta$.  In either case, as long as the null hypothesis is convex, under general measure theoretic conditions (cf \citet[Theorem 3.8.1]{lehmann1986testing} for the Neyman-Pearson approach and \citet{le2012asymptotic,ghosal2017fundamentals} for the case of minimax risk), the optimal test is given by a likelihood ratio test with respect to the `worst-case' distribution in the null hypothesis.  In other words, the optimal test is given by some distribution $p \in \hnull$ such that
\begin{align}\label{eq:optimal_test}
    \rho(\xi, y) = \bbI\left[ \log \halt(\xi, y) - \log p(\xi, y) \geq \tau_\alpha \right]
\end{align}
for some threshold $\tau_\alpha$ that ensures the test has level $\alpha$; moreover, in the special case that $a = b =1$ in the risk formulation, the worst case distribution can be shown to be that in $\hnull$ which is closest in Total Variation to $\halt$.  Unfortunately, in many settings including ours, it is not clear what the worst-case distribution is either in the Neyman-Pearson sense of uniformly most powerful or in the sense of minimizing risk for general $a,b$; furthermore, it is not even clear how to characterize the projection of $\halt$ onto $\hnull$ in TV in a way that is amenable to computing $p$-values or the thresholds for test statistics.  On the other hand, if we relax projection in TV to projection in KL divergence, the following standard result, whose proof is deferred to \pref{app:motivation_appendix} for completeness, shows that the closest distribution in $\hnull$ to $\halt$ is given by marginalizing over $\xi$.
\begin{proposition}\label{prop:closest_measure}
    Let $\{p_{\theta'}\}_{\theta' \in \bbR^d}$ be a family of measures on $\cY$, \(x \in \cX\), \(\theta \in \bbR^d\), and $\nu \in \Delta(\rr^d)$ be fixed.  Suppose $\hnull \ldef{} \left\{\nu \otimes \mu \mid{} \mu \in \Delta(\cY) \right\}$ is composite and $\halt$ is simple, such that $\xi \sim \nu$ and $y \sim p_{\theta + \xi}$.  Then the projection of $\halt$ onto $\hnull$ in KL divergence is given by $\nu \otimes  \En_{\xi' \sim \nu}\brk*{ p_{\theta + \xi'}(\cdot \mid{} x)}$, i.e., 
    \begin{align}
        \inf_{p \in \hnull} \kld\left( \halt ~||~ p \right) = \kld \left(p_{\theta + \xi}(\cdot \mid{} x)  ~||~ \En_{\xi' \sim \nu}\brk*{ p_{\theta + \xi'}(\cdot \mid{} x) } \right).  
    \end{align}
\end{proposition}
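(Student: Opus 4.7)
The plan is to compute $\kld(\halt \| p)$ for an arbitrary $p = \nu \otimes \mu \in \hnull$ directly from the definition, observe that the $\nu$-marginal terms cancel since both $\halt$ and $p$ share $\nu$ as the marginal over $\xi$, and then reduce the minimization over $\mu$ to a single application of Gibbs' inequality. Concretely, writing $\halt(\xi, y) = \nu(\xi)\, p_{\theta + \xi}(y \mid x)$, one gets
\begin{align*}
\kld\left(\halt ~\|~ \nu \otimes \mu\right)
= \En_{\xi \sim \nu}\En_{y \sim p_{\theta+\xi}}\left[\log p_{\theta+\xi}(y \mid x) - \log \mu(y)\right],
\end{align*}
since $\log \nu(\xi)$ appears on both sides and cancels. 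This step uses only that the $\xi$-marginal of $\halt$ is $\nu$, which holds by construction.

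Next I would introduce the marginal $\bar p(y) \ldef \En_{\xi' \sim \nu}[p_{\theta + \xi'}(y \mid x)]$ and perform the standard add-and-subtract trick: insert $\pm \log \bar p(y)$ inside the expectation. After exchanging the order of expectations in the resulting second piece (so that $y$ is drawn from its marginal $\bar p$), one obtains the identity
\begin{align*}
\kld(\halt ~\|~ \nu \otimes \mu)
= \En_{\xi \sim \nu}\left[\kld\left(p_{\theta+\xi}(\cdot \mid x) ~\|~ \bar p\right)\right]
+ \kld\left(\bar p ~\|~ \mu\right).
\end{align*}
The first term is independent of $\mu$, and the second is nonnegative by Gibbs' inequality, with equality iff $\mu = \bar p$. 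Hence the infimum over $\hnull$ is achieved at $\mu = \bar p$, and the claim follows (with the RHS of the proposition read as the conditional/expected KL, equivalently the mutual information $I(\xi; y)$).

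I do not anticipate a serious obstacle: the argument is essentially the mutual-information decomposition of KL against a product law, and the only technicalities are measurability of $(\xi, y) \mapsto p_{\theta+\xi}(y \mid x)$ and finiteness of the relevant integrals, both of which are standard under the implicit regularity assumptions on the family $\{p_{\theta'}\}_{\theta' \in \bbR^d}$ (e.g.\ that $p_{\theta + \xi}$ is absolutely continuous with respect to $\bar p$ for $\nu$-a.e.\ $\xi$, which is automatic whenever the infimum in the proposition is finite). The only care needed is to verify that $\bar p$ is itself a probability measure on $\cY$, but this is immediate from Fubini.
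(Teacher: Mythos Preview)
Your proposal is correct and follows essentially the same approach as the paper: both arguments reduce to the identity $\kld(\halt \,\|\, \nu \otimes \mu) = \En_{\xi \sim \nu}\left[\kld(p_{\theta+\xi}(\cdot \mid x) \,\|\, \bar p)\right] + \kld(\bar p \,\|\, \mu)$ and then invoke nonnegativity of the second term. The paper arrives at this by directly subtracting the two KLs and recognizing the $\xi$-average as $\kld(\bar p \,\|\, \mu)$, whereas you obtain it via the add-and-subtract of $\log \bar p(y)$; these are algebraically identical.
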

Heuristically, then, a reasonable candidate for the optimal test between $\hnull$ and $\halt$ would be to reject the null when the log-likelihood ratio of the data under the null and the alternative is large, where the threshold $\tau_\alpha$ is chosen such that
\begin{align}\label{eq:motivation1}
    \sup_{p \in \hnull} \ee_{\substack{(\xi, y) \sim p}}\left( \log p_{\theta + \xi}(y \mid{} x) - \log \En_{\xi \sim \nu}\brk*{ p_{\theta + \xi}(y \mid{} x) } > \tau_\alpha\right).
\end{align}
While  \eqref{eq:motivation1} would yield a valid (although possibly supoptimal) test, it is not obvious how to easily compute the threshold $\tau_\alpha$, nor is it clear that computing the second term in the test statistics is feasible due to the scales of the probabilities involved.  In order to circumvent these problems, we approximate the log-likelihood test to first order and  specialize to the case that $\nu = \cN(0, \sigma^2 \eye)$.  More specifically, if we suppose that $\sigma \ll 1$ and we can thus ignore terms scaling with $\sigma^2$, we have that 
\begin{align}
    \log p_{\theta + \xi}(y \mid{} x) - \log \En_{\xi' \sim \nu}\brk*{ p_{\theta + \xi'}(y \mid{} x) } &= \log p_{\theta + \xi}(y \mid{} x) - \log p_\theta(y \mid{} x) - \log \En_{\xi' \sim \nu}\brk*{ \frac{p_{\theta + \xi'}(y \mid{} x)}{p_\theta(y \mid{} x)} } \\ 
    &\approx \inprod{\xi}{\nabla_\theta \log p_\theta(y \mid{} x)} - \log \En_{\xi' \sim \nu}\brk*{ \exp\left( \inprod{\xi'}{\nabla_\theta \log p_\theta(y \mid{} x)} \right) } \\
    &= \inprod{\xi}{\nabla_\theta \log p_\theta(y \mid{} x)}  - \frac{\sigma^2 \norm{\nabla_\theta \log p_\theta(y \mid{} x)}^2}{2} \\
    &\approx \inprod{\xi}{\nabla_\theta \log p_\theta(y \mid{} x)}, \label{eq:test_unnormalized}
\end{align} 
where the second line follows by using the first-order Taylor's approximation for \(\log p_{\theta + \xi}(y \mid{} x)\) around \(\log p_\theta(y \mid{} x)\), the third line uses the MGF for Gaussian random variables, and the last line holds for small $\sigma$. Thus, as long as our linear approximation is valid, the statistic $\inprod{\xi}{\nabla_\theta \log p_\theta(y \mid{} x)}$ yields a test that is approximately as powerful as the test in \eqref{eq:motivation1} and is easily computed, solving the second problem above.  The key insight in solving the first problem and determining $\tau_\alpha$ is that due to the rotation invariance of the Gaussian, and the fact that the direction and norm of Gaussian vectors are independent, the distribution of 
\begin{align}\label{eq:test_normalizing}
    \psi(y, \xi \sep x) = \frac{\inprod{\xi}{\nabla_\theta \log p_\theta(y \mid{} x)}}{\sigma \norm{\nabla_\theta \log p_\theta(y \mid{} x)}} 
\end{align}
is \emph{always a standard normal under} $\hnull$ because $y$ is independent of $\xi$; thus with appropriate normalization, $\tau_\alpha$ can be read off from the Gaussian cumulative distribution function.  This insight is formalized in \pref{prop:size_computation} in the sequel wherein we also show that the p-values returned by our test are always statistically valid, in the sense that they are uniformly distributed under the null hypothesis.

\subsection{Theoretical Analysis} \label{sec:theoretical_analysis} 

We now formalize the intuition developed in \Cref{ssec:motivation} by providing formal bounds on the statistical validity and the power of $\gaussmark$. We make the following assumption throughout:\footnote{Many modern state-of-the-art language models are built using non-differentiable ReLU activation units and thus do not directly satisfy \pref{ass:differentiable}. However, our analysis can be easily extended to these settings by using subgradients or Clarke derivative for \(\grad \log p_\theta(y \mid{} x)\). We omit this extension from this paper for the sake of clarity, since it does not add further insight.} 

\begin{assumption} 
\label{ass:differentiable}
	 For any $x \in \cX$ and $y \in \cY$, the map $\theta \mapsto \log p_\theta(y \mid{} x)$ is differentiable. 
\end{assumption} 

Our first theoretical result is the following bound on the level of our test, and the statistical validity of the returned \(p\)-values. %

\begin{proposition}\label{prop:size_computation}
    Let \(\alpha \in (0, 1)\), and $\tau_\alpha \ldef{} \Phi^{-1}(1 - \alpha)$ where $\Phi$ is the \CDF of the standard normal distribution. Then, for any \(x \in \cX\), the test $\indicator\crl*{ \frac{\inprod{\xi}{\nabla \log p_\theta(y \mid{} x)}}{\sigma \nrm*{\nabla \log p_\theta(y \mid{} x)}} \geq \tau_\alpha }$ in \Cref{alg:detection} has level \(\alpha\). Furthermore, \(1 - \Phi\prn*{\psi(y, \xi \sep x)}\) is a valid \(p\)-value for the test.    
\end{proposition}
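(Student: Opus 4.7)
The plan is to show that under any distribution in $\hnull$, the test statistic $\psi(y, \xi \mid x)$ is exactly standard normal, from which both claims follow in one line.

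First, I would fix an arbitrary $p = \nu \otimes q \in \hnull$ with $q \in \Delta(\cY)$ and condition on $y$. Under $\hnull$, $\xi$ is independent of $y$, so the conditional law of $\xi$ given $y$ remains $\cN(0, \sigma^2 \eye_d)$. Writing $v(y) \ldef \nabla_\theta \log p_\theta(y \mid x)$, which by \pref{ass:differentiable} is a well-defined measurable function of $y$ (with $x$ fixed), I can treat $v(y)$ as deterministic once $y$ is conditioned on. On the event $v(y) \neq 0$, rotational invariance of the isotropic Gaussian (or equivalently a direct linearity-of-variance computation) gives
\begin{align}
\inprod{\xi}{v(y)} \mid y \,\sim\, \cN\prn*{0,\; \sigma^2 \nrm{v(y)}^2},
\end{align}
so after dividing by $\sigma \nrm{v(y)}$,
\begin{align}
\psi(y, \xi \sep x) \mid y \,\sim\, \cN(0, 1).
\end{align}
Since the conditional law does not depend on $y$, the unconditional law of $\psi$ under $p$ is $\cN(0,1)$ as well; crucially, this holds for every $q$, so the distribution of $\psi$ is the same for every $p \in \hnull$.

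With this in hand the level claim is immediate: for any $p \in \hnull$,
\begin{align}
\Pr_{(\xi, y) \sim p}\prn*{\psi(y, \xi \sep x) \geq \tau_\alpha} = \Pr_{Z \sim \cN(0,1)}(Z \geq \tau_\alpha) = 1 - \Phi(\tau_\alpha) = \alpha,
\end{align}
so taking the supremum over $\hnull$ yields the level bound of \pref{def:level_and_power}. Validity of the \(p\)-value follows from the standard fact that $\Phi(Z)$ is uniform on $[0,1]$ whenever $Z \sim \cN(0,1)$: for every $p \in \hnull$ and every $\alpha \in [0,1]$,
\begin{align}
\Pr_{(\xi,y) \sim p}\prn*{1 - \Phi(\psi(y, \xi \sep x)) \leq \alpha} = \alpha,
\end{align}
which is precisely the \(p\)-value condition in \pref{def:level_and_power}.

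The only genuinely delicate point is what happens on the (possibly non-null) event $\{v(y) = 0\}$, where the test statistic is literally undefined. I would handle this by convention: on that event, set $\psi = -\infty$ (equivalently report a \(p\)-value of $1$), which never rejects and therefore only makes the Type-I bound more conservative; both the level computation and the \(p\)-value computation continue to hold with an inequality in place of the equality. This is the main (and fairly minor) technical obstacle; the whole proof is otherwise a two-line consequence of the independence of $\xi$ and $y$ under $\hnull$ together with rotational invariance of the isotropic Gaussian, which is exactly the design principle flagged in \pref{ssec:motivation}.
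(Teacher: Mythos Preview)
Your proof is correct and follows essentially the same approach as the paper: condition on $y$, use independence of $\xi$ and $y$ under $\hnull$ together with rotational invariance of the isotropic Gaussian to conclude $\psi(y,\xi\sep x)\sim\cN(0,1)$, and read off the level and $p$-value. If anything, you are slightly more careful than the paper in explicitly handling the degenerate event $\{\nabla\log p_\theta(y\mid x)=0\}$, which the paper's proof glosses over.
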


\pref{prop:size_computation} establishes that our procedure, $\gaussmark$, delivers statistically valid p-values, and is even an \emph{exact test}, ensuring provable control of the false positive rate under virtually no assumptions. We defer the proof to \pref{app:level_proof}.  As described earlier, this is an essential property of any watermarking scheme, as the consequences of false positives can be severe in many use cases.  Beyond controlling the level of the test, we are also concerned with its power.  We now describe a setting where we can provably bound the power of our test.

\subsubsection{Linear Softmax Model} 
While statistical validity holds unconditionally, we must make some assumptions on the model in order to guarantee any nontrivial power.  Indeed, in the trivial case where $p_\theta$ is constant in $\theta$ and $\nabla \log p_\theta(\cdot \mid x) = 0$ identically, we clearly cannot hope to achieve any watermarking detection with $\gaussmark$ since all models are identical.  Motivated by our application to LMs, we consider the following parameterized models.

\begin{definition}[Linear softmax model] \label{def:linear_softmax}
    Let \(\cX\) and $\cY$ be discrete spaces, $\theta \in \rr^d$ be the underlying model's parameters, and $\phi: \cY \times \cX \to \rr^d$ be a known feature map. Then, for any \(x \in \cX\), the model $p_\theta(\cdot \mid{} x)$ is a \emph{linear softmax model} if, for all $y \in \cY$, 
    \begin{align}\label{eq:linear_softmax}
        p_\theta(y \mid{} x) = \frac{e^{\inprod{\theta}{\phi(y \mid{} x)}}}{\sum_{y' \in \cY} e^{\inprod{\theta}{\phi(y' \mid{} x)}}}. 
    \end{align} 
\end{definition}

We use the linear softmax model to approximate the behavior of LMs to compute the power of our test.  Note that this approximation is compatible with autoregressive generation. In particular, for any sequence of tokens $y = v_{1:T}$, if the conditional probability $p_{\theta}(v_t\mid{} x \circ v_{<t})$ is a linear softmax model for each \(t \leq T\), then the overall model 
\begin{align}
    p_\theta(y \mid{} x) = \prod_{t = 1}^T p_\theta(v_t\mid{} x \circ v_{<t}) = \prod_{t = 1}^T \frac{e^{\inprod{\theta}{\phi_t(v_t \mid{} x \circ v_{<t})}}}{\sum_{v' \in \cV} e^{\inprod{\theta}{\phi_t(v' \mid{} x \circ v_{<t})}}} = \frac{e^{\inprod{\theta}{\sum_{t} \phi_t(v_t\mid{} x \circ v_{<t})}}}{\sum_{y' \in \cY} e^{\inprod{\theta}{\sum_{t} \phi_t(v_t'\mid{} x \circ v_{<t}')}}}
\end{align}
is also a linear softmax model. However, for the sake of simplicity, in the following analysis, we will focus on softmax modeling at the level of the sequence \(y\). 

In practice, we watermark by adding the noise \(\xi\) to a feedforward layer in a single transformer block (see \pref{sec:experiments} for exact implementation details). Consequently, by keeping all other parameters of the model fixed, and only concerned with the parameters where we add the noise, we get that the conditional distribution for the next token prediction is a \emph{nonlinear} softmax model. In particular, there exists some feature map $\phi$ and a  complicated, non-convex, function $\chi$ such that $p_\theta(y \mid{} x) \propto e^{\inprod{\chi(\theta; x)}{\phi(y)}}$ (\pref{lem:LM_parameterization} in \pref{app:proofs}). Our linear softmax modeling above is justified by assuming that $\chi$ is well-behaved in a small neighborhood around the trained model's parameters \(\theta_0\), and can be locally approximated via $\chi(\theta; x) \approx \chi(\theta_0; x) + J_\chi(\theta_0; x) (\theta - \theta_0)$ with $J_\chi(\theta_0; x)$ being the Jacobian matrix.\footnote{In fact, for the last layer of the neural network (before applying the softmax function), \(\chi(\theta; x) = \zeta(x) \wt \theta\) where \(\wt \theta\) represents the vector \(\theta\) reshaped as a matrix, and the probability model \(p_\theta\) exactly corresponds to a linear-softmax model with features \(\phi(y \mid{} x) = \mathrm{vec}(\zeta(x)\phi(y)^\top)\)} Consequently, for sufficiently small \( \|\xi\| \), modern transformers are approximately linear softmax models. %
Additionally, recent empirical findings on model merging via linear interpolation of weights support the assumption that the model behaves linearly in a small neighborhood around the trained parameters  \citep{wortsman2022model, chronopoulou2023adaptersoup, yang2024model, goddard-etal-2024-arcees, ilharco2022editing, yu2024language}. These reasonings predict that our theoretical results hold only for sufficiently small \( \sigma \). Indeed, as shown in our experiments, when \( \sigma \) becomes too large, \( \gaussmark \) starts to lose its power (cf.~\Cref{app:detectability}). The following result bounds the power of our test under linear softmax modeling.  
\begin{proposition}
\label{prop:softmax_power_adaptive} 
    Let \(\alpha \in (0, 1)\) and \(\tau_\alpha = \Phi^{-1}(1 - \alpha)\). Let $p_\theta$ be given by a linear softmax model w.r.t.~some underlying features \(\phi\). Then, for any \(x \in \cX\), the hypothesis test $\indicator\crl*{\tfrac{\inprod{\xi}{\nabla \log p_\theta(y \mid{} x)} }{\sigma \nrm{\grad \log p_\theta(y \mid{} x)}} \geq \tau_\alpha}$ utilized in \pref{alg:detection}, is level $\alpha$ and has power \(1 - \beta\), where 
    \begin{align} 
       \beta = \En_{\substack{y \sim p_\theta \\ \xi \sim \cN(0, \sigma^2 I)}}\brk*{ \gamma(y, \xi \sep x) \cdot \indicator\crl*{\psi(y, \xi \sep x) \leq \tau_\alpha}},  \numberthis \label{eq:adaptive_tilting}  
    \end{align} 
    with the multiplicative factor \(\gamma(y, \xi \sep x) \ldef{} \frac{e^{\inprod{\xi}{\nabla \log p_\theta(y \mid{} x)}}}{\En_{y \sim p_\theta} \brk*{ e^{\inprod{\xi}{\nabla \log p_\theta(y \mid{} x)}}}}\). 
\end{proposition}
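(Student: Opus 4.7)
\textbf{Plan for proving Proposition~\ref{prop:softmax_power_adaptive}.} The level claim is immediate from Proposition~\ref{prop:size_computation}, which already shows that $\psi(y,\xi \sep x)$ is a standard normal under $\hnull$ regardless of the parameterization, hence the threshold $\tau_\alpha = \Phi^{-1}(1-\alpha)$ delivers level exactly $\alpha$. So the only real content is the stated identity for the Type-II error $\beta$, and the natural route is a likelihood-ratio / change-of-measure argument that rewrites an expectation under $y \sim p_{\theta+\xi}(\cdot\mid{}x)$ as an expectation under the unperturbed $y \sim p_\theta(\cdot\mid{}x)$ with an explicit Radon--Nikodym weight, which will turn out to be exactly $\gamma(y,\xi\sep x)$.

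\textbf{Step 1: compute the score for the linear softmax model.} Differentiating \eqref{eq:linear_softmax} gives the standard exponential-family identity
\begin{align}
\nabla_\theta \log p_\theta(y\mid{}x) = \phi(y\mid{}x) - \En_{y' \sim p_\theta(\cdot\mid{}x)}\bigl[\phi(y'\mid{}x)\bigr].
\end{align}
Call the second term $\mu_\theta(x)$; the key point is that $\mu_\theta(x)$ does not depend on $y$, so $\langle \xi, \nabla \log p_\theta(y\mid{}x)\rangle$ differs from $\langle \xi, \phi(y\mid{}x)\rangle$ by a $y$-independent additive constant $\langle \xi, \mu_\theta(x)\rangle$.

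\textbf{Step 2: identify the likelihood ratio with $\gamma$.} From \eqref{eq:linear_softmax}, the likelihood ratio is
\begin{align}
\frac{p_{\theta+\xi}(y\mid{}x)}{p_\theta(y\mid{}x)} \;=\; \frac{e^{\langle \xi,\phi(y\mid{}x)\rangle}}{\sum_{y'} p_\theta(y'\mid{}x)\,e^{\langle \xi,\phi(y'\mid{}x)\rangle}} \;=\; \frac{e^{\langle \xi,\phi(y\mid{}x)\rangle}}{\En_{y'\sim p_\theta}\bigl[e^{\langle \xi,\phi(y'\mid{}x)\rangle}\bigr]}.
\end{align}
Multiplying numerator and denominator by $e^{-\langle \xi,\mu_\theta(x)\rangle}$ and invoking Step 1 gives
\begin{align}
\frac{p_{\theta+\xi}(y\mid{}x)}{p_\theta(y\mid{}x)} \;=\; \frac{e^{\langle \xi, \nabla\log p_\theta(y\mid{}x)\rangle}}{\En_{y'\sim p_\theta}\bigl[e^{\langle \xi, \nabla\log p_\theta(y'\mid{}x)\rangle}\bigr]} \;=\; \gamma(y,\xi\sep x).
\end{align}

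\textbf{Step 3: change of measure.} By definition of $\beta$ under $\halt$, using that $\xi \sim \cN(0,\sigma^2 \eye)$ and then $y \sim p_{\theta+\xi}(\cdot\mid{}x)$, Fubini and Step 2 yield
\begin{align}
\beta \;=\; \En_{\xi}\!\left[\En_{y\sim p_{\theta+\xi}}\!\bigl[\indicator\{\psi(y,\xi\sep x)\le \tau_\alpha\}\bigr]\right] \;=\; \En_{\xi}\!\left[\En_{y\sim p_\theta}\!\bigl[\gamma(y,\xi\sep x)\,\indicator\{\psi(y,\xi\sep x)\le \tau_\alpha\}\bigr]\right],
\end{align}
which is exactly \eqref{eq:adaptive_tilting} after reordering the joint expectation.

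\textbf{Main obstacle.} There is no serious analytic difficulty; the only nontrivial step is recognizing that $\gamma$ is precisely the likelihood ratio, and that requires the exponential-family identity of Step 1 to cancel the $\mu_\theta(x)$ terms between numerator and denominator. The proof does \emph{not} require small $\sigma$ or any concentration estimate; those only come in if one wants to further upper-bound $\beta$ by a concrete quantity (e.g.\ by bounding $\gamma$ on the acceptance region and appealing to Gaussian tail bounds for $\psi$), which is beyond what the proposition itself asserts.
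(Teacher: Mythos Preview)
Your proposal is correct and follows essentially the same route as the paper: the level is deferred to \pref{prop:size_computation}, and for the power you perform a change of measure from $p_{\theta+\xi}$ to $p_\theta$ after identifying the Radon--Nikodym derivative with $\gamma(y,\xi\sep x)$ via the exponential-family score identity $\nabla\log p_\theta(y\mid x)=\phi(y\mid x)-\En_{p_\theta}[\phi(y\mid x)]$. The paper packages your Steps~1--2 as a separate lemma (\pref{lem:radonnikodym_softmax}) and then invokes it in a three-line change-of-measure computation, but the content is identical.
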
 

We remark that our computations for the bounds on the level and the power of our test are valid for any fixed \(x \in \cX\), but are marginalized over the key \(\xi\). The multiplicative factor of $\gamma(y, \xi \sep x)$ in $\pref{eq:adaptive_tilting}$ is a reweighting that tilts the underlying distribution $y \sim p_\theta(\cdot \mid{} x)$ to prefer those $y$ for which $\langle \xi, \nabla \log p_\theta(y \mid{} x) \rangle$ is large. On the other hand, the indicator term is non-zero only when $\psi(y, \xi \sep x)$, which depends on $\langle \xi, \nabla \log p_\theta(y \mid{} x) \rangle$, is smaller than $\tau_\alpha$; thus, in general, we should expect the power of the test to increase {(i.e. $\beta$ to be smaller)} as we place more mass on those $y$ for which $\inprod{\xi}{\nabla \log p_\theta(y \mid{} x)}$ is large.  While intuitive, this logic suffers from the fact that if $\norm{\nabla \log p_\theta(y \mid{} x)}$ can vary widely depending on $y$, those responses maximizing $\inprod{\xi}{\nabla \log p_\theta(y \mid{} x)}$ may not be the same as those maximizing $\psi(y, \xi \sep x)$ due to the invariance of the latter, but not the former, to the norm.  {This is precisely the price we pay in moving from \eqref{eq:test_unnormalized} to \eqref{eq:test_normalizing}; were we to know the distribution of $\nrm{\nabla \log p_\theta(y \mid{} x)}$, we could directly compute the threshold $\tau_\alpha$ and obtain a more powerful test.}  

To better understand the bound on the power of the test in \pref{prop:softmax_power_adaptive}, consider the simplified setting where $\sigma \to \infty$. In this case, for any fixed \(\xi\), the exponential tilting places all of the mass on $\ystar(\xi) \ldef{} \argmax_{y} \inprod{\xi}{\nabla \log p_\theta(y \mid{} x)}$. For the sake of understanding, let us further simplify by assuming that there exists $R, r > 0$ such that $r \leq \norm{\nabla \log p_\theta(y \mid{} x)} \leq R$ for all $y \in \cY$ and $x \in \cX$ (we will relax this assumption later).  Under these assumptions, the result in \pref{prop:softmax_power_adaptive} implies that   
\begin{align}
    \beta = \Pr_{Z \sim \cN(0, \eye)}\left( \sup_{y \in \cY} \inprod{Z}{\nabla \log p_\theta(y \mid{} x)} \leq \tau_\alpha \cdot R / r \right). \label{eq:extreme_case}
\end{align} 
Thus, the power of the test depends on the distribution of  $\sup_{y} \inprod{Z}{\nabla \log p_\theta(y \mid{} x)}$ for $Z\sim\cN(0, \bbI_d)$, whose expectation is the Gaussian width of the set \(\crl{\grad \log p_\theta(y \mid{} x)}\), a well-known measure of the complexity of sets in learning theory \citep{vershynin2018high}. Were the feature set \(\crl{\phi(y)}_{y \in \cY}\) sufficiently rich, the vectors \(\crl{\grad \log p_\theta(y \mid{} x)}_{y \in \cY}\) would cover the unit sphere, making this supremum large with high probability.\footnote{See \pref{app:proofs} for a discussion on the role of entropy in ensuring such diversity.} In this case, we expect $\sup_{y} ~ \tri*{Z, \nabla \log p_\theta(y \mid{} x)} \approx \nrm{Z}.$ 
Under these simplifications, if $\tau_\alpha \leq 2 \sqrt{\log \left( \nicefrac{1}{\alpha} \right)} \leq \sqrt{\nicefrac{dr}{2R}}$, then
\begin{align} 
    \beta = \Pr_{Z} \left( \nrm{Z} \leq \tau_\alpha R/r \right) \leq e^{-\nicefrac{dr}{16R}}, 
\end{align} 
where the last inequality follows because $\nrm{Z}^2$ follows a \(\chi^2\)-distribution with $d$ degrees of freedom.

The above analysis suggests that an idealized $\gaussmark$ achieves power at least $1 - \beta$ at level $\alpha$, provided $d \gtrsim \log \left( \nicefrac{1}{\min\crl{\alpha, \beta}} \right) \cdot \nicefrac{R}{r}$, which captures the intuition that bigger models are easier to watermark since a larger dimension makes it easier to hide the watermarking signal without hurting the model performance.  {As mentioned above, the additional $\nicefrac{R}{r}$ factor, the \emph{condition number}, can be interpreted as the price we pay for normalizing, i.e. moving from \eqref{eq:test_unnormalized} to \eqref{eq:test_normalizing}, and obtaining an easily computable statistic.  This condition number deflates the effective dimension of the underlying model and correspondingly weakens the power of the test. We observe that across a number of recent LMs, the actual conditioning is less than 50 (cf. \Cref{fig:grad-norms}), suggesting that the above is a reasonable approximation of what is observed in practice.}
 This intuition also aligns with our empirical observations that larger noise generally makes watermarking easier (cf. \Cref{app:detectability}) and longer sequences are easier to watermark: $\nrm*{\nabla \log p_\theta(y \mid{} x)}$ grows with sequence length (\pref{fig:numtokens}) {and more choices of $y$ increase the likelihood that $\{\nabla \log p_\theta(y \mid{} x)\}$}. Moreover, this framework is consistent with earlier work that shows that higher-entropy sequences are fundamentally easier to watermark \citep{kuditipudi2023robust, fairoze2023publicly, christ2024undetectable, golowich2024edit,huang2023towards}, since $\nrm{\nabla \log p_\theta(y \mid{} x)}$ tends to increase with entropy. 
The following corollary formally quantifies the extent to which the above-idealized setting holds under some mild assumptions. 
 \begin{corollary} 
\label{corr:alt2}  
Let \(\alpha \in (0, 1)\), input prompt \(x \in \cX\),  and suppose that all the conditions of  \pref{prop:softmax_power_adaptive} hold. For any \(\xi \in \bbR^d\) and \(\delta \in (0, 1)\), let $\wt \Gamma_{\wt \alpha}(\xi\sep \theta, x)$ denote the \((1 - \delta)\)th quantile of the random variable \(\tri{\xi, \grad \log p_\theta(y \mid{} x)}\), where  \(y \sim p_\theta(\cdot \mid{} x)\). Suppose there exists \(\wt \alpha \in (0, 1)\)  such that 
\begin{align}
\Lambda(\theta, x) \ldef{} - \sup_{\sigma} ~ \frac{1}{\sigma} \log \prn*{\En_{\xi \sim \cN(0, \sigma^2 \eye_d), y \sim p_\theta} \brk*{\exp\prn*{\sigma \nrm{\grad \log p_\theta(y \mid{} x))} - \wt \Gamma_{\wt \alpha}(\xi\sep \theta, x)}}}, \label{eq:bound1} 
\end{align}
is non-negative. Then, for any \(\beta \in (0, 1)\), the test in \pref{alg:detection}  has power \(1 - \beta\), for 
\begin{align}
\sigma \geq \frac{1}{\Lambda(\theta, x)} \log \prn*{\frac{1}{\wt \alpha \beta}}. 
\end{align}
\end{corollary}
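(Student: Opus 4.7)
My plan is to combine \pref{prop:softmax_power_adaptive}, a Chernoff-style likelihood-ratio bound, and the quantile structure of $\wt \Gamma_{\wt \alpha}$ in order to convert the expectation expression for $\beta$ into a Laplace-transform inequality that can be absorbed into the definition of $\Lambda(\theta, x)$. Writing $A \ldef{} \inprod{\xi}{\nabla \log p_\theta(y\mid x)}$ and $N \ldef{} \nrm*{\nabla \log p_\theta(y\mid x)}$, the linear-softmax form of $p_\theta$ gives $\gamma(y,\xi \sep x) = e^A/Z_\xi$ where $Z_\xi \ldef{} \En_{y \sim p_\theta}[e^A]$. Since $\psi(y, \xi\sep x) \leq \tau_\alpha$ is exactly $A \leq \tau_\alpha \sigma N$, on this event $e^A \leq e^{\tau_\alpha \sigma N}$, and hence one has the pointwise bound $\gamma(y,\xi \sep x)\,\indicator\crl*{\psi \leq \tau_\alpha} \leq e^{\tau_\alpha \sigma N}/Z_\xi$.

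Next, I would lower-bound the normalizer $Z_\xi$ using the $(1-\wt\alpha)$-quantile: since $\Pr_{y \sim p_\theta}\prn*{A \geq \wt \Gamma_{\wt \alpha}(\xi \sep \theta, x)} \geq \wt \alpha$, a Markov-style argument yields $Z_\xi \geq e^{\wt \Gamma_{\wt \alpha}(\xi \sep \theta, x)} \Pr_{y \sim p_\theta}\prn*{A \geq \wt \Gamma_{\wt \alpha}} \geq \wt\alpha\cdot e^{\wt \Gamma_{\wt \alpha}(\xi \sep \theta, x)}$. Substituting this into the pointwise bound above and invoking the independence of $\xi$ and $y$ under the joint product measure (so that the resulting expectation factors) gives
\[\beta \leq \frac{1}{\wt\alpha}\, \En_{\xi \sim \cN(0, \sigma^2 \eye_d),\, y \sim p_\theta}\brk*{\exp\prn*{\tau_\alpha \sigma N - \wt \Gamma_{\wt \alpha}(\xi \sep \theta, x)}}.\]

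To conclude, I would invoke the definition of $\Lambda(\theta, x)$. By construction, $-\Lambda(\theta, x) \geq \tfrac{1}{\sigma}\log \En\brk*{e^{\sigma N - \wt \Gamma_{\wt \alpha}(\xi)}}$ for every choice of the dummy variable in the supremum, so the assumption $\Lambda \geq 0$---together with the positive homogeneity $\wt \Gamma_{\wt \alpha}(c\xi \sep \theta, x) = c\, \wt \Gamma_{\wt \alpha}(\xi \sep \theta, x)$ for $c>0$---allows me to conclude $\En[e^{\tau_\alpha \sigma N - \wt \Gamma_{\wt \alpha}(\xi)}] \leq e^{-\sigma\Lambda(\theta, x)}$ after absorbing the constant $\tau_\alpha$ into a rescaling of the sup-variable in $\Lambda$'s definition. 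This yields $\beta \leq e^{-\sigma \Lambda(\theta, x)}/\wt \alpha$, and demanding that the right-hand side be at most $\beta$ and solving for $\sigma$ gives exactly the claimed condition $\sigma \geq \tfrac{1}{\Lambda(\theta, x)} \log\prn*{\tfrac{1}{\wt \alpha \beta}}$.

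The main obstacle I anticipate is this last scaling step: a naive Chernoff bound leaves $e^{\tau_\alpha \sigma N}$ in the exponent while the rate function used to define $\Lambda$ sees only $\sigma N - \wt \Gamma_{\wt \alpha}$. Reconciling these cleanly will require either a careful reparameterization of the dummy variable in $\Lambda$'s definition that exploits the positive homogeneity of $\wt \Gamma_{\wt \alpha}$ in $\xi$ to absorb the multiplicative constant $\tau_\alpha$, or else the introduction of a free Chernoff parameter $s \in [0, 1]$ in the first paragraph's pointwise bound (replacing $e^A \leq e^{\tau_\alpha \sigma N}$ by $e^A \leq e^{(1-s)A + s \tau_\alpha \sigma N}$) that is optimized jointly with the sup over $\sigma$. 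Once this scaling matching is in hand, the remainder of the argument reduces to mechanical algebra.
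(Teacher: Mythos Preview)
Your approach is essentially identical to the paper's: start from \pref{prop:softmax_power_adaptive}, bound the numerator via $e^A \cdot \indicator\crl*{A \leq \sigma\tau_\alpha N} \leq e^{\sigma\tau_\alpha N}$, lower-bound the normalizer $Z_\xi$ by $\wt\alpha\, e^{\wt\Gamma_{\wt\alpha}(\xi)}$ using the quantile definition, and then invoke the defining inequality for $\Lambda(\theta,x)$ to conclude $\beta \leq e^{-\sigma\Lambda}/\wt\alpha$. The $\tau_\alpha$-mismatch you flag as the main obstacle is handled in the paper's own proof simply by passing from $\sigma\tau_\alpha\nrm{\nabla\log p_\theta(y\mid x)}$ to $\sigma\nrm{\nabla\log p_\theta(y\mid x)}$ in the exponent without further comment, so your proposed homogeneity/Chernoff reparameterizations already go beyond what the paper does at that step.
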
  

The condition \pref{eq:bound1} requires that the random variable $\nrm{\grad \log p_\theta(y \mid{} x)}$ has a small spread around its mean, thus providing a quantitative relaxation of the earlier simplifying assumption that {$\nrm*{\nabla \log p_\theta(y \mid{} x)}$ is bounded almost surely in an annulus.} Specifically, a nonzero $\wt \alpha$ satisfies \pref{eq:bound1} if the gradient norms are relatively tightly concentrated, such as within a well-conditioned interval $r \leq \norm{\nabla \log p_\theta(y \mid{} x)} \leq R$.

In addition to linear softmax models, we can also control the power of $\gaussmark$ in significantly greater generality. In \pref{app:strongly_log_concave}, we provide a bound on the power of the test when the distribution \(p_\theta\) is strongly log-concave. The proof of this result again depends on a non-negative bound on the moment generating function, similar in spirit to \(\Lambda(\theta, x)\)  defined for the above corollary. 

Our primary detection method in \pref{alg:detection} empirically exhibits some robustness against modifications, including insertions, deletions, substitutions, and roundtrip translations (cf.~\Cref{fig:corruptions,fig:prompt_corruptions}). In \pref{app:robustness}, we provide a theoretical analysis of the robustness of a variant of $\gaussmark$ (\pref{alg:detection_robust}) and demonstrate its robustness against random token-level corruptions, under mild independence assumptions (see \pref{prop:robustness} for more details). We further remark that the approach employed in (\pref{alg:detection_robust}) can also be leveraged to arbitrarily enhance the statistical power of the test when applied to longer sequences. 
 
\section{Experiments}\label{sec:experiments}

\begin{figure}[t]
    \centering
    \subfigure[]{
        \includegraphics[width=0.45\textwidth]{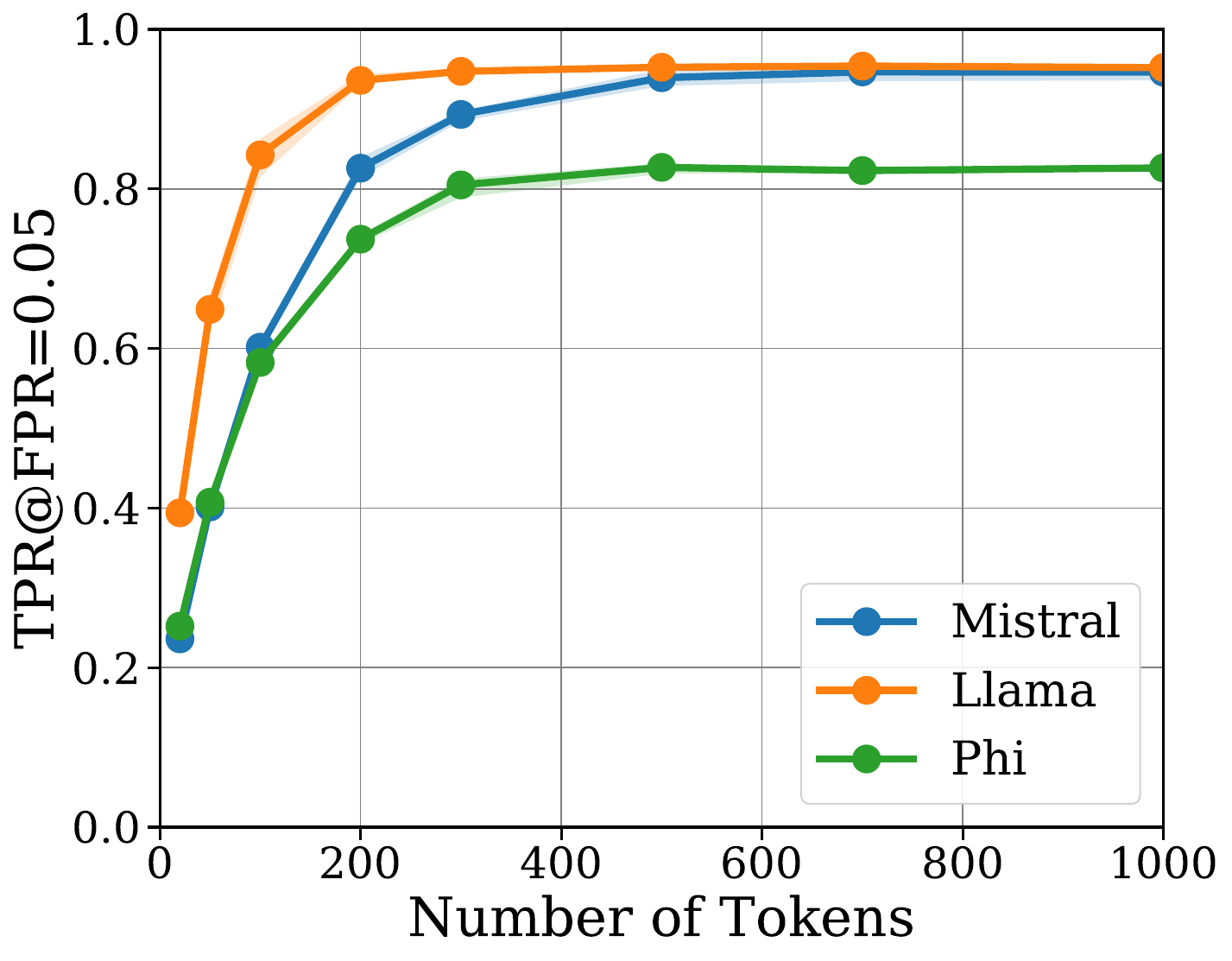}
        \label{fig:medians-phi} 
    }
    \hfill \subfigure[]{
        \includegraphics[width=0.45\textwidth]{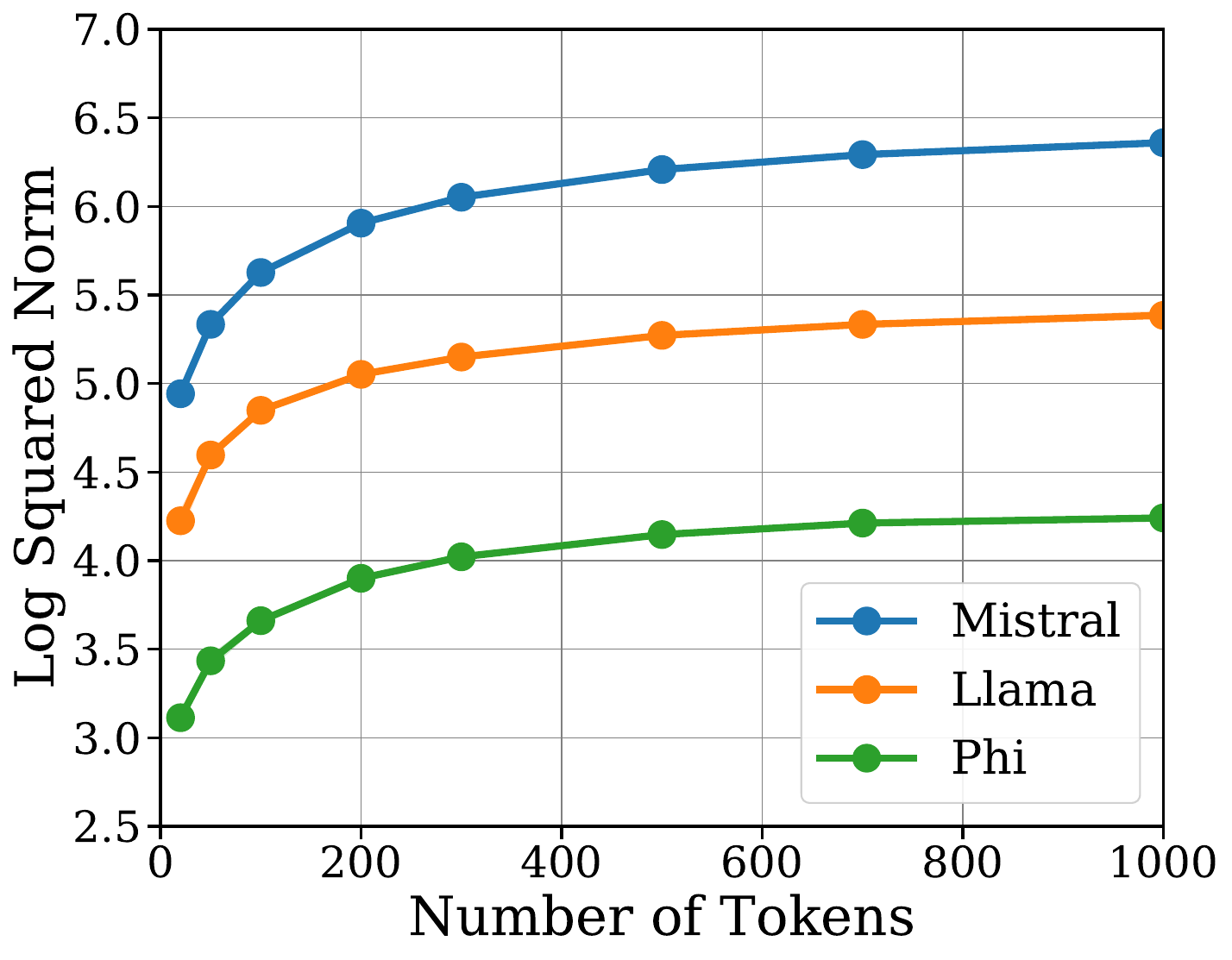}
        \label{fig:numtokens-gradnorm}
    }
    \caption{Effect of length of generated token sequence on p-values for $\gaussmark$  averaged over 3 seeds. (a) The fraction of detected watermarked responses at $p=0.05$ increases significantly as the number of tokens in the generated text increases, with $\phimini$ being more challenging to watermark than $\llama$ and $\mistral$. (b) The increase of the gradient norm of the log-likelihood of the watermarked text with respect to the model parameters as the number of tokens in the generated text increases.
    }
    \label{fig:numtokens}
\end{figure}

\begin{figure}[t]
    \centering
    \subfigure[Generation time.]{
        \includegraphics[width=0.45\textwidth]{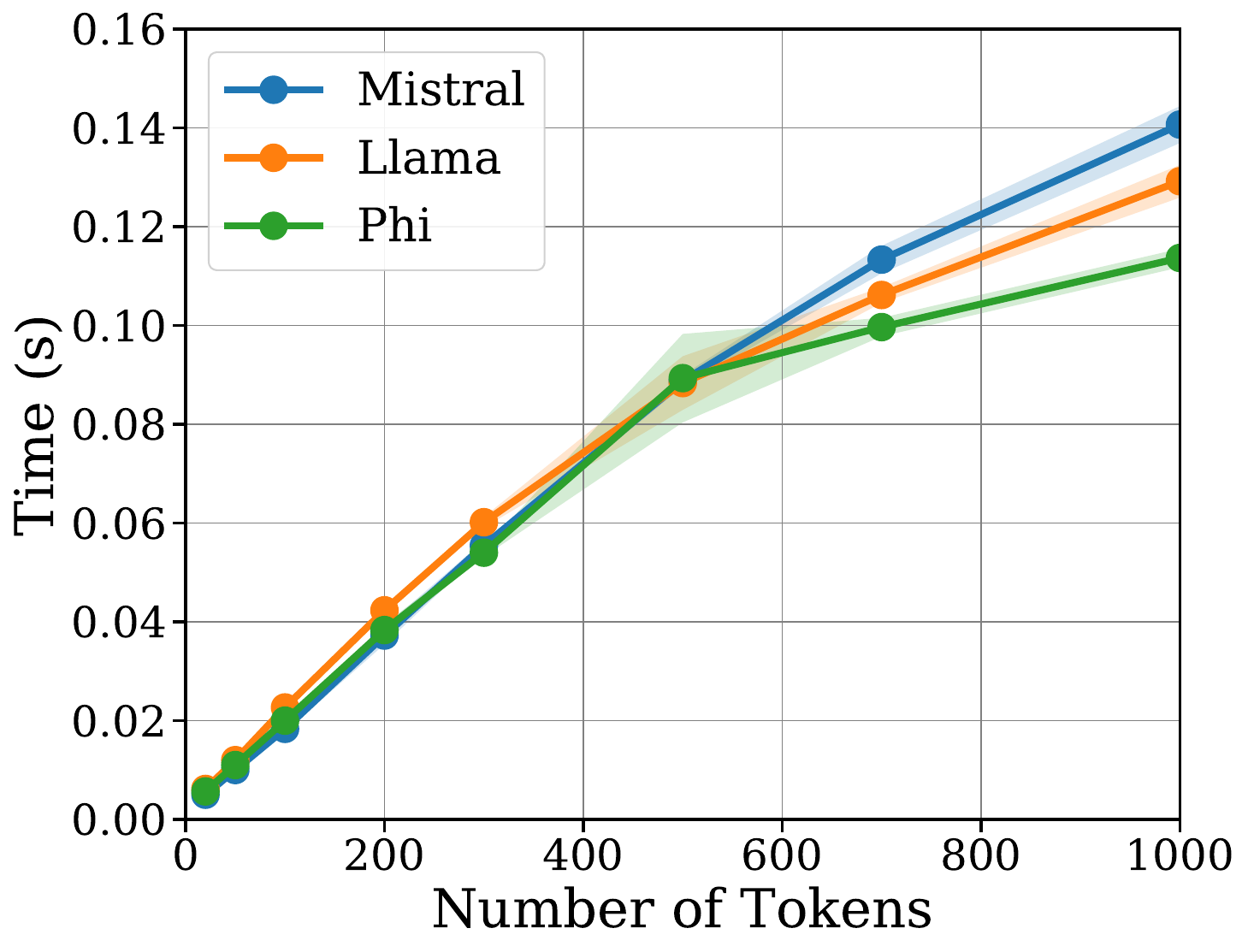}
        \label{fig:generation-time-1k}
    }
    \hfill \subfigure[Detection time.]{
        \includegraphics[width=0.45\textwidth]{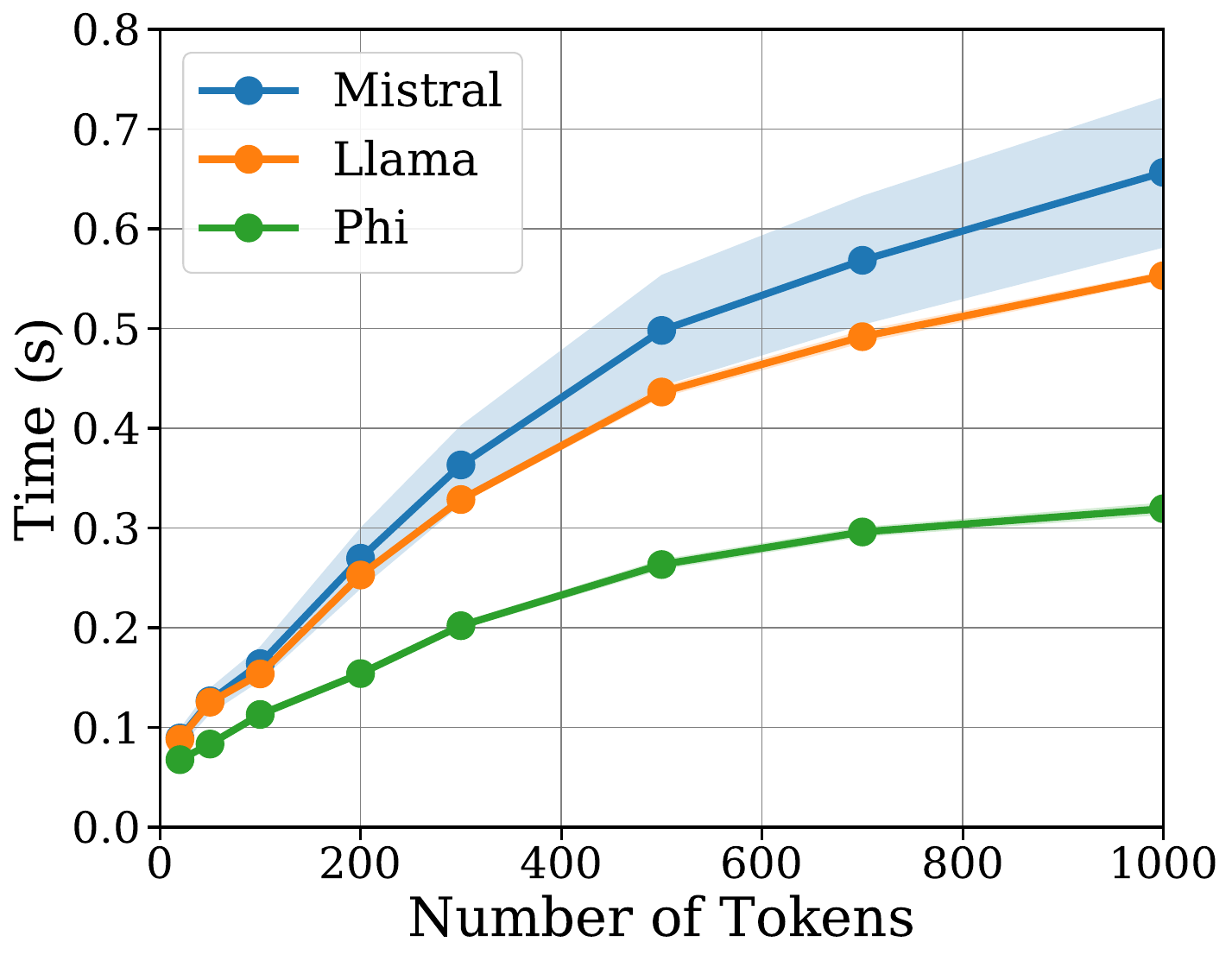}
        \label{fig:detection-time-1k} 
    }
    \caption{Latency of $\gaussmark$ in seconds for generation (a) and detection (b) as number of generated tokens increases. Note that both generation and detection processes are highly efficient, ensuring practical applicability of our approach.} 
    \label{fig:times} 
\end{figure} 

In this section, we empirically validate our watermarking scheme.  Unlike those works that consider distortion-free or undetectable watermarks  \citep{kuditipudi2023robust, christ2024undetectable, zamir2024excuse, golowich2024edit}, our scheme relies on there being a statistically significant difference between the distributions of watermarked and un-watermarked text; thus it is imperative that we evaluate the quality of our watermarked models in order to ensure that our procedure does not lead to a significant degradation in the quality of the generated text.  We begin by describing several key experimental details, deferring a complete description of our experimental setup to \Cref{app:experiment_details}. 
 In \Cref{ssec:detectability}, we evaluate the detectability of $\gaussmark$, and in \Cref{ssec:quality}, we examine its impact on model quality. \Cref{app:robustness} explores the robustness of $\gaussmark$ to various kinds of corruptions. Finally, in  \Cref{ssec:lowrank}, we investigate a \emph{rank-reduced} instantiation of $\gaussmark$ that allows for even milder attenuation of model quality than the standard instantiation.

In our experiments, we use the HuggingFace repository \citep{wolf-etal-2020-transformers} to load our models, vLLM \citep{kwon2023efficient} for generation, and PyTorch \citep{paszke2019pytorch} for watermark detection.  For examining the quality of our watermarked models, we make use of EleutherAI's \texttt{lm-evaluation-harness} \citep{eval-harness} as well as the AlpacaEval framework of \citet{alpaca_eval}.  All of our experiments are run on 40GB NVIDIA A100 GPUs, with each model chosen sufficiently small that a single GPU suffices for both generation and detection.

\paragraph{Key Experimental Details.}  The $\gaussmark$ procedure described in \Cref{alg:generation,alg:detection} can be used essentially out of the box, requiring careful selection of two key hyperparameters: the variance of the Gaussian noise, $\sigma$, and the specific parameter configuration, $\theta$.  Note that statistical validity (\pref{prop:size_computation}) holds independent of $\theta$ and $\sigma$, whereas \pref{prop:softmax_power_adaptive} is mostly agnostic to the precise $\theta,\sigma$ and holds as long as the linear approximation is sound.  
Selecting these hyperparameters requires careful consideration to ensure that the distribution $p_{\theta + \xi}$ is sufficiently distinct from $p_\theta$ for $\gaussmarkd$ to detect while avoiding any adverse impact on downstream applications of $\gaussmarkg$ (i.e.~the generated text quality should not significantly deteriorate).
As such, the choice of these parameters is a careful empirical balancing act.   For each of the language models under consideration, we choose $\theta$ to be a single MLP layer (cf. \Cref{app:experiment_details}).  The choice of both $\theta$ (which can vary over layer and MLP parameter within the architecture) and $\sigma$ is made by ensuring that the evaluation metrics of the model remain on par with those of the un-watermarked counterpart. Subsequently, hyperparameters are refined to achieve a powerful test.

\subsection{Can $\gaussmark$ be Detected?}\label{ssec:detectability}
As has become standard in recent empirical evaluations of watermarking  \citep{kirchenbauer2023watermark, kuditipudi2023robust, lau2024waterfall, pan2024markllm}, we use the \texttt{realnewslike} split of the C4 dataset  \citep{raffel2020exploring} as prompts for generating watermarked text.  We use the same 1K prompts for all models and all watermarking keys in order to make the comparison fair.  We consider three recent LMs: $\llama$ \citep{dubey2024llama}, $\mistral$ \citep{jiang2023mistral}, and $\phimini$ \citep{abdin2024phi}.  Note that versions of $\llama$ and $\mistral$ have been used in prior watermarking works as standard benchmarks \citep{kirchenbauer2023watermark, kuditipudi2023robust, lau2024waterfall, pan2024markllm, dathathri2024scalable}.  Unlike these two models, $\phimini$ was instruction fine-tuned and thus performs significantly better on a number of reasoning, math, and coding tasks \citep{abdin2024phi}  as well as generating substantially lower entropy sequences.  Past work has identified the connection between entropy and watermarking, with lower entropy sequences being more difficult to watermark \citep{kirchenbauer2023watermark, aaronson2022,  christ2024undetectable, kuditipudi2023robust}, and this connection manifests in our theory through both the requirement of sufficient coverage and the assumption that the featurization space is sufficiently high dimensional (cf. discussion following \pref{prop:softmax_power_adaptive}).  Thus, we see $\phimini$ as a particularly challenging model to watermark, which shows up empirically in \Cref{fig:numtokens} as well as in \Cref{fig1:sub1}.  In \Cref{fig:numtokens-gradnorm}, we see that as the number of generated tokens increases, $ \norm{\nabla \log p_\theta(y \mid{} x)}^2$ increases as well; moreover, comparing the expected squared norm of the gradient to the median p-value for each model (\Cref{fig1:sub1}) and rate of detection at p $= 0.05$ (\Cref{fig:medians-phi}), we see the strong relationship between the gradient norm and p-values predicted by our theory. 

A notable feature of $\gaussmark$ is its remarkable speed in both generation and detection. As shown in \Cref{fig:times}, generating 1K tokens takes only milliseconds, while detection requires less than a second even for the same token count. Comparing these times to those of \citet{kuditipudi2023robust}, we see that $\gaussmarkg$ is significantly faster than the reported times.  Moreover, $\gaussmarkg$ has lower latency than reported in \citet{dathathri2024scalable}, is orders of magnitude faster than the scheme of \citet{kirchenbauer2023watermark} (cf. \Cref{app:kgw}),  and offers the distinct advantage that it can be immediately integrated into an arbitrary generation accelerator, such as vLLM  \citep{kwon2023efficient}. In contrast, alternative approaches (such as \citet{dathathri2024scalable,kirchenbauer2023watermark}) often necessitate bespoke implementations to accommodate new, faster generation pipelines. Due to space constraints, we defer further investigation, including extensive ablations on the effect that applying $\gaussmark$ with different parameters has on detectability, to \Cref{app:detectability}.

\begin{table}[t]
    \centering

\resizebox{\textwidth}{!}{%
    \begin{tabular}{lccc}
        \toprule
        \textbf{Model} & \textbf{ $\superglue$  (Avg)} & \textbf{$\gsmk$ (Acc)} & \textbf{$\alpaca$ (win rate)} \\
        \midrule
        $\llama$ & 0.7094 $\pm$ 0.0327 & 0.6111 $\pm$ 0.0134 &  0.45 \\
        \rowcolor{lightgray} $\llama$ (Unwatermarked) & 0.7012 $\pm$ 0.0336 & 0.6300 $\pm$ 0.0133 & (0.46, 0.54) \\ 
        $\mistral$ & 0.7033 $\pm$ 0.0332 & 0.4321 $\pm$ 0.0136 &  0.50 \\
        \rowcolor{lightgray} $\mistral$ (Unwatermarked) & 0.6836 $\pm$ 0.0335 & 0.4291 $\pm$ 0.0136 & (0.49, 0.51) \\ 
        $\phimini$ & 0.6489 $\pm$ 0.0258  &  0.8552 $\pm$ 0.0097 &  0.49 \\ 
        \rowcolor{lightgray} $\phimini$ (Unwatermarked) & 0.6451 $\pm$ 0.0254 & 0.8423 $\pm$ 0.0100 & (0.46, 0.54) \\
        \bottomrule
    \end{tabular}
    }
    \caption{Performance of $\gaussmark$ on various models. }
    \label{tab:performance}
\end{table}

\subsection{Does $\gaussmark$ Degrade Model Quality?}\label{ssec:quality}
Unlike many other watermarking schemes, $\gaussmark$ is not distortion-free, thus, we must ensure that $\gaussmarkg$ does not lead to a loss of model performance.  To evaluate this, we assess the impact of $\gaussmark$ on model quality using three benchmarks: $\superglue$ \citep{wang2019superglue}, $\gsmk$ \citep{cobbe2021gsm8k}, and $\alpaca$ \citep{dubois2024length,alpaca_eval}. Additionally, in \Cref{app:modelquality}, we report the effect that watermarking has on the perplexity of the model on real text.  The first benchmark, $\superglue$, is a collection of eight challenging language understanding and reasoning tasks designed to be more difficult for modern LLMs compared to earlier benchmarks.  The second, $\gsmk$ is a collection of approximately 8K grade school math questions that measure the ability of the model to perform mathematical reasoning. For both of these tasks, we use the LM Evaluation Harness \citep{eval-harness} to evaluate the performance of the watermarked models using the Chain of Thought (CoT) and multi-shot prompting provided by \citet{eval-harness} in order to standardize our results.  Finally, $\alpaca$, is a significantly more challenging task where the candidate and benchmark language models generate responses to approximately 800 questions and a larger `evaluator' language model (used as a proxy for human preferences) decides which response is better. \asdelete{We compare answers generated by the watermarked models to those generated by the un-watermarked models and evaluate the win rate of the watermarked model as opposed to comparing the win rates of each model to some baseline responses in order to increase the signal of the comparison.} Instead of comparing win rates against responses {generated by humans or a much stronger model}, we directly evaluate the win rate of watermarked models relative to un-watermarked models to enhance the signal. Furthermore, for cost efficiency, we use GPT-4o, the current flagship model of OpenAI, as the evaluator, as opposed to the older GPT-4-Turbo model used in the original AlpacaEval paper. 

We report our results in \Cref{tab:performance}, where we compare the watermarked models to the unwatermarked models.  For $\superglue$, we average the scores over the 8 tasks (cf. \Cref{app:modelquality} for scores on individual tasks) and, along with the $\gsmk$ task, we report mean accuracy and standard error.  For $\alpaca$, we report the win rate of the watermarked model against the un-watermarked model; in order to handle stochasticity, we also consider the evaluator's preferences when comparing the unwatermarked model to itself with different seeds used for generation and report this performance as an interval.  We observe that each of the models discussed in \Cref{ssec:detectability} can be watermarked with $\gaussmark$ with essentially no loss in performance as measured by our three benchmarks.  In all cases, the performance of the watermarked model is within the confidence interval of that of the un-watermarked model, except for $\llama$ on $\alpaca$, where the win rate is within $1.5 \times 10^{-3}$ of the un-watermarked model's lower confidence bound.  We defer further discussion on the effect that $\gaussmark$ has on model quality to \Cref{app:modelquality}.

\subsection{Is $\gaussmark$ Robust?}\label{ssec:robustness}
\begin{figure}[t]
    \centering
    \subfigure[Adding random tokens.]{
        \includegraphics[width=0.45\textwidth]{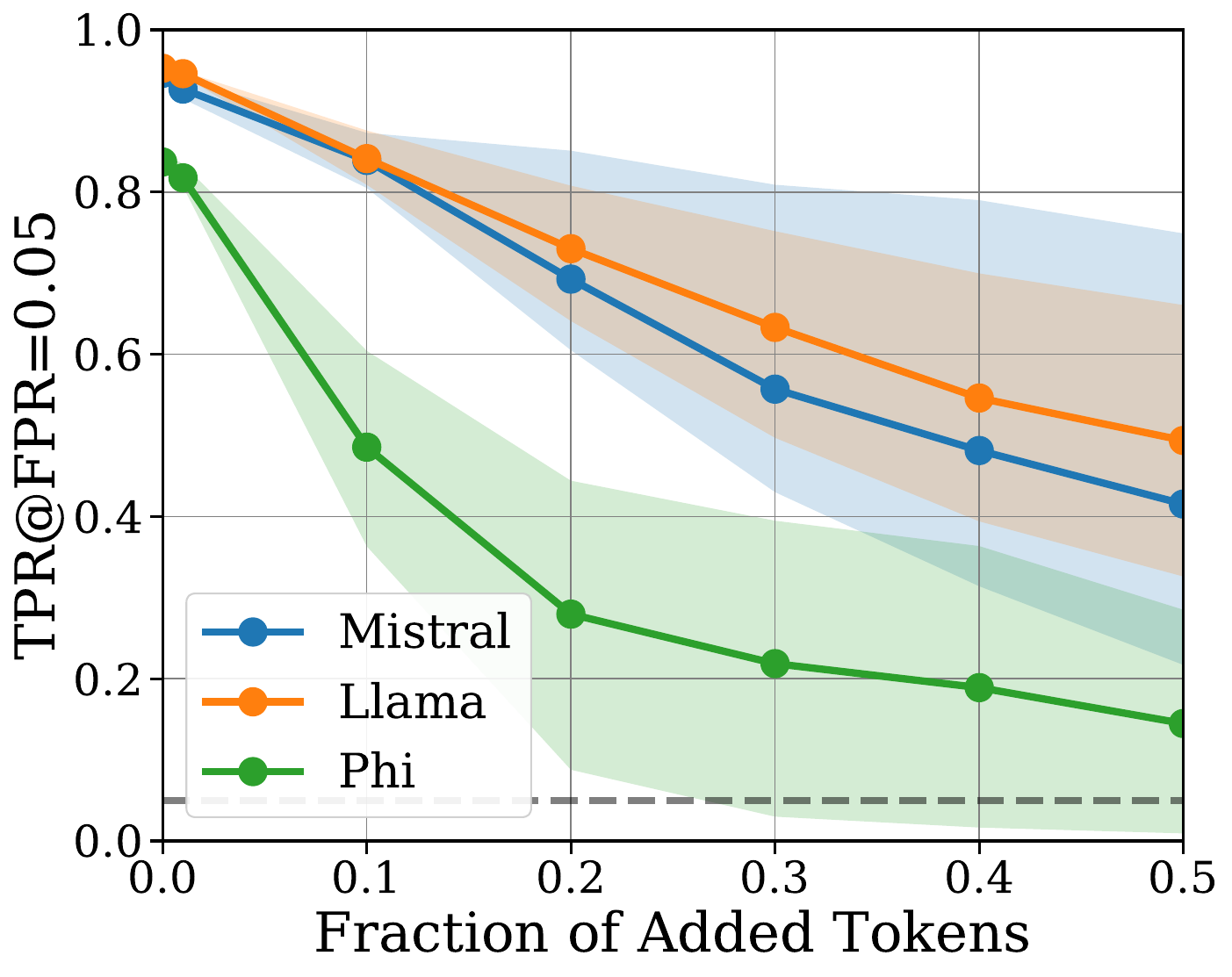}
        \label{sfig:add_random} 
    }
    \hfill \subfigure[Removing random tokens.]{
        \includegraphics[width=0.45\textwidth]{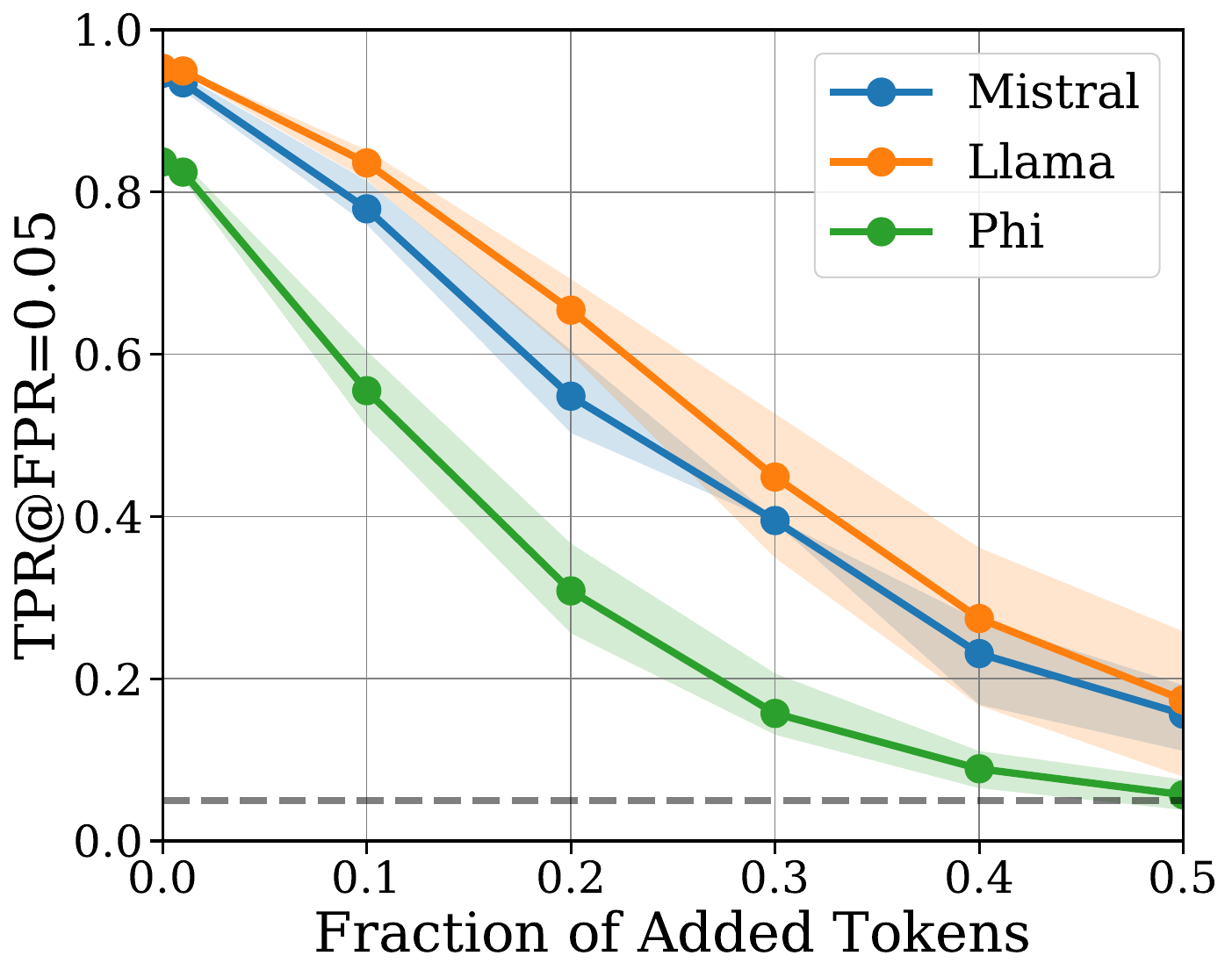}
        \label{sfig:remove_random} 
    }
    \hfill
    \subfigure[Substituting random tokens.]{
        \includegraphics[width=0.45\textwidth]{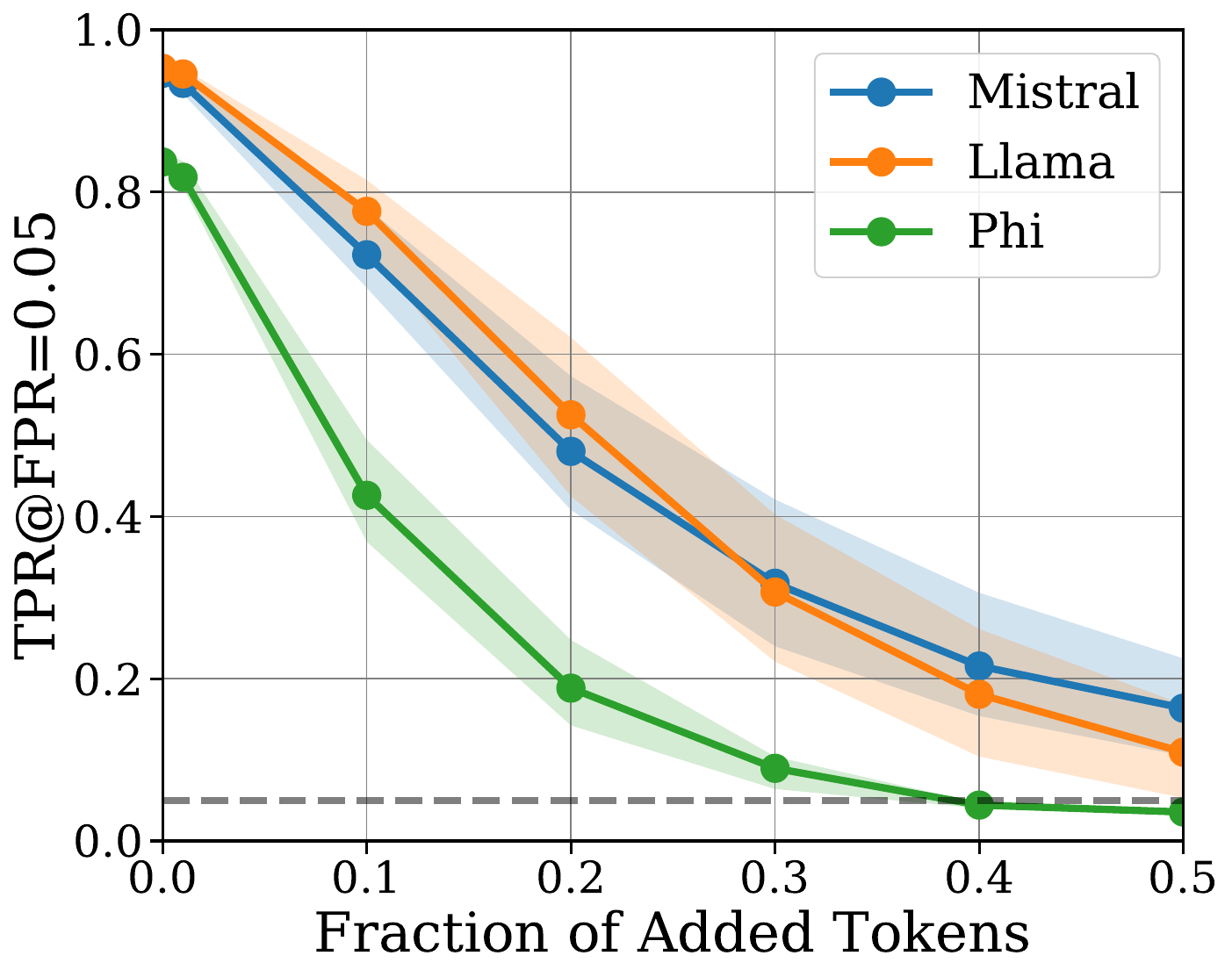}
        \label{sfig:substitute_random}
    }
    \hfill
    \subfigure[Roundtrip translation.]{
        \includegraphics[width=0.45\textwidth]{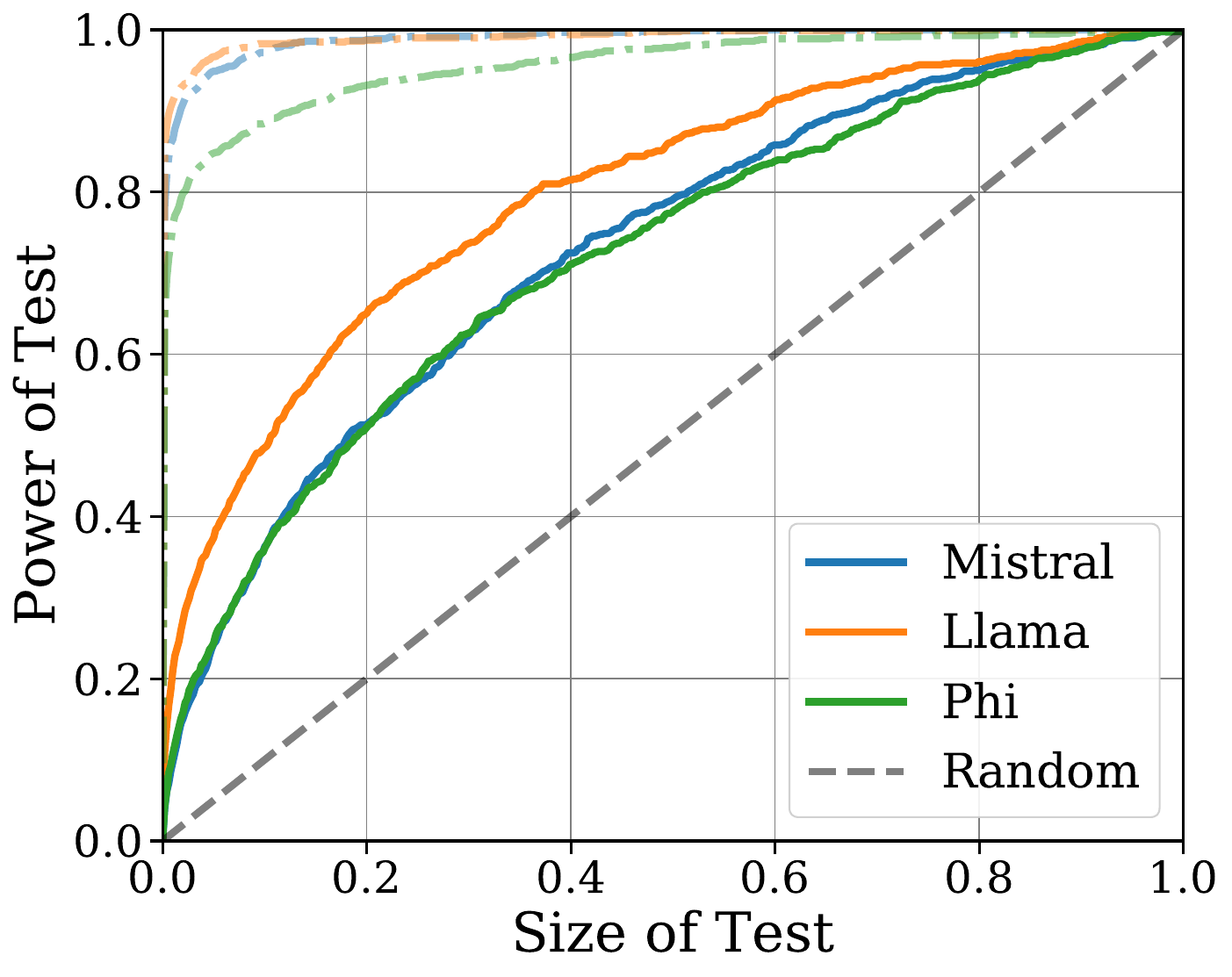}
        \label{sfig:roundtrip}
    }
    \caption{Demonstration of robustness of $\gaussmark$ to four kinds of corruptions.  We demonstrate the effects of (a) random insertions of tokens, (b) random deletions of tokens, and (c) random substitutions of tokens on the rate of detection of true positives of watermarked text averaged over 3 seeds.  We also consider the effect that  (d) roundtrip translation (through French) has on the ROC curve and include the ROC curves of the watermarked model on uncorrupted data for reference.  Note that $\gaussmark$ is relatively robust to token-level corruptions and retains nontrivial power even after the more challenging roundtrip translation attack.
    }
    \label{fig:corruptions}
\end{figure}

In addition to statistical validity, detectability, and lack of model degradation, an important property of a watermarking scheme is its robustness to corruption.  In many applications, the watermarked text may be modified in some ways, either intentionally (e.g.~to evade plagiarism accusations) or unintentionally (e.g.~when only a portion of a model's output is included in the final draft of a paper). Consequently, for a watermark to be practical, it must demonstrate at least some degree of robustness to such changes. 

We consider four kinds of corruptions commonly discussed in the watermarking literature  \citep{kuditipudi2023robust, christ2024undetectable, liu2024semanticinvariantrobustwatermark, zhao2023provable, golowich2024edit}: three types of token-level corruptions (random insertions, deletions, and substitutions) and a more challenging paraphrasing attack---namely, `roundtrip translation,' where watermarked text is translated into another language and then back to English. Details of the implementation of these attacks are provided in \Cref{app:robustness}.

For the three token-level attacks,  \Cref{fig:corruptions} (a)-(c) illustrates how the percentage of watermarked text detected at the $p=0.05$ level by $\gaussmarkd$ decreases as progressively larger fractions of the text are corrupted.  The results indicate that the degree of robustness varies significantly depending on both the model and the type of corruption,  with $\llama$ being sufficiently robust so as to retain significant detection power with even half of the tokens corrupted.   However, we emphasize that these token-level attacks are relatively crude and do not reflect realistic threat models, as the quality of the corrupted text degrades considerably.  To address this limitation, \Cref{fig:corruptions}-(d) presents the ROC curves for each model following roundtrip translation through French. Despite the increased difficulty of this attack, all models demonstrate nontrivial detection power, demonstrating some robustness in this more realistic corruption scenario. 

Finally, we note that in its most basic form, $\gaussmarkd$ requires knowledge of the prompt.  Indeed, the detection procedure is predicated on the assumption that we can take the gradient of the log probabilities of the \emph{generated} tokens conditional on the input.  Because such knowledge is rarely available in practice, in \Cref{fig:prompt_corruptions} we investigate the effect that inserting, deleting, and substituting tokens at the beginning of the concatenation of the input prompt and the generated text has on detectability.  We see that $\gaussmark$ is extremely robust to ignorance of the prompt, with the median p-values of the watermarked text remaining relatively stable across different prompt corruptions.  Indeed, because removing tokens from the concatenation of the prompt and LM's response results in significantly shorter sequences, we see by comparing \Cref{sfig:remove_start} to \Cref{fig:medians-phi} that much of the attenuation in detection power is due to the shorter length of the sequences rather than the corruption itself.  We defer further discussion and additional results to \Cref{app:robustness}.

\begin{figure}[t]
    \centering
    \subfigure[Adding tokens to prompt.]{
        \includegraphics[width=0.45\textwidth]{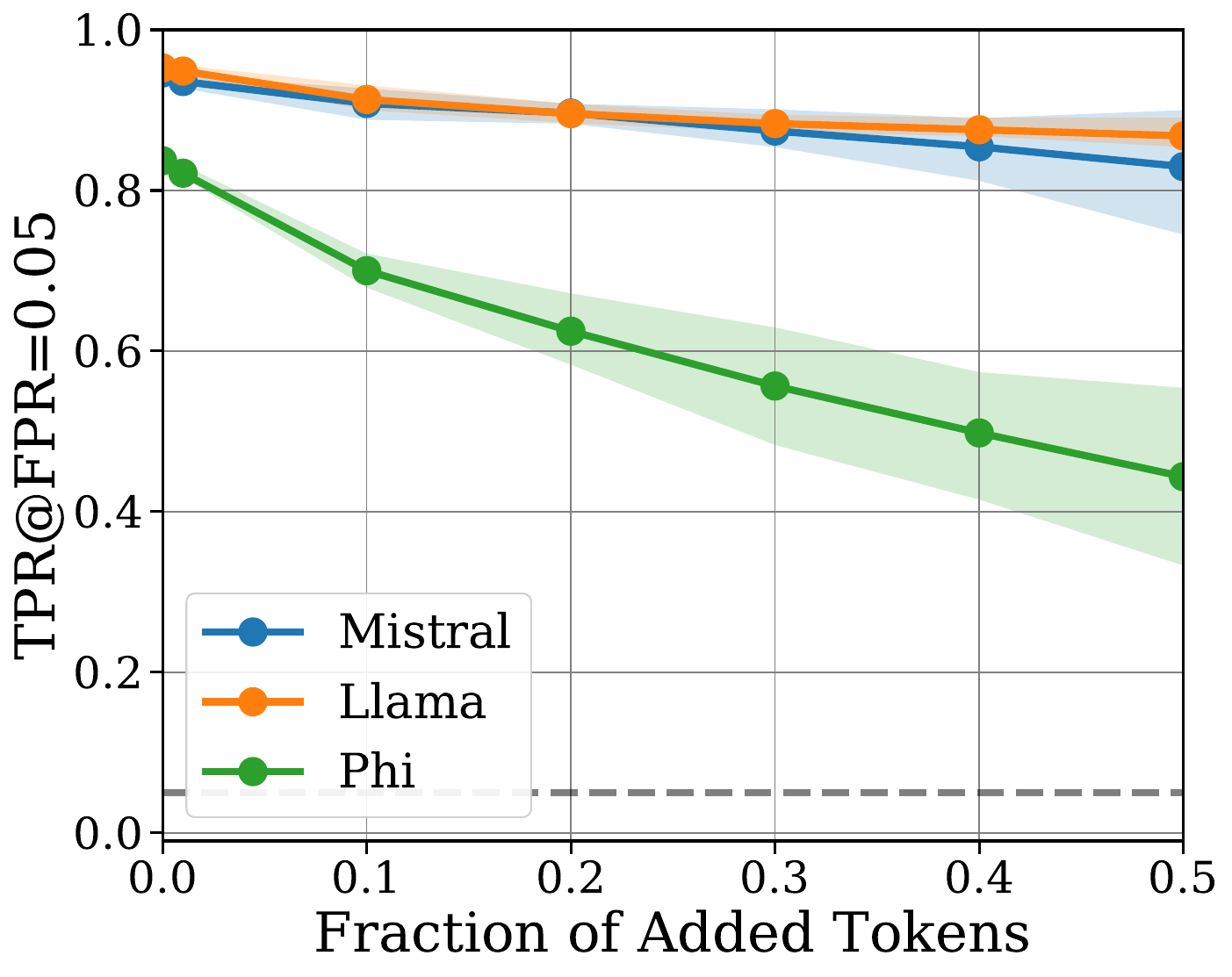}
        \label{sfig:add_start} 
    }
    \hfill \subfigure[Removing tokens from prompt.]{
        \includegraphics[width=0.45\textwidth]{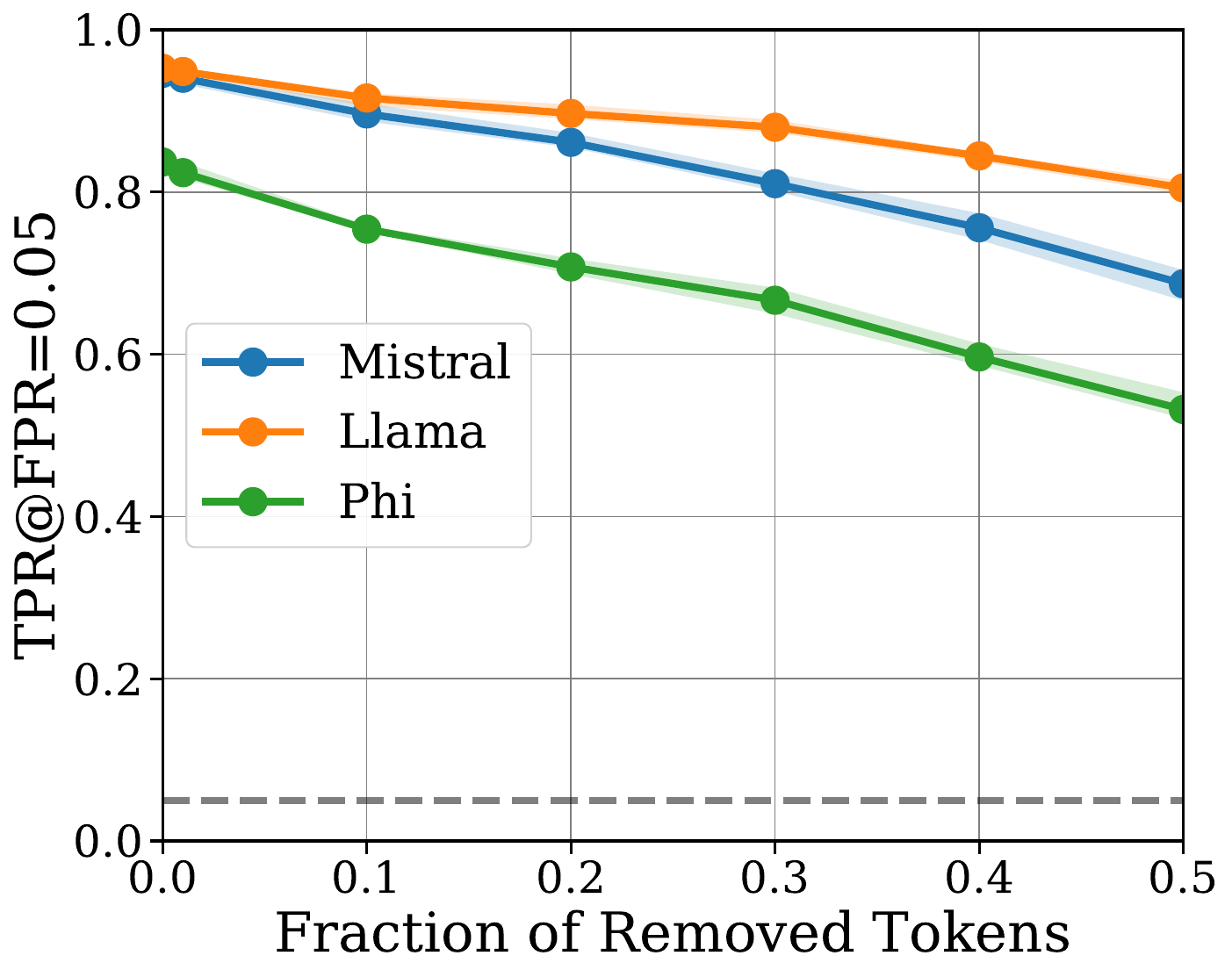}
        \label{sfig:remove_start}
    }
    \hfill
    \subfigure[Substituting tokens in prompt.]{
        \includegraphics[width=0.45\textwidth]{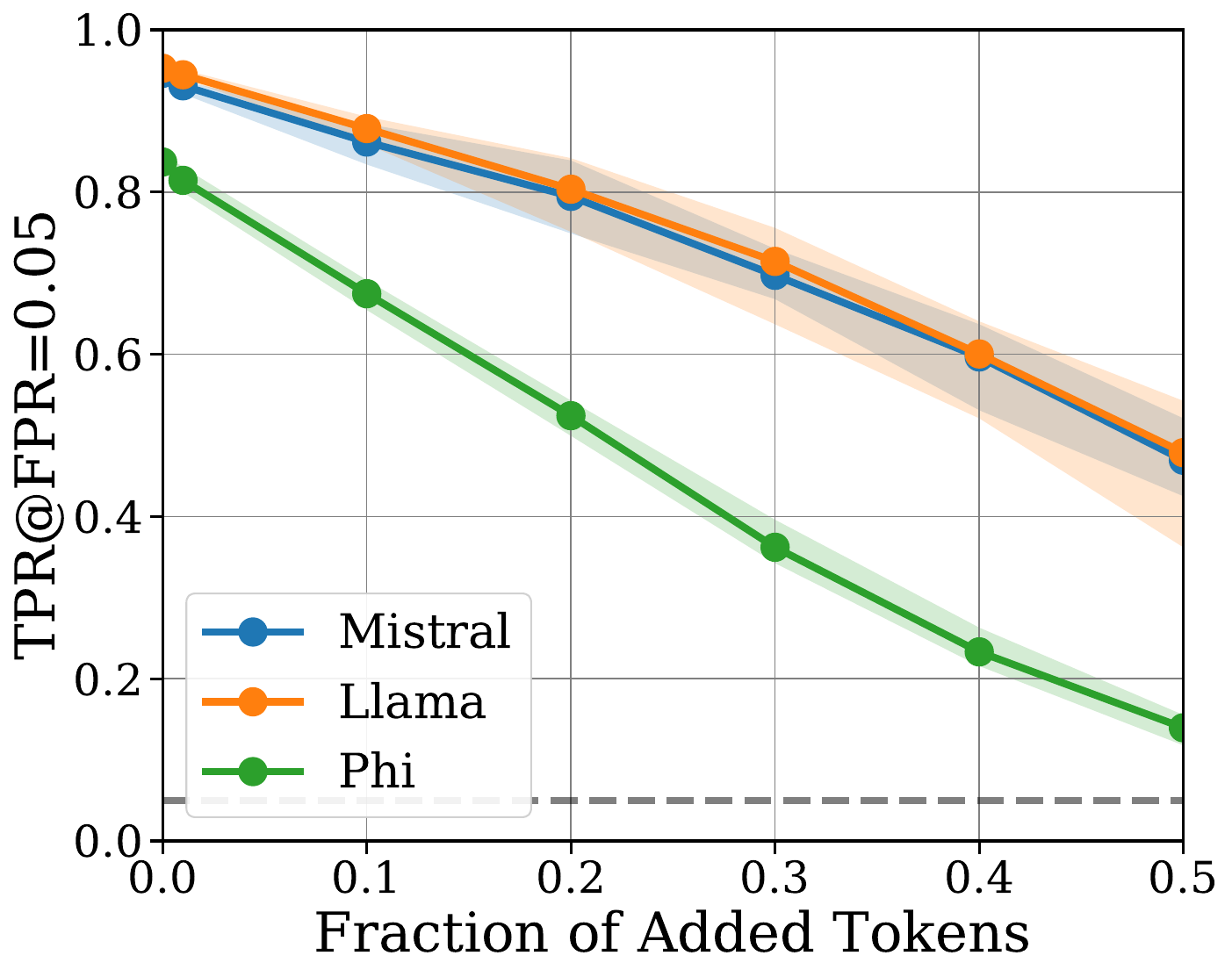}
        \label{sfig:substitute_start}
    }
    
    \caption{Demonstration of robustness of $\gaussmark$ to ignorance of prompt.  We demonstrate the effects of (a) inserting, (b) deleting, and (c) substituting parts of the prompt on the fraction of detected sequences at the $p=0.05$ level.  Note that $\gaussmark$ is robust to these corruptions, as the p-values remain relatively stable across different prompt corruptions, demonstrating that knowing the prompt is not necessary for watermark detection.
    }
    \label{fig:prompt_corruptions}
\end{figure}

\subsection{Can $\gaussmark$ be Improved with Rank-reduction?}\label{ssec:lowrank}

As discussed previously, the key challenge in applying $\gaussmark$ is identifying the appropriate parameters $\theta$ to which to add noise, taking into account the careful balance between detectability and model quality.  Recent empirical work has observed that the highest principal components of LMs' weight matrices are maximally significant in determining model predictions \citep{hu2021lora,sharmatruth2024, dettmers2024qlora, guo2023lq, han2024parameterefficientfinetuninglargemodels, wang2024milora}, {with \citet{sharmatruth2024}  in particular suggesting that the lower principal components of the weight matrices can be treated as `noise' and their removal can even improve model performance on some tasks.  Motivated by these observations,} we propose a rank-reduced version of $\gaussmark$ wherein we let $\Theta$ be the subspace spanned by the lower principal components of a given parameter matrix.\footnote{An alternative approach more directly inspired by the results of \citet{sharmatruth2024} would be to first project the weight matrix onto its top principal components and then add a small amount of noise to the bottom components.  While we did experiment with this technique, and observed that it was beneficial from the perspective of detectability and on $\superglue$, we found that it performed extremely poorly on $\alpaca$ and thus do not report it here.}.

\begin{figure}[t]
    \centering
    \subfigure[Detection of Rank-reduced $\gaussmark$.]{
        \includegraphics[width=0.45\textwidth]{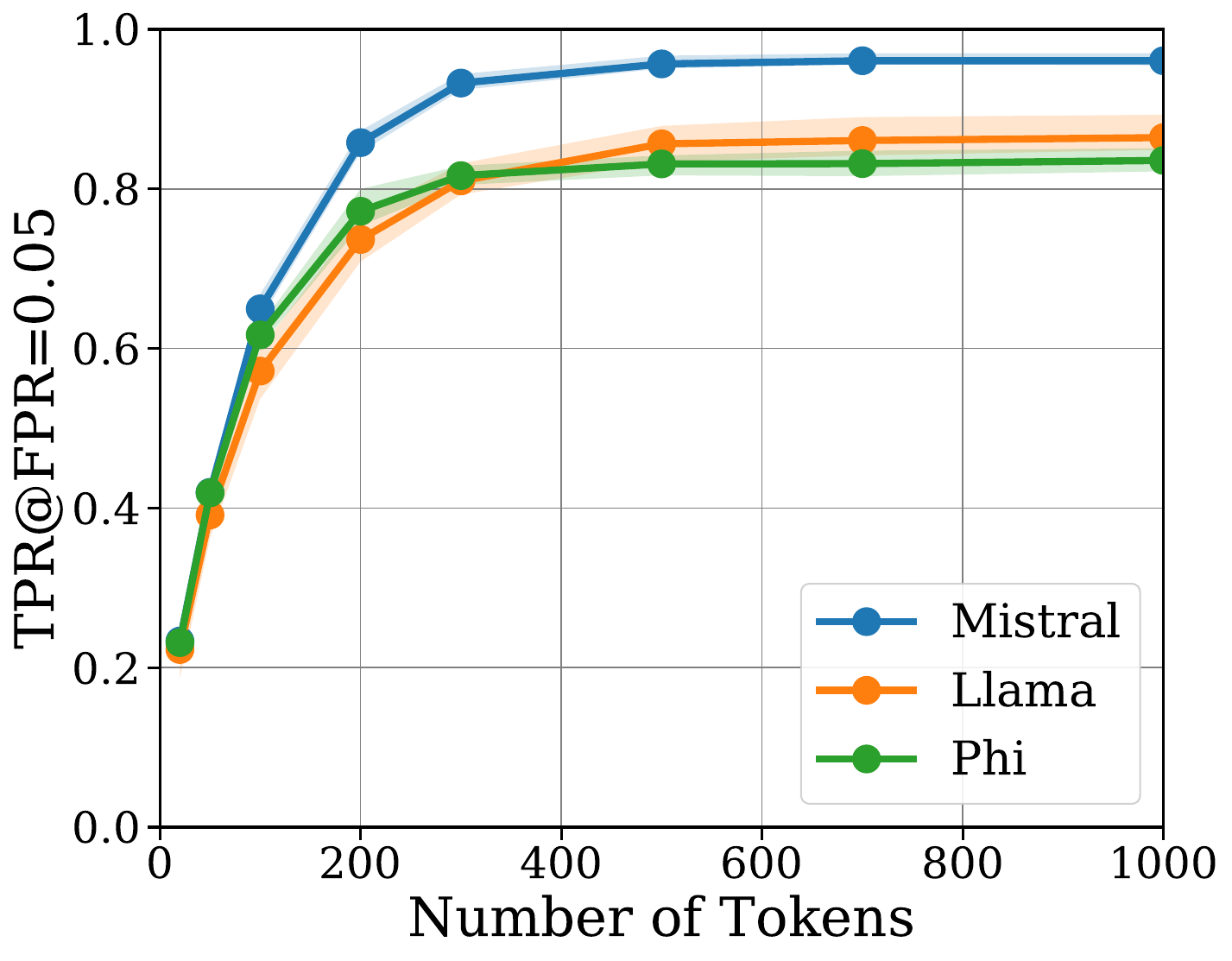}
        \label{sfig:low_rank_detected} 
    }
    \hfill \subfigure[$\phimini$ performance on $\gsmk$ (\texttt{Layer 31}, $\downproj$).]{
        \includegraphics[width=0.45\textwidth]{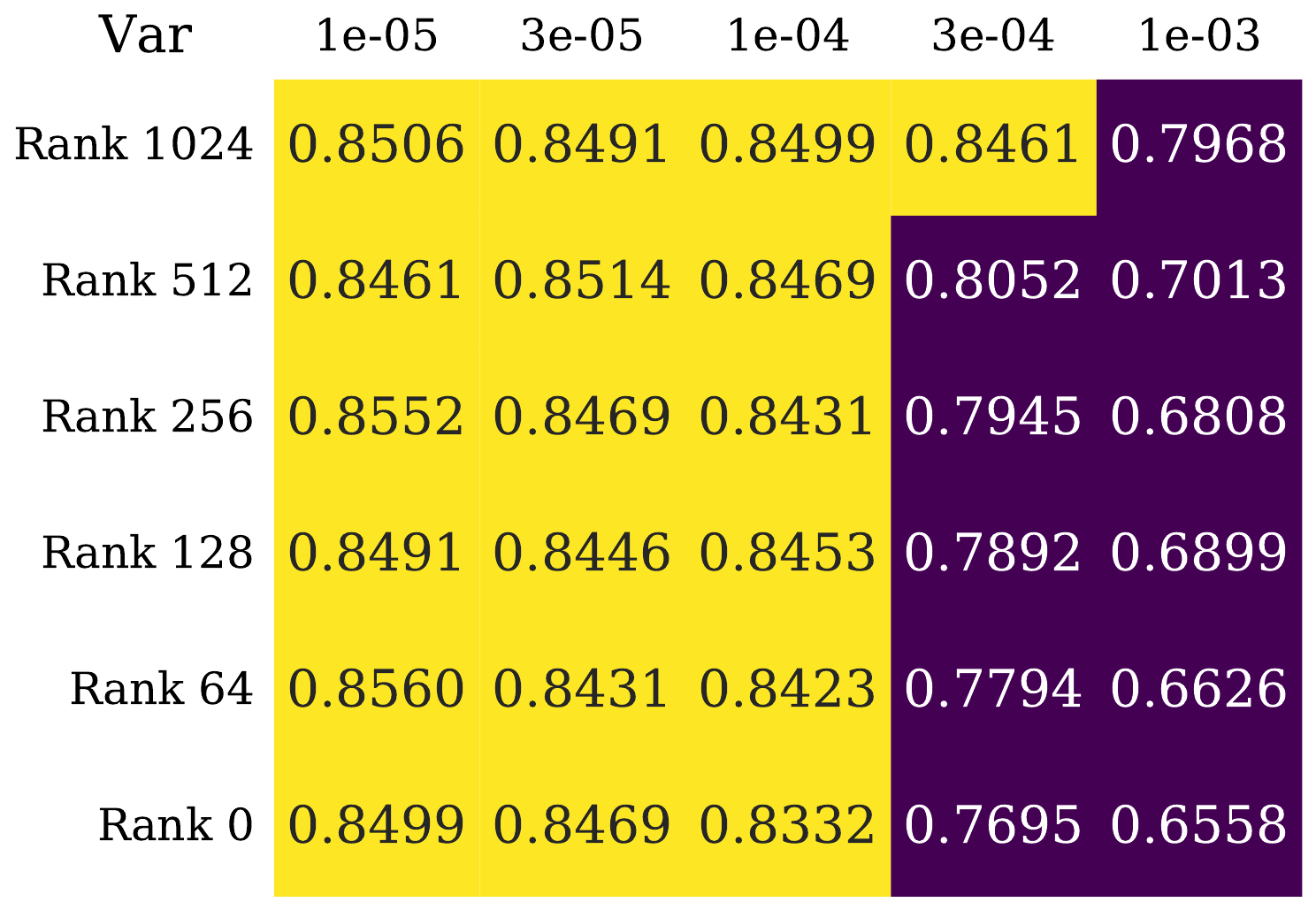}
        \label{sfig:low_rank_improvement}
    }
    \caption{Demonstration of effect of of rank-reduced $\gaussmark$.  (a) The fraction of detected watermarked responses at $p=0.05$ for the rank-reduced version of $\gaussmark$ averaged over 3 seeds for each model.  (b) The effect of rank-reduction on $\gaussmarkg$ as measured by the performance of $\phimini$ on $\gsmk$.  Each curve reflects a different number of the top principal components preserved by $\gaussmark$, with the $x$-axis being the variance of added noise.  As more PCs are removed, $\phimini$ can handle greater variance without a corresponding reduction in performance.
    }
    \label{fig:low_rank_body}
\end{figure}

We find that this rank-reduced version of $\gaussmark$ allows for significantly more noise to be added without compromising model quality (as measured by the metrics discussed in \pref{ssec:quality}). In \Cref{sfig:low_rank_detected}, we plot the fraction of detected watermarked responses at $p=0.05$ for the rank-reduced version of $\gaussmark$ averaged over 3 seeds for each model for a variety of response lengths, and observe a similar trend as in the vanilla instantiation of $\gaussmark$ (cf. \Cref{fig:medians-phi}).  In \Cref{sfig:low_rank_improvement}, we demonstrate that by {further reducing the rank of} the watermarking space, we are able to tolerate significantly greater variance without affecting model performance on $\gsmk$.  This trend persists on all our models and metrics and is discussed at greater length, along with many more empirical observations, in \Cref{app:rankreduced}.  Finally, while the rank-reduction approach serves to occasionally improve the quality of generations with no sacrifice in watermark detectability, we observe that the rank-reduced watermarks investigated here are less robust to corruption than the vanilla instantiation of $\gaussmark$ {(e.g., compare \Cref{fig:corruptions} to \Cref{fig:lowrank-robustness-05})}.  Whether this is an inherent feature of the rank-reduction or a result of the specific parameters chosen is an interesting question for future research.

\section{Related Works} \label{sec:related_work}

{In this section, we} discuss the most relevant works on LLM watermarking below, and refer the reader to 
\citet{liang2024watermarking, liu2024survey} for a thorough discussion on other watermarking schemes for text, image and audio generation.  

\paragraph{Watermarking Approaches with Statistical Guarantees.} 
Watermarking schemes with rigorous statistical guarantees can be broadly categorized as either \emph{distorionary} or \emph{non-distortionary}. In the former, the scheme modifies the distribution of the LM's next-token predictions; for example, \citet{kirchenbauer2023watermark, zhao2023provable, zhao2023protecting, aaronson2022} introduced schemes that partition the vocabulary into a `red set' and a `green set,' and adjust the distribution to favor green-set tokens in a manner detectable by examining empirical frequencies of tokens in generated text.  While such schemes offer theoretical guarantees, they often lead to significant distortion in the generated text due to token biasing; moreover, they can be quite vulnerable to paraphrasing attacks \citep{rastogi2024revisiting,hou2023semstamp,chang2024watermark,jovanovic2024watermark}. We empirically compare $\gaussmark$ to the scheme introduced by \citet{kirchenbauer2023watermark} in \Cref{app:kgw}. 

The second category, non-distortionary schemes, involves concealing the watermarking signal by influencing the pseudo-random number generator (PRNG) involved in sampling individual tokens.  \citet{kuditipudi2023robust} provide an example of an extremely robust such scheme; unfortunately, it suffers from poor scalability due to the requirement that the length of the watermarking key must be proportional to the number of queries submitted to the model in order to maintain diversity of responses; with the popularity of contemporary LMs, this would lead to impractically large keys and slow detection times.  In order to avoid the limitations of \citet{kuditipudi2023robust}, \citet{christ2024undetectable} introduce a stronger, cryptographically-inspired notion of \emph{undetectability} and they, along with \cite{golowich2024edit, zamir2024excuse, fairoze2023publicly}, provide undetectable watermarking schemes that pseudorandomly select the next token using a hash function of the preceding $k$ tokens, based on a secret key with provable guarantees both on statistical validity and robustness to token-level corruptions.
While important theoretical contributions,  these schemes are based on cryptographical primitives that are not yet ready for large-scale deployment in LLM generation due to the resulting increase in generation latency. For example, the scheme of \citet{fairoze2023publicly} requires hundreds of seconds for generation and detection in contrast to our scheme, which achieves orders-of-magnitude faster generation and detection speeds for longer token sequences (\pref{fig:times}).  In contradistinction to the above schemes, $\gaussmark$ is \emph{white-box}, i.e. it requires access to the weights of the LM and is distortionary; however, we empirically demonstrate that the perturbations we introduce do not result in a statistically significant decline in model quality.

\paragraph{Other Practical Watermarking Methods.} While the above schemes provide strong theoretical guarantees, with the exception of the schemes of \citet{kirchenbauer2023watermark,kirchenbauer2023reliability,kuditipudi2023robust}, they have generally not been seriously evaluated empirically.  Due to the aforementioned drawbacks of both of these schemes, \citet{dathathri2024scalable} proposed SynthID-Text, which relies on an efficient implementation of ideas similar to the approach of \citet{christ2024undetectable, zamir2024excuse, golowich2024edit}.  While the bespoke implementation on several select models (Gemma 2B, Gemma 7B \citep{team2024gemma}, and Mistral 7B-IT \citep{jiang2023mistral}) is efficient and has relatively low latency, the approach considered empirically unfortunately does not ensure statistical validity.  Indeed, their experiments require the training of an MLP detector to classify watermarked and unwatermarked text to score the likelihood that a given sequence is watermarked; they then report the true positive rate (TPR) at a fixed \emph{empirical false positive rate} (FPR).  While attractive in its simplicity, even beyond the additional challenge of training a detector, there is no guarantee that this detector is robust to out-of-distribution prompts or even OOD responses generated by a very different model or human, and the lack of statistical guarantees makes it difficult to rely on this method in practical deployment.

\paragraph{Post-hoc Detection of AI-generated text.} 
An alternative to developing rigorous watermarking schemes is to design algorithms that can post-hoc distinguish whether a given piece of text is more likely to be AI or Human-generated, e.g. by relying on statistical properties of the text under a given language model \citep{gehrmann2019gltr, mitchell2023detectgpt}, or by using specifically trained classifiers to discriminate human-text from AI-text \citep{OpenAI3, zellers2019defending}. Given the rapid rate of progress in natural language processing, however, the development of better and larger language models will only make post-hoc detection harder \citep{schutz2021automatic, gambini2022pushing, sadasivan2023can}; furthermore, for a number of applications, such as detecting cheating in scholastic environments, a well-calibrated procedure is essential, which relies on formal statistical guarantees.

\paragraph{Watermarking the model by perturbing weights.} Finally, we note that earlier works have considered perturbing model weights to prevent IP theft \citep{li2023watermarking, fernandez2024functional, boroujeny2024multi}. For example, \citet{aaronson2022, li2023watermarking, boroujeny2024multi} use 
the gap between the quantized model and the full-precision model to plant the watermarks and \citet{fernandez2024functional} use model symmetries to generate an invariant model copy. While our approach also alters the weights of the network, our objective is different from theirs as we want to watermark the generated text instead of the model itself.

\paragraph{Attacks on Watermarked Models.} A series of works have demonstrated that adversaries with substantial resources can compromise watermarking schemes. \citet{gu2023learnability}  demonstrate that watermarking schemes can be learned using a student-teacher framework.  \citet{kirchenbauer2023reliability, christ2024undetectable,  pang2024attacking,  zhang2023watermarks} developed further attacks on watermarking schemes. These attacks are generally resource-intensive, often requiring repeated access to an oracle capable of generating high-quality perturbations of input text or evaluating the quality of a given text. Unfortunately, with sufficient resources, an adversary can bypass these schemes entirely by training their own "unwatermarked" AI model to produce content.

\section{Discussion}\label{sec:discussion}

In this work we introduced a new watermarking scheme, $\gaussmark$, with theoretical guarantees on statistical validity and power.  Through an extensive empirical suite, we demonstrated that $\gaussmark$ is simple, detectable, and does not significantly impact model quality, all while having no impact on generation latency.  We now discuss the advantages and disadvantages of $\gaussmark$ relative to prior watermarking schemes and suggest potential future directions.

\paragraph{Advantages of $\gaussmark$.}  The primary advantage of $\gaussmark$ is that it is the first simple, efficient, and statistically valid watermarking scheme that can be immediately integrated into existing inference pipelines with essentially no modification and absolutely no impact on generation latency.  In addition, detection speed is extremely fast relative to many alternative schemes and has memory requirements essentially on par with those of inference.  Furthermore, $\gaussmark$ is relatively robust to corruption and, in practice, does not need to know the prompt or any other information about the text to detect the watermark (cf.~\Cref{fig:prompt_corruptions}).  Finally, in contradistinction to many alternative watermarking schemes, which focus on the next-token sampling process and whose robustness is often limited to edit-distance corruptions, $\gaussmark$ changes the distribution of the text itself and thus can be viewed as a form of `structural' watermarking, which has the potential to be more robust to a wider variety of corruptions such as paraphrasing and roundtrip translation. 

\paragraph{Limitations of $\gaussmark$.}  While there are a number of benefits of $\gaussmark$ relative to prior approaches, there are four key limitations.  First, unlike many of the popular alternatives, $\gaussmark$ has no formal guarantees of distortion freeness or indistinguishability.  While we have empirically demonstrated that $\gaussmark$ does not lead to any significant deterioration in model quality on a range of understanding, reasoning, math, and conversational benchmarks (cf. \Cref{ssec:quality}), further investigation on the effects on text quality is required before $\gaussmark$ can be safely deployed in practical scenarios.  Second, $\gaussmark$, in general, requires more tokens to reliably detect a watermark than some alternative approaches  \citep{kuditipudi2023robust, kirchenbauer2023reliability, dathathri2024scalable}; that said, the speed of $\gaussmarkd$ allows our approach to remain competitive, as we can detect even 1K tokens in orders of magnitude less time than some alternative approaches require to detect 200 tokens  \citep{kirchenbauer2023reliability}.  Third, unlike several alternative schemes, $\gaussmark$ assumes \emph{white-box} access to model weights in order for detection to be possible; we view this as natural, motivated by a setting where the LM provider offers watermarking detection as an additional service on top of the LM itself, but this is a significantly stronger access model than is considered in some other works that only modify the sampling process \citep{golowich2024edit, christ2024undetectable, aaronson2022, kirchenbauer2023reliability, kuditipudi2023robust}.  Finally, $\gaussmark$ is somewhat less robust in terms of edit-level corruptions than \citet{kuditipudi2023robust}, although is potentially \emph{more robust} than instantiations of the scheme of \citet{kirchenbauer2023watermark} that preserve text quality. 

\paragraph{Future Directions.} We view $\gaussmark$ as a step towards generating  \emph{structural watermarks} in LMs, but the optimal way to do this remains open.  In particular, while we propose adding noise to low-rank components of the model weights as some way to mitigate the tradeoff between model quality and watermark detectability, we did not explore alternative methods of improving on this tradeoff.  Potential other methods include training the model slightly after adding noise, adding noise only to a subnetwork identified as either relevant or irrelevant to performance through the techniques of mechanistic interpretability \citep{templeton2024scaling, bereska2024mechanistic, mazzia2024survey, sharmatruth2024}, and many other possibilities. We also did not explore the possibility of stacking different watermarks, which could potentially allow for more robust watermarking at the cost of more detectable noise.

\section*{Acknowledgements}

AS thanks Martin Pawelczyk for useful discussions. AB thanks Jordan Ash and Cyril Zhang for helpful discussions. AS and SR acknowledge support from the Simons
Foundation and NSF through award DMS-2031883, as well as from ARO through award W911NF-21-1-0328.

\bibliographystyle{plainnat} 
\bibliography{refs}

\begin{thebibliography}{12}
\providecommand{\natexlab}[1]{#1}
\providecommand{\url}[1]{\texttt{#1}}
\expandafter\ifx\csname urlstyle\endcsname\relax
  \providecommand{\doi}[1]{doi: #1}\else
  \providecommand{\doi}{doi: \begingroup \urlstyle{rm}\Url}\fi

\bibitem[Carlini et~al.(2022)Carlini, Chien, Nasr, Song, Terzis, and Tramer]{carlini2022membership}
Carlini, N., Chien, S., Nasr, M., Song, S., Terzis, A., and Tramer, F.
\newblock Membership inference attacks from first principles.
\newblock In \emph{2022 IEEE Symposium on Security and Privacy (SP)}, pp.\  1897--1914. IEEE, 2022.

\bibitem[Casella \& Berger(2001)Casella and Berger]{casella2024statistical}
Casella, G. and Berger, R.
\newblock \emph{Statistical inference}.
\newblock {Duxbury Resource Center}, June 2001.
\newblock ISBN 0534243126.

\bibitem[Dathathri et~al.(2024)Dathathri, See, Ghaisas, Huang, McAdam, Welbl, Bachani, Kaskasoli, Stanforth, Matejovicova, et~al.]{dathathri2024scalable}
Dathathri, S., See, A., Ghaisas, S., Huang, P.-S., McAdam, R., Welbl, J., Bachani, V., Kaskasoli, A., Stanforth, R., Matejovicova, T., et~al.
\newblock Scalable watermarking for identifying large language model outputs.
\newblock \emph{Nature}, 634\penalty0 (8035):\penalty0 818--823, 2024.

\bibitem[Gao et~al.(2024)Gao, Tow, Abbasi, Biderman, Black, DiPofi, Foster, Golding, Hsu, Le~Noac'h, Li, McDonell, Muennighoff, Ociepa, Phang, Reynolds, Schoelkopf, Skowron, Sutawika, Tang, Thite, Wang, Wang, and Zou]{eval-harness}
Gao, L., Tow, J., Abbasi, B., Biderman, S., Black, S., DiPofi, A., Foster, C., Golding, L., Hsu, J., Le~Noac'h, A., Li, H., McDonell, K., Muennighoff, N., Ociepa, C., Phang, J., Reynolds, L., Schoelkopf, H., Skowron, A., Sutawika, L., Tang, E., Thite, A., Wang, B., Wang, K., and Zou, A.
\newblock A framework for few-shot language model evaluation, 07 2024.
\newblock URL \url{https://zenodo.org/records/12608602}.

\bibitem[Kirchenbauer et~al.(2023{\natexlab{a}})Kirchenbauer, Geiping, Wen, Katz, Miers, and Goldstein]{kirchenbauer2023watermark}
Kirchenbauer, J., Geiping, J., Wen, Y., Katz, J., Miers, I., and Goldstein, T.
\newblock A watermark for large language models.
\newblock In \emph{International Conference on Machine Learning}, pp.\  17061--17084. PMLR, 2023{\natexlab{a}}.

\bibitem[Kirchenbauer et~al.(2023{\natexlab{b}})Kirchenbauer, Geiping, Wen, Shu, Saifullah, Kong, Fernando, Saha, Goldblum, and Goldstein]{kirchenbauer2023reliability}
Kirchenbauer, J., Geiping, J., Wen, Y., Shu, M., Saifullah, K., Kong, K., Fernando, K., Saha, A., Goldblum, M., and Goldstein, T.
\newblock On the reliability of watermarks for large language models.
\newblock \emph{arXiv preprint arXiv:2306.04634}, 2023{\natexlab{b}}.

\bibitem[Kuditipudi et~al.(2023)Kuditipudi, Thickstun, Hashimoto, and Liang]{kuditipudi2023robust}
Kuditipudi, R., Thickstun, J., Hashimoto, T., and Liang, P.
\newblock Robust distortion-free watermarks for language models.
\newblock \emph{arXiv preprint arXiv:2307.15593}, 2023.

\bibitem[Kwon et~al.(2023)Kwon, Li, Zhuang, Sheng, Zheng, Yu, Gonzalez, Zhang, and Stoica]{kwon2023efficient}
Kwon, W., Li, Z., Zhuang, S., Sheng, Y., Zheng, L., Yu, C.~H., Gonzalez, J.~E., Zhang, H., and Stoica, I.
\newblock Efficient memory management for large language model serving with pagedattention.
\newblock In \emph{Proceedings of the ACM SIGOPS 29th Symposium on Operating Systems Principles}, 2023.

\bibitem[Raffel et~al.(2020)Raffel, Shazeer, Roberts, Lee, Narang, Matena, Zhou, Li, and Liu]{raffel2020exploring}
Raffel, C., Shazeer, N., Roberts, A., Lee, K., Narang, S., Matena, M., Zhou, Y., Li, W., and Liu, P.~J.
\newblock Exploring the limits of transfer learning with a unified text-to-text transformer.
\newblock \emph{Journal of machine learning research}, 21\penalty0 (140):\penalty0 1--67, 2020.

\bibitem[Tiedemann et~al.(2022)Tiedemann, Aulamo, Bakshandaeva, Boggia, Gr{\"o}nroos, Nieminen, Raganato, Scherrer, Vazquez, and Virpioja]{tiedemann2022democratizing}
Tiedemann, J., Aulamo, M., Bakshandaeva, D., Boggia, M., Gr{\"o}nroos, S.-A., Nieminen, T., Raganato, A., Scherrer, Y., Vazquez, R., and Virpioja, S.
\newblock Democratizing machine translation with opus-mt.
\newblock \emph{arXiv preprint arXiv:2212.01936}, 2022.

\bibitem[Vershynin(2018)]{vershynin2018high}
Vershynin, R.
\newblock \emph{High-dimensional probability: An introduction with applications in data science}, volume~47.
\newblock Cambridge university press, 2018.

\bibitem[Wolf et~al.(2020)]{huggingface2020}
Wolf, T. et~al.
\newblock Huggingface's transformers: State-of-the-art natural language processing.
\newblock \url{https://github.com/huggingface/transformers}, 2020.

\end{thebibliography}


\begin{thebibliography}{97}
\providecommand{\natexlab}[1]{#1}
\providecommand{\url}[1]{\texttt{#1}}
\expandafter\ifx\csname urlstyle\endcsname\relax
  \providecommand{\doi}[1]{doi: #1}\else
  \providecommand{\doi}{doi: \begingroup \urlstyle{rm}\Url}\fi

\bibitem[Aaronson and Kirchner(2022)]{aaronson2022}
Scott Aaronson and Hendrik Kirchner.
\newblock Watermarking gpt outputs.
\newblock 2022.
\newblock URL \url{https://www.scottaaronson.com/talks/watermark.ppt}.

\bibitem[Abdin et~al.(2024)Abdin, Aneja, Awadalla, Awadallah, Awan, Bach, Bahree, Bakhtiari, Bao, Behl, et~al.]{abdin2024phi}
Marah Abdin, Jyoti Aneja, Hany Awadalla, Ahmed Awadallah, Ammar~Ahmad Awan, Nguyen Bach, Amit Bahree, Arash Bakhtiari, Jianmin Bao, Harkirat Behl, et~al.
\newblock Phi-3 technical report: A highly capable language model locally on your phone.
\newblock \emph{arXiv preprint arXiv:2404.14219}, 2024.

\bibitem[Achiam et~al.(2023)Achiam, Adler, Agarwal, Ahmad, Akkaya, Aleman, Almeida, Altenschmidt, Altman, Anadkat, et~al.]{achiam2023gpt}
Josh Achiam, Steven Adler, Sandhini Agarwal, Lama Ahmad, Ilge Akkaya, Florencia~Leoni Aleman, Diogo Almeida, Janko Altenschmidt, Sam Altman, Shyamal Anadkat, et~al.
\newblock Gpt-4 technical report.
\newblock \emph{arXiv preprint arXiv:2303.08774}, 2023.

\bibitem[Ahn et~al.(2024)Ahn, Verma, Lou, Liu, Zhang, and Yin]{ahn2024large}
Janice Ahn, Rishu Verma, Renze Lou, Di~Liu, Rui Zhang, and Wenpeng Yin.
\newblock Large language models for mathematical reasoning: Progresses and challenges.
\newblock \emph{arXiv preprint arXiv:2402.00157}, 2024.

\bibitem[Bereska and Gavves(2024)]{bereska2024mechanistic}
Leonard Bereska and Efstratios Gavves.
\newblock Mechanistic interpretability for ai safety--a review.
\newblock \emph{arXiv preprint arXiv:2404.14082}, 2024.

\bibitem[Boroujeny et~al.(2024)Boroujeny, Jiang, Zeng, and Mark]{boroujeny2024multi}
Massieh~Kordi Boroujeny, Ya~Jiang, Kai Zeng, and Brian Mark.
\newblock Multi-bit distortion-free watermarking for large language models.
\newblock \emph{arXiv preprint arXiv:2402.16578}, 2024.

\bibitem[Carlini et~al.(2022)Carlini, Chien, Nasr, Song, Terzis, and Tramer]{carlini2022membership}
Nicholas Carlini, Steve Chien, Milad Nasr, Shuang Song, Andreas Terzis, and Florian Tramer.
\newblock Membership inference attacks from first principles.
\newblock In \emph{2022 IEEE Symposium on Security and Privacy (SP)}, pages 1897--1914. IEEE, 2022.

\bibitem[Casella and Berger(2001)]{casella2024statistical}
George Casella and Roger Berger.
\newblock \emph{Statistical inference}.
\newblock {Duxbury Resource Center}, June 2001.
\newblock ISBN 0534243126.

\bibitem[Chang et~al.(2024)Chang, Hassani, and Shokri]{chang2024watermark}
Hongyan Chang, Hamed Hassani, and Reza Shokri.
\newblock Watermark smoothing attacks against language models.
\newblock \emph{arXiv preprint arXiv:2407.14206}, 2024.

\bibitem[Christ et~al.(2024)Christ, Gunn, and Zamir]{christ2024undetectable}
Miranda Christ, Sam Gunn, and Or~Zamir.
\newblock Undetectable watermarks for language models.
\newblock In \emph{The Thirty Seventh Annual Conference on Learning Theory}, pages 1125--1139. PMLR, 2024.

\bibitem[Chronopoulou et~al.(2023)Chronopoulou, Peters, Fraser, and Dodge]{chronopoulou2023adaptersoup}
Alexandra Chronopoulou, Matthew~E Peters, Alexander Fraser, and Jesse Dodge.
\newblock Adaptersoup: Weight averaging to improve generalization of pretrained language models.
\newblock \emph{arXiv preprint arXiv:2302.07027}, 2023.

\bibitem[Clark et~al.(2021)Clark, August, Serrano, Haduong, Gururangan, and Smith]{clark2021all}
Elizabeth Clark, Tal August, Sofia Serrano, Nikita Haduong, Suchin Gururangan, and Noah~A Smith.
\newblock All that's 'human' is not gold: Evaluating human evaluation of generated text.
\newblock \emph{arXiv preprint arXiv:2107.00061}, 2021.

\bibitem[Cobbe et~al.(2021)Cobbe, Kosaraju, Bavarian, Chen, Jun, Kaiser, Plappert, Tworek, Hilton, Nakano, Hesse, and Schulman]{cobbe2021gsm8k}
Karl Cobbe, Vineet Kosaraju, Mohammad Bavarian, Mark Chen, Heewoo Jun, Lukasz Kaiser, Matthias Plappert, Jerry Tworek, Jacob Hilton, Reiichiro Nakano, Christopher Hesse, and John Schulman.
\newblock Training verifiers to solve math word problems.
\newblock \emph{arXiv preprint arXiv:2110.14168}, 2021.

\bibitem[Dathathri et~al.(2024)Dathathri, See, Ghaisas, Huang, McAdam, Welbl, Bachani, Kaskasoli, Stanforth, Matejovicova, et~al.]{dathathri2024scalable}
Sumanth Dathathri, Abigail See, Sumedh Ghaisas, Po-Sen Huang, Rob McAdam, Johannes Welbl, Vandana Bachani, Alex Kaskasoli, Robert Stanforth, Tatiana Matejovicova, et~al.
\newblock Scalable watermarking for identifying large language model outputs.
\newblock \emph{Nature}, 634\penalty0 (8035):\penalty0 818--823, 2024.

\bibitem[Dettmers et~al.(2024)Dettmers, Pagnoni, Holtzman, and Zettlemoyer]{dettmers2024qlora}
Tim Dettmers, Artidoro Pagnoni, Ari Holtzman, and Luke Zettlemoyer.
\newblock Qlora: Efficient finetuning of quantized llms.
\newblock \emph{Advances in Neural Information Processing Systems}, 36, 2024.

\bibitem[Dubey et~al.(2024)Dubey, Jauhri, Pandey, Kadian, Al-Dahle, Letman, Mathur, Schelten, Yang, Fan, et~al.]{dubey2024llama}
Abhimanyu Dubey, Abhinav Jauhri, Abhinav Pandey, Abhishek Kadian, Ahmad Al-Dahle, Aiesha Letman, Akhil Mathur, Alan Schelten, Amy Yang, Angela Fan, et~al.
\newblock The llama 3 herd of models.
\newblock \emph{arXiv preprint arXiv:2407.21783}, 2024.

\bibitem[Dubois et~al.(2024)Dubois, Galambosi, Liang, and Hashimoto]{dubois2024length}
Yann Dubois, Bal{\'a}zs Galambosi, Percy Liang, and Tatsunori~B Hashimoto.
\newblock Length-controlled alpacaeval: A simple way to debias automatic evaluators.
\newblock \emph{arXiv preprint arXiv:2404.04475}, 2024.

\bibitem[Dugan et~al.(2023)Dugan, Ippolito, Kirubarajan, Shi, and Callison-Burch]{dugan2023real}
Liam Dugan, Daphne Ippolito, Arun Kirubarajan, Sherry Shi, and Chris Callison-Burch.
\newblock Real or fake text?: Investigating human ability to detect boundaries between human-written and machine-generated text.
\newblock In \emph{Proceedings of the AAAI Conference on Artificial Intelligence}, volume~37, pages 12763--12771, 2023.

\bibitem[Elangovan et~al.(2024)Elangovan, Liu, Xu, Bodapati, and Roth]{elangovan2024considers}
Aparna Elangovan, Ling Liu, Lei Xu, Sravan Bodapati, and Dan Roth.
\newblock Considers-the-human evaluation framework: Rethinking human evaluation for generative large language models.
\newblock \emph{arXiv preprint arXiv:2405.18638}, 2024.

\bibitem[Fairoze et~al.(2023)Fairoze, Garg, Jha, Mahloujifar, Mahmoody, and Wang]{fairoze2023publicly}
Jaiden Fairoze, Sanjam Garg, Somesh Jha, Saeed Mahloujifar, Mohammad Mahmoody, and Mingyuan Wang.
\newblock Publicly detectable watermarking for language models.
\newblock \emph{arXiv preprint arXiv:2310.18491}, 2023.

\bibitem[Fernandez et~al.(2024)Fernandez, Couairon, Furon, and Douze]{fernandez2024functional}
Pierre Fernandez, Guillaume Couairon, Teddy Furon, and Matthijs Douze.
\newblock Functional invariants to watermark large transformers.
\newblock In \emph{ICASSP 2024-2024 IEEE International Conference on Acoustics, Speech and Signal Processing (ICASSP)}, pages 4815--4819. IEEE, 2024.

\bibitem[Gambini et~al.(2022)Gambini, Fagni, Falchi, and Tesconi]{gambini2022pushing}
Margherita Gambini, Tiziano Fagni, Fabrizio Falchi, and Maurizio Tesconi.
\newblock On pushing deepfake tweet detection capabilities to the limits.
\newblock In \emph{Proceedings of the 14th ACM Web Science Conference 2022}, pages 154--163, 2022.

\bibitem[Gao et~al.(2024)Gao, Tow, Abbasi, Biderman, Black, DiPofi, Foster, Golding, Hsu, Le~Noac'h, Li, McDonell, Muennighoff, Ociepa, Phang, Reynolds, Schoelkopf, Skowron, Sutawika, Tang, Thite, Wang, Wang, and Zou]{eval-harness}
Leo Gao, Jonathan Tow, Baber Abbasi, Stella Biderman, Sid Black, Anthony DiPofi, Charles Foster, Laurence Golding, Jeffrey Hsu, Alain Le~Noac'h, Haonan Li, Kyle McDonell, Niklas Muennighoff, Chris Ociepa, Jason Phang, Laria Reynolds, Hailey Schoelkopf, Aviya Skowron, Lintang Sutawika, Eric Tang, Anish Thite, Ben Wang, Kevin Wang, and Andy Zou.
\newblock A framework for few-shot language model evaluation, 07 2024.
\newblock URL \url{https://zenodo.org/records/12608602}.

\bibitem[Gehrmann et~al.(2019)Gehrmann, Strobelt, and Rush]{gehrmann2019gltr}
Sebastian Gehrmann, Hendrik Strobelt, and Alexander~M Rush.
\newblock Gltr: Statistical detection and visualization of generated text.
\newblock \emph{arXiv preprint arXiv:1906.04043}, 2019.

\bibitem[Ghosal and van~der Vaart(2017)]{ghosal2017fundamentals}
Subhashis Ghosal and Aad~W van~der Vaart.
\newblock \emph{Fundamentals of nonparametric Bayesian inference}, volume~44.
\newblock Cambridge University Press, 2017.

\bibitem[Giray(2024)]{giray2024problem}
Louie Giray.
\newblock The problem with false positives: Ai detection unfairly accuses scholars of ai plagiarism.
\newblock \emph{The Serials Librarian}, pages 1--9, 2024.

\bibitem[Goddard et~al.(2024)Goddard, Siriwardhana, Ehghaghi, Meyers, Karpukhin, Benedict, McQuade, and Solawetz]{goddard-etal-2024-arcees}
Charles Goddard, Shamane Siriwardhana, Malikeh Ehghaghi, Luke Meyers, Vladimir Karpukhin, Brian Benedict, Mark McQuade, and Jacob Solawetz.
\newblock Arcee{'}s {M}erge{K}it: A toolkit for merging large language models.
\newblock In Franck Dernoncourt, Daniel Preo{\c{t}}iuc-Pietro, and Anastasia Shimorina, editors, \emph{Proceedings of the 2024 Conference on Empirical Methods in Natural Language Processing: Industry Track}, pages 477--485, Miami, Florida, US, November 2024. Association for Computational Linguistics.
\newblock \doi{10.18653/v1/2024.emnlp-industry.36}.
\newblock URL \url{https://aclanthology.org/2024.emnlp-industry.36}.

\bibitem[Golowich and Moitra(2024)]{golowich2024edit}
Noah Golowich and Ankur Moitra.
\newblock Edit distance robust watermarks for language models.
\newblock \emph{arXiv preprint arXiv:2406.02633}, 2024.

\bibitem[Gong et~al.(2024)Gong, Liu, Wang, Cai, Zhao, and Yan]{gong2024makes}
Zhuocheng Gong, Jiahao Liu, Jingang Wang, Xunliang Cai, Dongyan Zhao, and Rui Yan.
\newblock What makes quantization for large language model hard? an empirical study from the lens of perturbation.
\newblock In \emph{Proceedings of the AAAI Conference on Artificial Intelligence}, volume~38, pages 18082--18089, 2024.

\bibitem[Gretton et~al.(2012)Gretton, Borgwardt, Rasch, Sch{\"o}lkopf, and Smola]{gretton2012kernel}
Arthur Gretton, Karsten~M Borgwardt, Malte~J Rasch, Bernhard Sch{\"o}lkopf, and Alexander~J Smola.
\newblock A kernel two-sample test.
\newblock \emph{Journal of Machine Learning Research}, 13\penalty0 (Mar):\penalty0 723--773, 2012.

\bibitem[Gu et~al.(2023)Gu, Li, Liang, and Hashimoto]{gu2023learnability}
Chenchen Gu, Xiang~Lisa Li, Percy Liang, and Tatsunori Hashimoto.
\newblock On the learnability of watermarks for language models.
\newblock \emph{arXiv preprint arXiv:2312.04469}, 2023.

\bibitem[Guo et~al.(2023)Guo, Greengard, Xing, and Kim]{guo2023lq}
Han Guo, Philip Greengard, Eric~P Xing, and Yoon Kim.
\newblock Lq-lora: Low-rank plus quantized matrix decomposition for efficient language model finetuning.
\newblock \emph{arXiv preprint arXiv:2311.12023}, 2023.

\bibitem[Han et~al.(2024)Han, Gao, Liu, Zhang, and Zhang]{han2024parameterefficientfinetuninglargemodels}
Zeyu Han, Chao Gao, Jinyang Liu, Jeff Zhang, and Sai~Qian Zhang.
\newblock Parameter-efficient fine-tuning for large models: A comprehensive survey, 2024.
\newblock URL \url{https://arxiv.org/abs/2403.14608}.

\bibitem[He-Yueya et~al.(2023)He-Yueya, Poesia, Wang, and Goodman]{heyueya2023solvingmathwordproblems}
Joy He-Yueya, Gabriel Poesia, Rose~E. Wang, and Noah~D. Goodman.
\newblock Solving math word problems by combining language models with symbolic solvers, 2023.
\newblock URL \url{https://arxiv.org/abs/2304.09102}.

\bibitem[Hou et~al.(2023)Hou, Zhang, He, Wang, Chuang, Wang, Shen, Van~Durme, Khashabi, and Tsvetkov]{hou2023semstamp}
Abe~Bohan Hou, Jingyu Zhang, Tianxing He, Yichen Wang, Yung-Sung Chuang, Hongwei Wang, Lingfeng Shen, Benjamin Van~Durme, Daniel Khashabi, and Yulia Tsvetkov.
\newblock Semstamp: A semantic watermark with paraphrastic robustness for text generation.
\newblock \emph{arXiv preprint arXiv:2310.03991}, 2023.

\bibitem[Hu et~al.(2021)Hu, Shen, Wallis, Allen-Zhu, Li, Wang, Wang, and Chen]{hu2021lora}
Edward~J Hu, Yelong Shen, Phillip Wallis, Zeyuan Allen-Zhu, Yuanzhi Li, Shean Wang, Lu~Wang, and Weizhu Chen.
\newblock Lora: Low-rank adaptation of large language models.
\newblock \emph{arXiv preprint arXiv:2106.09685}, 2021.

\bibitem[Huang et~al.(2023)Huang, Zhu, Zhu, Ramchandran, Jordan, Lee, and Jiao]{huang2023towards}
Baihe Huang, Hanlin Zhu, Banghua Zhu, Kannan Ramchandran, Michael~I Jordan, Jason~D Lee, and Jiantao Jiao.
\newblock Towards optimal statistical watermarking.
\newblock \emph{arXiv preprint arXiv:2312.07930}, 2023.

\bibitem[Ilharco et~al.(2022)Ilharco, Ribeiro, Wortsman, Gururangan, Schmidt, Hajishirzi, and Farhadi]{ilharco2022editing}
Gabriel Ilharco, Marco~Tulio Ribeiro, Mitchell Wortsman, Suchin Gururangan, Ludwig Schmidt, Hannaneh Hajishirzi, and Ali Farhadi.
\newblock Editing models with task arithmetic.
\newblock \emph{arXiv preprint arXiv:2212.04089}, 2022.

\bibitem[Imani et~al.(2023)Imani, Du, and Shrivastava]{imani2023mathprompter}
Shima Imani, Liang Du, and Harsh Shrivastava.
\newblock Mathprompter: Mathematical reasoning using large language models.
\newblock \emph{arXiv preprint arXiv:2303.05398}, 2023.

\bibitem[Ippolito et~al.(2019)Ippolito, Duckworth, Callison-Burch, and Eck]{ippolito2019automatic}
Daphne Ippolito, Daniel Duckworth, Chris Callison-Burch, and Douglas Eck.
\newblock Automatic detection of generated text is easiest when humans are fooled.
\newblock \emph{arXiv preprint arXiv:1911.00650}, 2019.

\bibitem[Islam et~al.(2020)Islam, Liu, Wang, and Xu]{islam2020deep}
Md~Rafiqul Islam, Shaowu Liu, Xianzhi Wang, and Guandong Xu.
\newblock Deep learning for misinformation detection on online social networks: a survey and new perspectives.
\newblock \emph{Social Network Analysis and Mining}, 10\penalty0 (1):\penalty0 82, 2020.

\bibitem[Jiang et~al.(2023)Jiang, Sablayrolles, Mensch, Bamford, Chaplot, Casas, Bressand, Lengyel, Lample, Saulnier, et~al.]{jiang2023mistral}
Albert~Q Jiang, Alexandre Sablayrolles, Arthur Mensch, Chris Bamford, Devendra~Singh Chaplot, Diego de~las Casas, Florian Bressand, Gianna Lengyel, Guillaume Lample, Lucile Saulnier, et~al.
\newblock Mistral 7b.
\newblock \emph{arXiv preprint arXiv:2310.06825}, 2023.

\bibitem[Jiang et~al.(2024)Jiang, Wang, Shen, Kim, and Kim]{jiang2024survey}
Juyong Jiang, Fan Wang, Jiasi Shen, Sungju Kim, and Sunghun Kim.
\newblock A survey on large language models for code generation.
\newblock \emph{arXiv preprint arXiv:2406.00515}, 2024.

\bibitem[Jimenez(2023)]{jimenez2023professors}
Kayla Jimenez.
\newblock Professors are using chatgpt detector tools to accuse students of cheating. but what if the software is wrong.
\newblock \emph{USA Today}, 2023.

\bibitem[Jovanovi{\'c} et~al.(2024)Jovanovi{\'c}, Staab, and Vechev]{jovanovic2024watermark}
Nikola Jovanovi{\'c}, Robin Staab, and Martin Vechev.
\newblock Watermark stealing in large language models.
\newblock \emph{arXiv preprint arXiv:2402.19361}, 2024.

\bibitem[Kaplan et~al.(2020)Kaplan, McCandlish, Henighan, Brown, Chess, Child, Gray, Radford, Wu, and Amodei]{kaplan2020scaling}
Jared Kaplan, Sam McCandlish, Tom Henighan, Tom~B Brown, Benjamin Chess, Rewon Child, Scott Gray, Alec Radford, Jeffrey Wu, and Dario Amodei.
\newblock Scaling laws for neural language models.
\newblock \emph{arXiv preprint arXiv:2001.08361}, 2020.

\bibitem[Kasneci et~al.(2023)Kasneci, Se{\ss}ler, K{\"u}chemann, Bannert, Dementieva, Fischer, Gasser, Groh, G{\"u}nnemann, H{\"u}llermeier, et~al.]{kasneci2023chatgpt}
Enkelejda Kasneci, Kathrin Se{\ss}ler, Stefan K{\"u}chemann, Maria Bannert, Daryna Dementieva, Frank Fischer, Urs Gasser, Georg Groh, Stephan G{\"u}nnemann, Eyke H{\"u}llermeier, et~al.
\newblock Chatgpt for good? on opportunities and challenges of large language models for education.
\newblock \emph{Learning and individual differences}, 103:\penalty0 102274, 2023.

\bibitem[Kirchenbauer et~al.(2023{\natexlab{a}})Kirchenbauer, Geiping, Wen, Katz, Miers, and Goldstein]{kirchenbauer2023watermark}
John Kirchenbauer, Jonas Geiping, Yuxin Wen, Jonathan Katz, Ian Miers, and Tom Goldstein.
\newblock A watermark for large language models.
\newblock In \emph{International Conference on Machine Learning}, pages 17061--17084. PMLR, 2023{\natexlab{a}}.

\bibitem[Kirchenbauer et~al.(2023{\natexlab{b}})Kirchenbauer, Geiping, Wen, Shu, Saifullah, Kong, Fernando, Saha, Goldblum, and Goldstein]{kirchenbauer2023reliability}
John Kirchenbauer, Jonas Geiping, Yuxin Wen, Manli Shu, Khalid Saifullah, Kezhi Kong, Kasun Fernando, Aniruddha Saha, Micah Goldblum, and Tom Goldstein.
\newblock On the reliability of watermarks for large language models.
\newblock \emph{arXiv preprint arXiv:2306.04634}, 2023{\natexlab{b}}.

\bibitem[Klee(2023)]{klee2023she}
M~Klee.
\newblock She was falsely accused of cheating with ai--and she won’t be the last.
\newblock \emph{Rolling Stone}, 2023.

\bibitem[Kuditipudi et~al.(2023)Kuditipudi, Thickstun, Hashimoto, and Liang]{kuditipudi2023robust}
Rohith Kuditipudi, John Thickstun, Tatsunori Hashimoto, and Percy Liang.
\newblock Robust distortion-free watermarks for language models.
\newblock \emph{arXiv preprint arXiv:2307.15593}, 2023.

\bibitem[Kwon et~al.(2023)Kwon, Li, Zhuang, Sheng, Zheng, Yu, Gonzalez, Zhang, and Stoica]{kwon2023efficient}
Woosuk Kwon, Zhuohan Li, Siyuan Zhuang, Ying Sheng, Lianmin Zheng, Cody~Hao Yu, Joseph~E. Gonzalez, Hao Zhang, and Ion Stoica.
\newblock Efficient memory management for large language model serving with pagedattention.
\newblock In \emph{Proceedings of the ACM SIGOPS 29th Symposium on Operating Systems Principles}, 2023.

\bibitem[Lau et~al.(2024)Lau, Niu, Dao, Chen, Foo, and Low]{lau2024waterfall}
Gregory Kang~Ruey Lau, Xinyuan Niu, Hieu Dao, Jiangwei Chen, Chuan-Sheng Foo, and Bryan Kian~Hsiang Low.
\newblock Waterfall: Scalable framework for robust text watermarking and provenance for llms.
\newblock In \emph{Proceedings of the 2024 Conference on Empirical Methods in Natural Language Processing}, pages 20432--20466, 2024.

\bibitem[Le~Cam(2012)]{le2012asymptotic}
Lucien Le~Cam.
\newblock \emph{Asymptotic methods in statistical decision theory}.
\newblock Springer Science \& Business Media, 2012.

\bibitem[Lee et~al.(2022)Lee, Srivastava, Hardy, Thickstun, Durmus, Paranjape, Gerard-Ursin, Li, Ladhak, Rong, et~al.]{lee2022evaluating}
Mina Lee, Megha Srivastava, Amelia Hardy, John Thickstun, Esin Durmus, Ashwin Paranjape, Ines Gerard-Ursin, Xiang~Lisa Li, Faisal Ladhak, Frieda Rong, et~al.
\newblock Evaluating human-language model interaction.
\newblock \emph{arXiv preprint arXiv:2212.09746}, 2022.

\bibitem[Lehmann et~al.(1986)Lehmann, Romano, and Casella]{lehmann1986testing}
Erich~Leo Lehmann, Joseph~P Romano, and George Casella.
\newblock \emph{Testing statistical hypotheses}, volume~3.
\newblock Springer, 1986.

\bibitem[Li et~al.(2023{\natexlab{a}})Li, Jiang, Wang, Ren, Yan, and Qiu]{li2023watermarking}
Linyang Li, Botian Jiang, Pengyu Wang, Ke~Ren, Hang Yan, and Xipeng Qiu.
\newblock Watermarking llms with weight quantization.
\newblock \emph{arXiv preprint arXiv:2310.11237}, 2023{\natexlab{a}}.

\bibitem[Li et~al.(2023{\natexlab{b}})Li, Zhang, Dubois, Taori, Gulrajani, Guestrin, Liang, and Hashimoto]{alpaca_eval}
Xuechen Li, Tianyi Zhang, Yann Dubois, Rohan Taori, Ishaan Gulrajani, Carlos Guestrin, Percy Liang, and Tatsunori~B. Hashimoto.
\newblock Alpacaeval: An automatic evaluator of instruction-following models.
\newblock \url{https://github.com/tatsu-lab/alpaca_eval}, 5 2023{\natexlab{b}}.

\bibitem[Liang et~al.(2024)Liang, Xiao, Gan, and Yu]{liang2024watermarking}
Yuqing Liang, Jiancheng Xiao, Wensheng Gan, and Philip~S Yu.
\newblock Watermarking techniques for large language models: A survey.
\newblock \emph{arXiv preprint arXiv:2409.00089}, 2024.

\bibitem[Liu et~al.(2024{\natexlab{a}})Liu, Pan, Hu, Meng, and Wen]{liu2024semanticinvariantrobustwatermark}
Aiwei Liu, Leyi Pan, Xuming Hu, Shiao Meng, and Lijie Wen.
\newblock A semantic invariant robust watermark for large language models, 2024{\natexlab{a}}.
\newblock URL \url{https://arxiv.org/abs/2310.06356}.

\bibitem[Liu et~al.(2024{\natexlab{b}})Liu, Pan, Lu, Li, Hu, Zhang, Wen, King, Xiong, and Yu]{liu2024survey}
Aiwei Liu, Leyi Pan, Yijian Lu, Jingjing Li, Xuming Hu, Xi~Zhang, Lijie Wen, Irwin King, Hui Xiong, and Philip Yu.
\newblock A survey of text watermarking in the era of large language models.
\newblock \emph{ACM Computing Surveys}, 2024{\natexlab{b}}.

\bibitem[Mathewson(2023)]{mathewson2023ai}
TG~Mathewson.
\newblock Ai detection tools falsely accuse international students of cheating.
\newblock \emph{The Markup}, 2023.

\bibitem[Mazzia et~al.(2024)Mazzia, Pedrani, Caciolai, Rottmann, and Bernardi]{mazzia2024survey}
Vittorio Mazzia, Alessandro Pedrani, Andrea Caciolai, Kay Rottmann, and Davide Bernardi.
\newblock A survey on knowledge editing of neural networks.
\newblock \emph{IEEE Transactions on Neural Networks and Learning Systems}, 2024.

\bibitem[Mirsky et~al.(2023)Mirsky, Demontis, Kotak, Shankar, Gelei, Yang, Zhang, Pintor, Lee, Elovici, et~al.]{mirsky2023threat}
Yisroel Mirsky, Ambra Demontis, Jaidip Kotak, Ram Shankar, Deng Gelei, Liu Yang, Xiangyu Zhang, Maura Pintor, Wenke Lee, Yuval Elovici, et~al.
\newblock The threat of offensive ai to organizations.
\newblock \emph{Computers \& Security}, 124:\penalty0 103006, 2023.

\bibitem[Mitchell et~al.(2023)Mitchell, Lee, Khazatsky, Manning, and Finn]{mitchell2023detectgpt}
Eric Mitchell, Yoonho Lee, Alexander Khazatsky, Christopher~D Manning, and Chelsea Finn.
\newblock Detectgpt: Zero-shot machine-generated text detection using probability curvature.
\newblock In \emph{International Conference on Machine Learning}, pages 24950--24962. PMLR, 2023.

\bibitem[Neyman and Pearson(1933)]{neyman1933on}
Jerzy Neyman and Egon~Sharpe Pearson.
\newblock On the problem of the most efficient tests of statistical hypotheses.
\newblock \emph{Philosophical Transactions of the Royal Society of London. Series A, Containing Papers of a Mathematical or Physical Character}, 231\penalty0 (694-706):\penalty0 289--337, 1933.

\bibitem[OpenAI(2023)]{OpenAI3}
OpenAI.
\newblock New ai classifier for indicating ai-written text.
\newblock \emph{OpenAI blog}, 2023.
\newblock URL \url{https://openai.com/index/new-ai-classifier-for-indicating-ai-written-text/}.

\bibitem[Pan et~al.(2024)Pan, Liu, He, Gao, Zhao, Lu, Zhou, Liu, Hu, Wen, et~al.]{pan2024markllm}
Leyi Pan, Aiwei Liu, Zhiwei He, Zitian Gao, Xuandong Zhao, Yijian Lu, Binglin Zhou, Shuliang Liu, Xuming Hu, Lijie Wen, et~al.
\newblock Markllm: An open-source toolkit for llm watermarking.
\newblock \emph{arXiv preprint arXiv:2405.10051}, 2024.

\bibitem[Pang et~al.(2024)Pang, Hu, Zheng, and Smith]{pang2024attacking}
Qi~Pang, Shengyuan Hu, Wenting Zheng, and Virginia Smith.
\newblock Attacking llm watermarks by exploiting their strengths.
\newblock \emph{arXiv preprint arXiv:2402.16187}, 2024.

\bibitem[Paszke et~al.(2019)Paszke, Gross, Massa, Lerer, Bradbury, Chanan, Killeen, Lin, Gimelshein, Antiga, et~al.]{paszke2019pytorch}
Adam Paszke, Sam Gross, Francisco Massa, Adam Lerer, James Bradbury, Gregory Chanan, Trevor Killeen, Zeming Lin, Natalia Gimelshein, Luca Antiga, et~al.
\newblock Pytorch: An imperative style, high-performance deep learning library.
\newblock \emph{Advances in neural information processing systems}, 32, 2019.

\bibitem[Raffel et~al.(2020)Raffel, Shazeer, Roberts, Lee, Narang, Matena, Zhou, Li, and Liu]{raffel2020exploring}
Colin Raffel, Noam Shazeer, Adam Roberts, Katherine Lee, Sharan Narang, Michael Matena, Yanqi Zhou, Wei Li, and Peter~J Liu.
\newblock Exploring the limits of transfer learning with a unified text-to-text transformer.
\newblock \emph{Journal of machine learning research}, 21\penalty0 (140):\penalty0 1--67, 2020.

\bibitem[Rastogi and Pruthi(2024)]{rastogi2024revisiting}
Saksham Rastogi and Danish Pruthi.
\newblock Revisiting the robustness of watermarking to paraphrasing attacks.
\newblock \emph{arXiv preprint arXiv:2411.05277}, 2024.

\bibitem[Sadasivan et~al.(2023)Sadasivan, Kumar, Balasubramanian, Wang, and Feizi]{sadasivan2023can}
Vinu~Sankar Sadasivan, Aounon Kumar, Sriram Balasubramanian, Wenxiao Wang, and Soheil Feizi.
\newblock Can ai-generated text be reliably detected?
\newblock \emph{arXiv preprint arXiv:2303.11156}, 2023.

\bibitem[Schulman et~al.(2022)Schulman, Zoph, Kim, Hilton, Menick, Weng, Uribe, Fedus, Metz, Pokorny, et~al.]{schulman2022chatgpt}
John Schulman, Barret Zoph, Christina Kim, Jacob Hilton, Jacob Menick, Jiayi Weng, Juan Felipe~Ceron Uribe, Liam Fedus, Luke Metz, Michael Pokorny, et~al.
\newblock Chatgpt: Optimizing language models for dialogue.
\newblock \emph{OpenAI blog}, 2\penalty0 (4), 2022.

\bibitem[Sch{\"u}tz et~al.(2021)Sch{\"u}tz, Schindler, Siegel, and Nazemi]{schutz2021automatic}
Mina Sch{\"u}tz, Alexander Schindler, Melanie Siegel, and Kawa Nazemi.
\newblock Automatic fake news detection with pre-trained transformer models.
\newblock In \emph{Pattern Recognition. ICPR International Workshops and Challenges: Virtual Event, January 10-15, 2021, Proceedings, Part VII}, pages 627--641. Springer, 2021.

\bibitem[Sharma et~al.(2024)Sharma, Ash, and Misra]{sharmatruth2024}
Pratyusha Sharma, Jordan~T Ash, and Dipendra Misra.
\newblock The truth is in there: Improving reasoning in language models with layer-selective rank reduction.
\newblock \emph{The Twelfth International Conference on Learning Representations}, 2024.

\bibitem[Sun et~al.(2023)Sun, Liu, Bair, and Kolter]{sun2023simple}
Mingjie Sun, Zhuang Liu, Anna Bair, and J~Zico Kolter.
\newblock A simple and effective pruning approach for large language models.
\newblock \emph{arXiv preprint arXiv:2306.11695}, 2023.

\bibitem[Team et~al.(2024)Team, Mesnard, Hardin, Dadashi, Bhupatiraju, Pathak, Sifre, Rivi{\`e}re, Kale, Love, et~al.]{team2024gemma}
Gemma Team, Thomas Mesnard, Cassidy Hardin, Robert Dadashi, Surya Bhupatiraju, Shreya Pathak, Laurent Sifre, Morgane Rivi{\`e}re, Mihir~Sanjay Kale, Juliette Love, et~al.
\newblock Gemma: Open models based on gemini research and technology.
\newblock \emph{arXiv preprint arXiv:2403.08295}, 2024.

\bibitem[Templeton et~al.(2024)Templeton, Conerly, Marcus, Lindsey, Bricken, Chen, Pearce, Citro, Ameisen, Jones, Cunningham, Turner, McDougall, MacDiarmid, Freeman, Sumers, Rees, Batson, Jermyn, Carter, Olah, and Henighan]{templeton2024scaling}
Adly Templeton, Tom Conerly, Jonathan Marcus, Jack Lindsey, Trenton Bricken, Brian Chen, Adam Pearce, Craig Citro, Emmanuel Ameisen, Andy Jones, Hoagy Cunningham, Nicholas~L Turner, Callum McDougall, Monte MacDiarmid, C.~Daniel Freeman, Theodore~R. Sumers, Edward Rees, Joshua Batson, Adam Jermyn, Shan Carter, Chris Olah, and Tom Henighan.
\newblock Scaling monosemanticity: Extracting interpretable features from claude 3 sonnet.
\newblock \emph{Transformer Circuits Thread}, 2024.
\newblock URL \url{https://transformer-circuits.pub/2024/scaling-monosemanticity/index.html}.

\bibitem[Tenbarge(2024)]{tenbarge2024}
Kat Tenbarge.
\newblock Parents sue son’s high school history teacher over ai ‘cheating’ punishment, October 2024.
\newblock URL \url{https://www.nbcnews.com/tech/tech-news/ai-paper-write-cheating-lawsuit-massachusetts-help-rcna175669}.
\newblock Accessed: 2024-12-24.

\bibitem[Tiedemann et~al.(2022)Tiedemann, Aulamo, Bakshandaeva, Boggia, Gr{\"o}nroos, Nieminen, Raganato, Scherrer, Vazquez, and Virpioja]{tiedemann2022democratizing}
J{\"o}rg Tiedemann, Mikko Aulamo, Daria Bakshandaeva, Michele Boggia, Stig-Arne Gr{\"o}nroos, Tommi Nieminen, Alessandro Raganato, Yves Scherrer, Raul Vazquez, and Sami Virpioja.
\newblock Democratizing machine translation with opus-mt.
\newblock \emph{arXiv preprint arXiv:2212.01936}, 2022.

\bibitem[Touvron et~al.(2023)Touvron, Lavril, Izacard, Martinet, Lachaux, Lacroix, Rozi{\`e}re, Goyal, Hambro, Azhar, et~al.]{touvron2023llama}
Hugo Touvron, Thibaut Lavril, Gautier Izacard, Xavier Martinet, Marie-Anne Lachaux, Timoth{\'e}e Lacroix, Baptiste Rozi{\`e}re, Naman Goyal, Eric Hambro, Faisal Azhar, et~al.
\newblock Llama: Open and efficient foundation language models.
\newblock \emph{arXiv preprint arXiv:2302.13971}, 2023.

\bibitem[Verma(2023)]{verma2023chatgpt}
Prashnu Verma.
\newblock A professor accused his class of using chatgpt, putting diplomas in jeopardy.
\newblock \emph{The Washington Post}, May 18 2023.
\newblock URL \url{https://www.washingtonpost.com}.
\newblock Retrieved July 10, 2023.

\bibitem[Vershynin(2018)]{vershynin2018high}
Roman Vershynin.
\newblock \emph{High-dimensional probability: An introduction with applications in data science}, volume~47.
\newblock Cambridge university press, 2018.

\bibitem[Wang et~al.(2019)Wang, Pruksachatkun, Nangia, Singh, Michael, Hill, Levy, and Bowman]{wang2019superglue}
Alex Wang, Yada Pruksachatkun, Nikita Nangia, Amanpreet Singh, Julian Michael, Felix Hill, Omer Levy, and Samuel Bowman.
\newblock Superglue: A stickier benchmark for general-purpose language understanding systems.
\newblock \emph{Advances in neural information processing systems}, 32, 2019.

\bibitem[Wang et~al.(2024)Wang, Li, Wang, Chen, and Chen]{wang2024milora}
Hanqing Wang, Yixia Li, Shuo Wang, Guanhua Chen, and Yun Chen.
\newblock Milora: Harnessing minor singular components for parameter-efficient llm finetuning.
\newblock \emph{arXiv preprint arXiv:2406.09044}, 2024.

\bibitem[Wang et~al.(2023)Wang, Ren, Zhou, Lu, Luo, Shi, Zhang, Song, Zhan, and Li]{wang2023mathcoder}
Ke~Wang, Houxing Ren, Aojun Zhou, Zimu Lu, Sichun Luo, Weikang Shi, Renrui Zhang, Linqi Song, Mingjie Zhan, and Hongsheng Li.
\newblock Mathcoder: Seamless code integration in llms for enhanced mathematical reasoning.
\newblock \emph{arXiv preprint arXiv:2310.03731}, 2023.

\bibitem[Wolf et~al.(2020{\natexlab{a}})Wolf, Debut, Sanh, Chaumond, Delangue, Moi, Cistac, Rault, Louf, Funtowicz, Davison, Shleifer, von Platen, Ma, Jernite, Plu, Xu, Scao, Gugger, Drame, Lhoest, and Rush]{wolf-etal-2020-transformers}
Thomas Wolf, Lysandre Debut, Victor Sanh, Julien Chaumond, Clement Delangue, Anthony Moi, Pierric Cistac, Tim Rault, Rémi Louf, Morgan Funtowicz, Joe Davison, Sam Shleifer, Patrick von Platen, Clara Ma, Yacine Jernite, Julien Plu, Canwen Xu, Teven~Le Scao, Sylvain Gugger, Mariama Drame, Quentin Lhoest, and Alexander~M. Rush.
\newblock Transformers: State-of-the-art natural language processing.
\newblock In \emph{Proceedings of the 2020 Conference on Empirical Methods in Natural Language Processing: System Demonstrations}, pages 38--45, Online, October 2020{\natexlab{a}}. Association for Computational Linguistics.
\newblock URL \url{https://www.aclweb.org/anthology/2020.emnlp-demos.6}.

\bibitem[Wolf et~al.(2020{\natexlab{b}})]{huggingface2020}
Thomas Wolf et~al.
\newblock Huggingface's transformers: State-of-the-art natural language processing.
\newblock \url{https://github.com/huggingface/transformers}, 2020{\natexlab{b}}.

\bibitem[Wortsman et~al.(2022)Wortsman, Ilharco, Gadre, Roelofs, Gontijo-Lopes, Morcos, Namkoong, Farhadi, Carmon, Kornblith, et~al.]{wortsman2022model}
Mitchell Wortsman, Gabriel Ilharco, Samir~Ya Gadre, Rebecca Roelofs, Raphael Gontijo-Lopes, Ari~S Morcos, Hongseok Namkoong, Ali Farhadi, Yair Carmon, Simon Kornblith, et~al.
\newblock Model soups: averaging weights of multiple fine-tuned models improves accuracy without increasing inference time.
\newblock In \emph{International conference on machine learning}, pages 23965--23998. PMLR, 2022.

\bibitem[Yang et~al.(2024)Yang, Shen, Guo, Wang, Cao, Zhang, and Tao]{yang2024model}
Enneng Yang, Li~Shen, Guibing Guo, Xingwei Wang, Xiaochun Cao, Jie Zhang, and Dacheng Tao.
\newblock Model merging in llms, mllms, and beyond: Methods, theories, applications and opportunities.
\newblock \emph{arXiv preprint arXiv:2408.07666}, 2024.

\bibitem[Yu et~al.(2024)Yu, Yu, Yu, Huang, and Li]{yu2024language}
Le~Yu, Bowen Yu, Haiyang Yu, Fei Huang, and Yongbin Li.
\newblock Language models are super mario: Absorbing abilities from homologous models as a free lunch.
\newblock In \emph{Forty-first International Conference on Machine Learning}, 2024.

\bibitem[Zamir(2024)]{zamir2024excuse}
Or~Zamir.
\newblock Excuse me, sir? your language model is leaking (information).
\newblock \emph{arXiv preprint arXiv:2401.10360}, 2024.

\bibitem[Zellers et~al.(2019)Zellers, Holtzman, Rashkin, Bisk, Farhadi, Roesner, and Choi]{zellers2019defending}
Rowan Zellers, Ari Holtzman, Hannah Rashkin, Yonatan Bisk, Ali Farhadi, Franziska Roesner, and Yejin Choi.
\newblock Defending against neural fake news.
\newblock \emph{Advances in neural information processing systems}, 32, 2019.

\bibitem[Zhang et~al.(2023)Zhang, Edelman, Francati, Venturi, Ateniese, and Barak]{zhang2023watermarks}
Hanlin Zhang, Benjamin~L Edelman, Danilo Francati, Daniele Venturi, Giuseppe Ateniese, and Boaz Barak.
\newblock Watermarks in the sand: Impossibility of strong watermarking for generative models.
\newblock \emph{arXiv preprint arXiv:2311.04378}, 2023.

\bibitem[Zhao et~al.(2023{\natexlab{a}})Zhao, Ananth, Li, and Wang]{zhao2023provable}
Xuandong Zhao, Prabhanjan Ananth, Lei Li, and Yu-Xiang Wang.
\newblock Provable robust watermarking for ai-generated text.
\newblock \emph{arXiv preprint arXiv:2306.17439}, 2023{\natexlab{a}}.

\bibitem[Zhao et~al.(2023{\natexlab{b}})Zhao, Wang, and Li]{zhao2023protecting}
Xuandong Zhao, Yu-Xiang Wang, and Lei Li.
\newblock Protecting language generation models via invisible watermarking.
\newblock In \emph{International Conference on Machine Learning}, pages 42187--42199. PMLR, 2023{\natexlab{b}}.

\end{thebibliography}

\appendix

\clearpage 
\renewcommand{\contentsname}{Contents of Appendix}
\tableofcontents
\addtocontents{toc}{\protect\setcounter{tocdepth}{3}}

\crefalias{section}{appendix} 

\section{Implementation Details}\label{app:experiment_details}
In this section, we provide the details of the implementation of our experiments.  As discussed earlier, we use the \texttt{realnewslike} split of the C4 dataset \citep{raffel2020exploring}; for our experiments, we took the first 1000 samples from this corpus and created prompts by truncating each text at 200 tokens.  We considered three models, $\mistral$, $\llama$, and $\phimini$.  Each model is a transformer and we apply our watermark to one of the MLP layers in one of the transformer blocks.  For $\mistral$ and $\llama$, which share an architecture, there are three MLP weights for each layer: $\downproj$, $\upproj$, and $\gateproj$.  For $\phimini$, there are two choices of weights per layer: $\gateupproj$ and $\downproj$.  In all cases, there is a final linear layer to which the softmax function is applied to get the next token probabilities; we found that directly perturbing this layer led to either negligable detectability of the watermark or an unacceptable decrement of model quality and thus only consider perturbing the MLP weights that are part of the transformer blocks.

In all of our experiments, we used 3 seeds for watermarking the model and generating tokens.  The watermarking parameters used to construct the plots in the main body of the papers are summarized in \Cref{tab:parameters}.  These parameters were selected by sweeping over many choices of layer, weight, and variance and for each watermarking parameter evaluating the following: (1) finding the p-values of watermark detection on completions of the 1K prompts and (2) evaluating the quality of the watermarked model on $\superglue$ and $\gsmk$.  For the subset of those models that had acceptably small p-values and had essentially no statistically significantly worse performance on $\gsmk$ and $\superglue$, we then (3) ran the watermarked model through $\alpaca$ with the unwatermarked model's responses as the comparator.  We then chose the watermark parameters for each model as those that maximized detectability subject to not hurting performance on steps (2) and (3).

\begin{table}[t]
    \centering
    \resizebox{\textwidth}{!}{%

    \begin{tabular}{lcccc}
        \toprule
        \textbf{Model} & \textbf{Watermarked Layer} & \textbf{Watermarked Weight} & \textbf{Variance} ($\sigma^2$) & \textbf{Dimension} ($d$) \\
        \midrule
        $\mistral$ & 20 & $\upproj$ & 1e-05 & 58M \\
        $\llama$ & 28 & $\upproj$ & 3e-04 & 58M \\
        $\phimini$ & 20 & $\downproj$ & 1e-03 & 25M \\
        \bottomrule
    \end{tabular}}
    \caption{Selected watermarking hyperparameters.}
    \label{tab:parameters}
\end{table}

\section{Empirical Results on Detectability of $\gaussmark$}\label{app:detectability}

\begin{figure}[t]
    \centering
    \subfigure[TPR at $p=0.01$ for $\gaussmark$.]{
        \includegraphics[width=0.45\textwidth]{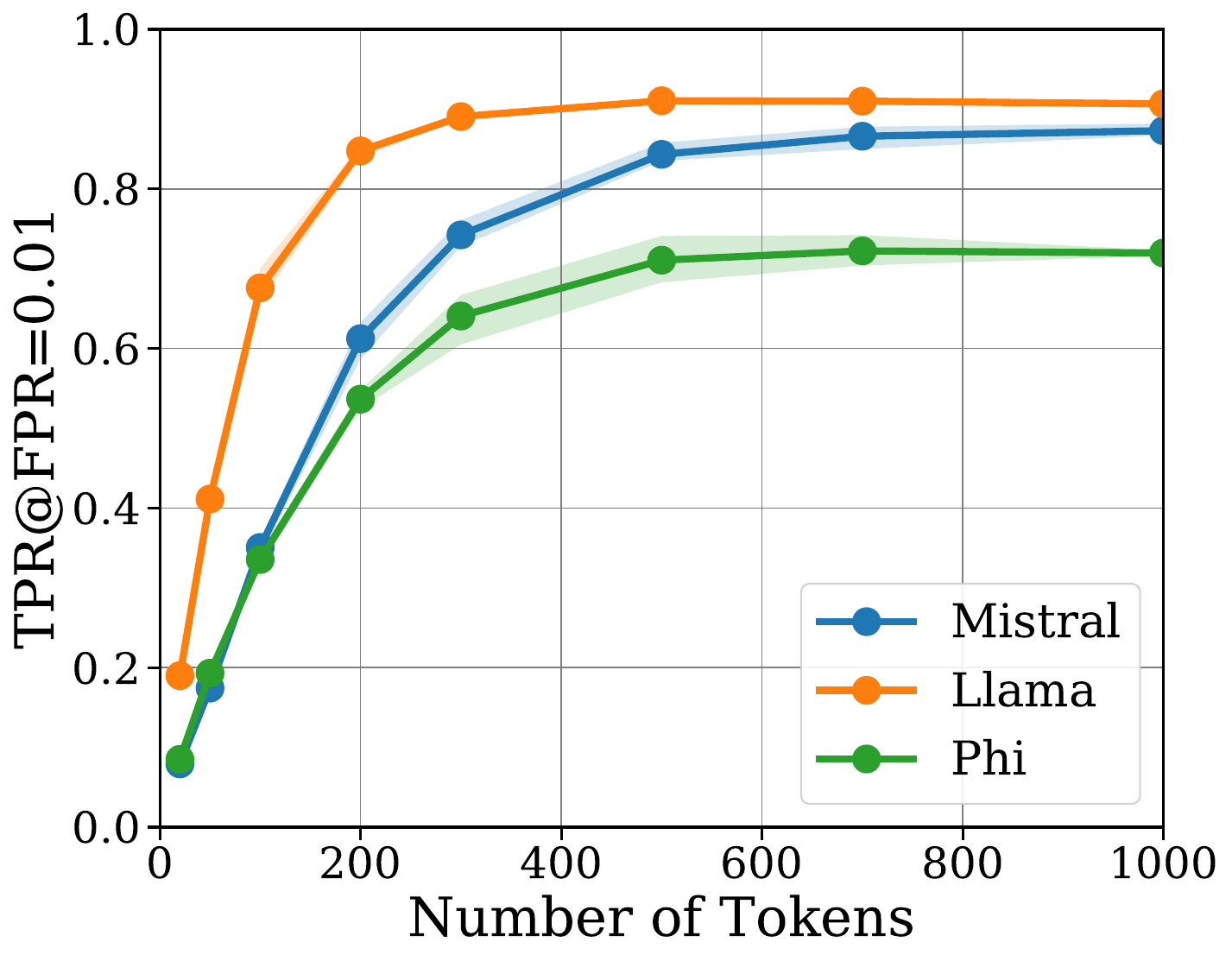}
        \label{sfig:numpassed-01} 
    }
    \hfill \subfigure[AUC of ROC for $\gaussmark$.]{
        \includegraphics[width=0.45\textwidth]{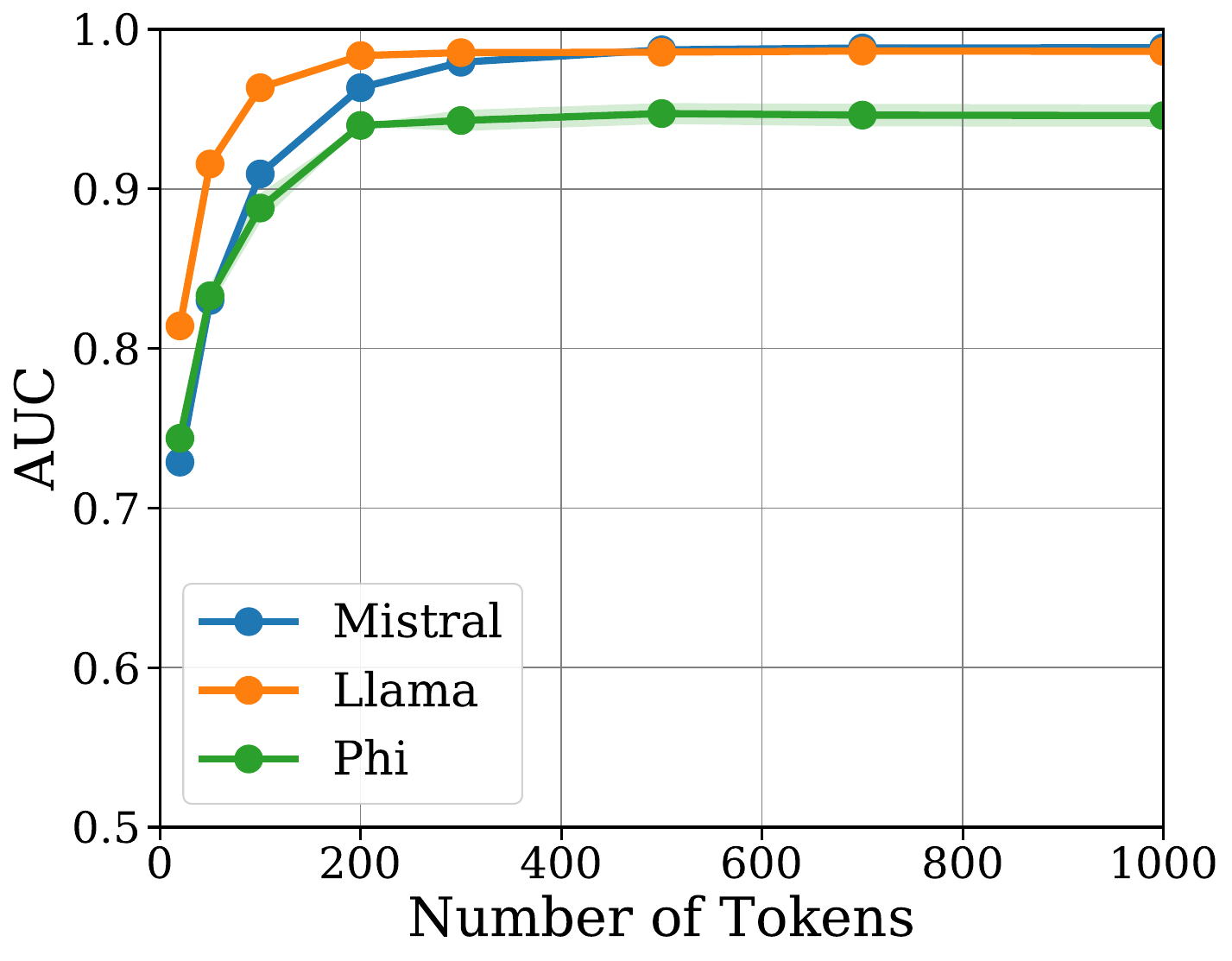}
        \label{sfig:auc-tokens}
    }
    \caption{Effect of number of tokens on the detectability of $\gaussmark$ as measured by (a) the True Positive Rate at $p=0.01$ and (b) the AUC of the ROC.  As the number of tokens increases, the detectability of the watermark improves, as predicted by the theory.}
    \label{fig:numtokens-auc-numpassed}
\end{figure}

\begin{figure}[ht]
    \centering
    \subfigure[ROC for $\mistral$]{
        \includegraphics[width=0.45\textwidth]{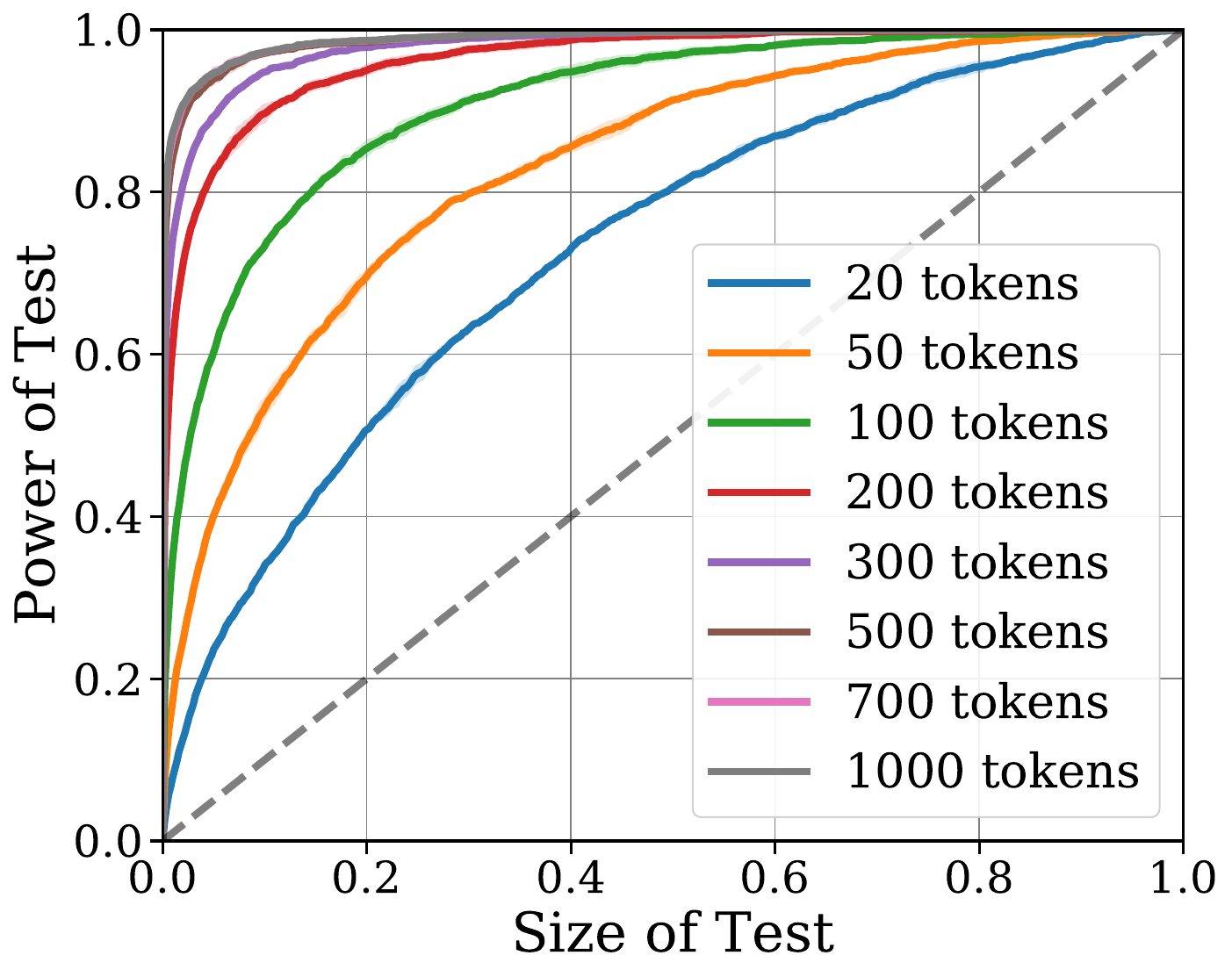}
        \label{sfig:roc-mistral} 
    }
    \hfill \subfigure[ROC for $\llama$]{
        \includegraphics[width=0.45\textwidth]{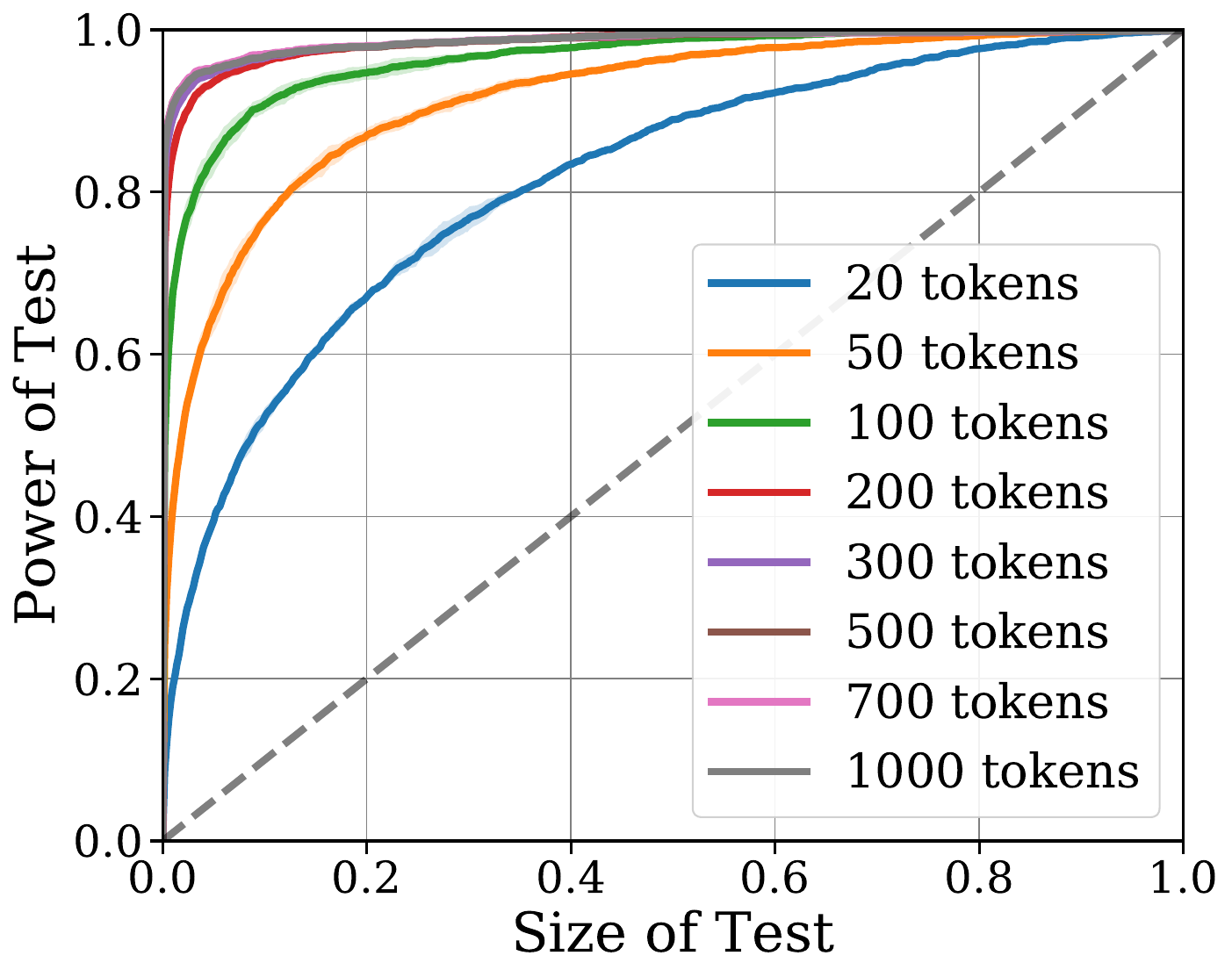}
        \label{sfig:roc-llama}
    }
    \hfill \subfigure[ROC for $\phimini$]{
        \includegraphics[width=0.45\textwidth]{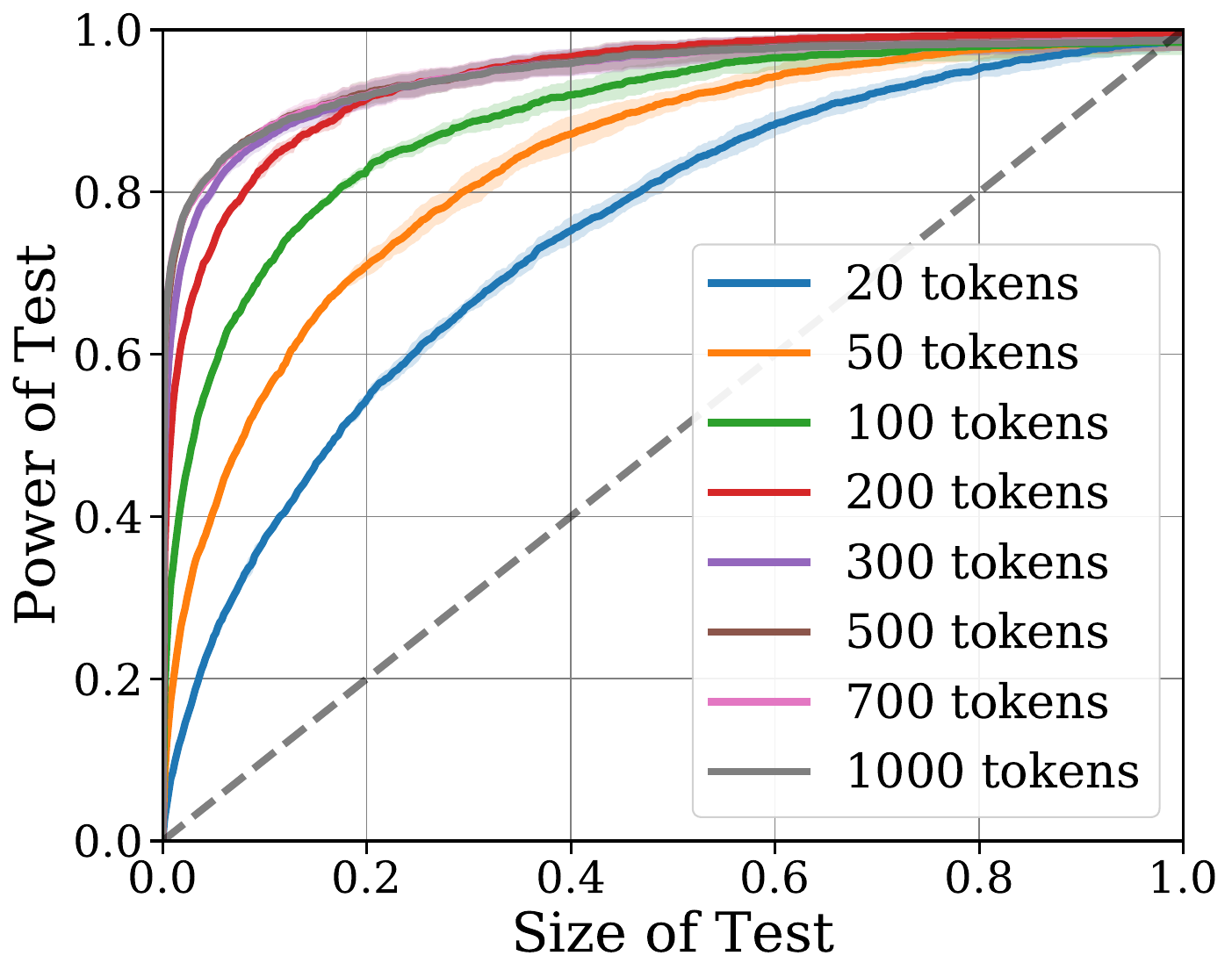}
        \label{sfig:roc-phi}
    }
    \caption{Effect of number of tokens on the detectability of $\gaussmark$ as measured by the ROC curves for (a) $\mistral$, (b) $\llama$, and (c) $\phimini$.  As the number of tokens increases, the ROC curve shifts to the left, indicating better detectability.  The trivially achievable ROC curve of a random test is shown in gray for comparison.  }
    \label{fig:roc-curves}
\end{figure}

In this section, we present further results on the detectability of the watermarked models.  In \Cref{fig:numtokens-auc-numpassed}, we display two different measures of watermark detectability: the true positive rate (TPR) at a false positive rate (FPR) of $p=0.01$ and the area under the curve (AUC) of the Receiver Operating Characteristic (ROC) \citep{carlini2022membership, casella2024statistical}.  We also plot in \Cref{fig:roc-curves} the ROC curves themselves for each model on different generation lengths of watermarked text.  Along with \Cref{fig1:sub1} and \Cref{fig:medians-phi}, these plots conclusively demonstrate the benefits of longer token sequences for detectability of $\gaussmark$.  
The trends are consistent across all metrics and models and align well with the theoretical predictions discussed in \Cref{sec:theoretical_analysis}. Notably, the detectability of the watermark asymptotes as the number of tokens increases and is predicted by the dimension of $\Theta$.

To further evaluate the validity of the simplifying assumptions discussed following \pref{prop:softmax_power_adaptive}, we analyze the distribution of gradient norms, $\norm{\nabla \log p_\theta(y \mid x)}$, for 150 sampled responses across five different prompts for each of the considered models. The distributions are visualized in \Cref{fig:grad-norms}.  We see that the gradient norms are relatively tightly concentrated, for $\mistral$ and $\llama$ in particular, and empirically justify the hypothesis that the gradients lie within a relatively well-conditioned annulus, which is a sufficient condition for the alignment of gradient norms.  Indeed the multiplicative range of the gradients (i.e. the maximum norm divided by the minimum norm of the responses for each prompt) is less than 25 for both $\mistral$ and $\llama$, and less than 40 for $\phimini$.

Finally, we report the median p-values of watermarked text generated from the 1K prompts for many different choices of layer, variance, and weight matrix for each of the three considered models in \Cref{fig:mistral-pvals}, \Cref{fig:llama-pvals}, and \Cref{fig:phi-pvals}.  As expected, when the variance is too small, the p-values are large, indicating that the watermark signal is insufficiently strong.  Perhaps surprisingly, when the variance is too large, there is a similar increase in p-values.  We conjecture that this latter effect is caused by the fact that the linear softmax approximation is \emph{local} and the quality of this approximation degrades as the norm of $\xi$ increases; thus the extent to which the conclusion of \pref{prop:softmax_power_adaptive} applies attenuates when the variance is too large.

\begin{figure}[ht]
    \centering
    \subfigure[Gradient Norm Distribution for $\mistral$.]{
        \includegraphics[width=0.45\textwidth]{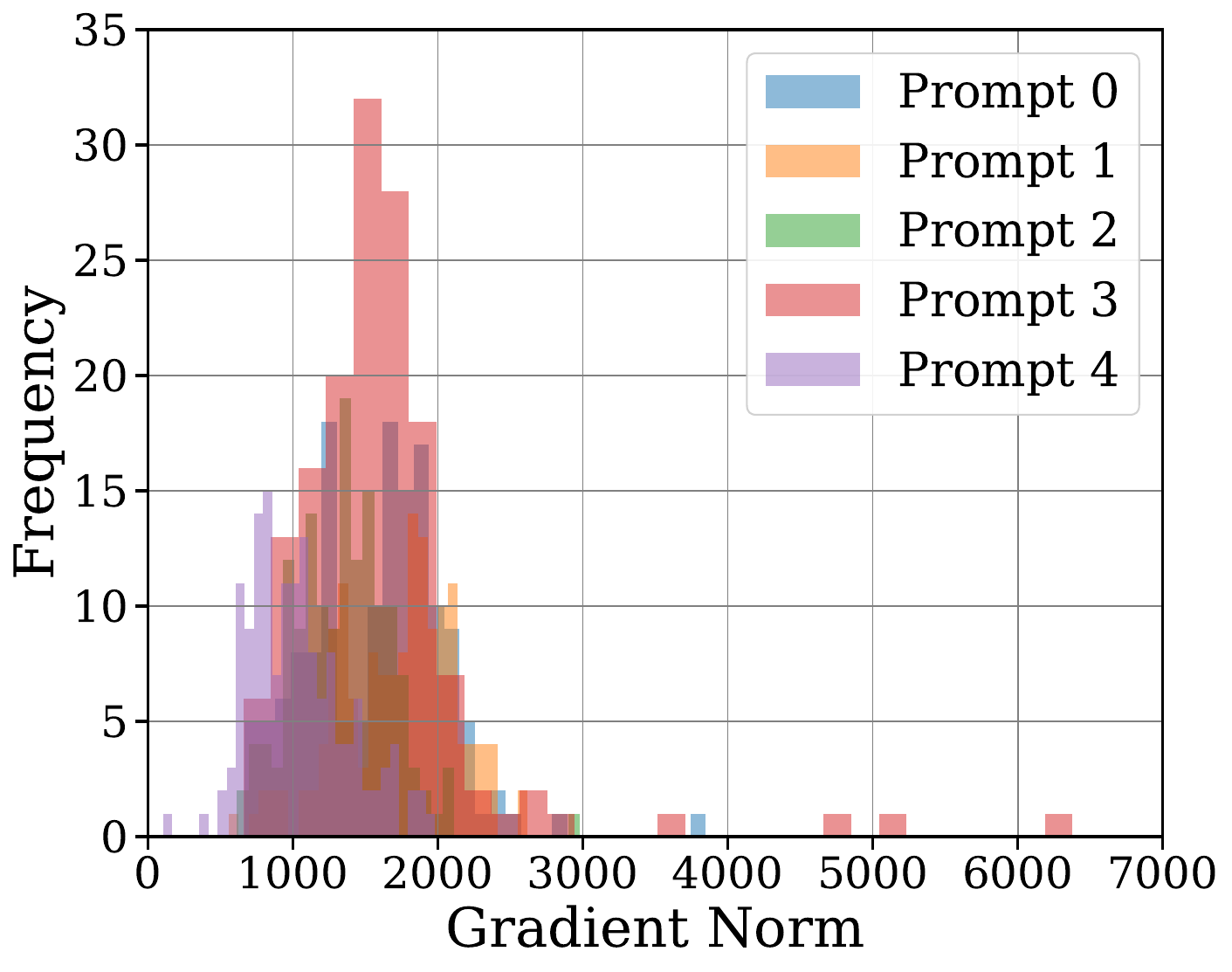}
        \label{sfig:gradnorm-mistral} 
    }
    \hfill \subfigure[Gradient Norm Distribution for $\llama$.]{
        \includegraphics[width=0.45\textwidth]{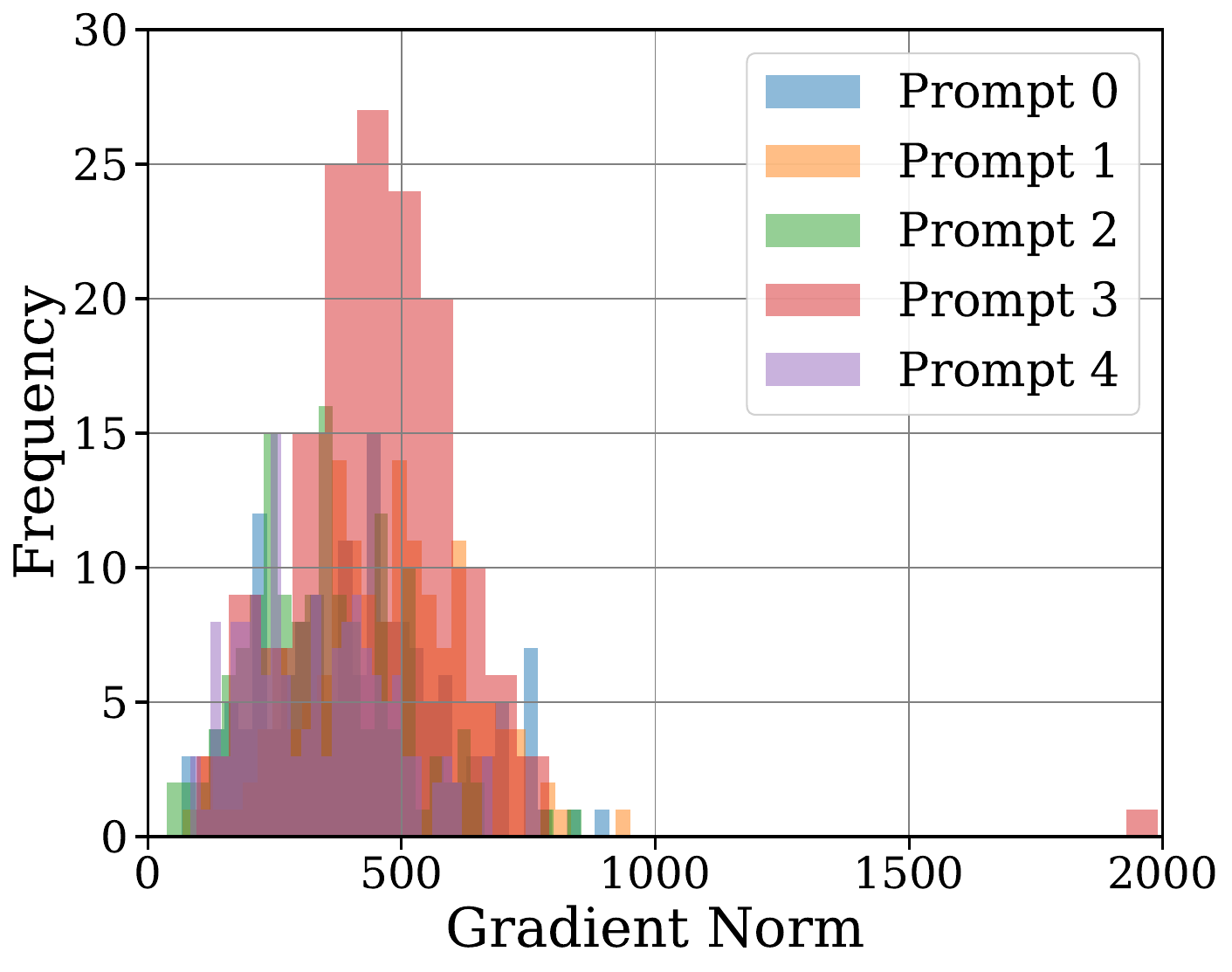}
        \label{sfig:gradnorm-llama}
    }
    \hfill \subfigure[Gradient Norm Distribution for $\phimini$.]{
        \includegraphics[width=0.45\textwidth]{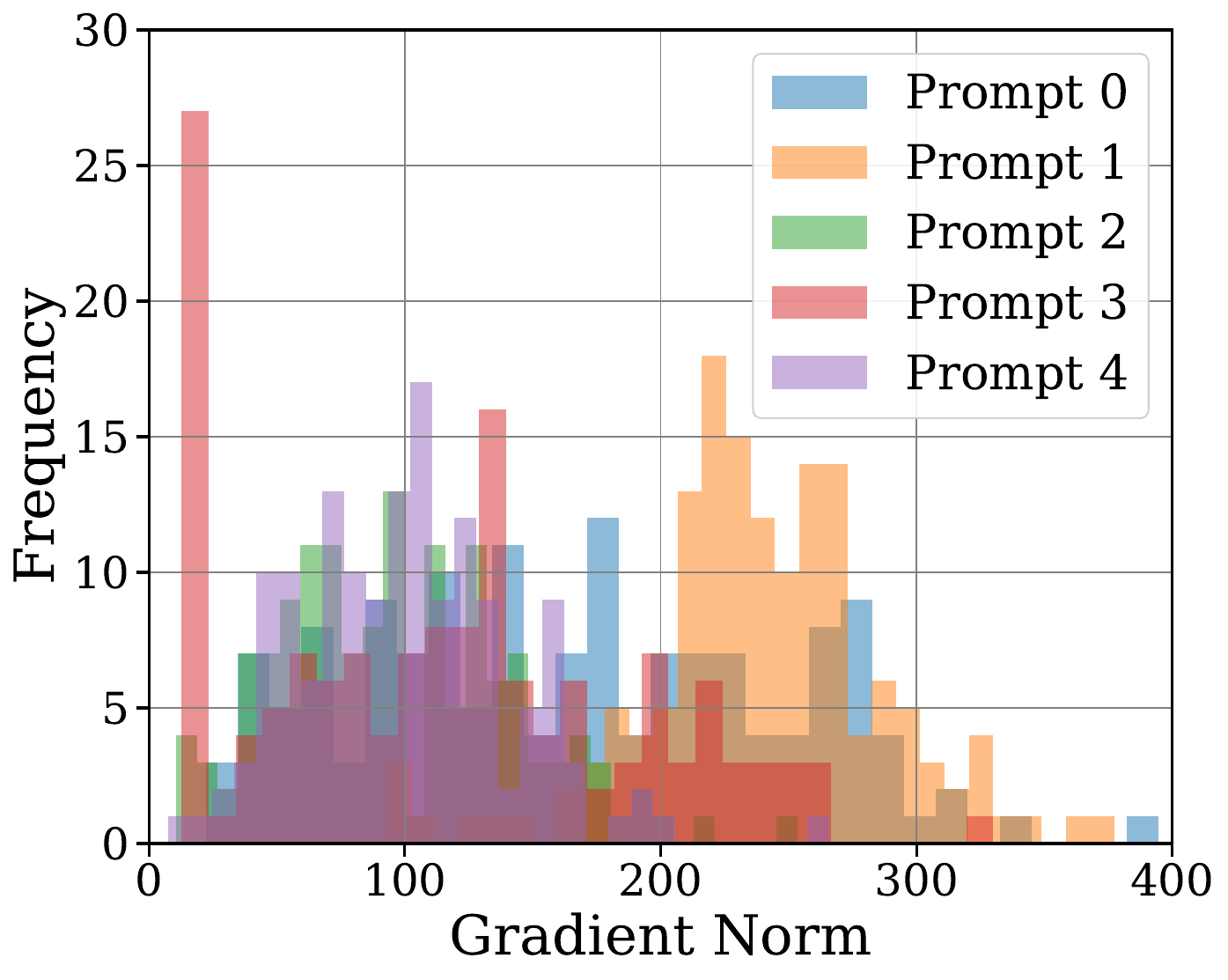}
        \label{sfig:gradnorm-phi}
    }
    \caption{Distribution of $\norm{\nabla \log p_\theta(y \mid{} x)}$ for 5 different prompts for (a) $\mistral$, (b) $\llama$, and (c) $\phimini$.  The relatively tight concentration of the gradient norms serves to justify the linear softmax approximations as per the discussion following \pref{prop:softmax_power_adaptive}. 
    }
    \label{fig:grad-norms}
\end{figure}

\begin{figure}[ht]
    \centering
    \subfigure[$\mistral$ median $p$-values ($\downproj$)]{
        \includegraphics[width=0.45\textwidth]{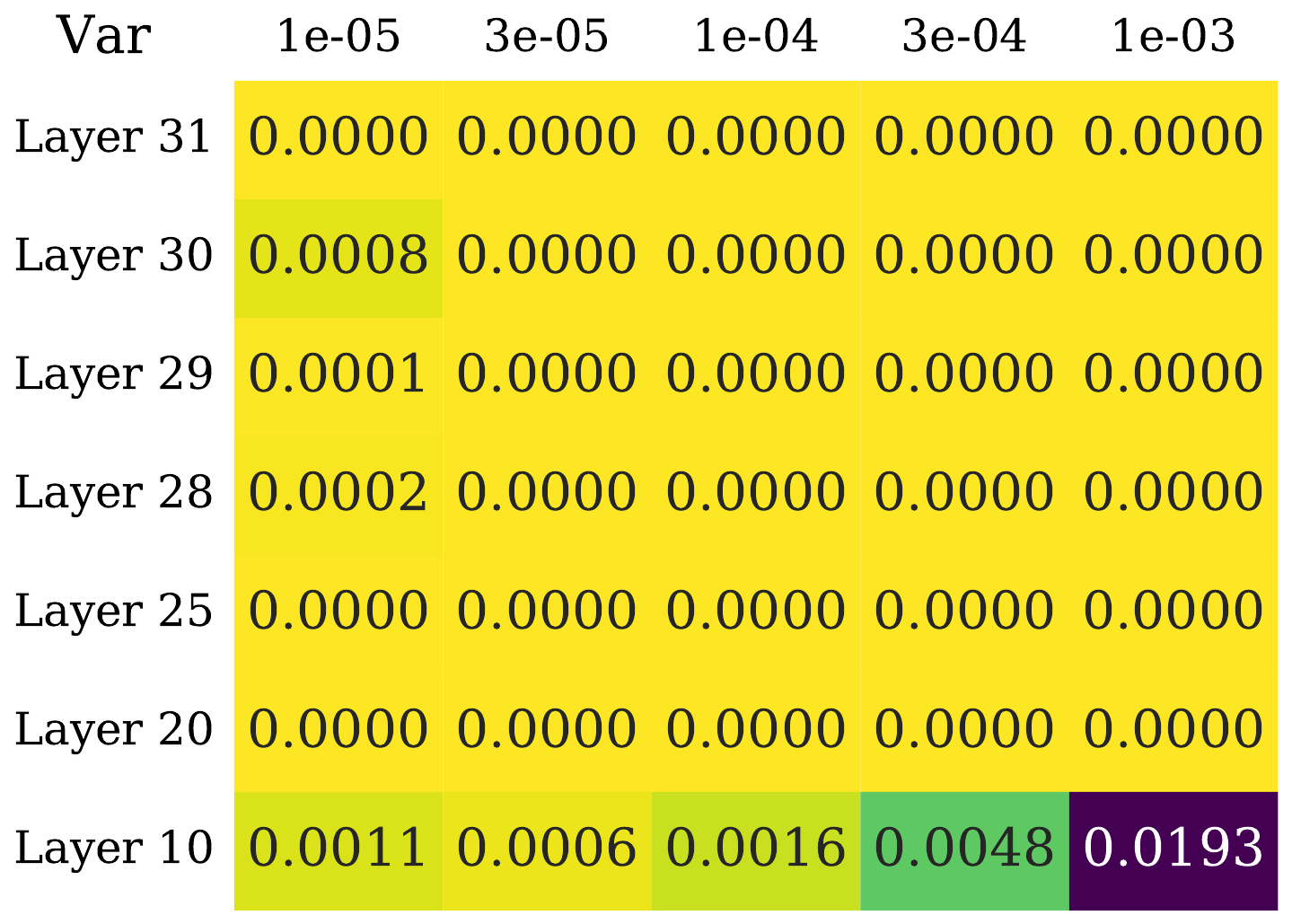}
        \label{sfig:mistral-pval-downproj} 
    }
    \hfill \subfigure[$\mistral$ median $p$-values ($\upproj$)]{
        \includegraphics[width=0.45\textwidth]{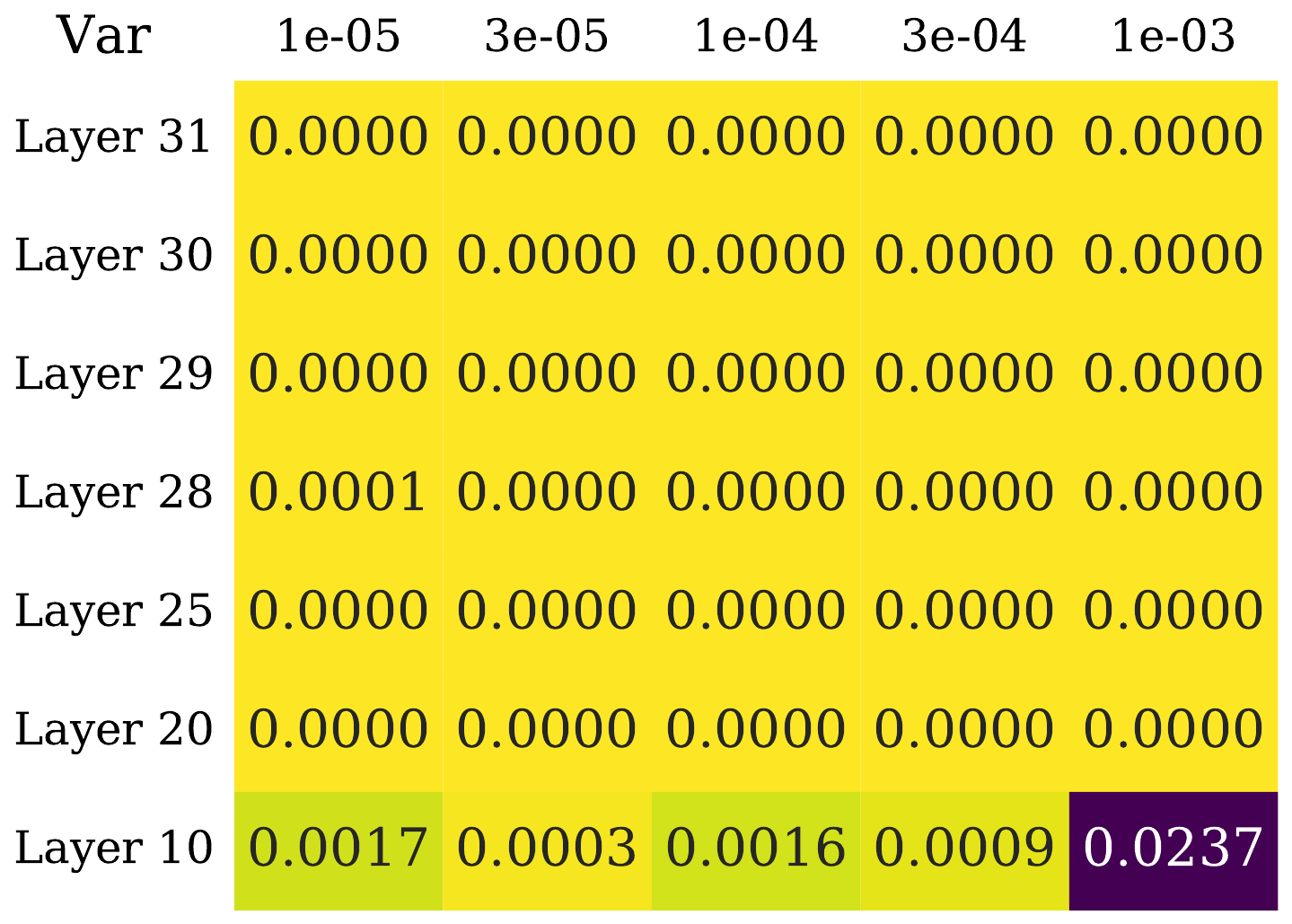}
        \label{sfig:mistral-pval-upproj}
    }
    \hfill \subfigure[$\mistral$ median $p$-values ($\gateproj$)]{
        \includegraphics[width=0.45\textwidth]{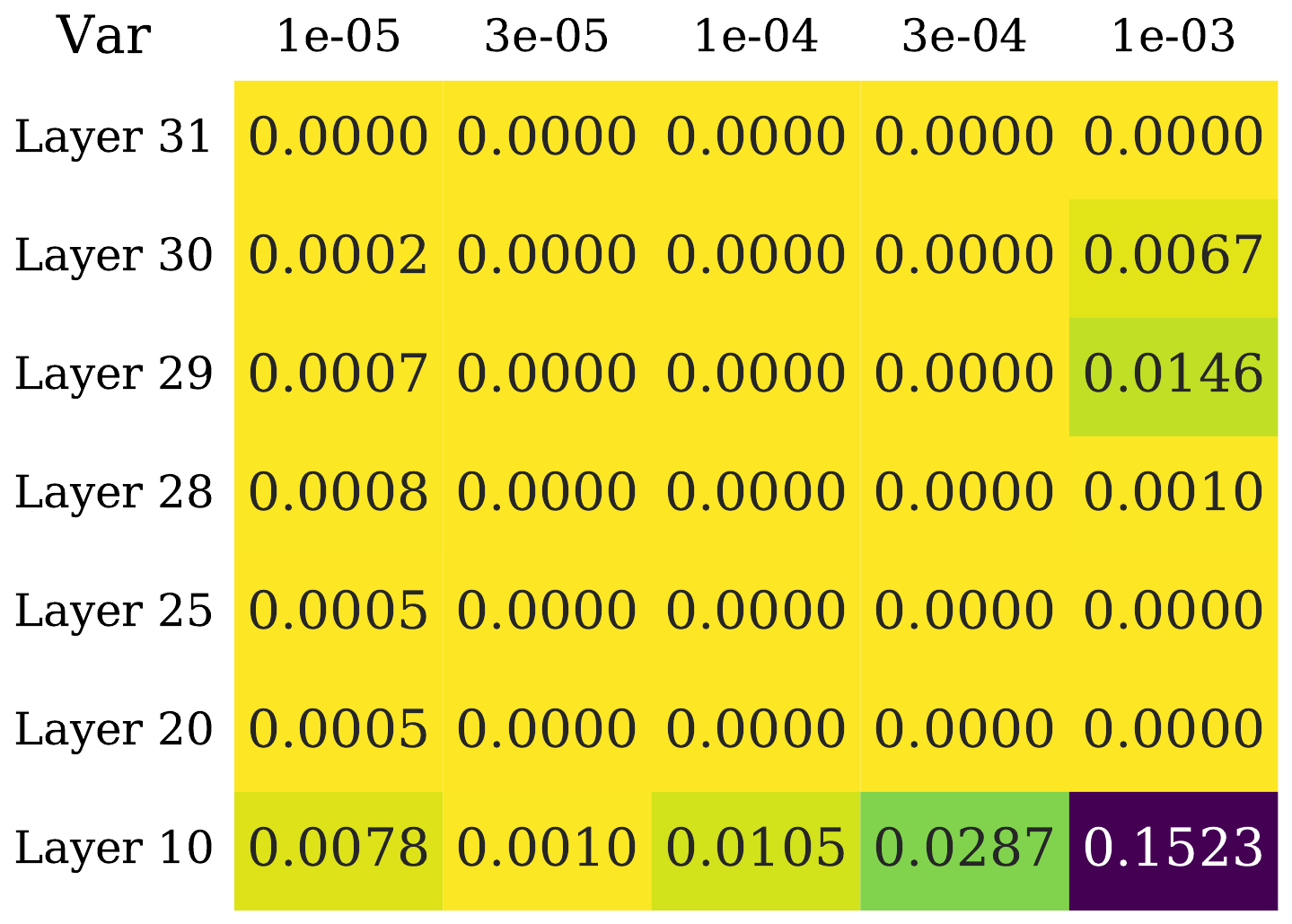}
        \label{sfig:mistral-pval-gateproj}
    }
    \caption{Detectability of watermarked $\mistral$ models with $\gaussmark$ for different layers and variances as measured by the median p-value over 1K prompts for (a) $\downproj$, (b) $\upproj$, and (c) $\gateproj$, colored by value.
    }
    \label{fig:mistral-pvals}
\end{figure}

\begin{figure}[ht]
    \centering
    \subfigure[$\llama$ median $p$-values ($\downproj$).]{
        \includegraphics[width=0.45\textwidth]{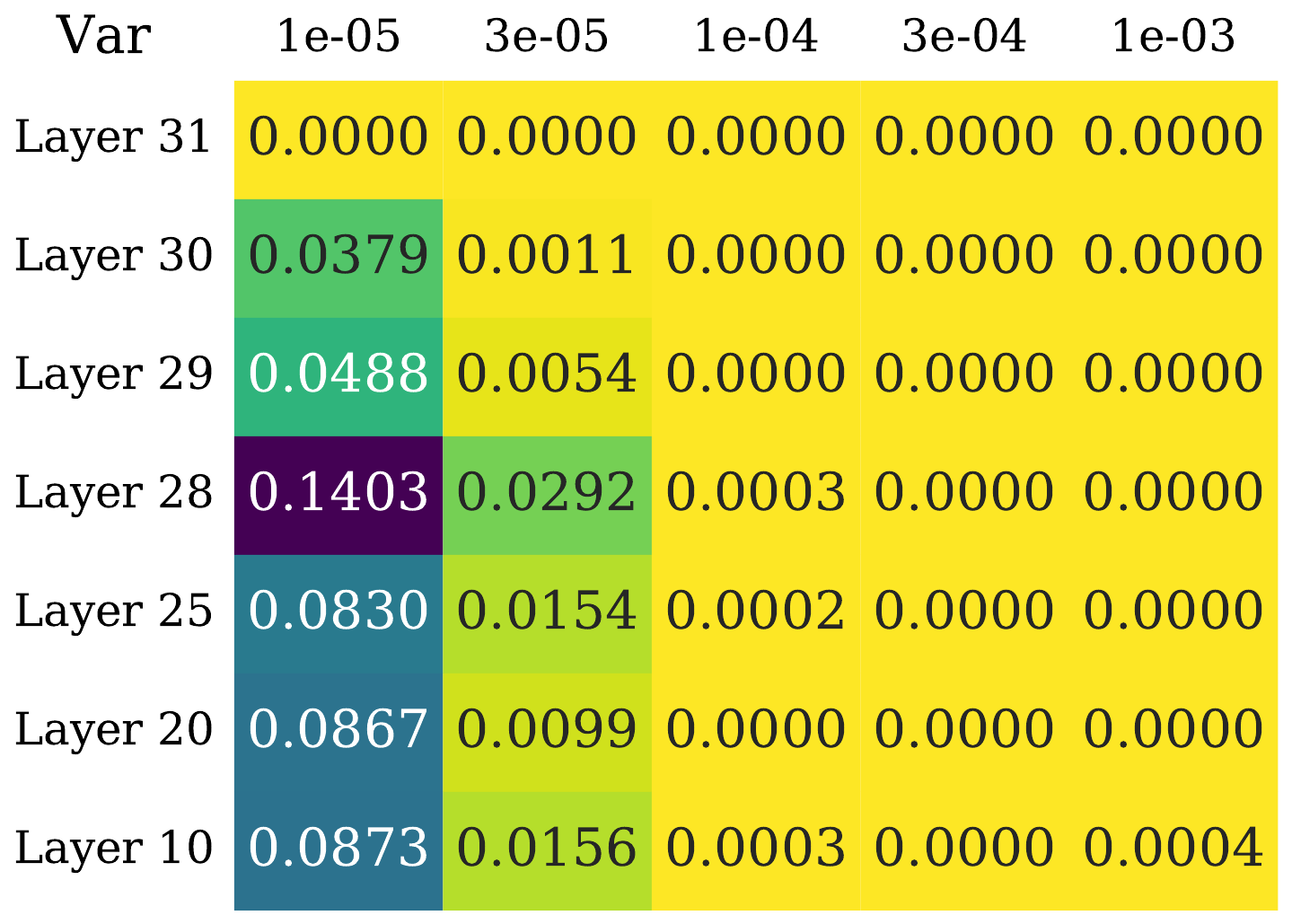}
        \label{sfig:llama-pval-downproj} 
    }
    \hfill \subfigure[$\llama$ median $p$-values ($\upproj$).]{
        \includegraphics[width=0.45\textwidth]{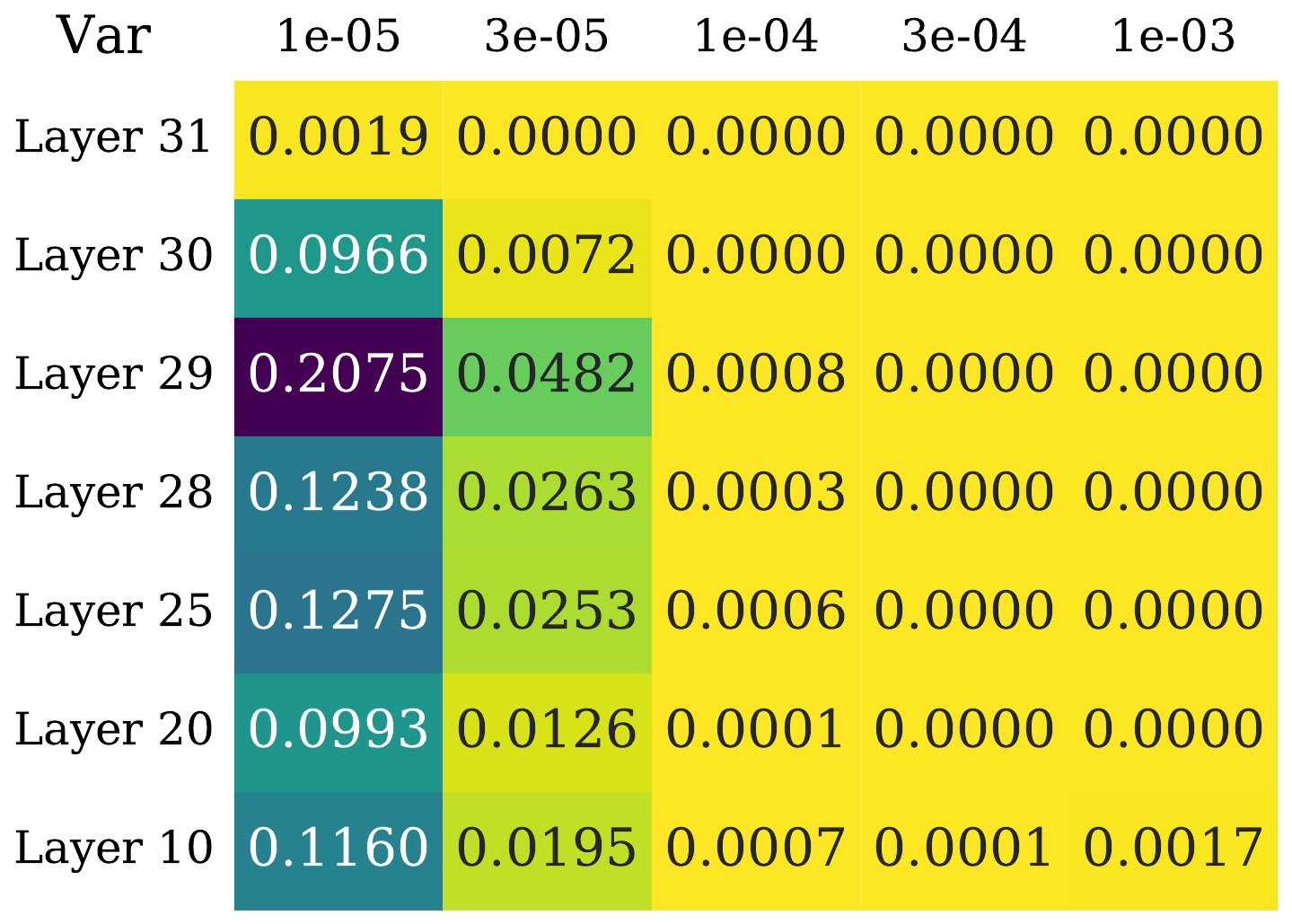}
        \label{sfig:llama-pval-upproj}
    }
    \hfill \subfigure[$\llama$ median $p$-values ($\gateproj$).]{
        \includegraphics[width=0.45\textwidth]{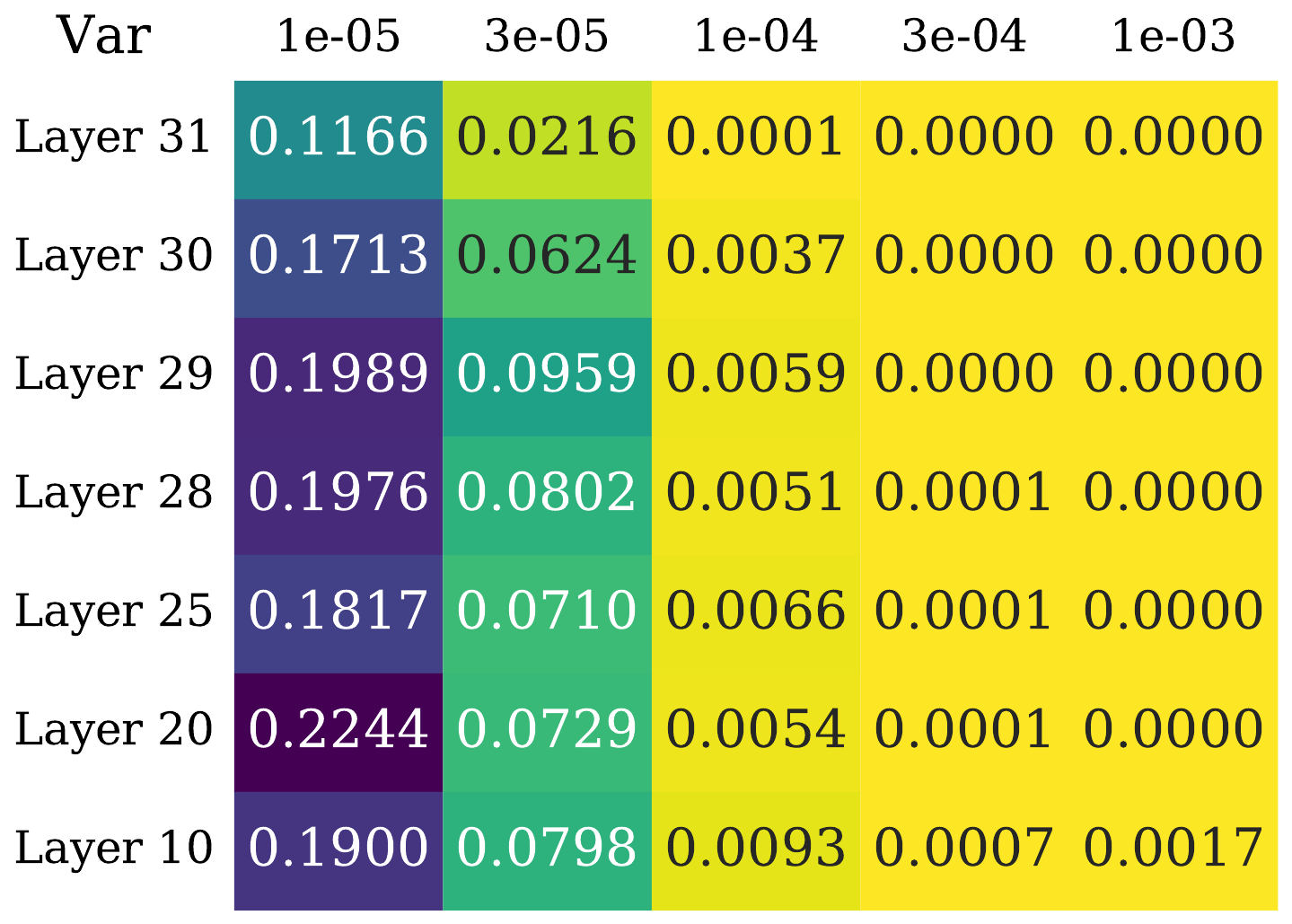}
        \label{sfig:llama-pval-gateproj}
    }
    \caption{Detectability of watermarked $\llama$ models with $\gaussmark$ for different layers and variances as measured by the median p-value over 1K prompts for (a) $\downproj$, (b) $\upproj$, and (c) $\gateproj$, colored by value.
    }
    \label{fig:llama-pvals}
\end{figure}

\begin{figure}[ht]
    \centering
    \subfigure[$\phimini$ median $p$-values ($\downproj$).]{
        \includegraphics[width=0.45\textwidth]{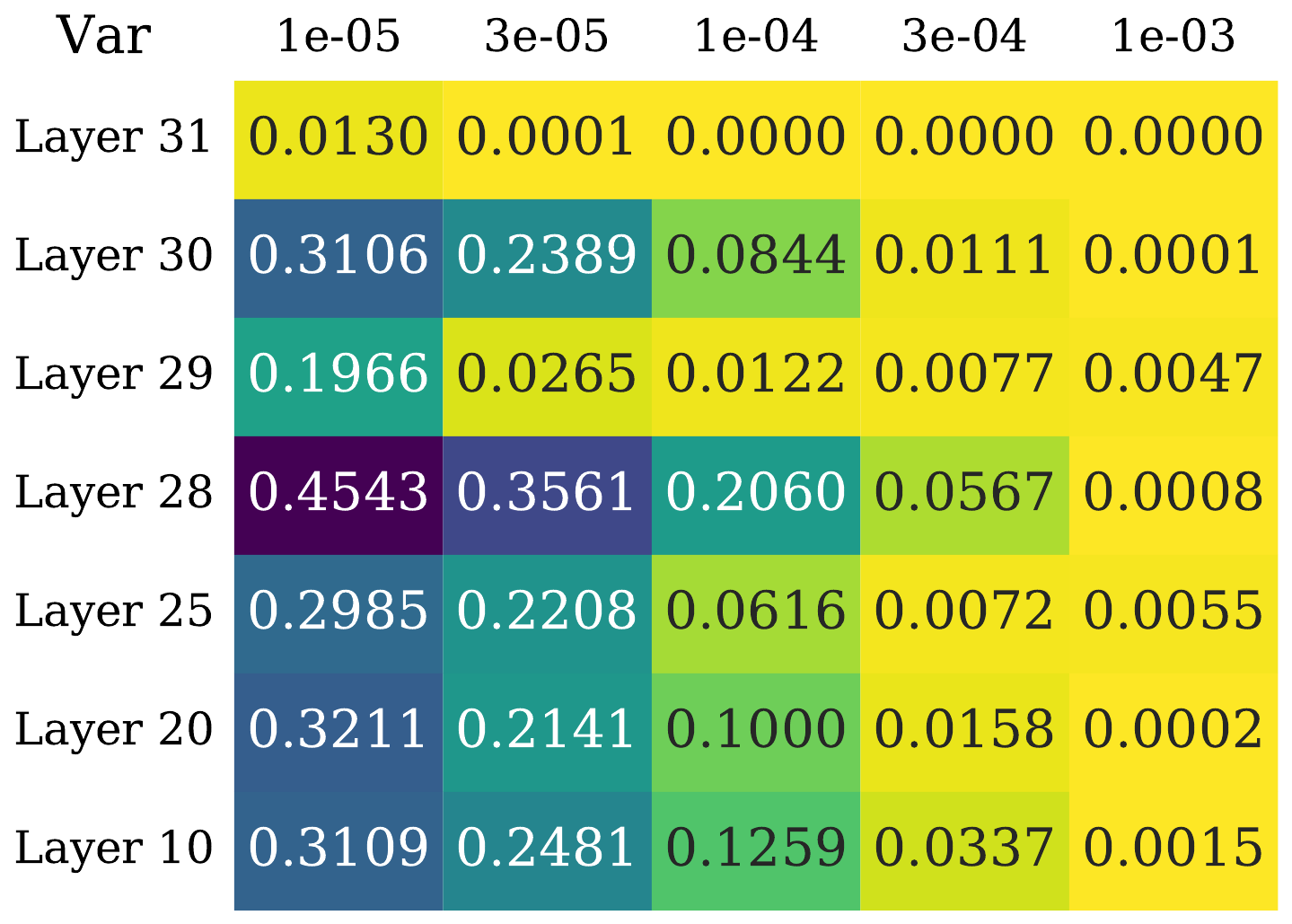}
        \label{sfig:phi-pval-downproj} 
    }
    \hfill \subfigure[$\phimini$ median $p$-values ($\gateupproj$).]{
        \includegraphics[width=0.45\textwidth]{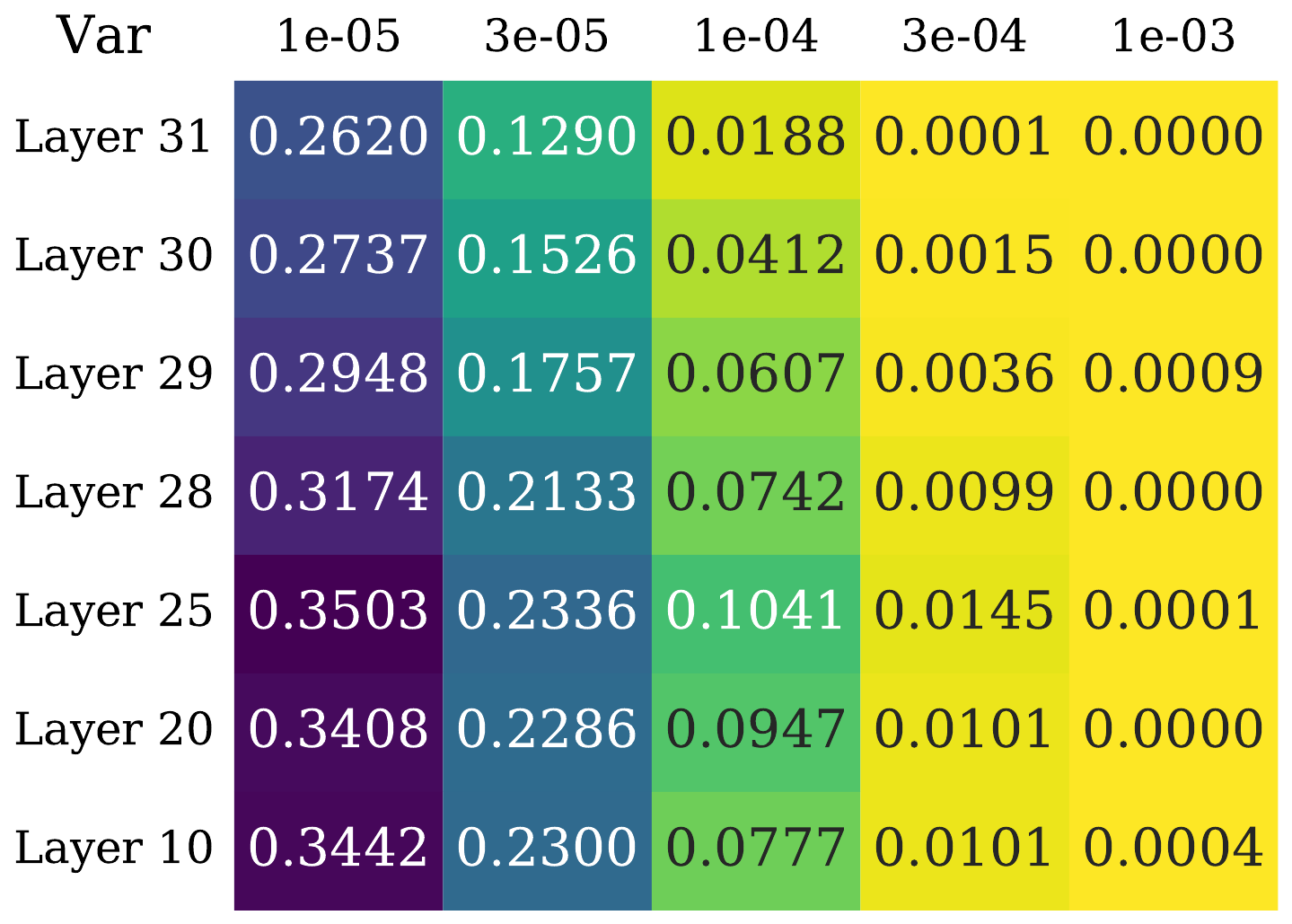}
        \label{sfig:phi-pval-gateupproj}
    }
    \caption{Detectability of watermarked $\phimini$ models with $\gaussmark$ for different layers and variances as measured by the median p-value over 1K prompts for (a) $\downproj$ and (b) $\gateupproj$, colored by value.
    }
    \label{fig:phi-pvals}
\end{figure}

\section{Empirical Results on the Quality of Watermarked Models}\label{app:modelquality}

\begin{table}[t]
    \centering    
    \begin{tabular}{lccc}
        \toprule \textbf{Model} & \textbf{Unwatermarked} & \textbf{Vanilla} $\gaussmark$ & \textbf{Rank-Reduced} $\gaussmark$ \\  
        Mistral & 1.9199 $\pm$ 0.0135 & 1.9385 $\pm$ 0.0134 & 1.9410 $\pm$ 0.0135 \\
        Llama & 2.1709 $\pm$ 0.0149 & 2.2331 $\pm$ 0.0145 & 2.1901 $\pm$ 0.0148 \\
        Phi & 2.1396 $\pm$ 0.0133 & 2.1617 $\pm$ 0.0133 & 2.1620 $\pm$ 0.0133 \\
    \bottomrule
    \end{tabular}
    \caption{Comparison of crossentropies of $\gaussmark$ to unwatermarked models on C4.}
    \label{fig:crossentropies}
\end{table}

We here provide additional results on the effect that $\gaussmark$ has on the downstream model performance.  We begin by reporting the $\superglue$ scores of the both the watermarked and unwatermarked models broken down by the 8 individual tasks that make up the benchmark.  We report the average score and the standard error as computed by \texttt{lm-evaluation-harness} \citep{eval-harness} for each model in \Cref{tab:superglue_llama}, \Cref{tab:superglue_mistral}, and \Cref{tab:superglue_phi}.  We see that the watermark has a negligible effect on the performance of the models on the $\superglue$ benchmark, which is precisely why these particular watermarking parameters were chosen.

In addition to $\superglue$, $\gsmk$, and $\alpaca$, we also measure the decline in model quality through crossentropy on actual text.  In particular, for each of the 1K samples in our corpus, we compared the per-token crossentropy of the unwatermarked models with those of the watermarked models (as well as the Rank-reduced version, cf. \Cref{app:rankreduced}).  We see in \Cref{fig:crossentropies} that there is essentially no difference in next-token prediction abilities between watermarked and unwatermarked models, providing further evidence that $\gaussmark$ does not negatively affect the performance of the model. 

Finally, analogous to the ablation in \Cref{app:detectability} where we demonstrate the effect that different choices of watermarking parameters have on the detectability of the watermark, in \Cref{fig:mistral-gsm8k,fig:llama-gsm8k,fig:phi-gsm8k} we show the effect that these identical variations have on the performance of the models on $\gsmk$ benchmark.  The plots are colored so as to clearly display which variations are statistically indistinguishable from the base model (yellow) and which are not (purple).  For the sake of brevity, we restrict our attention to the math benchmark as opposed to the individual $\superglue$ tasks, although the general trend persists across tasks.  Unsurprisingly, as the variance parameter \(\sigma\) increases and the watermarked model becomes more different from the unwatermarked model, the performance decreases.  Interestingly, the middle layers appear to be the most noise-robust in $\mistral$ and $\llama$, with layers near the top and bottom being able to handle substantially less perturbation.  In addition, $\phimini$ appears to be substantially more noise-robust and can tolerate significantly higher variance than either of the other two models.  

\begin{table}[t]
    \centering    
    \begin{tabular}{lcc}
        \toprule \textbf{Task} & \textbf{Llama} & \textbf{Llama (Unwatermarked)} \\ 
    \textbf{BoolQ} & 0.8223 $\pm$ 0.0067 & 0.8217 $\pm$ 0.0067 \\ 
    \rowcolor{lightgray}\textbf{CB} & 0.6607 $\pm$ 0.0638 & 0.6250 $\pm$ 0.0653 \\ 
    \textbf{COPA} & 0.9000 $\pm$ 0.0302 & 0.8700 $\pm$ 0.0338 \\ 
    \rowcolor{lightgray}\textbf{MultiRC} & 0.5718 $\pm$ 0.0071 & 0.5720 $\pm$ 0.0071 \\ 
    \textbf{ReCoRD} & 0.9168 $\pm$ 0.0027 & 0.9222 $\pm$ 0.0026 \\ 
    \rowcolor{lightgray}\textbf{RTE} & 0.6931 $\pm$ 0.0278 & 0.7040 $\pm$ 0.0275 \\ 
    \textbf{WiC} & 0.5047 $\pm$ 0.0198 & 0.5157 $\pm$ 0.0198 \\ 
    \rowcolor{lightgray}\textbf{Winograd} & 0.6058 $\pm$ 0.0482 & 0.5865 $\pm$ 0.0485 \\ 
    \bottomrule
    \end{tabular}
    \caption{Performance of $\gaussmark$ on $\superglue$($\llama$). }
    \label{tab:superglue_llama}
\end{table}

\begin{table}[t]
    \centering
    
    \begin{tabular}{lcc}
    \toprule \textbf{Task} & \textbf{Mistral} & \textbf{Mistral (Unwatermarked)} \\ 
    \textbf{BoolQ} & 0.8162 $\pm$ 0.0068 & 0.8205 $\pm$ 0.0067 \\ 
    \rowcolor{lightgray}\textbf{CB} & 0.6071 $\pm$ 0.0659 & 0.5357 $\pm$ 0.0672 \\ 
    \textbf{COPA} & 0.9100 $\pm$ 0.0288 & 0.9200 $\pm$ 0.0273 \\ 
    \rowcolor{lightgray}\textbf{MultiRC} & 0.5590 $\pm$ 0.0071 & 0.5672 $\pm$ 0.0071 \\ 
    \textbf{ReCoRD} & 0.9203 $\pm$ 0.0027 & 0.9219 $\pm$ 0.0026 \\ 
    \rowcolor{lightgray}\textbf{RTE} & 0.7112 $\pm$ 0.0273 & 0.6823 $\pm$ 0.0280 \\ 
    \textbf{WiC} & 0.5831 $\pm$ 0.0195 & 0.5674 $\pm$ 0.0196 \\ 
    \rowcolor{lightgray}\textbf{Winograd} & 0.5192 $\pm$ 0.0492 & 0.4615 $\pm$ 0.0491 \\ 
    \bottomrule
    \end{tabular}

    \caption{Performance of $\gaussmark$ on \(\superglue\) ($\mistral$). }
    \label{tab:superglue_mistral}
\end{table}

\begin{table}[t]
    \centering
    
    \begin{tabular}{lcc}
        \toprule \textbf{Task} & \textbf{Phi} & \textbf{Phi (Unwatermarked)} \\ 
        \textbf{BoolQ} & 0.8514 $\pm$ 0.0062 & 0.8532 $\pm$ 0.0062 \\ 
        \rowcolor{lightgray}\textbf{CB} & 0.0893 $\pm$ 0.0385 & 0.0893 $\pm$ 0.0385 \\ 
        \textbf{COPA} & 0.8800 $\pm$ 0.0327 & 0.8700 $\pm$ 0.0338 \\ 
        \rowcolor{lightgray}\textbf{MultiRC} & 0.3855 $\pm$ 0.0070 & 0.3296 $\pm$ 0.0068 \\ 
        \textbf{ReCoRD} & 0.8689 $\pm$ 0.0033 & 0.8725 $\pm$ 0.0033 \\ 
        \rowcolor{lightgray}\textbf{RTE} & 0.7509 $\pm$ 0.0260 & 0.7509 $\pm$ 0.0260 \\ 
        \textbf{WiC} & 0.5768 $\pm$ 0.0196 & 0.5752 $\pm$ 0.0196 \\ 
        \rowcolor{lightgray}\textbf{Winograd} & 0.7885 $\pm$ 0.0402 & 0.8269 $\pm$ 0.0373 \\ 
        \bottomrule
    \end{tabular}

    \caption{Performance of $\gaussmark$ on $\superglue$($\phimini$). }
    \label{tab:superglue_phi}
\end{table}

\begin{figure}[ht]
    \centering
    \subfigure[$\gsmk$ performance of $\mistral$ ($\downproj$).]{
        \includegraphics[width=0.45\textwidth]{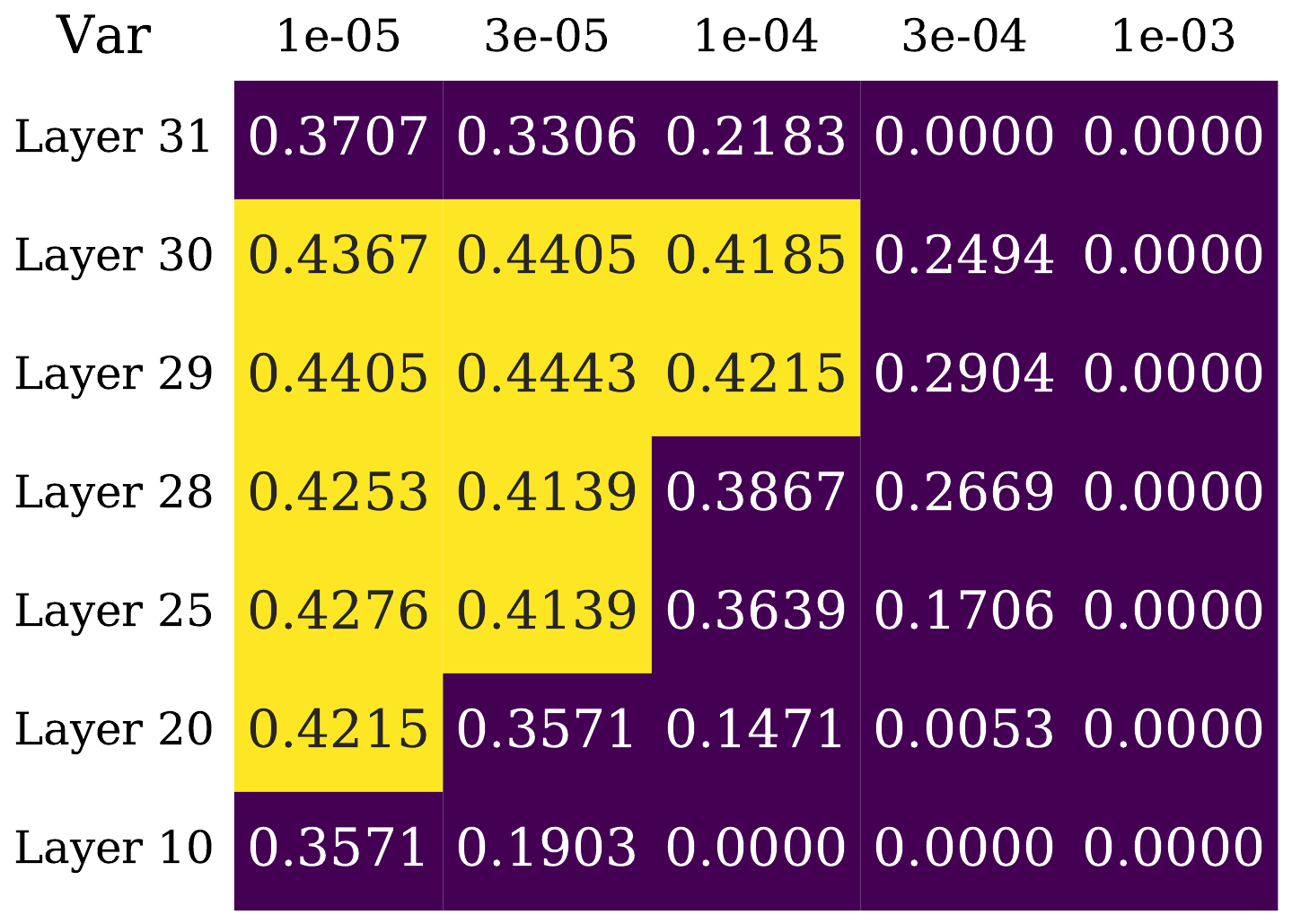}
        \label{sfig:mistral-gsm8k-downproj} 
    }
    \hfill \subfigure[$\gsmk$ performance of $\mistral$ ($\upproj$).]{
        \includegraphics[width=0.45\textwidth]{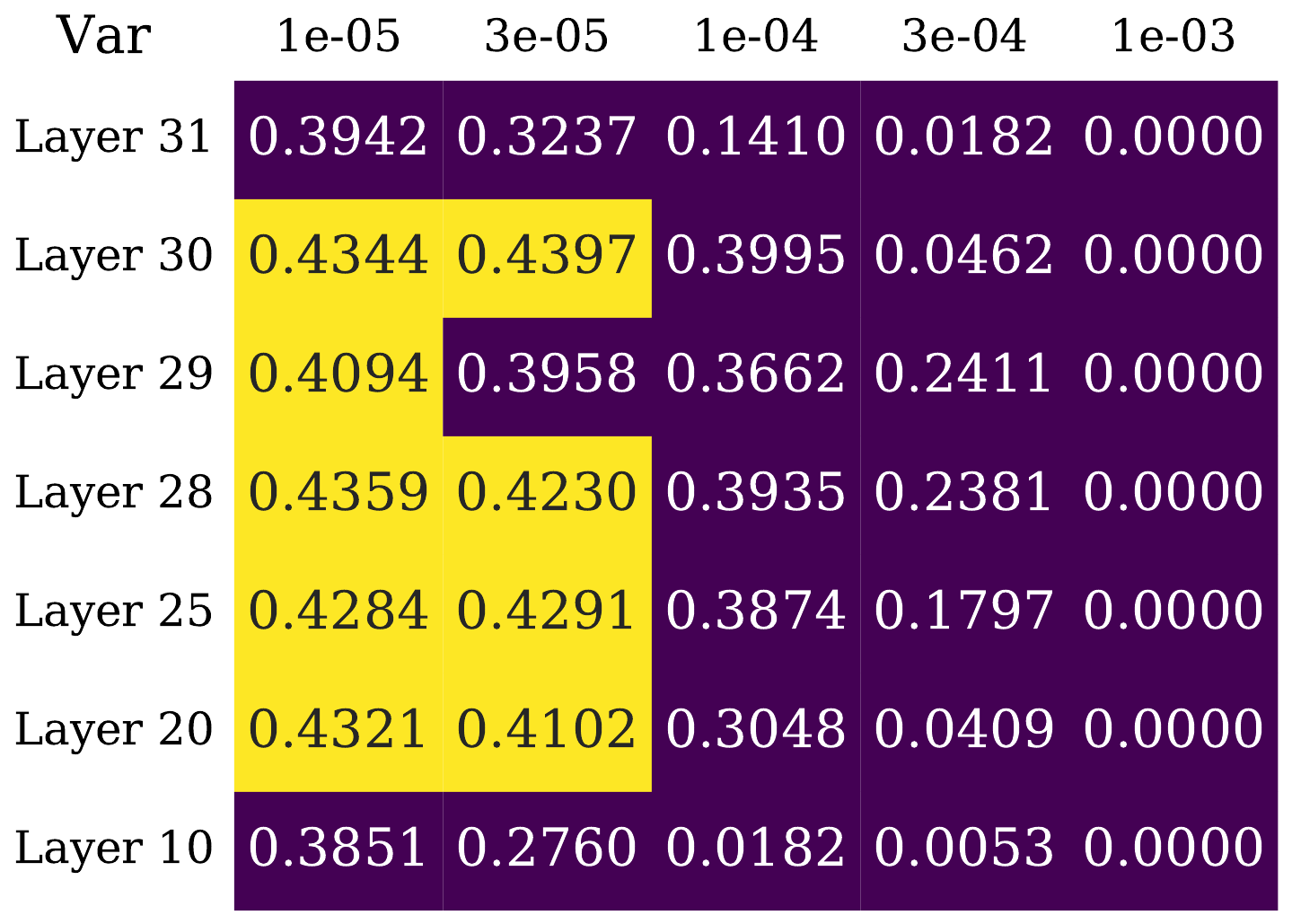}
        \label{sfig:mistral-gsm8k-upproj}
    }
    \hfill \subfigure[$\gsmk$ performance of $\mistral$ ($\gateproj$).]{
        \includegraphics[width=0.45\textwidth]{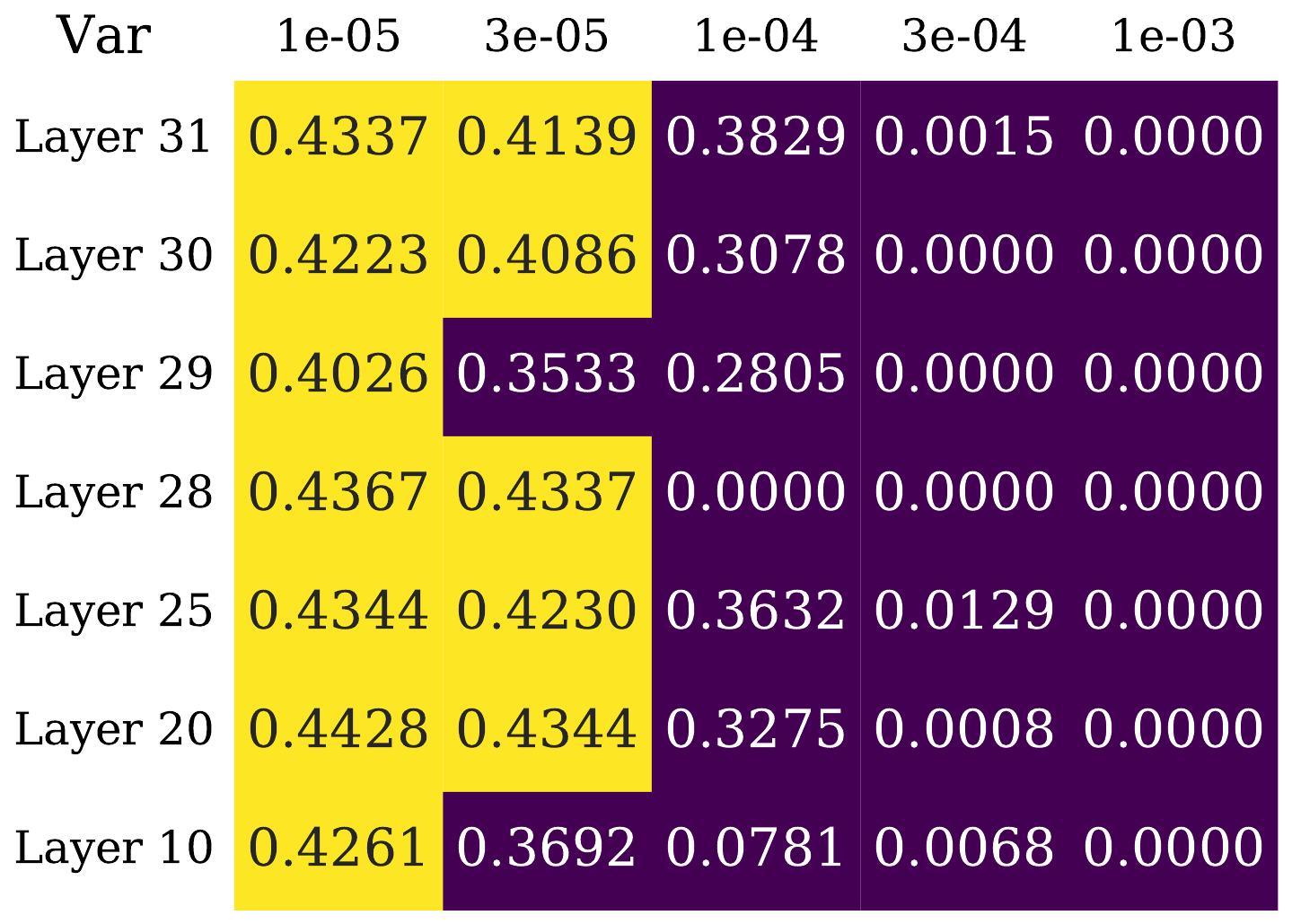}
        \label{sfig:mistral-gsm8k-gateproj}
    }
    \caption{Effect that $\gaussmark$ has on $\mistral$'s performance on $\gsmk$ for different layers and variances for (a) $\downproj$, (b) $\upproj$, and (c) $\gateproj$, colored by whether the watermarked model's performance is statistically indistinguishable from that of the base model (yellow) or not (purple).
    }
    \label{fig:mistral-gsm8k}
\end{figure}

\begin{figure}[ht]
    \centering
    \subfigure[$\gsmk$ performance of $\llama$ ($\downproj$).]{
        \includegraphics[width=0.45\textwidth]{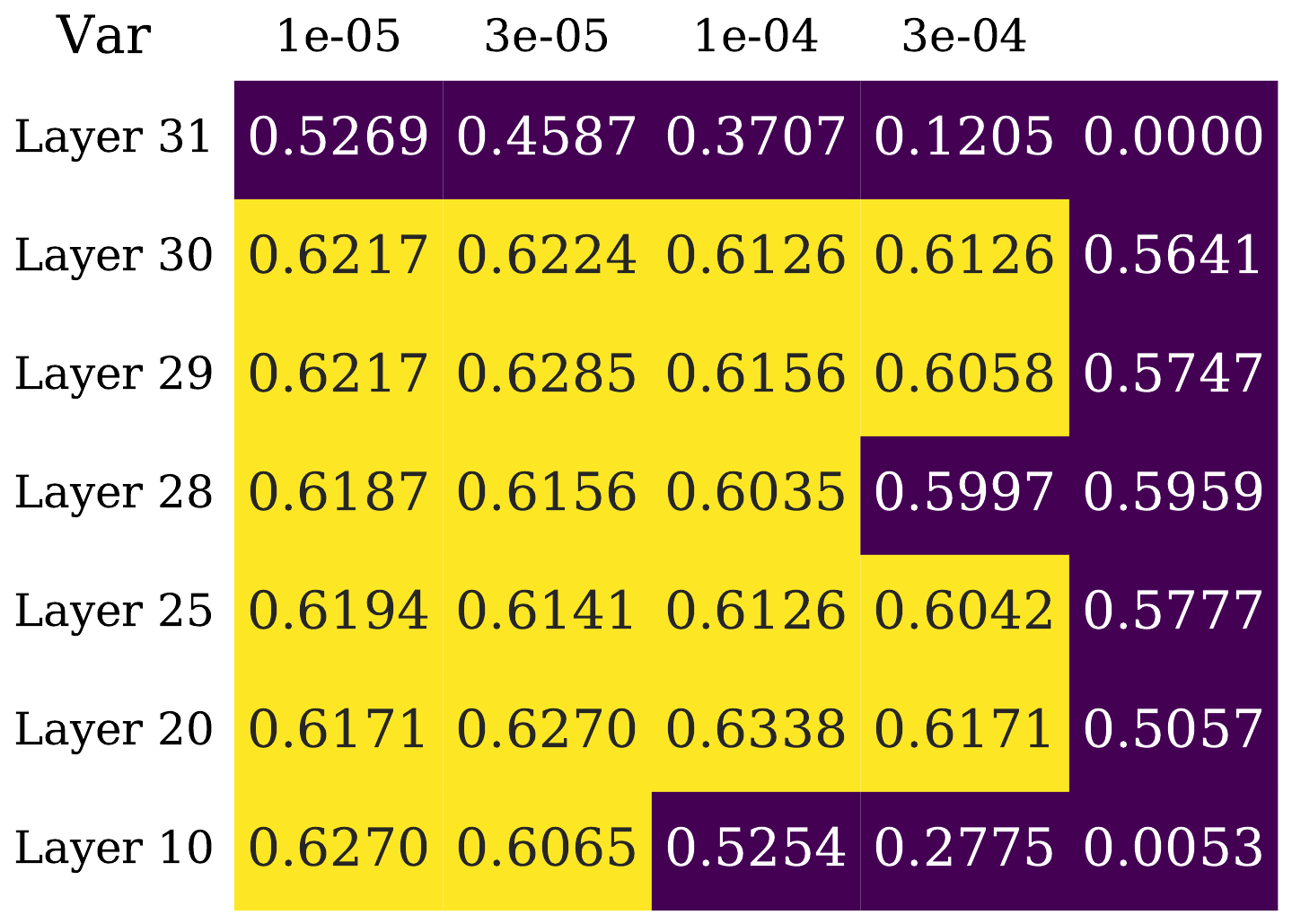}
        \label{sfig:llama-gsm8k-downproj} 
    }
    \hfill \subfigure[$\gsmk$ performance of $\llama$ ($\upproj$).]{
        \includegraphics[width=0.45\textwidth]{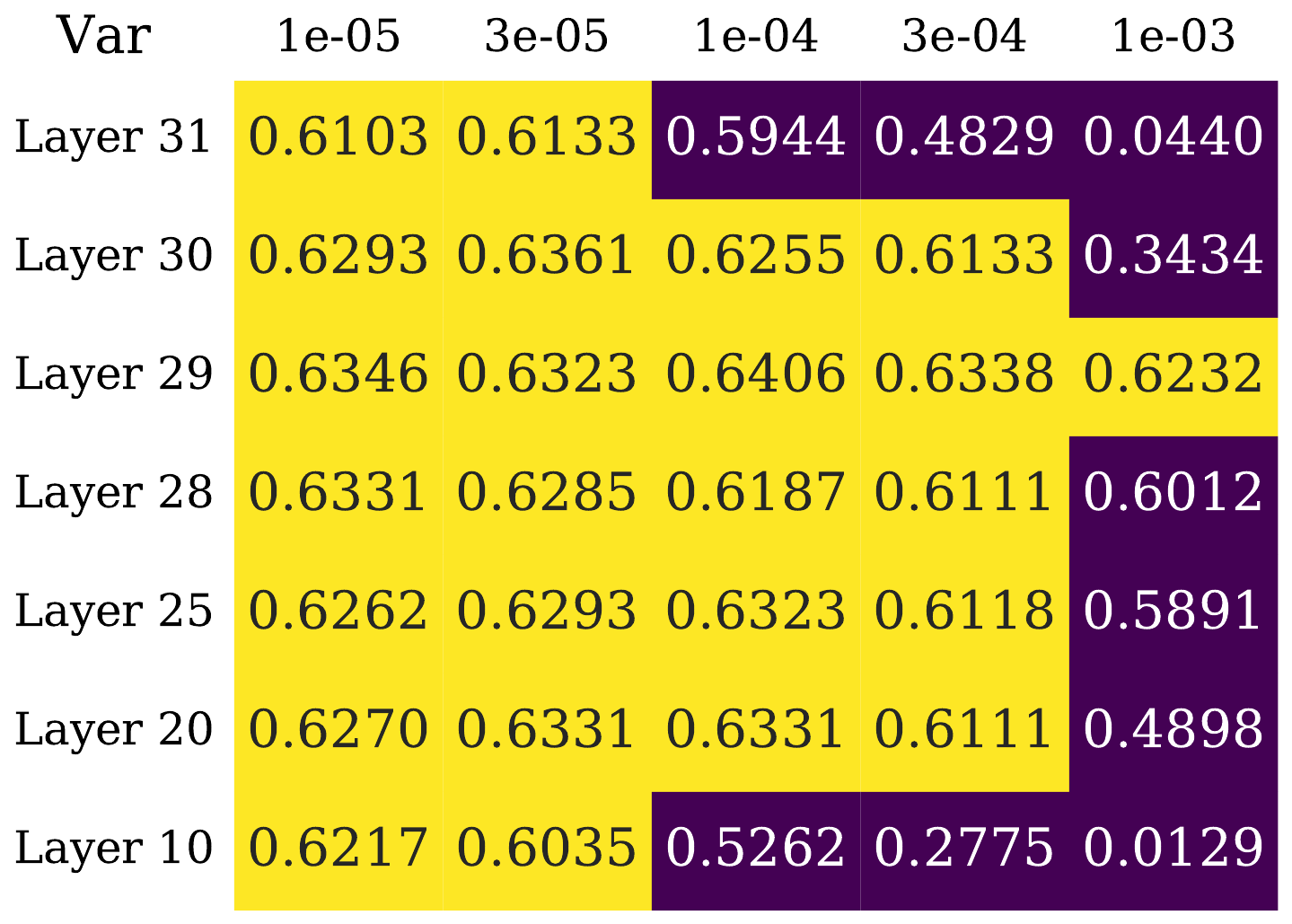}
        \label{sfig:llama-gsm8k-upproj}
    }
    \hfill \subfigure[$\gsmk$ performance of $\llama$ ($\gateproj$).]{
        \includegraphics[width=0.45\textwidth]{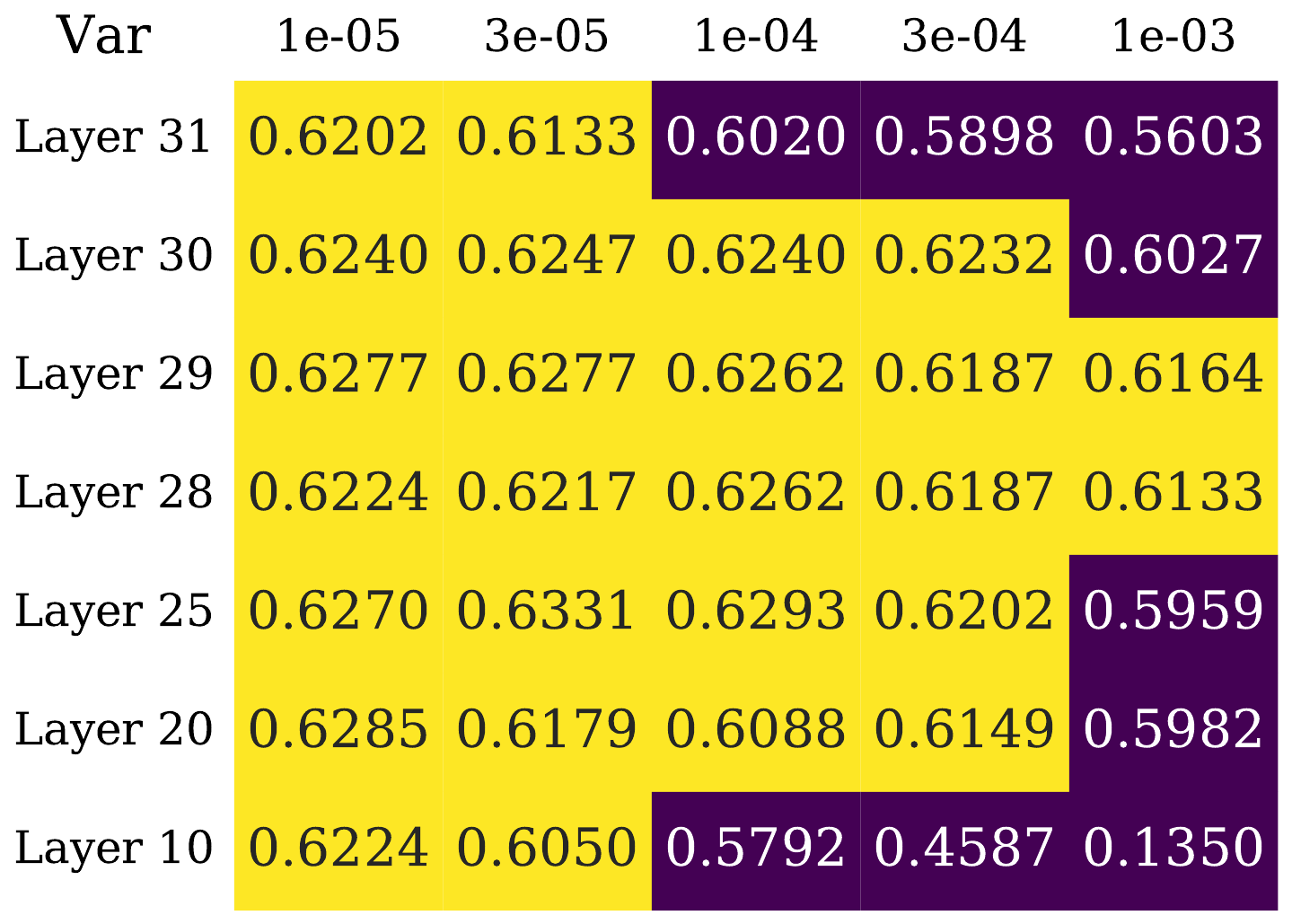}
        \label{sfig:llama-gsm8k-gateproj}
    }
    \caption{Effect that $\gaussmark$ has on $\llama$'s performance on $\gsmk$ for different layers and variances for (a) $\downproj$, (b) $\upproj$, and (c) $\gateproj$, colored by whether the watermarked model's performance is statistically indistinguishable from that of the base model (yellow) or not (purple).
    }
    \label{fig:llama-gsm8k}
\end{figure}

\begin{figure}[ht]
    \centering
    \subfigure[$\gsmk$ performance of $\phimini$ ($\downproj$).]{
        \includegraphics[width=0.45\textwidth]{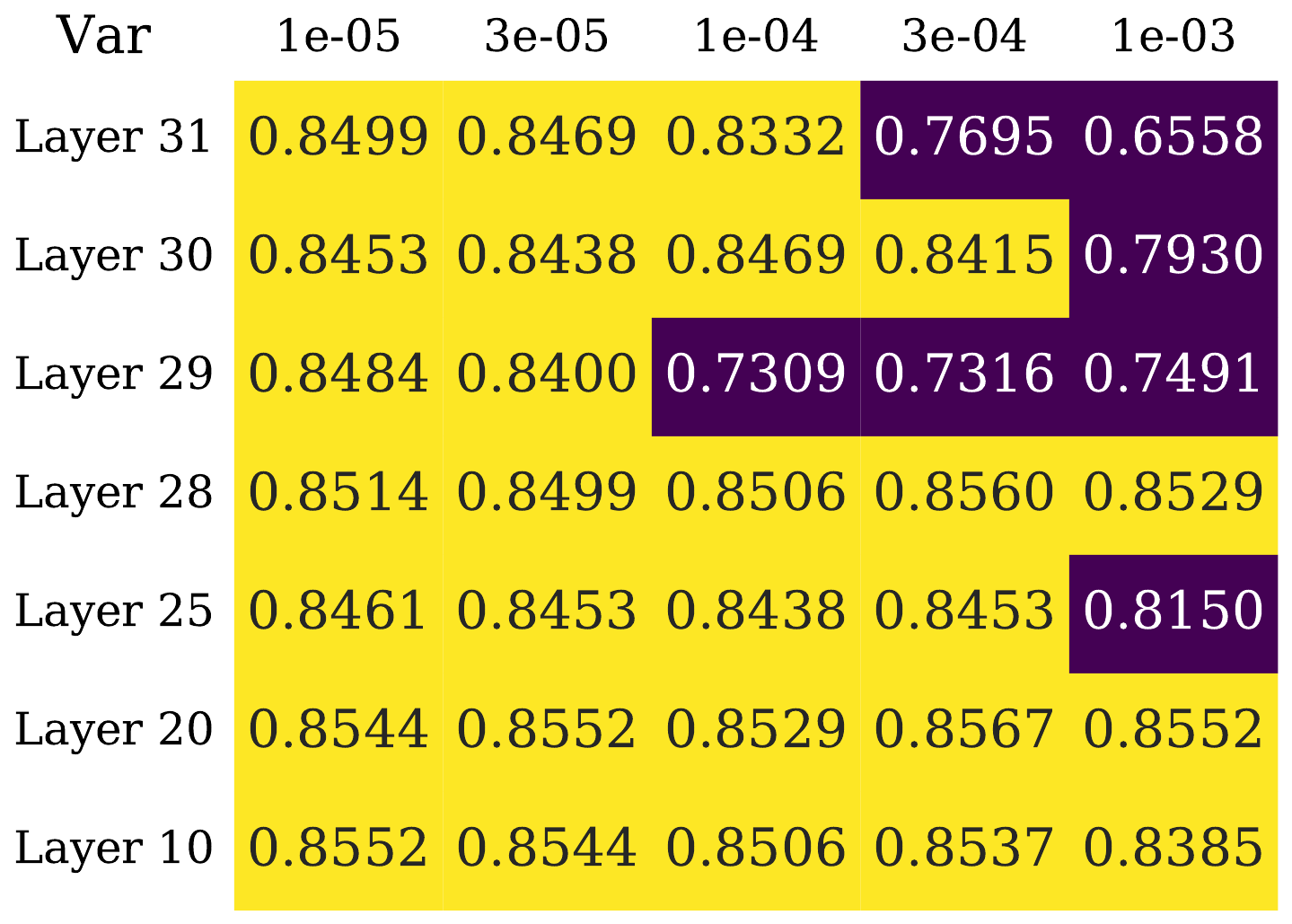}
        \label{sfig:phi-gsm8k-downproj} 
    }
    \hfill \subfigure[$\gsmk$ performance of $\phimini$ ($\gateupproj$).]{
        \includegraphics[width=0.45\textwidth]{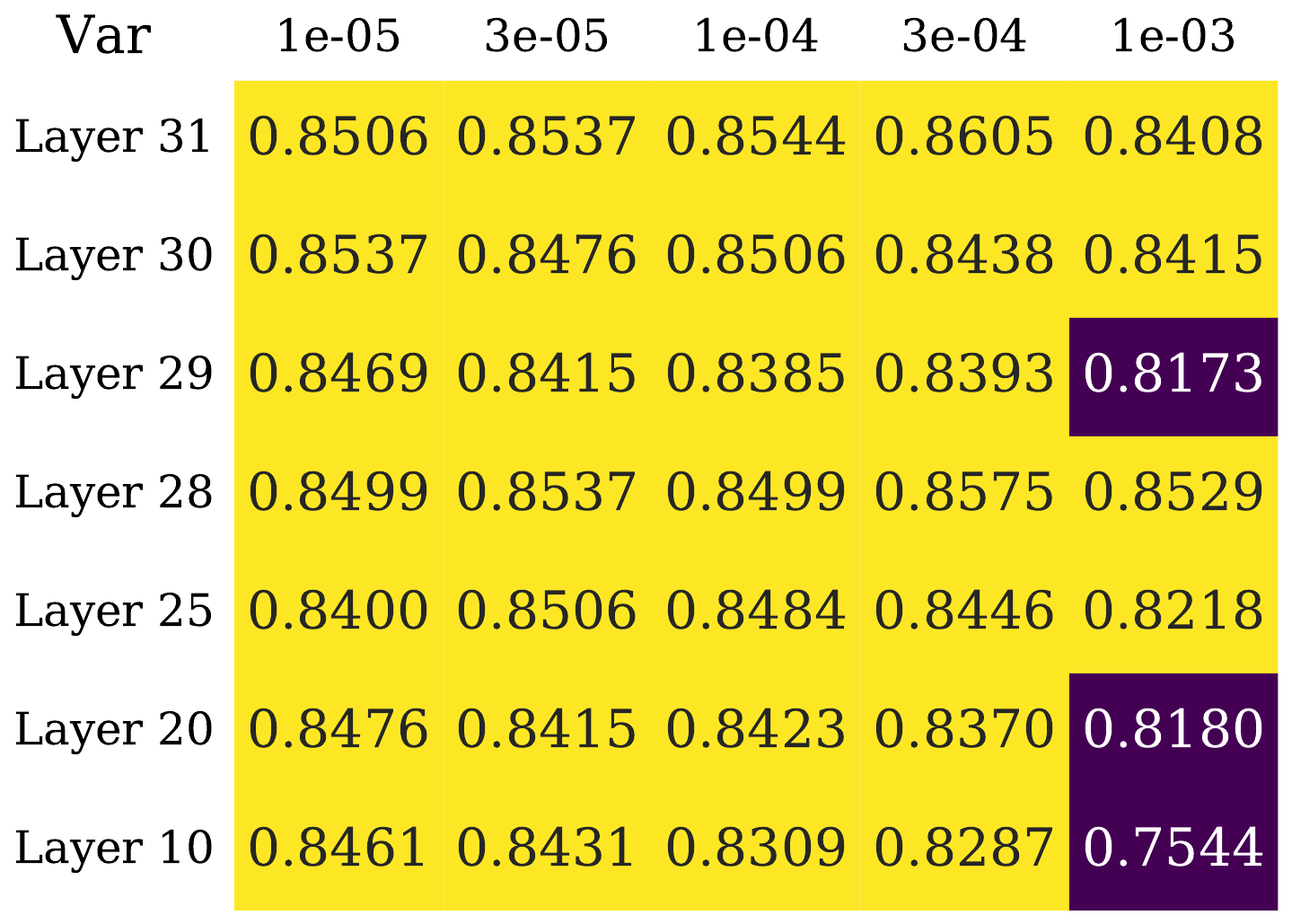}
        \label{sfig:phi-gsm8k-gateupproj}
    }
    \caption{Effect that $\gaussmark$ has on $\phimini$'s performance on $\gsmk$ for different layers and variances for (a) $\downproj$ and (b) $\gateupproj$, a, colored by whether the watermarked model's performance is statistically indistinguishable from that of the base model (yellow) or not (purple).
    }
    \label{fig:phi-gsm8k}
\end{figure}

\section{Empirical Results on the Robustness of $\gaussmark$}\label{app:robustness}

In this section, we provide additional details and further results related to the robustness of $\gaussmark$ to various corruptions.  We consider four types of corruptions including three token-level corruptions:
\begin{itemize}
    \item \textbf{Random insertions.} We choose random tokens from the vocabulary and insert them into the text.  We either insert these tokens at random locations or as a prefix, with the latter simulating some degree of ignorance of the prompt itself.
    \item \textbf{Random deletions.} We choose random tokens from the text and delete them.  We also consider deleting the first few tokens of the text to understand whether or not knowledge of the prompt is necessary for $\gaussmark$ to detect the watermark.
    \item \textbf{Random substitutions.} We choose random tokens from the vocabulary and substitute them for tokens in the text.  We consider both random substitutions and substitutions of the first few tokens of the text.
\end{itemize}
In \Cref{fig:robust-medians} and \ref{fig:robust-aucs}, along with \Cref{fig:corruptions} and \ref{fig:prompt_corruptions} we display the results of these token-level corruptions.  In particular, we replace increasing fractions of the watermarked text with the corrupted text and measure the effect on the TPR at FPR of 0.05 as well as the median p-values and the AUCs of the resulting tests.  We find that all of the measures generally agree in terms of the effect that such corruption has on the detectability of $\gaussmark$.  First, when substantial fractions of the text are substituted or removed, we find that the detectability of $\gaussmark$ is negatively affected.  On the other hand, corruptions to the prompt itself, including removing half of the concatenated prompt and generated response, have very little effect, demonstrating the robustness of $\gaussmark$ to ignorance of the prompt.  We emphasize that these token-level corruptions should only be seen as a preliminary analysis. On one hand, they are too simplistic to encompass the range of techniques a motivated attacker might employ. On the other hand, they are overly aggressive, as the resulting corrupted text is of extremely low quality.

Thus, following \citet{kuditipudi2023robust}, we consider a more realistic paraphrasing attack, where we use \texttt{Helsinki-NLP/opus-mt-tc-big-en-fr} and \texttt{Helsinki-NLP/opus-mt-tc-big-fr-en} \citep{tiedemann2022democratizing} to translate watermarked text from English to French and back.  We then evaluate the impact of this roundtrip translation on the detectability of $\gaussmark$, with results shown in \Cref{sfig:roundtrip}. While roundtrip translation does significantly hurt the detectability of $\gaussmark$, we find that our watermark is still detectable for a reasonable fraction of the generated text.

\begin{figure}[ht]
    \centering
    \subfigure[Adding random tokens.]{
        \includegraphics[width=0.45\textwidth]{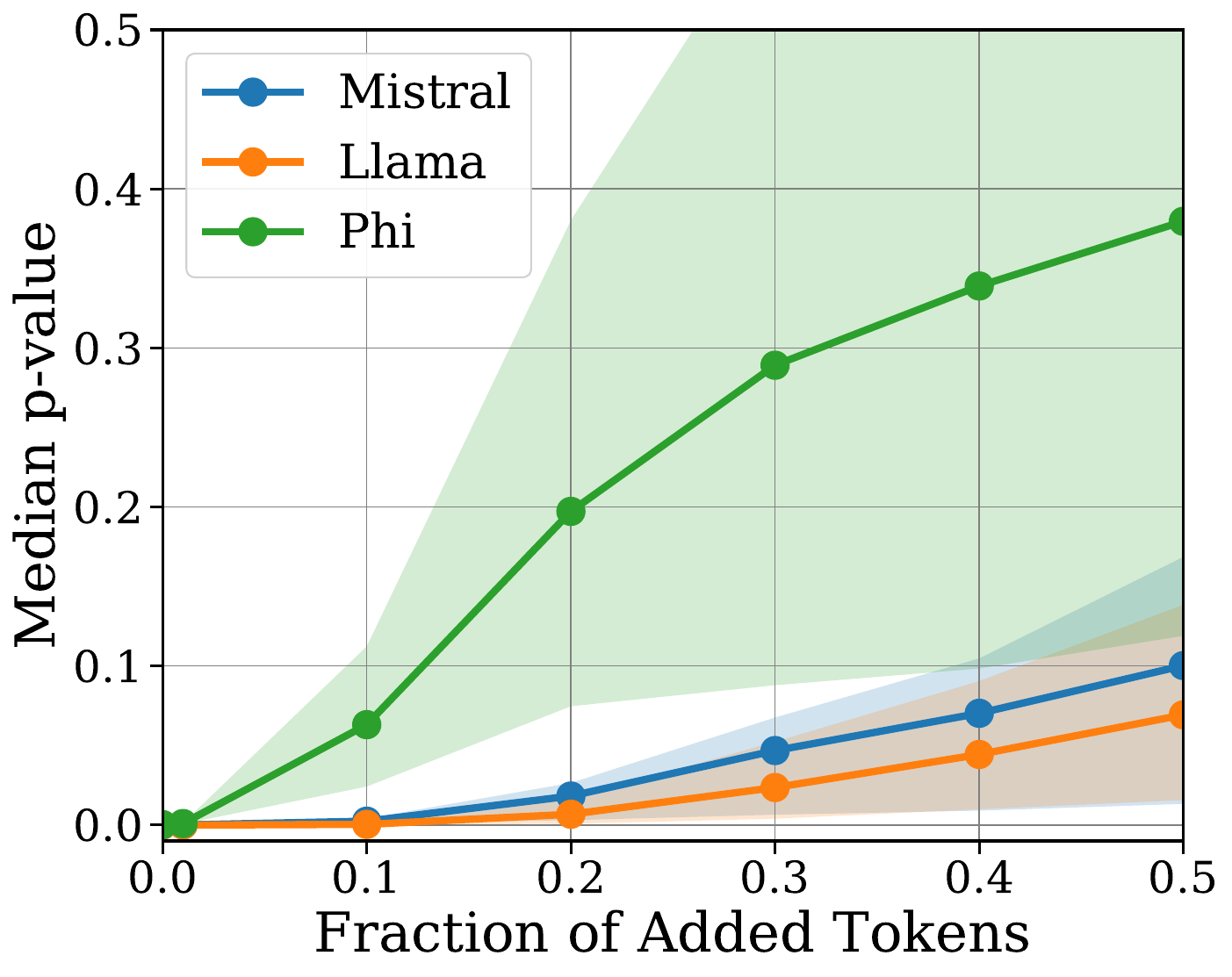}
        \label{sfig:median-add-random} 
    }
    \hfill \subfigure[Adding tokens to prompt.]{
        \includegraphics[width=0.45\textwidth]{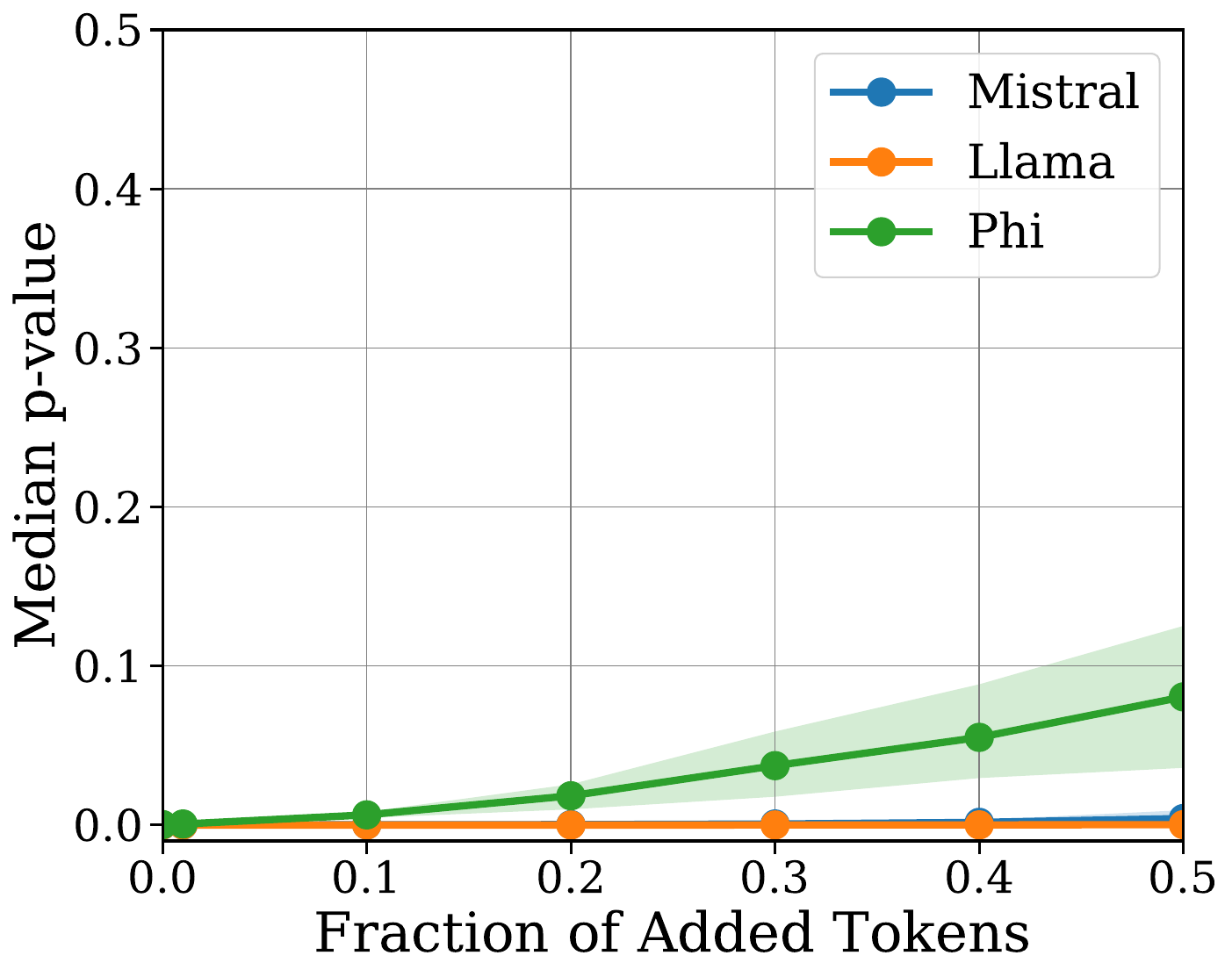}
        \label{sfig:median-add-start}
    }
    \hfill \subfigure[Removing random tokens.]{
        \includegraphics[width=0.45\textwidth]{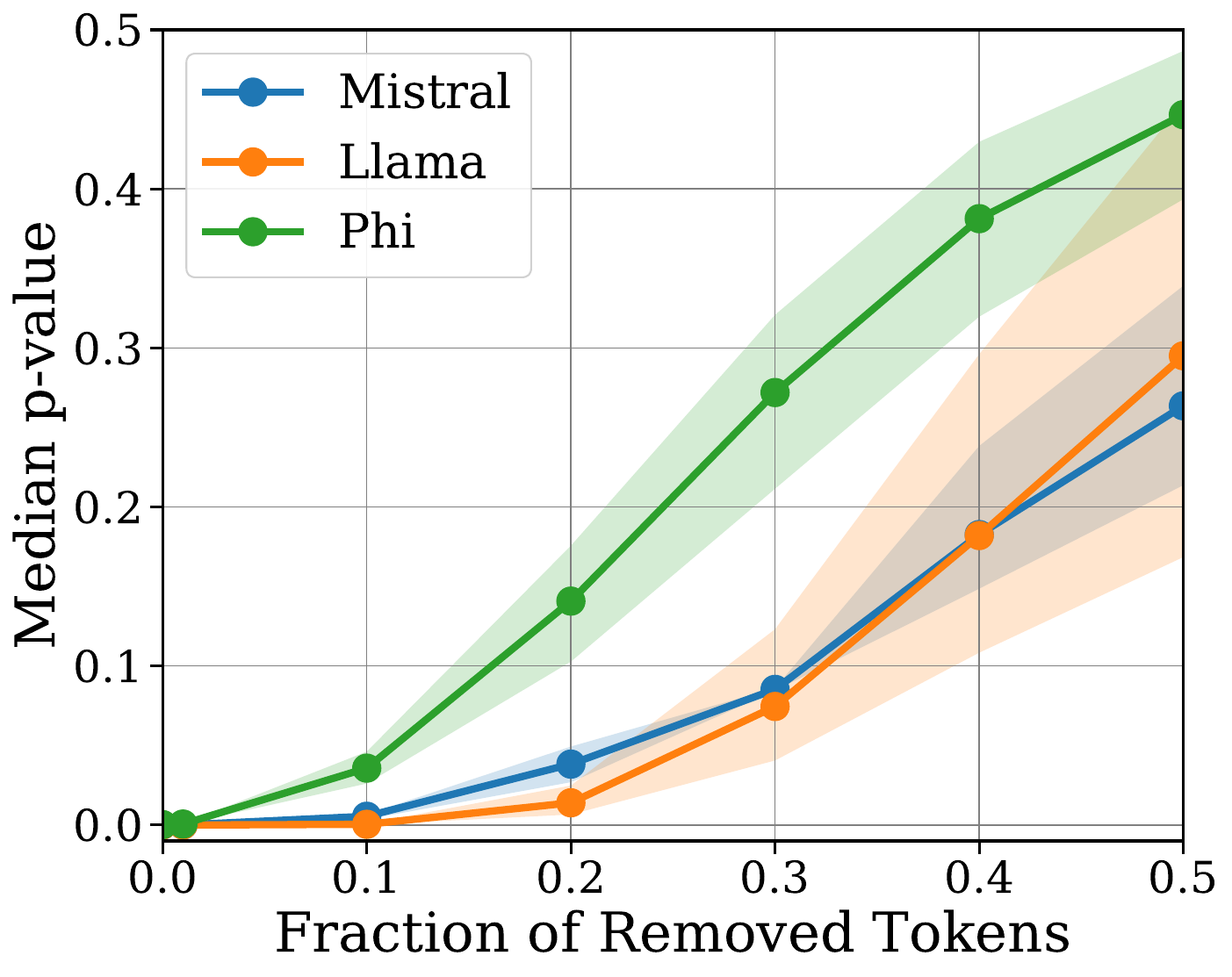}
        \label{sfig:median-remove-random}
    }
    \hfill \subfigure[Removing tokens from prompt.]{
        \includegraphics[width=0.45\textwidth]{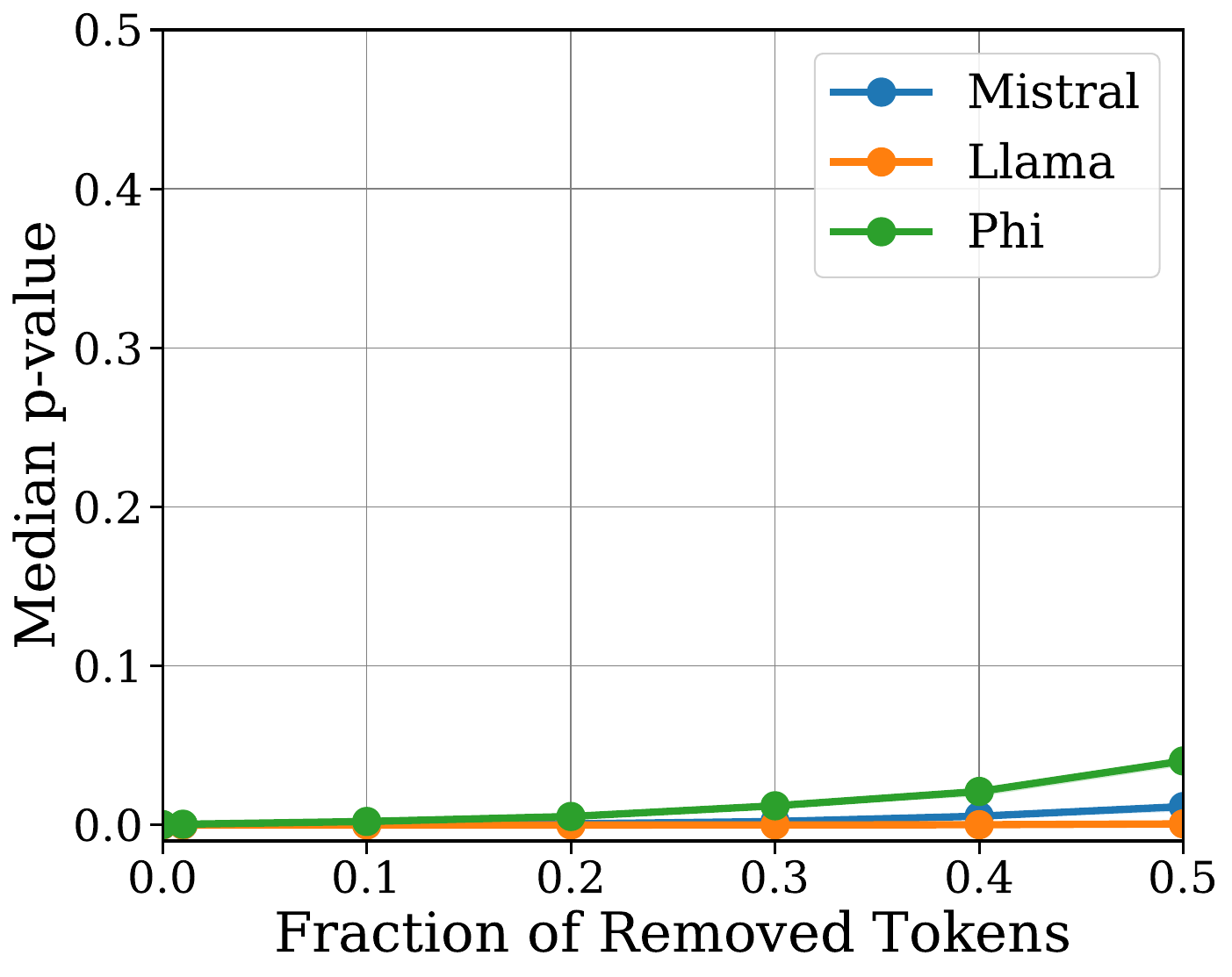}
        \label{sfig:median-remove-start}
    }
    \hfill \subfigure[Substituting random tokens.]{
        \includegraphics[width=0.45\textwidth]{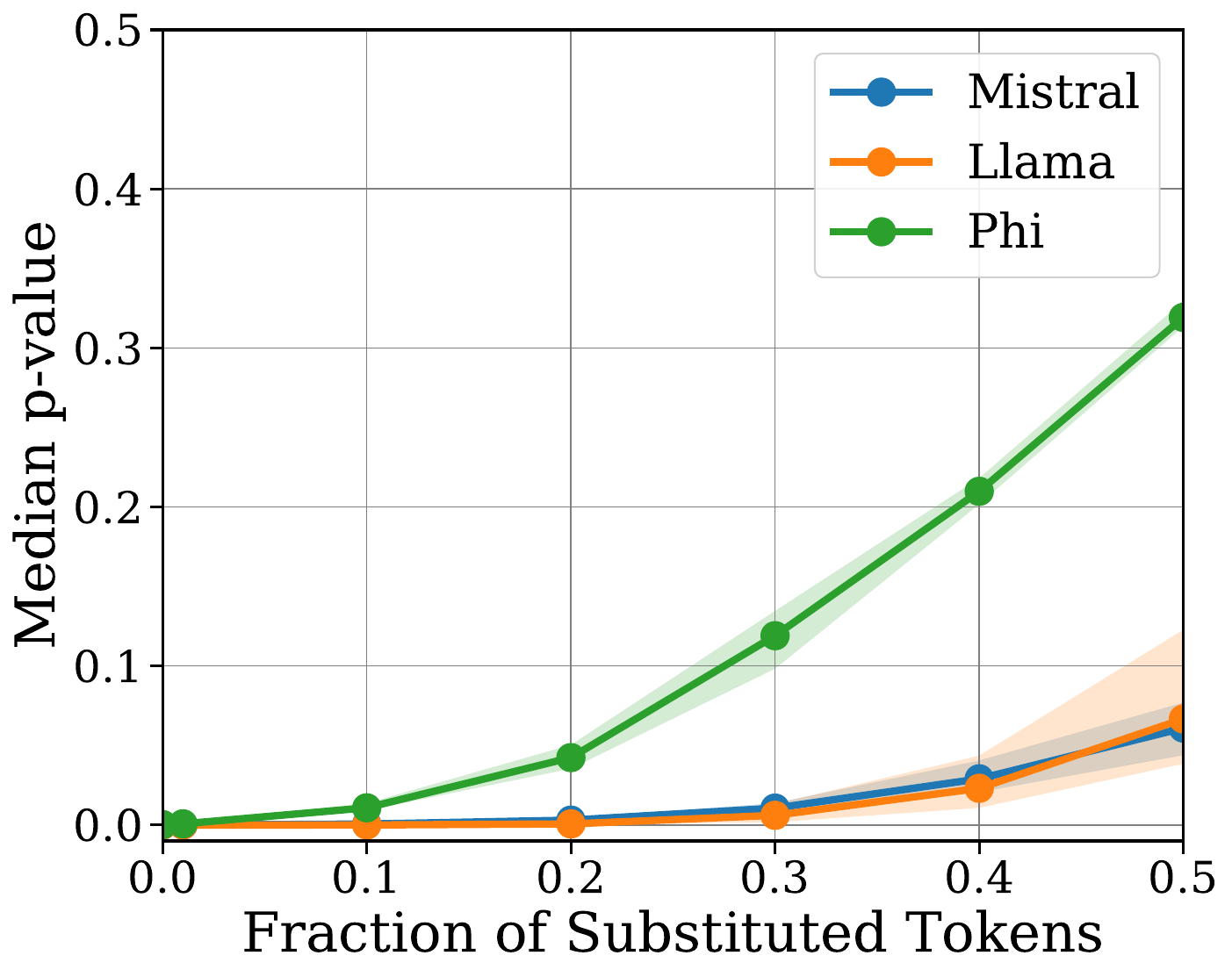}
        \label{sfig:median-substitute-random}
    }
    \hfill \subfigure[Substituting tokens in prompt.]{
        \includegraphics[width=0.45\textwidth]{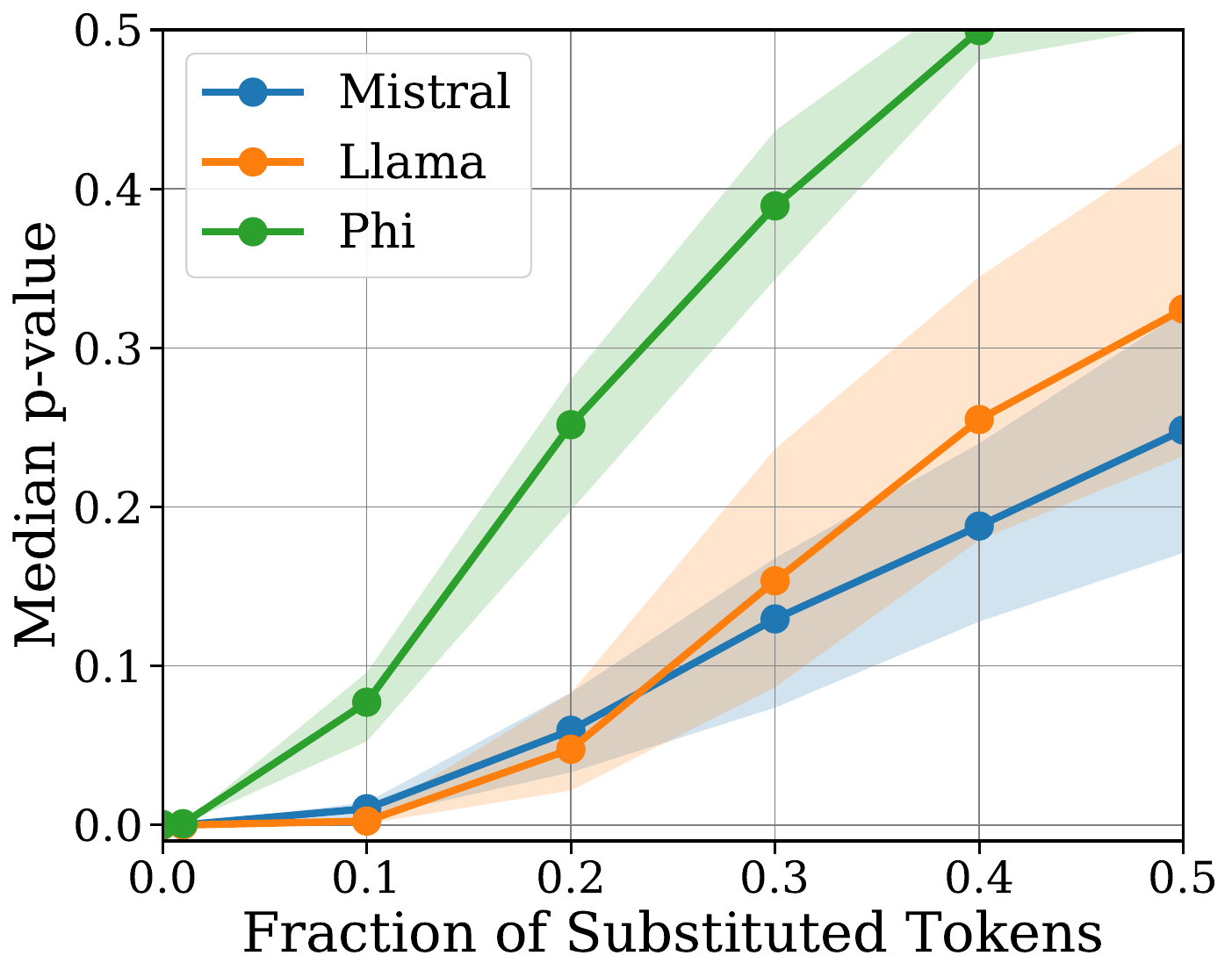}
        \label{sfig:median-substitute-start}
    }
    \caption{Effect of token-level corruptions on the median p-values of $\gaussmark$ for each of the three models.
    }
    \label{fig:robust-medians}
\end{figure}

\begin{figure}[ht]
    \centering
    \subfigure[Adding random tokens.]{
        \includegraphics[width=0.45\textwidth]{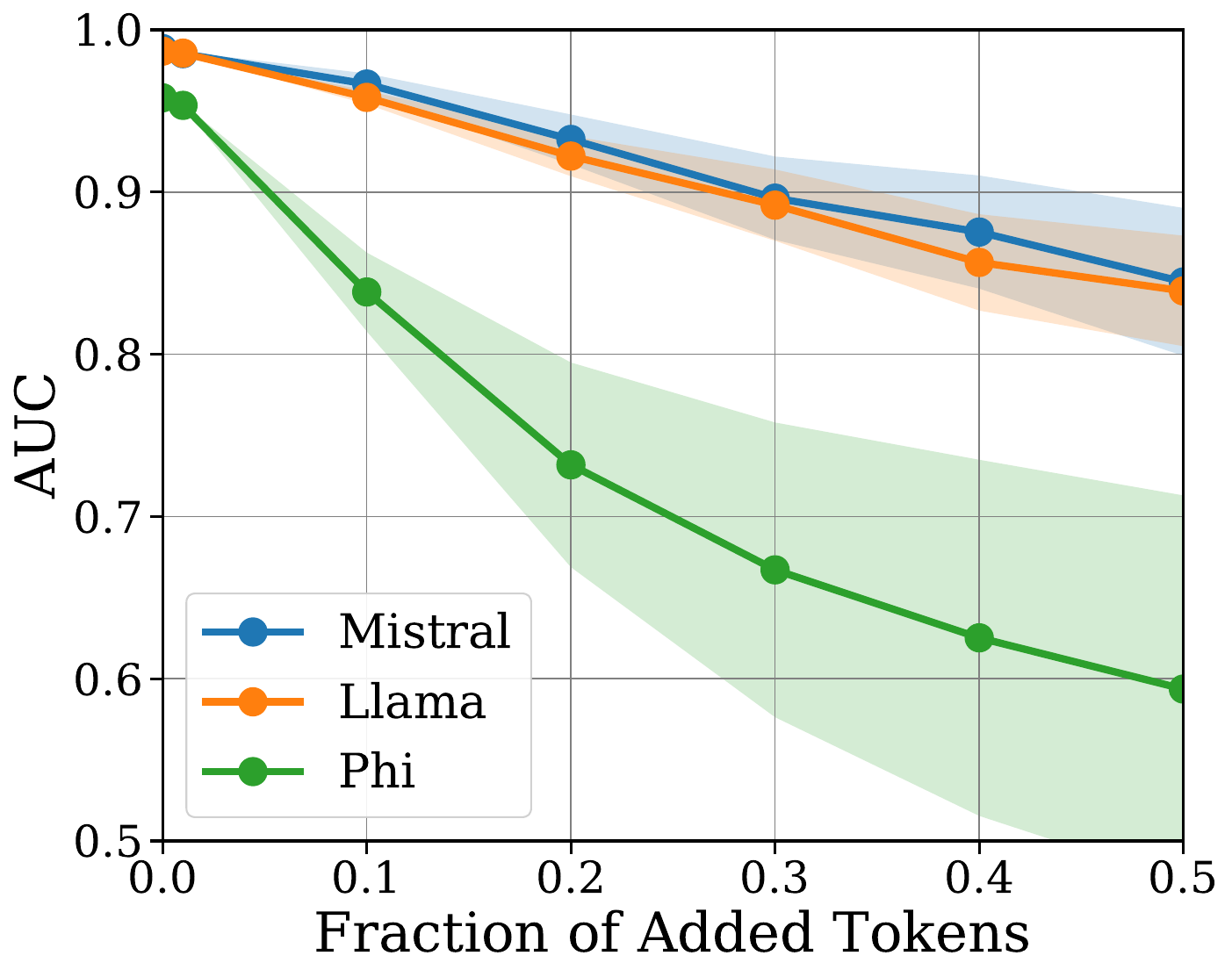}
        \label{sfig:auc-add-random} 
    }
    \hfill \subfigure[Adding tokens to prompt]{
        \includegraphics[width=0.45\textwidth]{figs/Fig-2-corrupt-add-start-weak.pdf}
        \label{sfig:auc-add-start}
    }
    \hfill \subfigure[Removing random tokens.]{
        \includegraphics[width=0.45\textwidth]{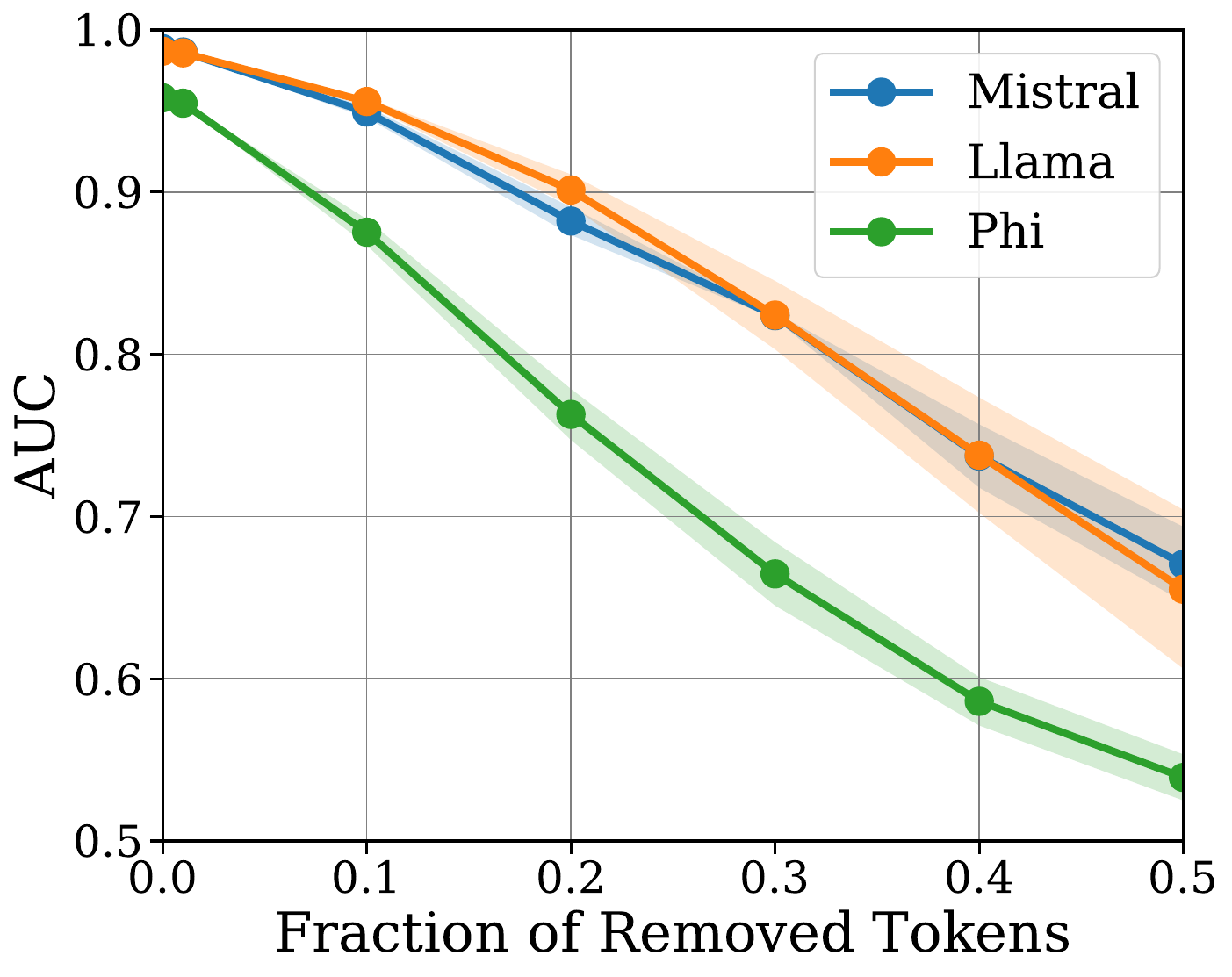}
        \label{sfig:auc-remove-random}
    }
    \hfill \subfigure[Removing tokens from prompt.]{
        \includegraphics[width=0.45\textwidth]{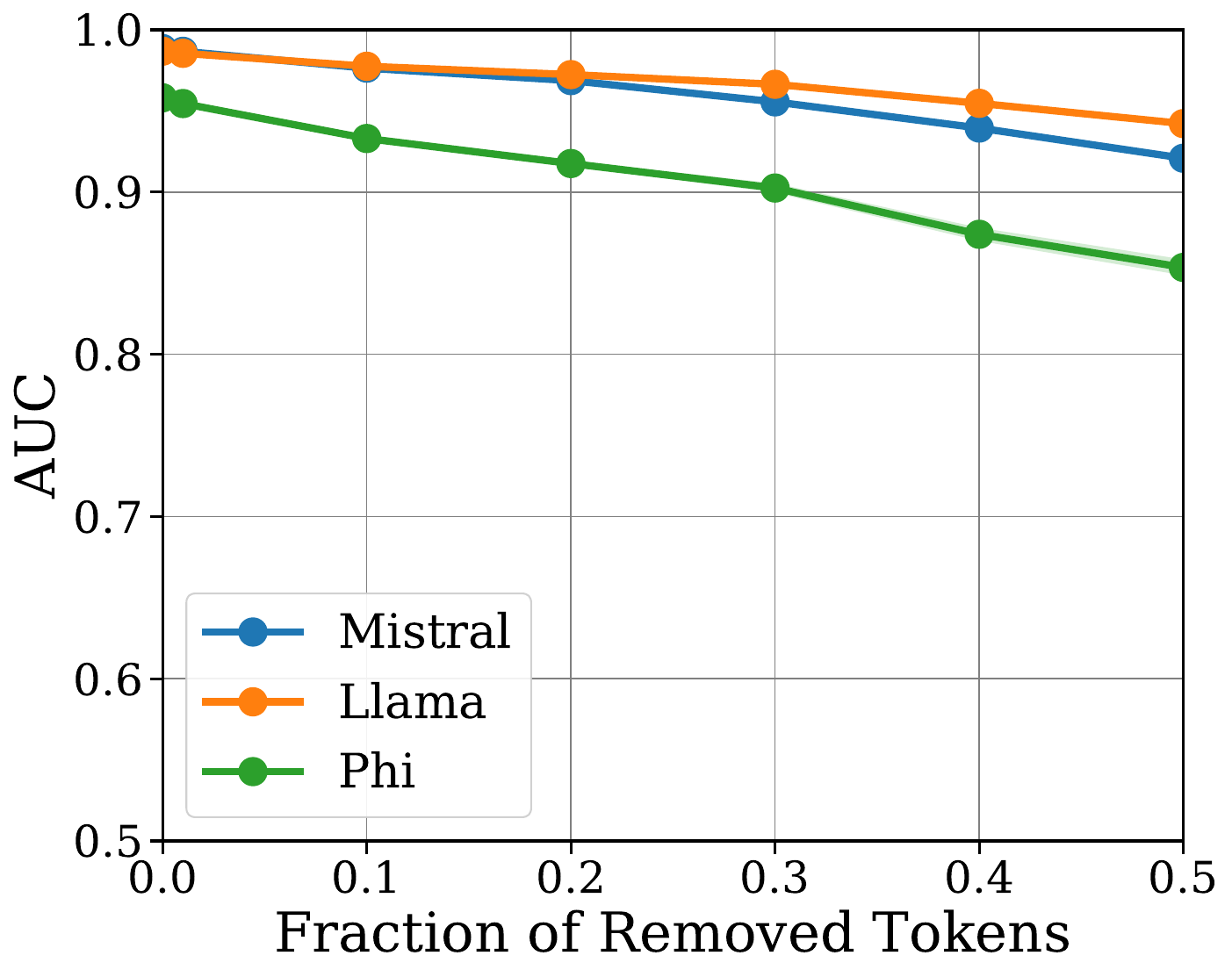}
        \label{sfig:auc-remove-start}
    }
    \hfill \subfigure[Substituting random tokens.]{
        \includegraphics[width=0.45\textwidth]{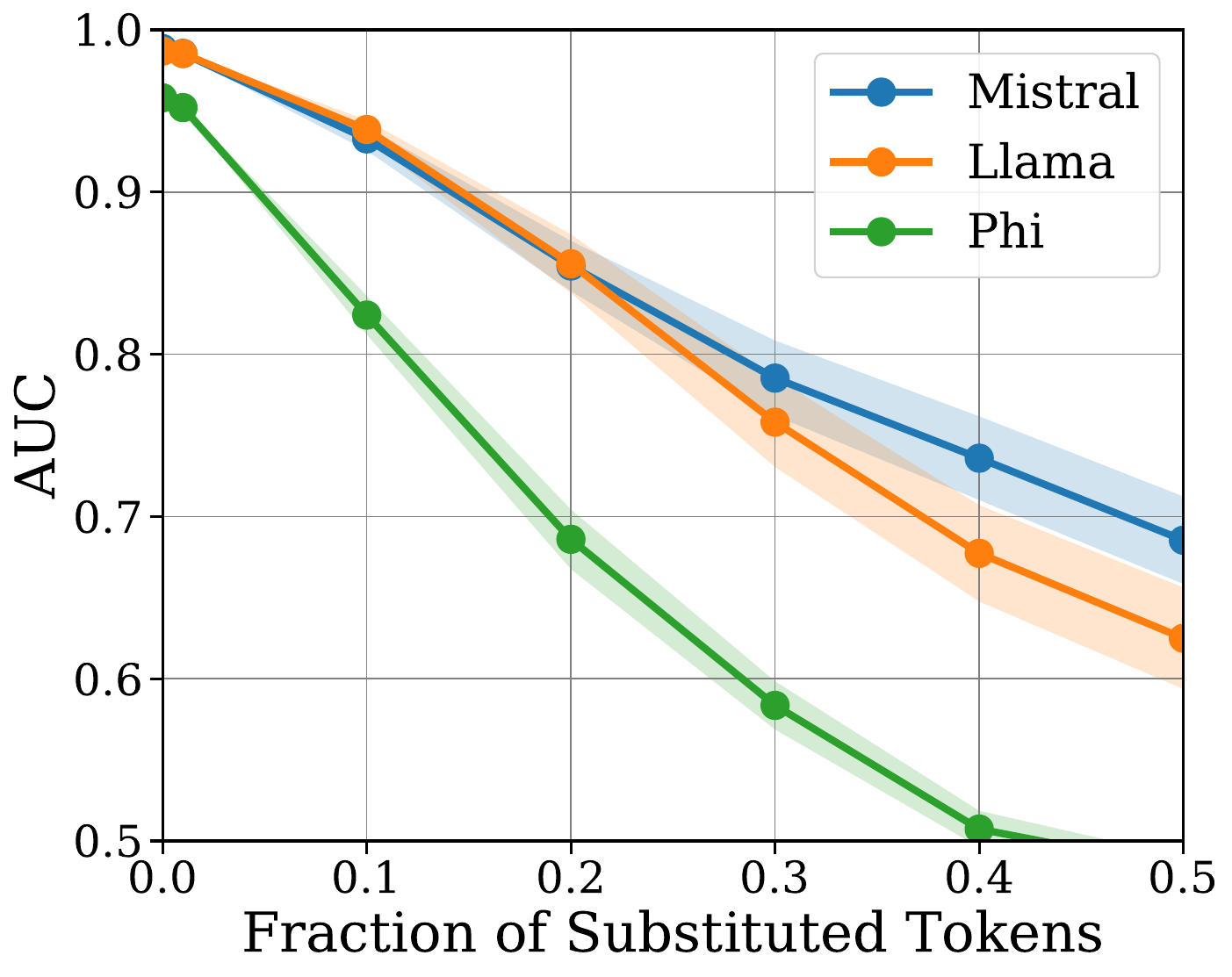}
        \label{sfig:auc-substitute-random}
    }
    \hfill \subfigure[Substituting tokens in prompt.]{
        \includegraphics[width=0.45\textwidth]{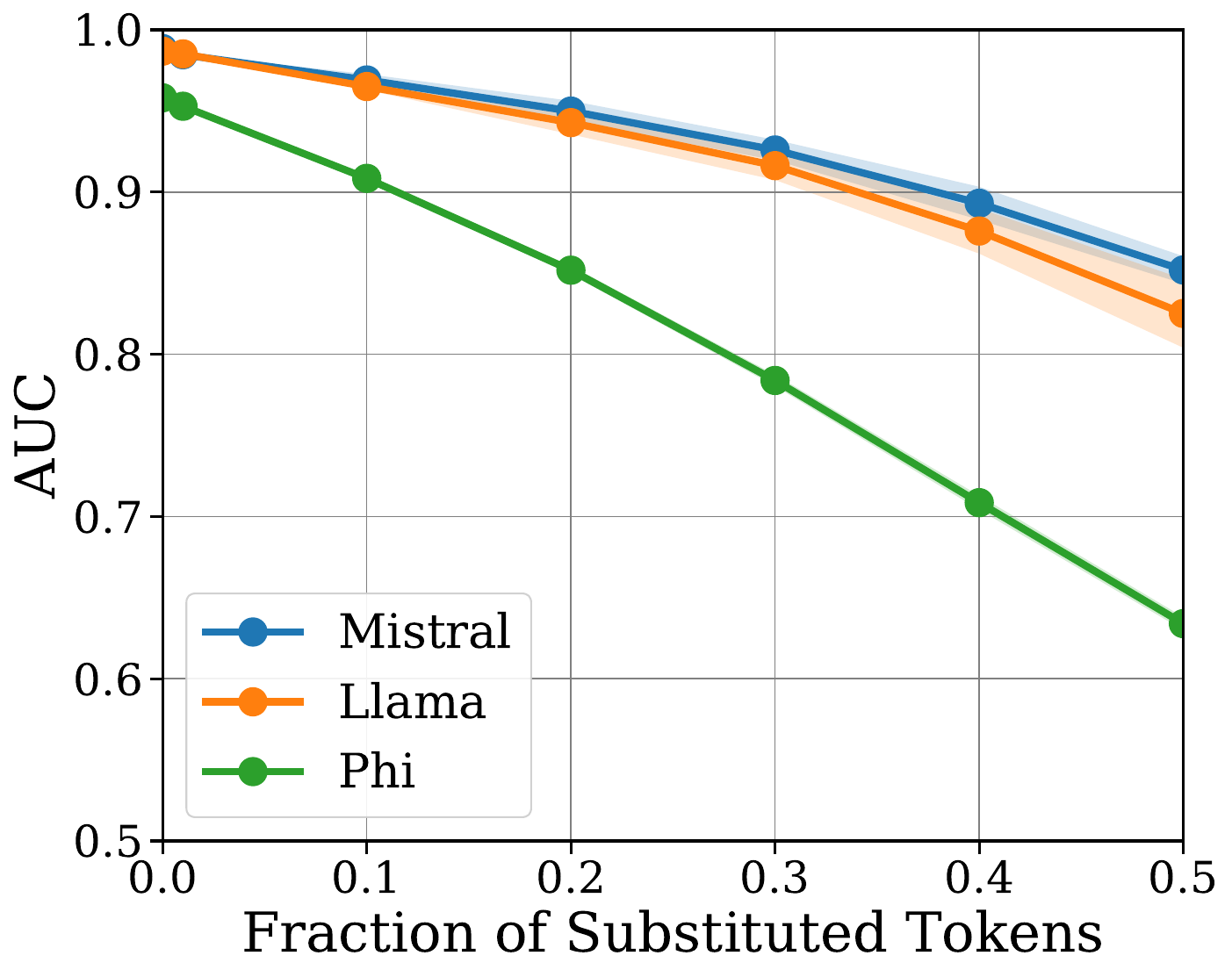}
        \label{sfig:auc-substitute-start}
    }
    \caption{Effect of token-level corruptions on the AUC of $\gaussmark$ for each of the three models. 
    }
    \label{fig:robust-aucs}
\end{figure}

\section{Further Empirical Results on Rank-Reduced $\gaussmark$}\label{app:rankreduced}

\begin{table}[t]
    \centering
    \resizebox{\textwidth}{!}{%

    \begin{tabular}{lcccc}
        \toprule
        \textbf{Model} & \textbf{Watermarked Layer} & \textbf{Watermarked Weight} & \textbf{Variance} ($\sigma^2$) & \textbf{\# PCs Dropped}  \\
        \midrule
        $\mistral$ & 30 & $\upproj$ & 1e-05 & 1024 \\
        $\llama$ & 29 & $\downproj$ & 1e-04 & 512 \\
        $\phimini$ & 31 & $\gateupproj$ & 3e-04 & 1024 \\
        \bottomrule
    \end{tabular}}
    \caption{Selected watermarking hyperparameters.}
    \label{tab:parameters_lowrank}
\end{table}

In this section, we include additional results related to the rank-reduced instantiation of $\gaussmark$  discussed in \Cref{ssec:lowrank}.  We begin by detailing the approach as well as the selected parameters, before discussing the detectability (\Cref{ssec:lowrank_detectability}), model quality (\Cref{ssec:lowrank_modelquality}), and robustness (\Cref{ssec:lowrank_robustness}) of $\gaussmark$ on the rank-reduced models.

As described in the main body, the rank-reduced instantiation of $\gaussmark$ is motivated by the empirical observation that the top Principal Components (PCs) of the MLP weights in transformers are often sufficient to obtain strong performance in modern LMs.  Thus, for a given weight matrix with rank $m$, we compute the projection onto the bottom $m - k$ PCs, where $k$ is a hyperparameter we select. Thus, we only add the perturbations \(\xi\) to the bottom \(m-k\) principle components. Increasing $k$ better preserves the model's performance, as the effect of the watermark perturbation is restricted to a less influential and lower dimensional space; at the same time, choosing $k$ too large leads to a significantly less detectable watermark.  We then compute the gradient by projecting the gradient of the weight matrices onto this lower dimensional space as well for use in $\gaussmarkd$.  The specific watermarking hyperparameters we employ are detailed in \Cref{tab:parameters_lowrank}. 

\subsection{Detectability}\label{ssec:lowrank_detectability}

While in \Cref{sfig:low_rank_detected} we displayed the TPR at an FPR of 0.05, we here present the same at an FPR of 0.01 (\Cref{sfig:lowrank-signif-01}) and the median p-values (\Cref{sfig:lowrank-medians}) as the number of tokens increases.  As in the case of the vanilla instantiation of $\gaussmark$ (cf. \Cref{app:detectability}), all detectability metrics broadly agree and demonstrate that the watermark remains detectable even with the rank-reduction.

Analogous to the exhaustive search over watermarking hyperparameters that we presented in \Cref{fig:llama-pvals,fig:mistral-pvals,fig:phi-pvals}, we present in \Cref{fig:lowrank-detectability} an abbreviated version of a similar search for the rank-reduced models, with a special focus on the effect that projecting away different ranks of PCs has on the median p-value.  As predicted by our theory, projecting away more PCs leads to a less detectable watermark, both due to a decrement in the gradient norm and a reduction in the "effective dimension" of $\theta$.  As in the vanilla instantiation, we see that increasing variance generally leads to a more detectable watermark, with the caveat that too large variances reduce the verisimilitude of the linear softmax approximation, and can thus lead $\gaussmark$ to be less detectable. 

\begin{figure}[ht] 
    \centering
    \subfigure[TPR at $p=0.01$ of Rank-reduced $\gaussmark$.]{
        \includegraphics[width=0.45\textwidth]{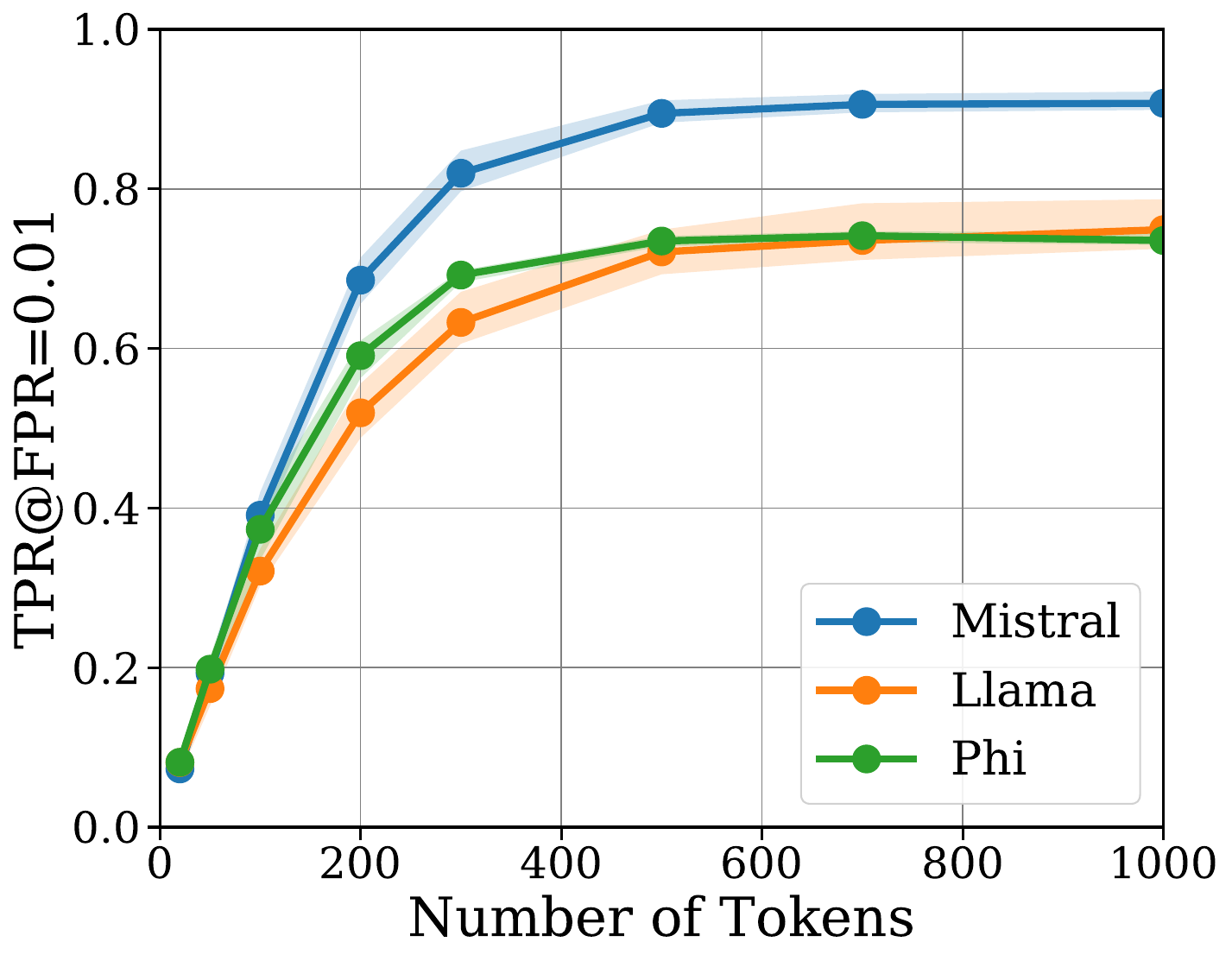}
        \label{sfig:lowrank-signif-01} 
    }
    \hfill \subfigure[Median $p$-values of Rank-reduced $\gaussmark$.]{
        \includegraphics[width=0.45\textwidth]{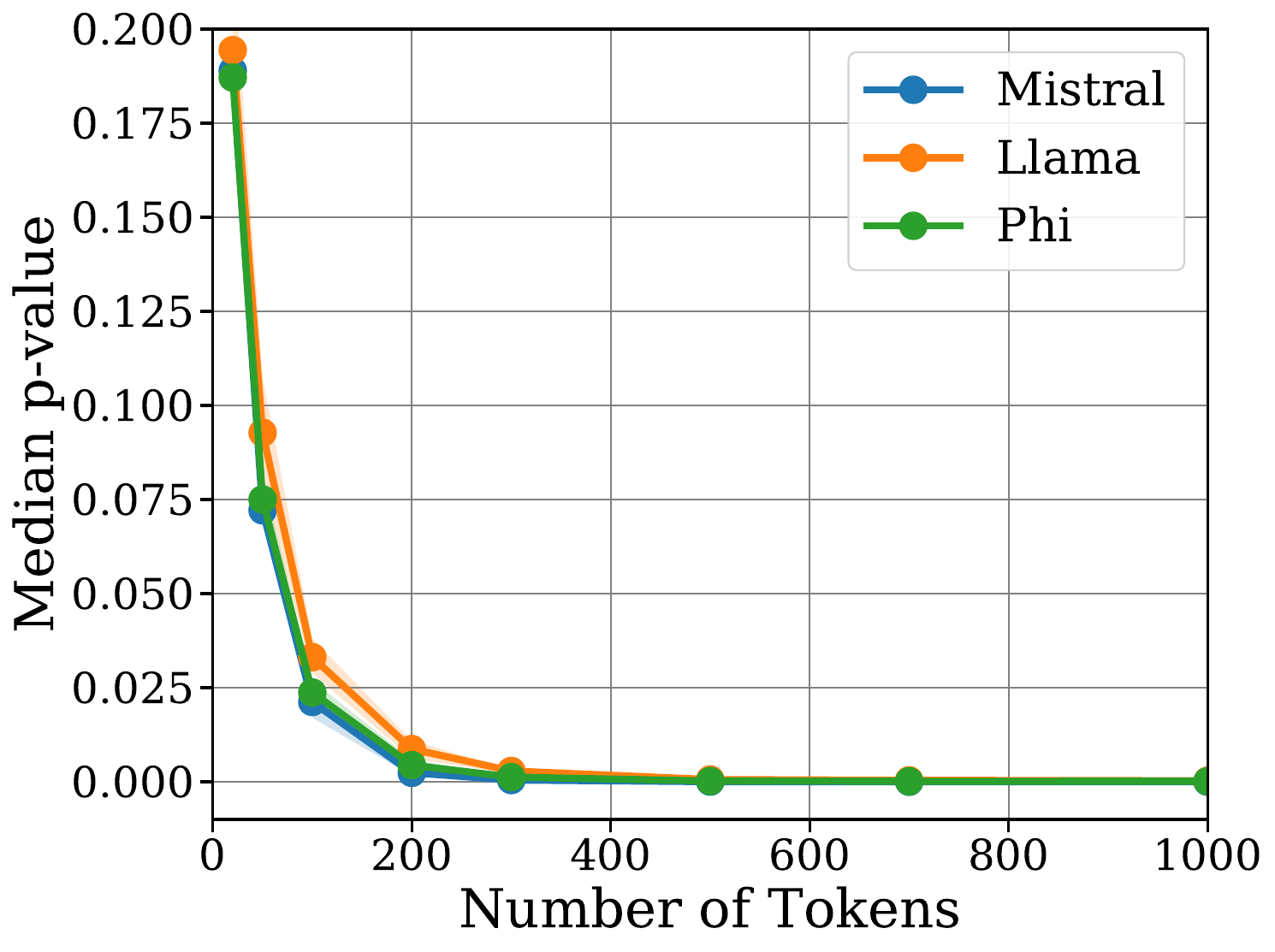}
        \label{sfig:lowrank-medians}
    }
    \caption{Detectability of rank-reduced $\gaussmark$ on all three models as measured by (a) TPR at FPR 0.01 and (b) median p-values as a function of the number of watermarked tokens.  The watermark is detectable for a substantial fraction of the text in all three models, with more tokens leading to increased detectability.
    }
    \label{fig:lowrank-detectability}
\end{figure}

\begin{figure}[ht]
    \centering
    \subfigure[Median $p$-values of $\llama$ (\texttt{Layer 28}, $\downproj$).]{
        \includegraphics[width=0.45\textwidth]{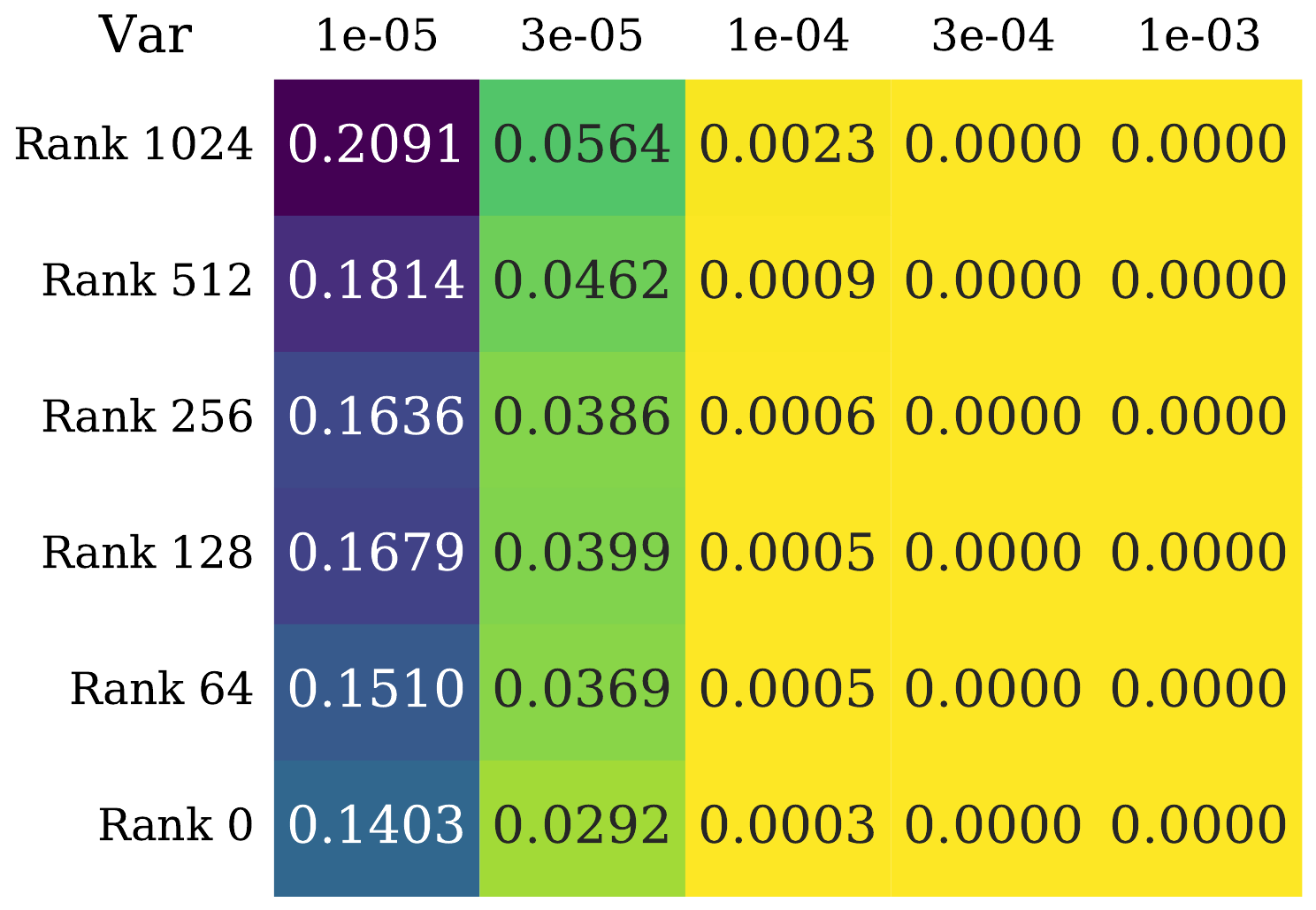}
        \label{sfig:llama_lowrank} 
    }
    \hfill \subfigure[Median $p$-values of $\mistral$ (\texttt{Layer 28}, $\gateproj$).]{
        \includegraphics[width=0.45\textwidth]{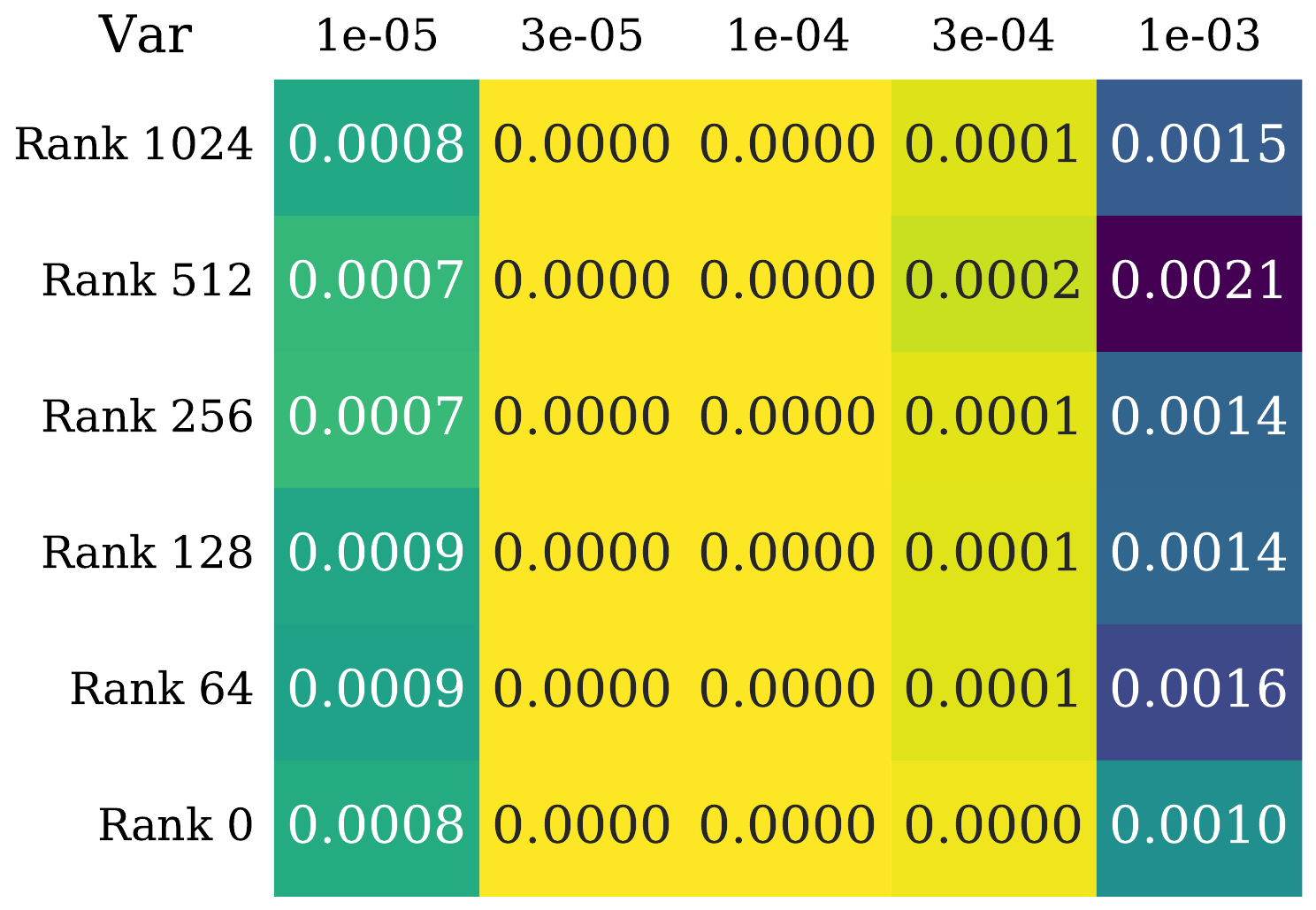}
        \label{sfig:mistral_lowrank}
    }
    \hfill \subfigure[Median $p$-values of $\phimini$ (\texttt{Layer 28}, $\downproj$).]{
        \includegraphics[width=0.45\textwidth]{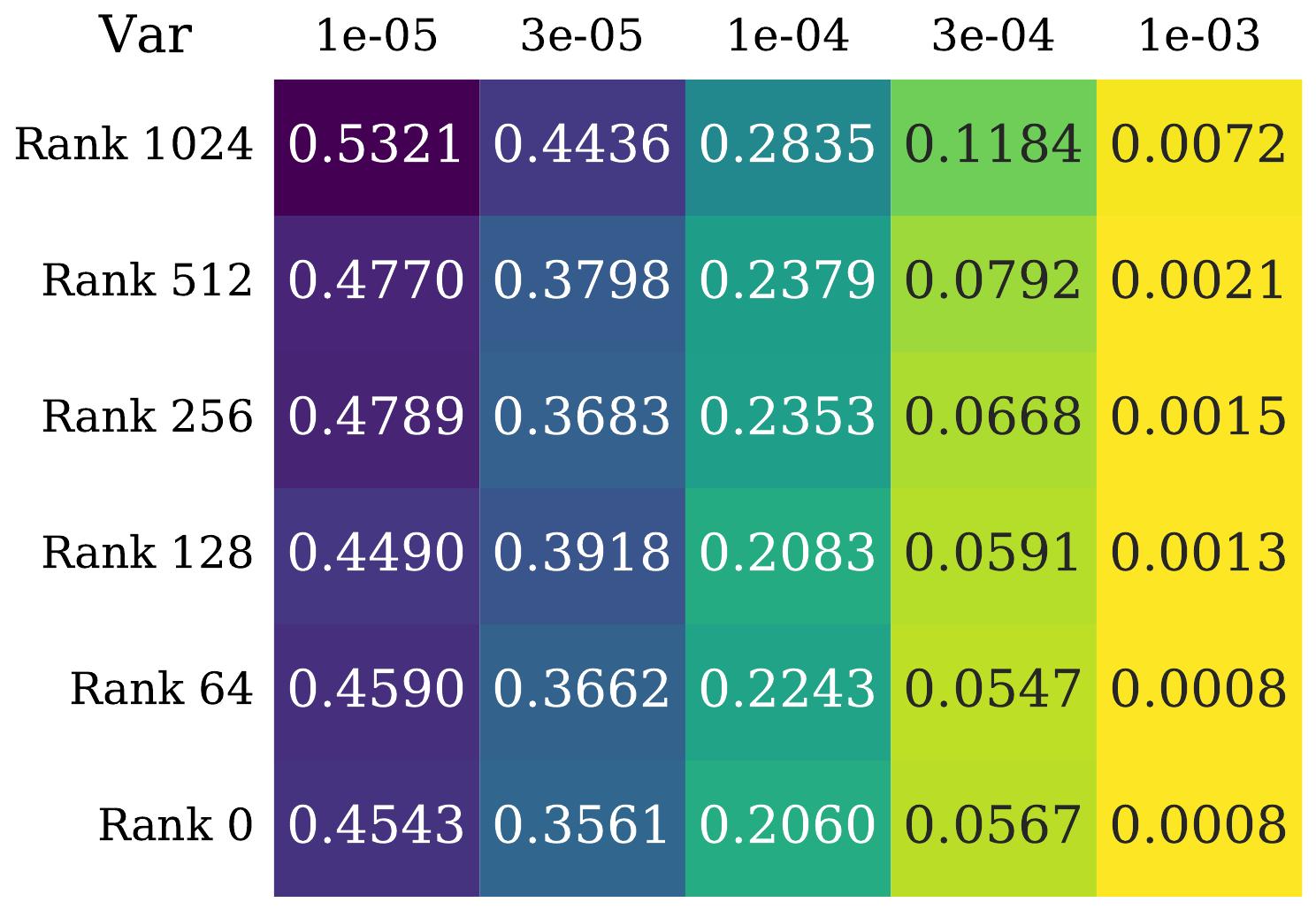}
    }
    \label{sfig:phi_lowrank}
    \caption{Effect of rank reduction on the median p-values of $\gaussmark$ for (a) $\llama$, (b) $\mistral$, and (c) $\phimini$.  As the rank of the quotient space increases, the detectability of the watermark decreases, as expected. 
    }
    \label{fig:lowrank-pvals}
\end{figure}

\subsection{Model Quality}\label{ssec:lowrank_modelquality}

We now report the effect that the rank-reduced instantiation of $\gaussmark$ has on model quality in each of our measurements thereof: $\superglue$, $\alpaca$, $\gsmk$, and token-level cross-entropy.  As we did in \Cref{app:modelquality}, we begin by demonstrating that our selected models, whose watermarking parameters are given in \Cref{tab:parameters_lowrank}, do not suffer in comparison to the unwatermarked versions.  The cross-entropy is reported in \Cref{fig:crossentropies}, whereas the other metrics are reported in \Cref{tab:performance_lowrank}, with detailed $\superglue$ results broken down by task in \Cref{tab:superglue_llama_lowrank,tab:superglue_mistral,tab:superglue_phi_lowrank}.  In all cases, we see that we can find watermarking hyperparameters that do not significantly affect the model's performance while retaining detectability.

In addition to our model quality checks for the selected models, we report in \Cref{fig:lowrank-gsmks} the effects that rank reduction and variance have on $\gsmk$ performance for fixed weights and layers for each model.  Along with \Cref{sfig:phi_lowrank_gsmk}, these results demonstrate that rank reduction is an effective technique in allowing the watermarked models to tolerate greater variance $\sigma^2$ while maintaining quality. 

\begin{table}[ht]
    \centering

\resizebox{\textwidth}{!}{%
    \begin{tabular}{lccc}
        \toprule
        \textbf{Model} & \textbf{$\superglue$(Avg)} & \textbf{GSM8K (Acc)} & \textbf{Alpaca Eval (win rate)} \\
        \midrule
        $\llama$ (LowRank) & 0.6992 $\pm$ 0.0337 & 0.6133 $\pm$ 0.0134 & 0.47 \\
        \rowcolor{lightgray} $\llama$ (Unwatermarked) & 0.7012 $\pm$ 0.0336 & 0.6300 $\pm$ 0.0133 & (0.46, 0.54) \\ 
        $\mistral$ (LowRank) & 0.6796 $\pm$ 0.0336 & 0.4253 $\pm$ 0.0136 & 0.50 \\
        \rowcolor{lightgray} $\mistral$ (Unwatermarked) & 0.6836 $\pm$ 0.0335 & 0.4291 $\pm$ 0.0136 & (0.49, 0.51) \\
        $\phimini$ (LowRank) & 0.6384 $\pm$ 0.0258  &  0.8537 $\pm$ 0.0097 &  0.47 \\ 
        \rowcolor{lightgray} $\phimini$ (Unwatermarked) & 0.6451 $\pm$ 0.0254 & 0.8423 $\pm$ 0.0100 & (0.46, 0.54) \\
        \bottomrule
    \end{tabular}
    }
    \caption{Performance of $\gaussmark$ on various models. }
    \label{tab:performance_lowrank}
\end{table}

\begin{table}[ht]
    \centering    
    \begin{tabular}{lcc}
        \toprule \textbf{Task} & \textbf{Llama (LowRank)} & \textbf{Llama (Unwatermarked)} \\ 
\textbf{BoolQ} & 0.8205 $\pm$ 0.0067 & 0.8217 $\pm$ 0.0067 \\ 
\rowcolor{lightgray}\textbf{CB} & 0.5893 $\pm$ 0.0663 & 0.6250 $\pm$ 0.0653 \\ 
\textbf{COPA} & 0.8700 $\pm$ 0.0338 & 0.8700 $\pm$ 0.0338 \\ 
\rowcolor{lightgray}\textbf{MultiRC} & 0.5720 $\pm$ 0.0071 & 0.5720 $\pm$ 0.0071 \\ 
\textbf{ReCoRD} & 0.9221 $\pm$ 0.0026 & 0.9222 $\pm$ 0.0026 \\ 
\rowcolor{lightgray}\textbf{RTE} & 0.7076 $\pm$ 0.0274 & 0.7040 $\pm$ 0.0275 \\ 
\textbf{WiC} & 0.5063 $\pm$ 0.0198 & 0.5157 $\pm$ 0.0198 \\ 
\rowcolor{lightgray}\textbf{Winograd} & 0.6058 $\pm$ 0.0482 & 0.5865 $\pm$ 0.0485 \\ 
\bottomrule
    \end{tabular}
    \caption{Performance of $\gaussmark$ on $\superglue$($\llama$ LowRank). }
    \label{tab:superglue_llama_lowrank}
\end{table}

\begin{table}[ht]
    \centering    
    \begin{tabular}{lcc}
        \toprule \textbf{Task} & \textbf{Mistral (LowRank)} & \textbf{Mistral (Unwatermarked)} \\ 
\textbf{BoolQ} & 0.8199 $\pm$ 0.0067 & 0.8205 $\pm$ 0.0067 \\ 
\rowcolor{lightgray}\textbf{CB} & 0.4821 $\pm$ 0.0674 & 0.5357 $\pm$ 0.0672 \\ 
\textbf{COPA} & 0.9200 $\pm$ 0.0273 & 0.9200 $\pm$ 0.0273 \\ 
\rowcolor{lightgray}\textbf{MultiRC} & 0.5693 $\pm$ 0.0071 & 0.5672 $\pm$ 0.0071 \\ 
\textbf{ReCoRD} & 0.9198 $\pm$ 0.0027 & 0.9219 $\pm$ 0.0026 \\ 
\rowcolor{lightgray}\textbf{RTE} & 0.6570 $\pm$ 0.0286 & 0.6823 $\pm$ 0.0280 \\ 
\textbf{WiC} & 0.5690 $\pm$ 0.0196 & 0.5674 $\pm$ 0.0196 \\ 
\rowcolor{lightgray}\textbf{Winograd} & 0.5000 $\pm$ 0.0493 & 0.4615 $\pm$ 0.0491 \\ 
\bottomrule
    \end{tabular}
    \caption{Performance of $\gaussmark$ on $\superglue$($\mistral$ LowRank). }
    \label{tab:superglue_mistral_lowrank}
\end{table}

\begin{table}[ht]
    \centering    
    \begin{tabular}{lcc}
        \toprule \textbf{Task} & \textbf{Phi (LowRank)} & \textbf{Phi (Unwatermarked)} \\ 
        \textbf{BoolQ} & 0.8502 $\pm$ 0.0062 & 0.8532 $\pm$ 0.0062 \\ 
        \rowcolor{lightgray}\textbf{CB} & 0.0893 $\pm$ 0.0385 & 0.0893 $\pm$ 0.0385 \\ 
        \textbf{COPA} & 0.8700 $\pm$ 0.0338 & 0.8700 $\pm$ 0.0338 \\ 
        \rowcolor{lightgray}\textbf{MultiRC} & 0.3195 $\pm$ 0.0067 & 0.3296 $\pm$ 0.0068 \\ 
        \textbf{ReCoRD} & 0.8705 $\pm$ 0.0033 & 0.8725 $\pm$ 0.0033 \\ 
        \rowcolor{lightgray}\textbf{RTE} & 0.7545 $\pm$ 0.0259 & 0.7509 $\pm$ 0.0260 \\ 
        \textbf{WiC} & 0.5549 $\pm$ 0.0197 & 0.5752 $\pm$ 0.0196 \\ 
        \rowcolor{lightgray}\textbf{Winograd} & 0.7981 $\pm$ 0.0396 & 0.8269 $\pm$ 0.0373 \\ 
        \bottomrule
    \end{tabular}
    \caption{Performance of $\gaussmark$ on $\superglue$($\phimini$ LowRank). }
    \label{tab:superglue_phi_lowrank}
\end{table}

\begin{figure}[ht]
    \centering
    \subfigure[$\gsmk$ performance of $\llama$ (\texttt{Layer 31}, $\downproj$).]{
        \includegraphics[width=0.45\textwidth]{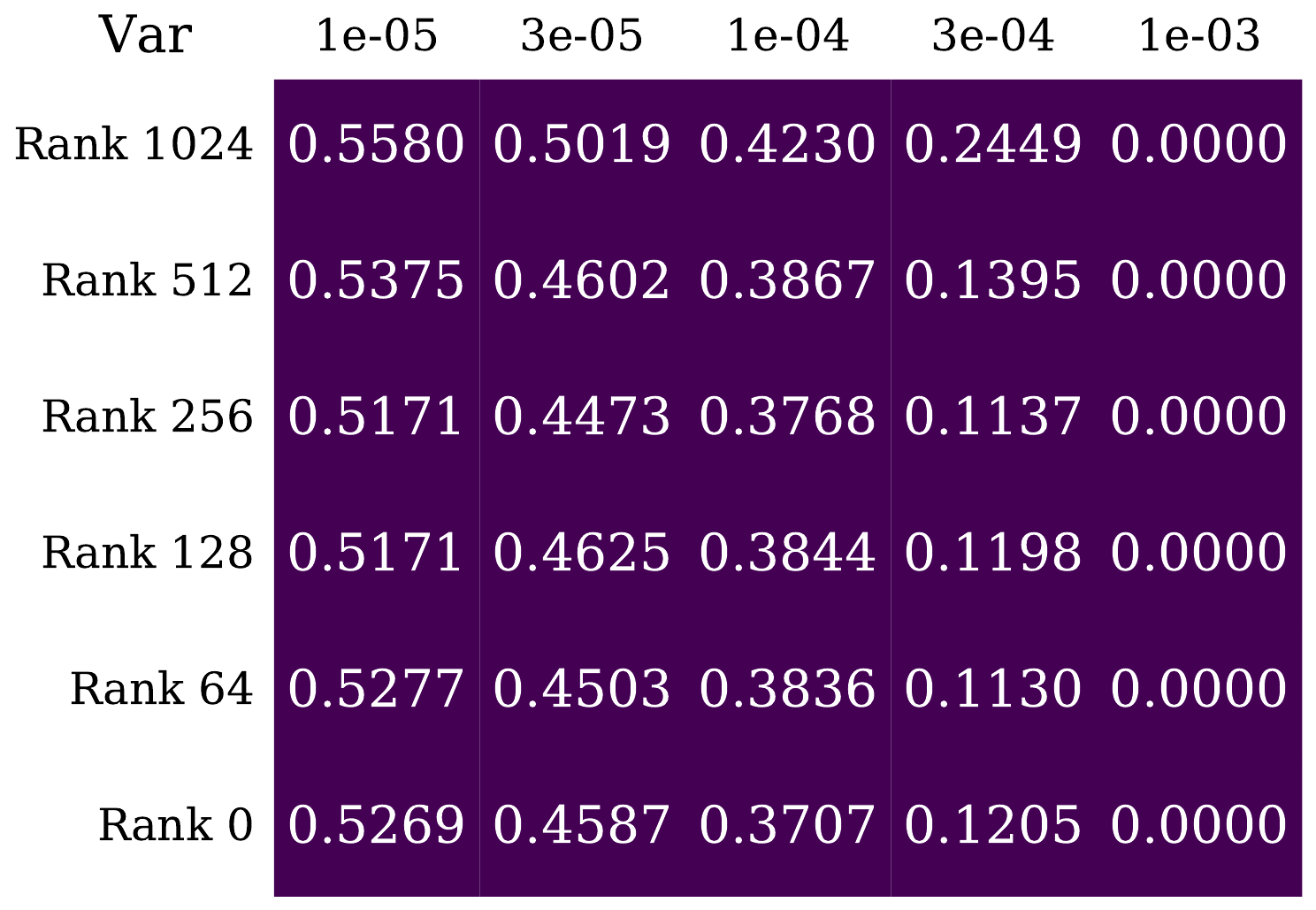}
        \label{sfig:llama_lowrank_gsmk} 
    }
    \hfill \subfigure[$\gsmk$ performance of $\mistral$ (\texttt{Layer 31}, $\downproj$).]{
        \includegraphics[width=0.45\textwidth]{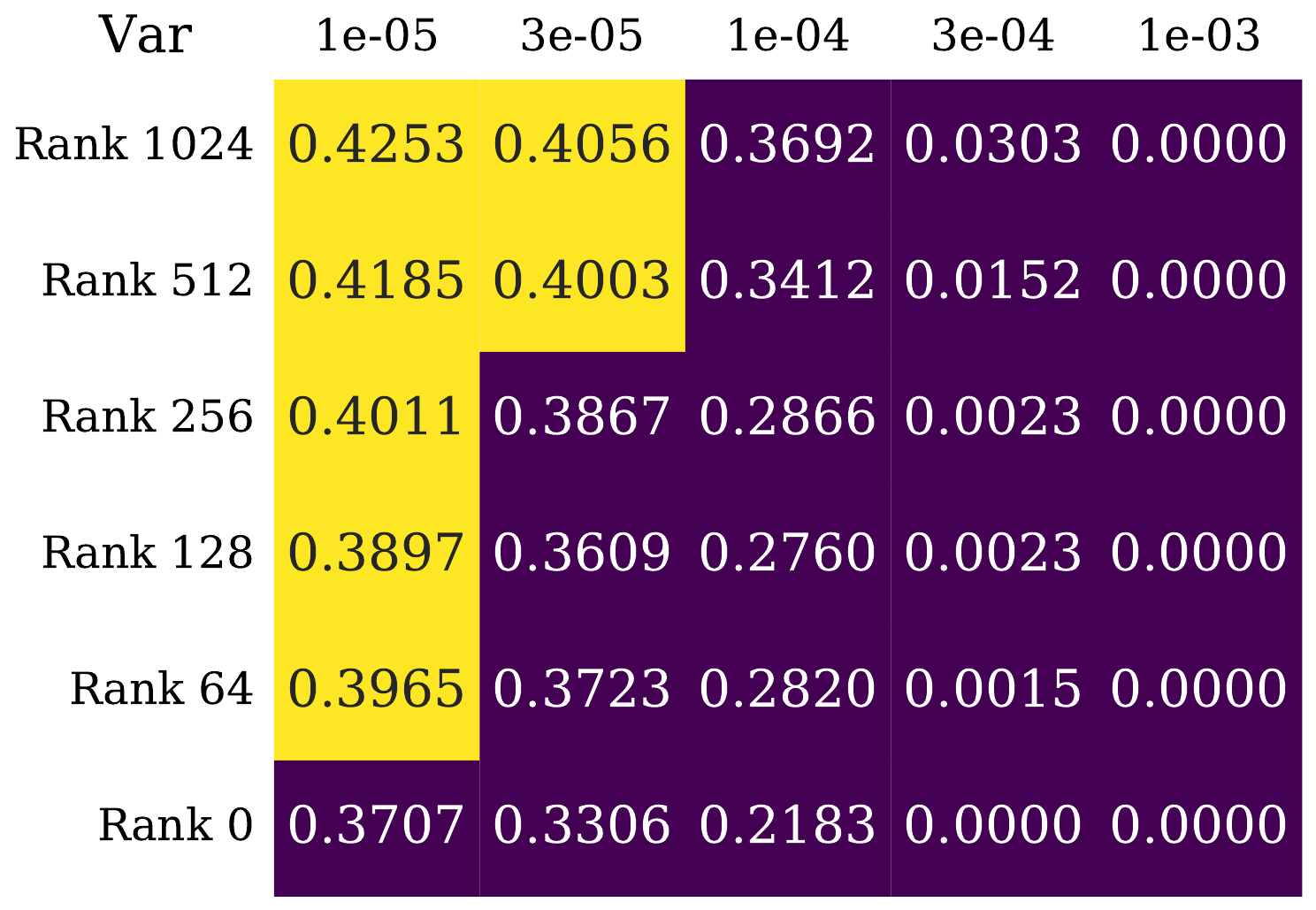}
        \label{sfig:mistral_lowrank_gsmk}
    }
    \hfill \subfigure[$\gsmk$ performance of $\phimini$ (\texttt{Layer 31}, $\downproj$).]{
        \includegraphics[width=0.45\textwidth]{figs/low_rank_figs/pvalues_Phi_gsm8k_cot_self_consistency_down_proj_layer31.pdf}
    }
    \label{sfig:phi_lowrank_gsmk}
    \caption{Effect of rank reduction in $\gaussmark$ on model performance on $\gsmk$ on (a) $\llama$, (b) $\mistral$, and (c) $\phimini$.  As the rank of the quotient space increases, the model quality increases as well, reflecting the decreased influence of the perturbation on the model outputs.
    }
    \label{fig:lowrank-gsmks}
\end{figure}

\subsection{Robustness}\label{ssec:lowrank_robustness}

Finally, we present the effect that our three token-level corruptions and one semantic level corruption have on the rank-reduced instantiation of $\gaussmark$.  The latter of these is displayed in the ROC curves in \Cref{fig:lowrank-robustness-roundtrip}, whereas the former are displayed as measured by the TPR at FPR $p=0.05$ (\Cref{fig:lowrank-robustness-05}), median p-values (\Cref{fig:lowrank-robustness-medians}), and as AUC under the ROC (\Cref{fig:lowrank-robustness-auc}). 

Finally, we present the impact of our three token-level corruptions and one semantic-level corruption on the rank-reduced instantiation of $\gaussmark$. The effect of semantic-level corruption is illustrated in the ROC curves shown in \Cref{fig:lowrank-robustness-roundtrip}, while the token-level corruptions are evaluated using the TPR at FPR $p=0.05$ (\Cref{fig:lowrank-robustness-05}), median p-values (\Cref{fig:lowrank-robustness-medians}), and AUC under the ROC (\Cref{fig:lowrank-robustness-auc}).  In all cases, we continue to see some robustness to both roundtrip translation and token-level corruptions, although if we compare these results to our earlier results for the vanilla instantiation of $\gaussmark$, we see that the vanilla instantiation is generally more robust.

\begin{figure}[h]
    \centering
    \subfigure[Effect of roundtrip translation through French on ROC curves of rank-reduced $\gaussmark$.]{
        \includegraphics[width=0.45\textwidth]{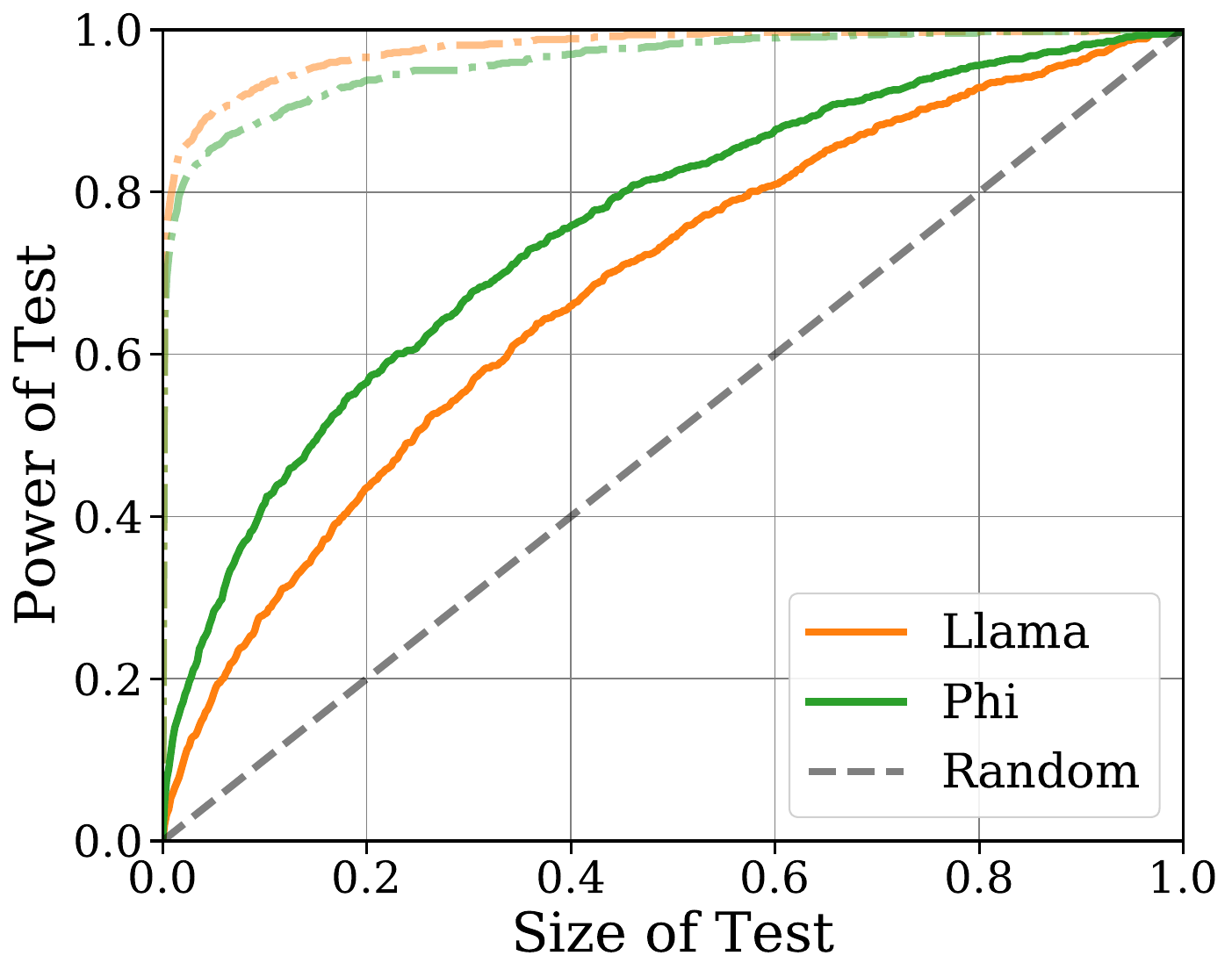}
    }
    \caption{Effect of roundtrip translation on rank-reduced instantiation of $\gaussmark$.
    }
    \label{fig:lowrank-robustness-roundtrip}
\end{figure}

\begin{figure}[ht]
    \centering
    \subfigure[Adding random tokens.]{
        \includegraphics[width=0.45\textwidth]{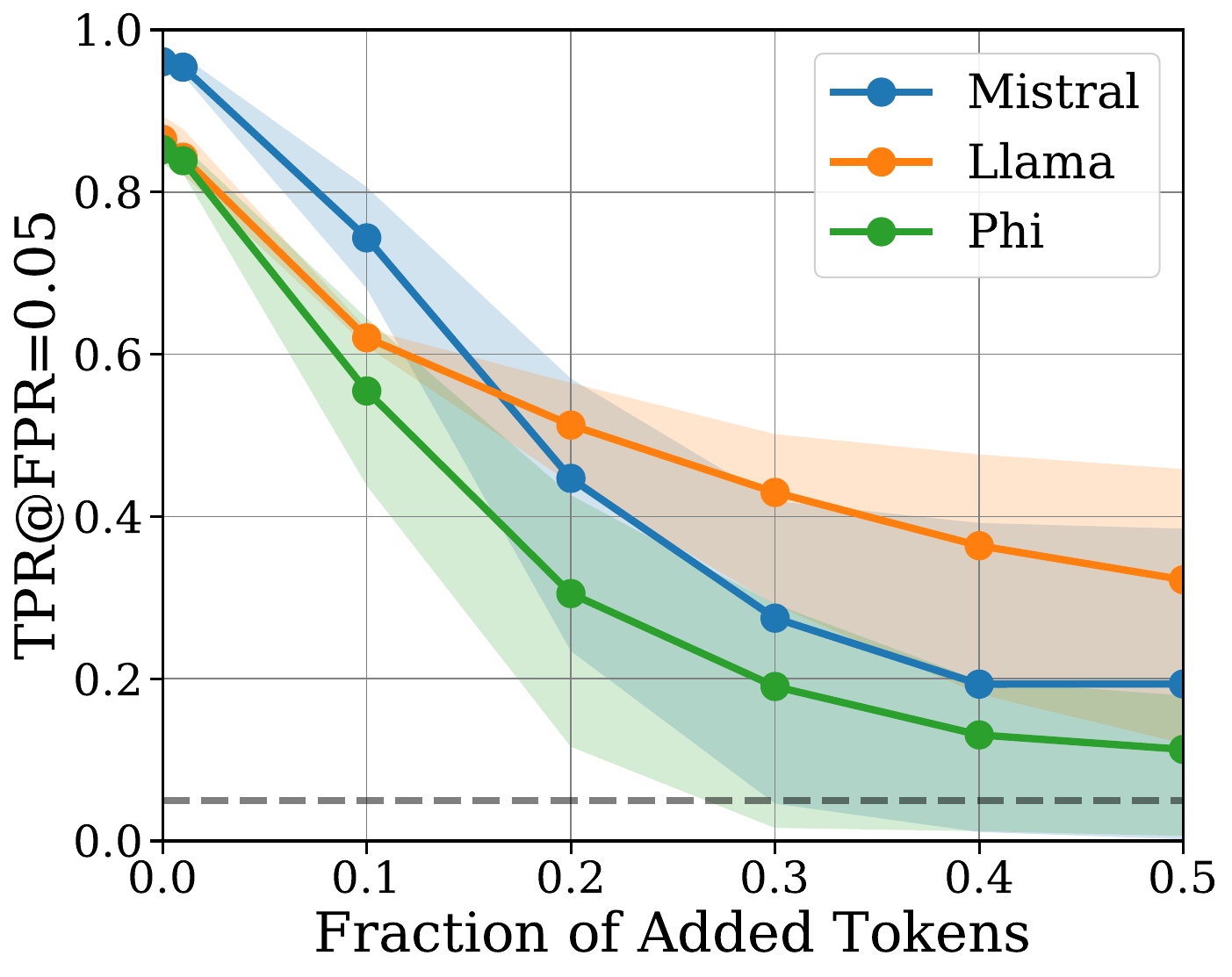}
        \label{sfig:lowrank_add_random_05} 
    }
    \hfill 
    \subfigure[Adding tokens to prompt]{
        \includegraphics[width=0.45\textwidth]{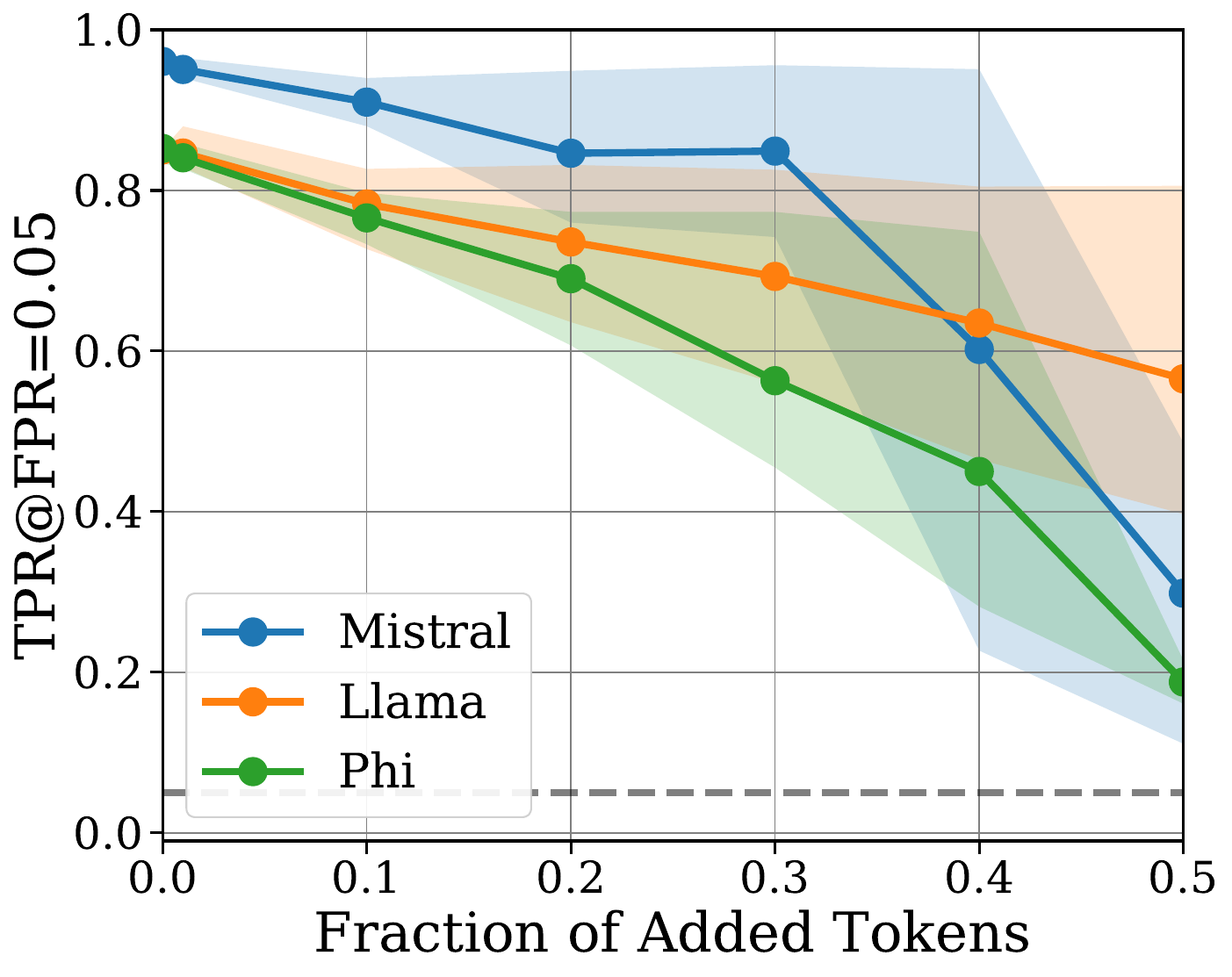}
        \label{sfig:lowrank_add_start_05} 
    }
    \hfill 
    \subfigure[Removing random tokens.]{
        \includegraphics[width=0.45\textwidth]{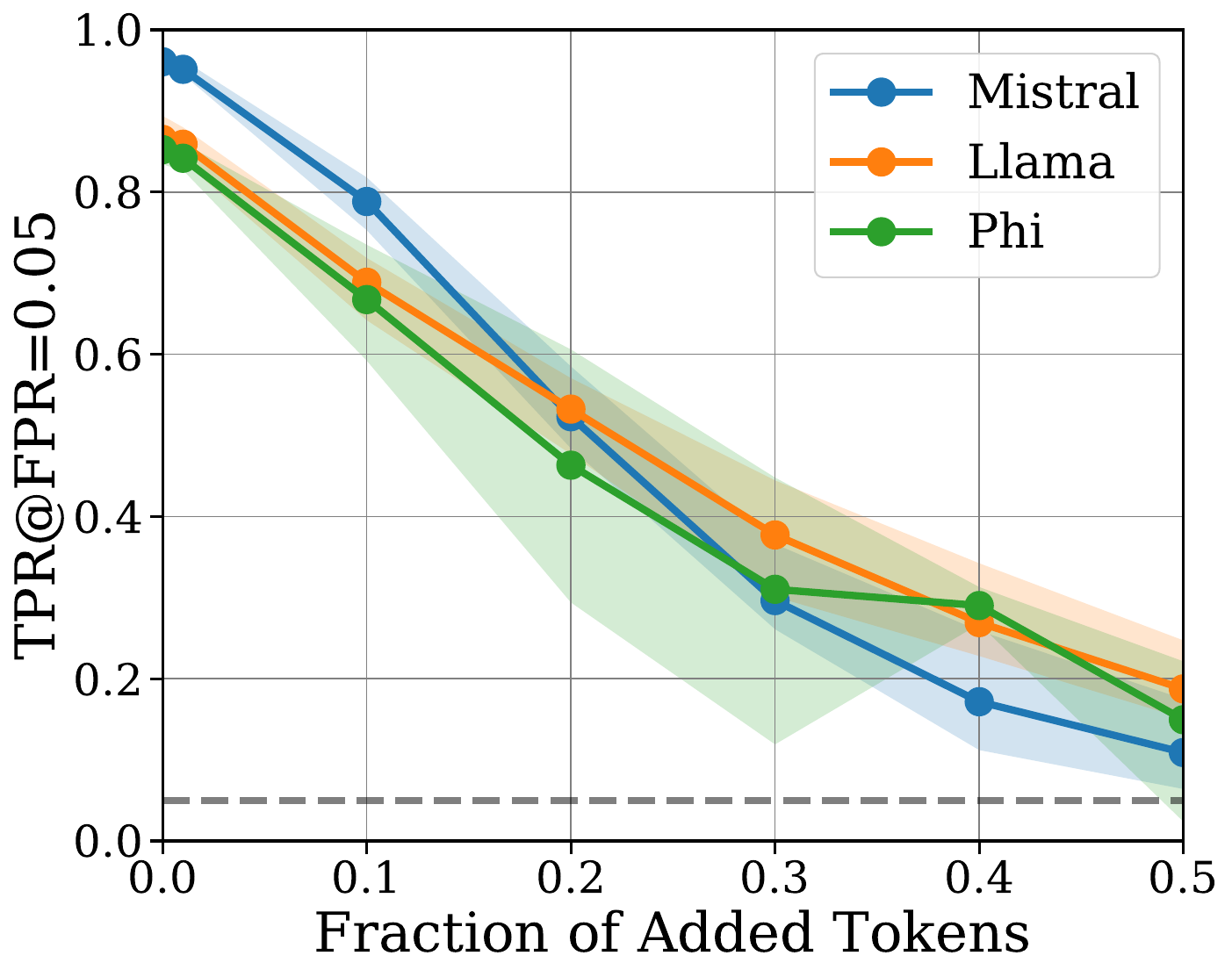}
        \label{sfig:lowrank_remove_random_05} 
    }
    \hfill 
    \subfigure[Removing tokens from prompt.]{
        \includegraphics[width=0.45\textwidth]{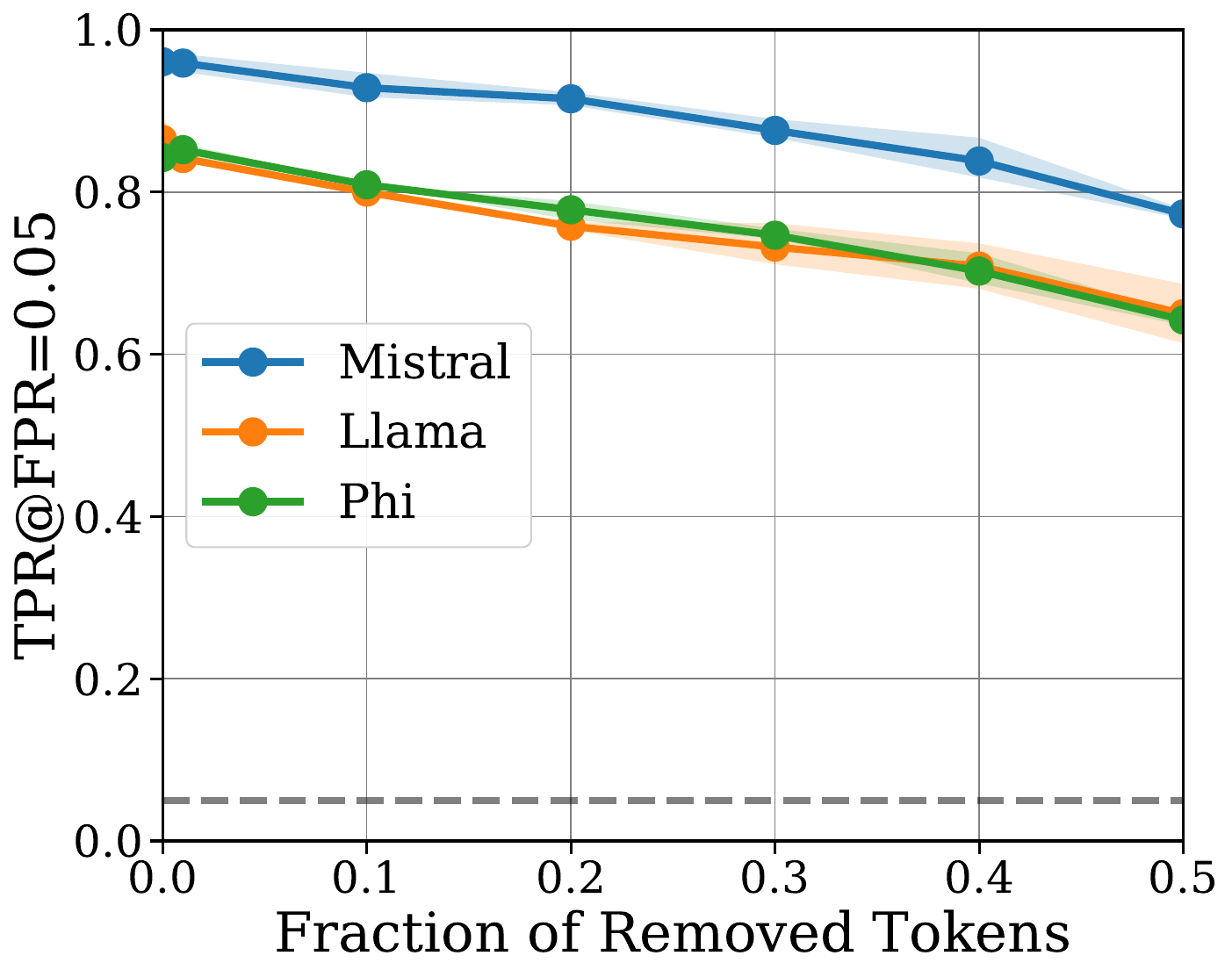}
        \label{sfig:lowrank_remove_start_05} 
    }
    \hfill 
    \subfigure[Susbtituting random tokens.]{
        \includegraphics[width=0.45\textwidth]{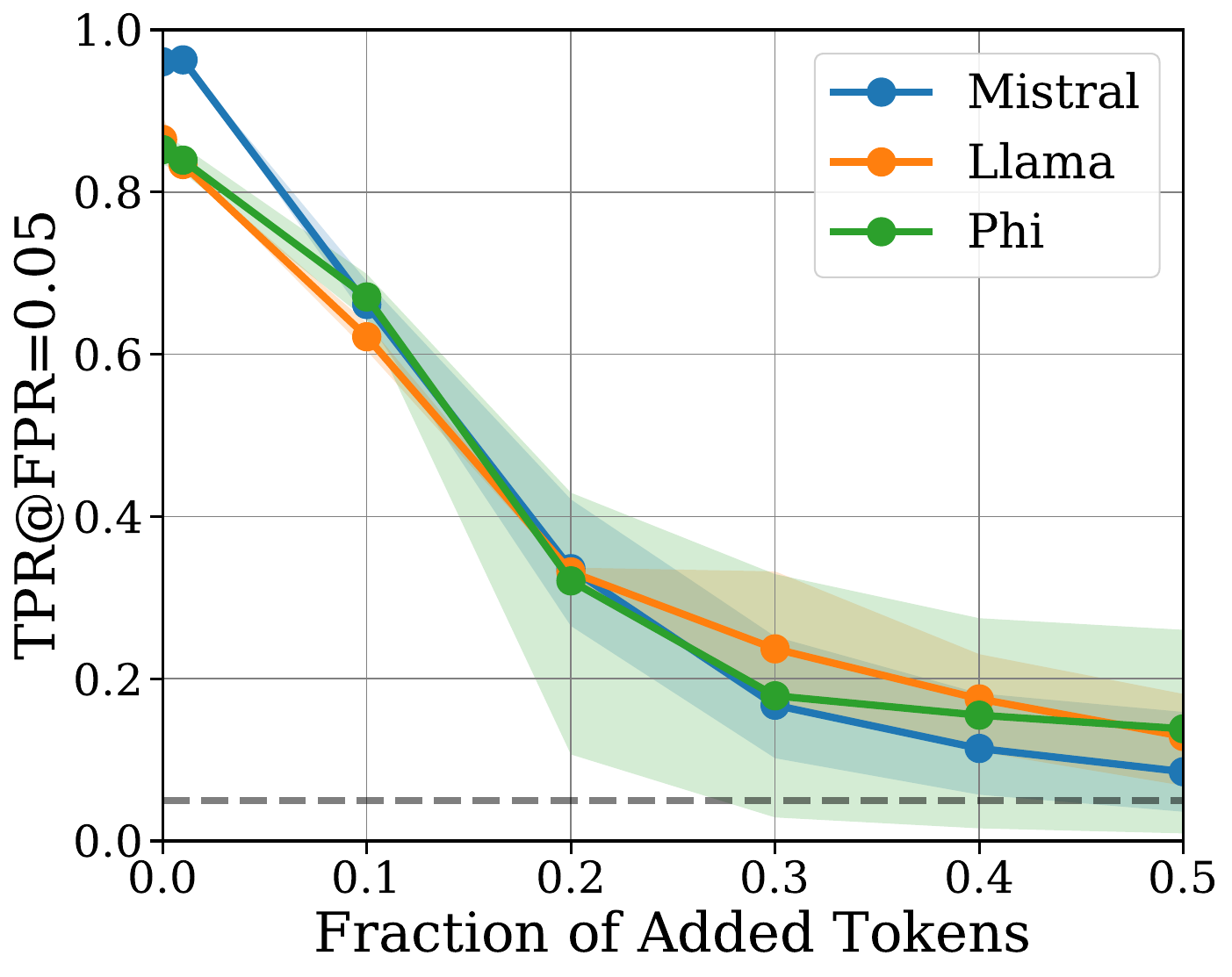}
        \label{sfig:lowrank_substitute_random_05} 
    }
    \hfill 
    \subfigure[Substituting tokens in prompt.]{
        \includegraphics[width=0.45\textwidth]{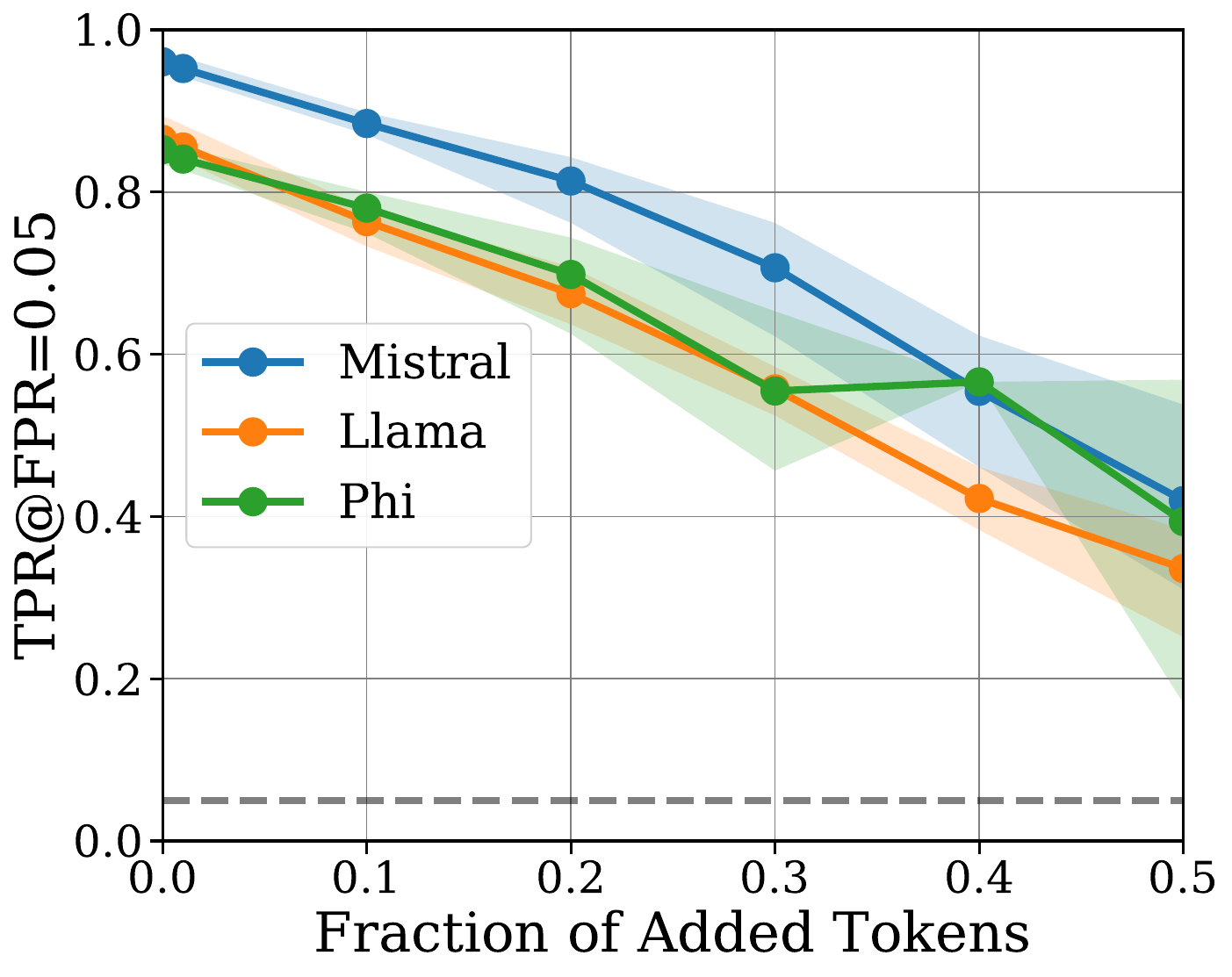}
        \label{sfig:lowrank_substitute_start_05} 
    }
    \caption{Effect of token-level corruptions on Rank-reduced $\gaussmark$ as measured by TPR at FPR $p = 0.05$.
    }
    \label{fig:lowrank-robustness-05}
\end{figure}

\begin{figure}[ht]
    \centering
    \subfigure[Adding random tokens.]{
        \includegraphics[width=0.45\textwidth]{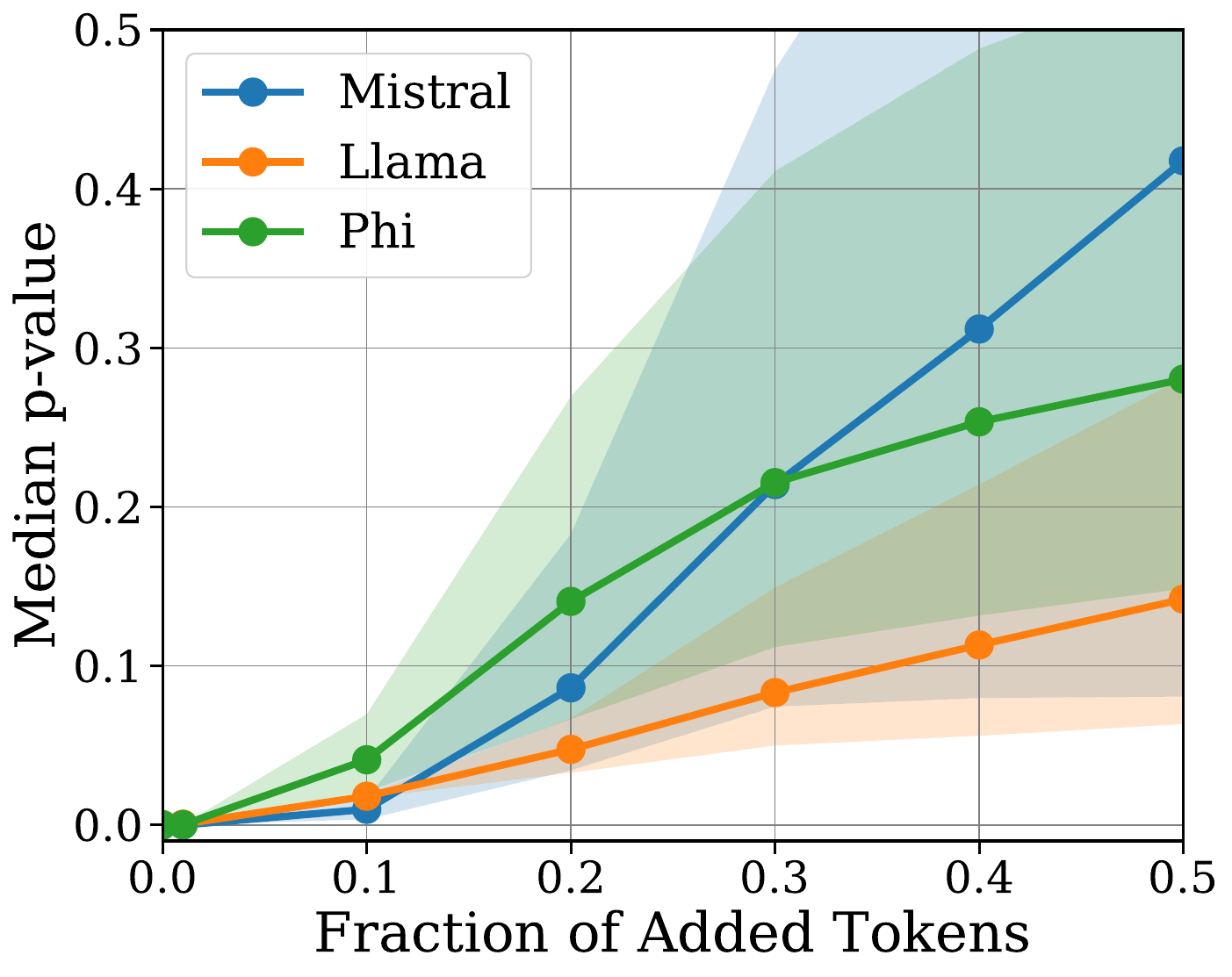}
        \label{sfig:lowrank_add_random} 
    }
    \hfill 
    \subfigure[Adding tokens to prompt.]{
        \includegraphics[width=0.45\textwidth]{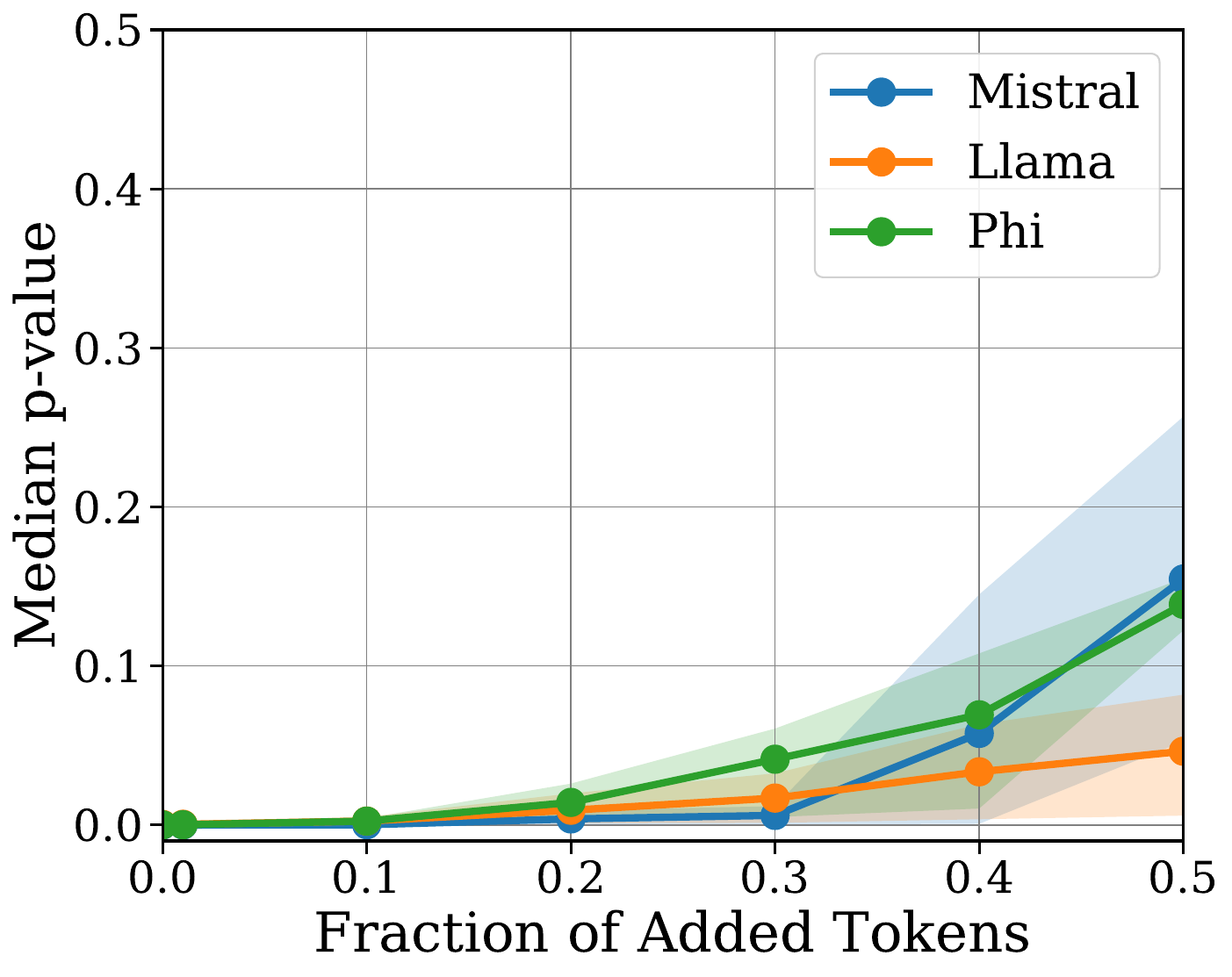}
        \label{sfig:lowrank_add_start} 
    }
    \hfill 
    \subfigure[Removing random tokens.]{
        \includegraphics[width=0.45\textwidth]{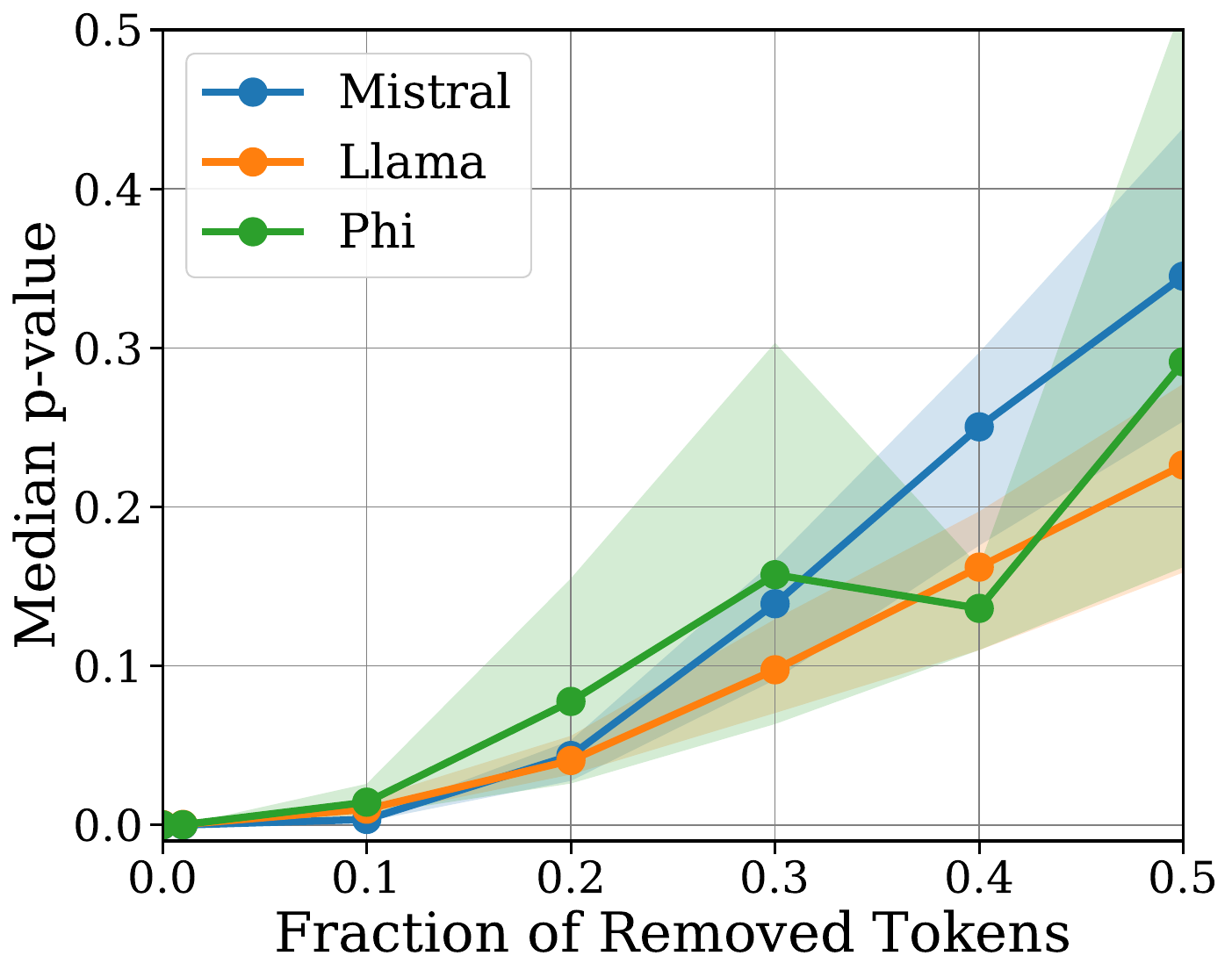}
        \label{sfig:lowrank_remove_random} 
    }
    \hfill 
    \subfigure[Removing tokens from prompt.]{
        \includegraphics[width=0.45\textwidth]{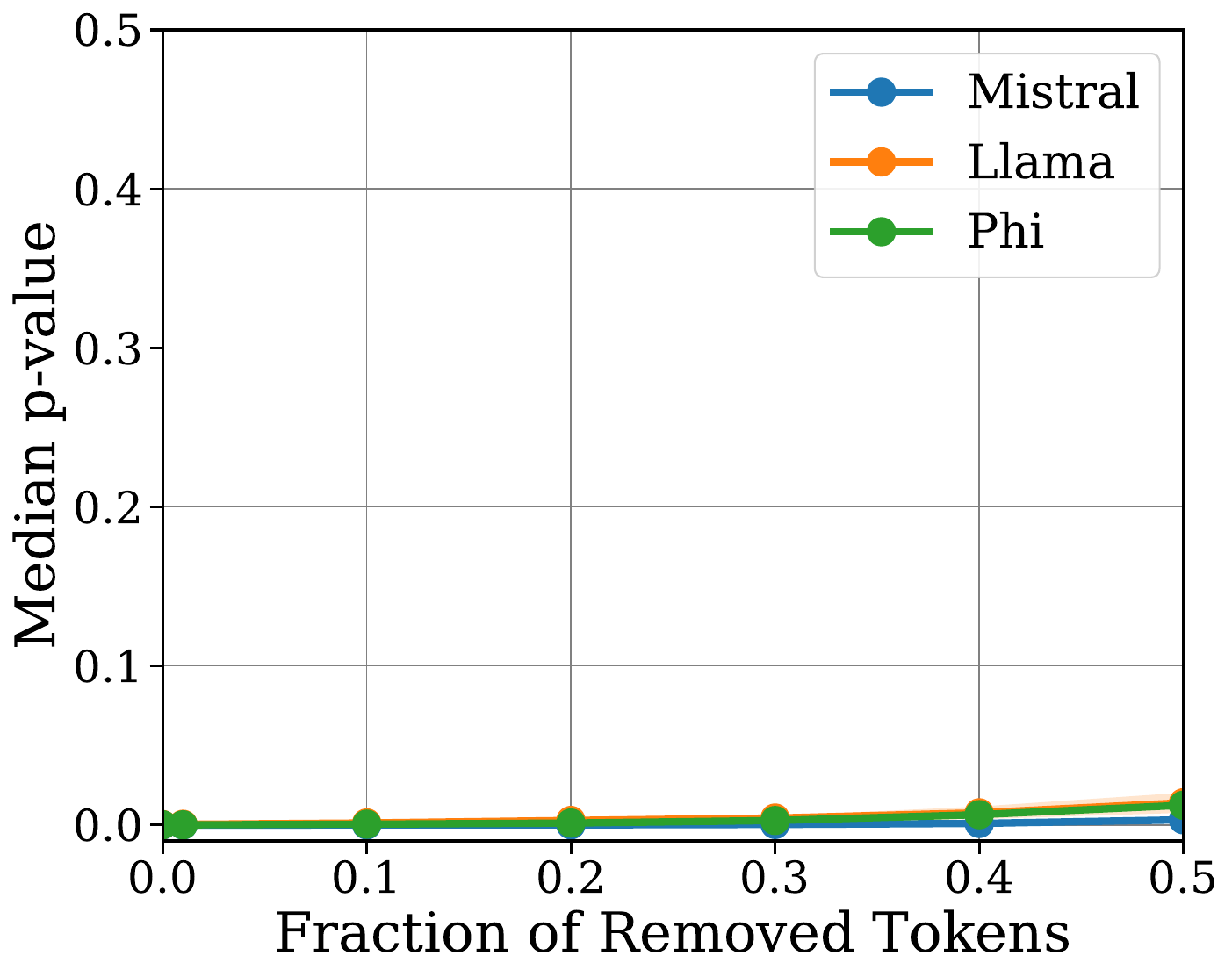}
        \label{sfig:lowrank_remove_start} 
    }
    \hfill 
    \subfigure[Substituting random tokens.]{
        \includegraphics[width=0.45\textwidth]{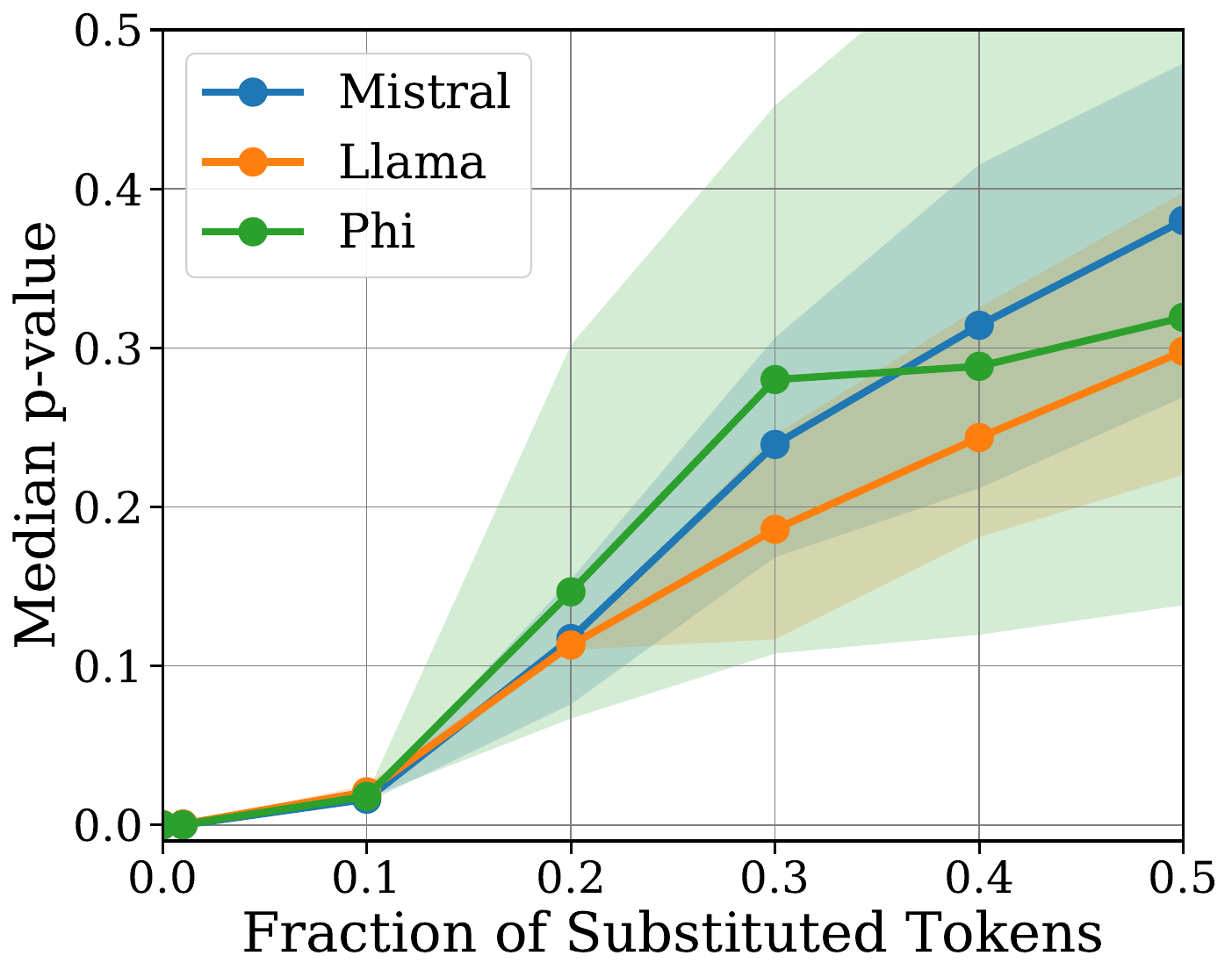}
        \label{sfig:lowrank_substitute_random} 
    }
    \hfill 
    \subfigure[Substituting tokens in prompt.]{
        \includegraphics[width=0.45\textwidth]{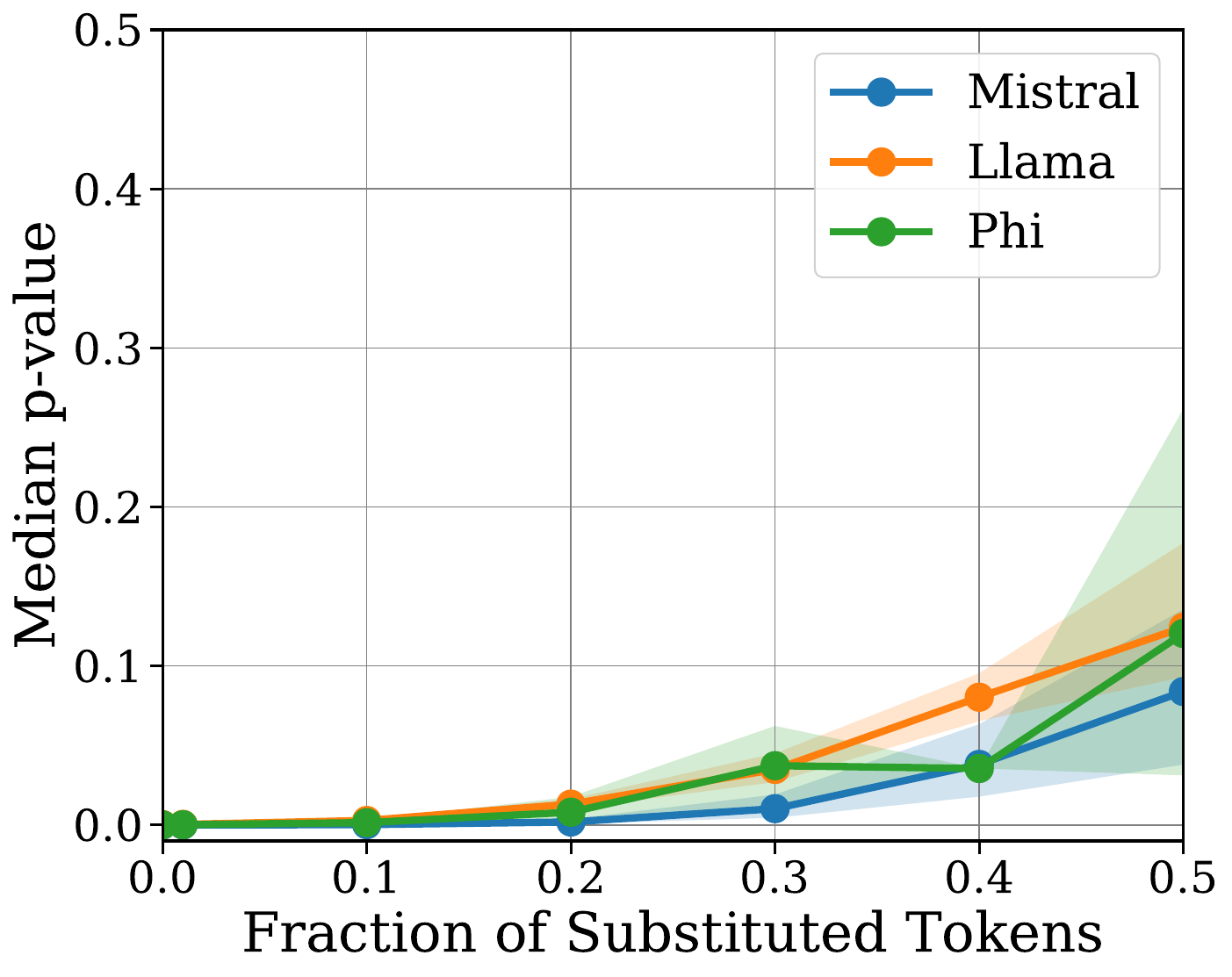}
        \label{sfig:lowrank_substitute_start} 
    }
    \caption{Effect of token-level corruptions on rank-reduced $\gaussmark$ as measured by median p-value.
    }
    \label{fig:lowrank-robustness-medians}
\end{figure}

\begin{figure}[ht]
    \centering
    \subfigure[Adding random tokens.]{
        \includegraphics[width=0.45\textwidth]{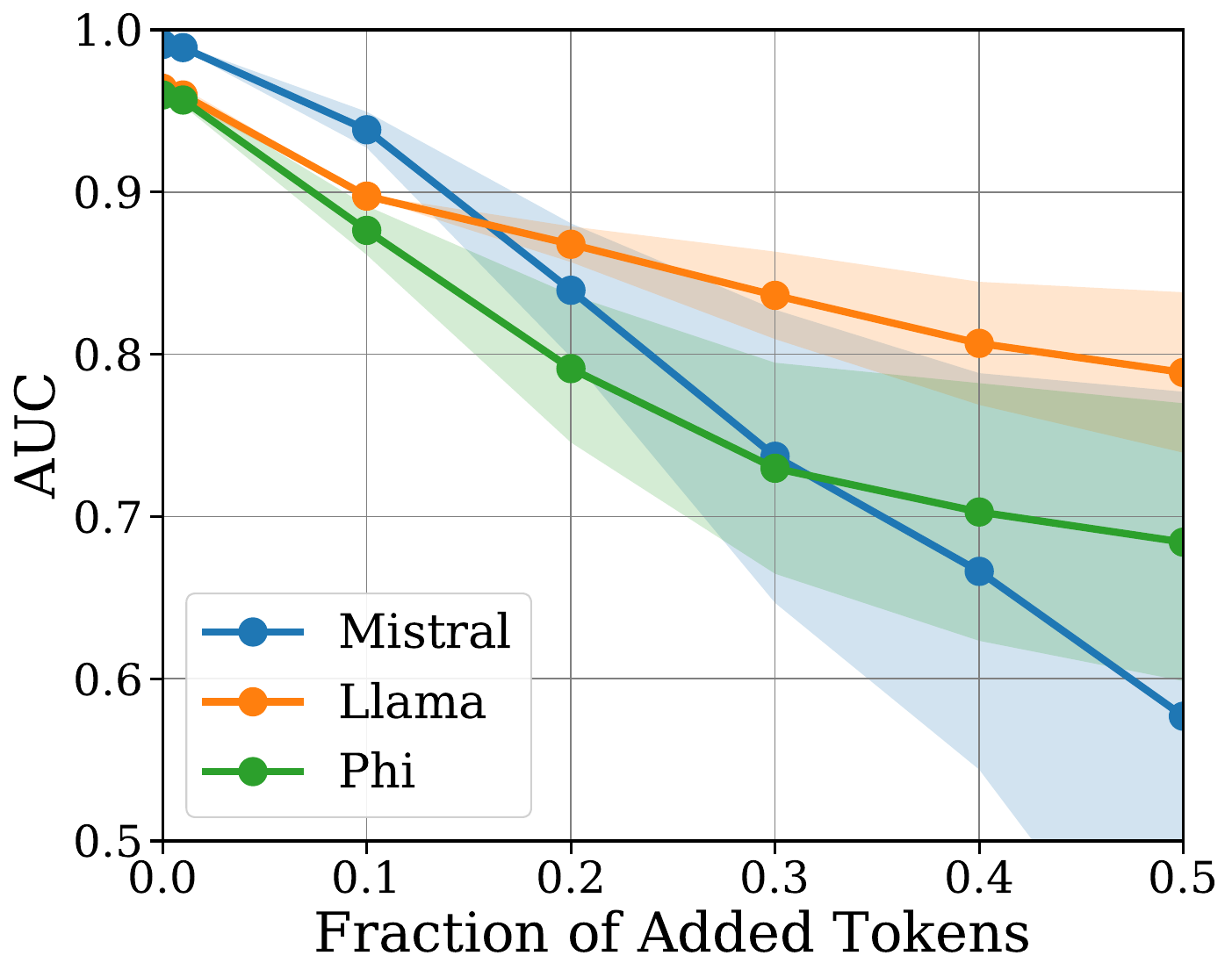}
        \label{sfig:lowrank_add_random_auc} 
    }
    \hfill 
    \subfigure[Adding tokens to prompt.]{
        \includegraphics[width=0.45\textwidth]{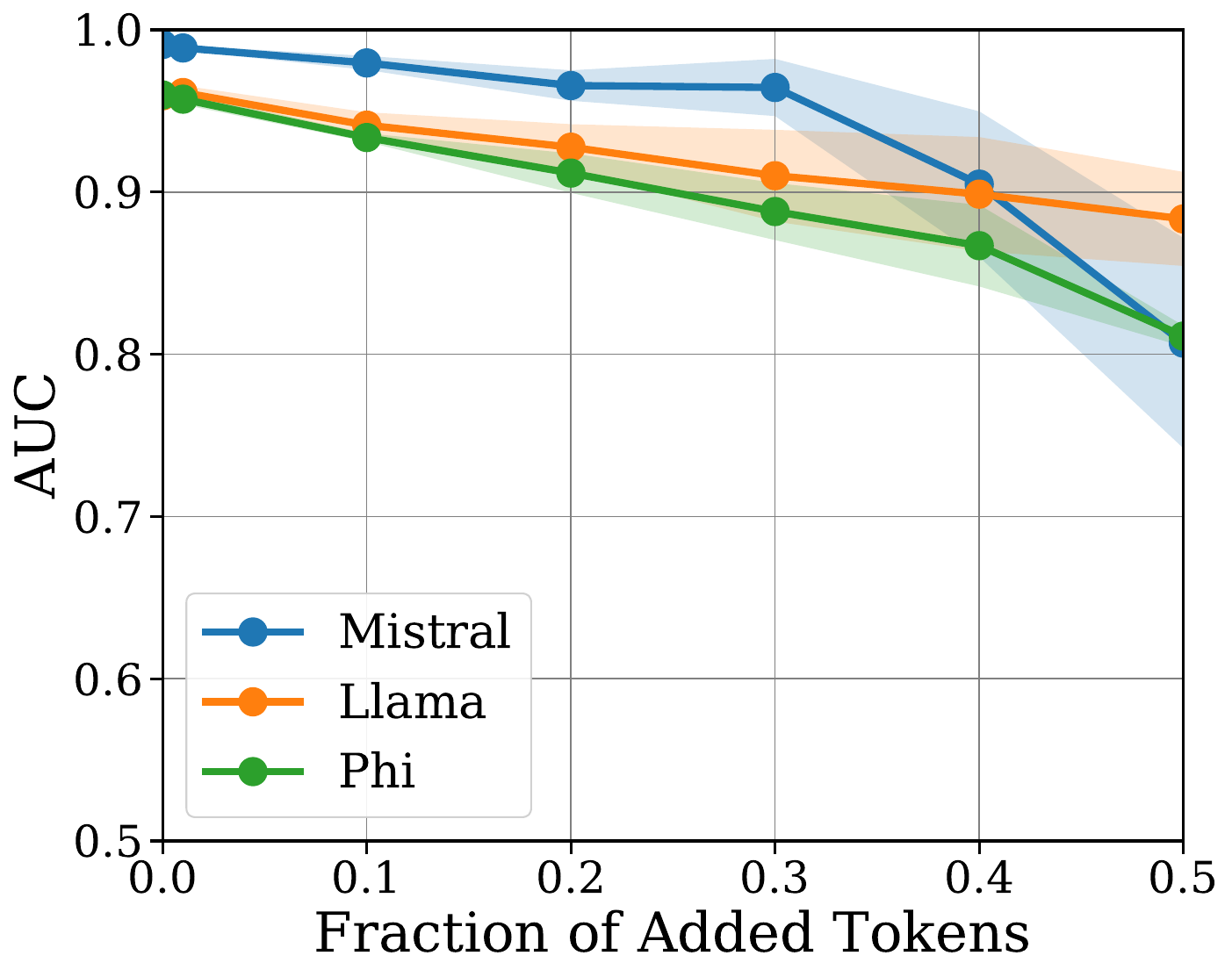}
        \label{sfig:lowrank_add_start_auc} 
    }
    \hfill 
    \subfigure[Removing random tokens.]{
        \includegraphics[width=0.45\textwidth]{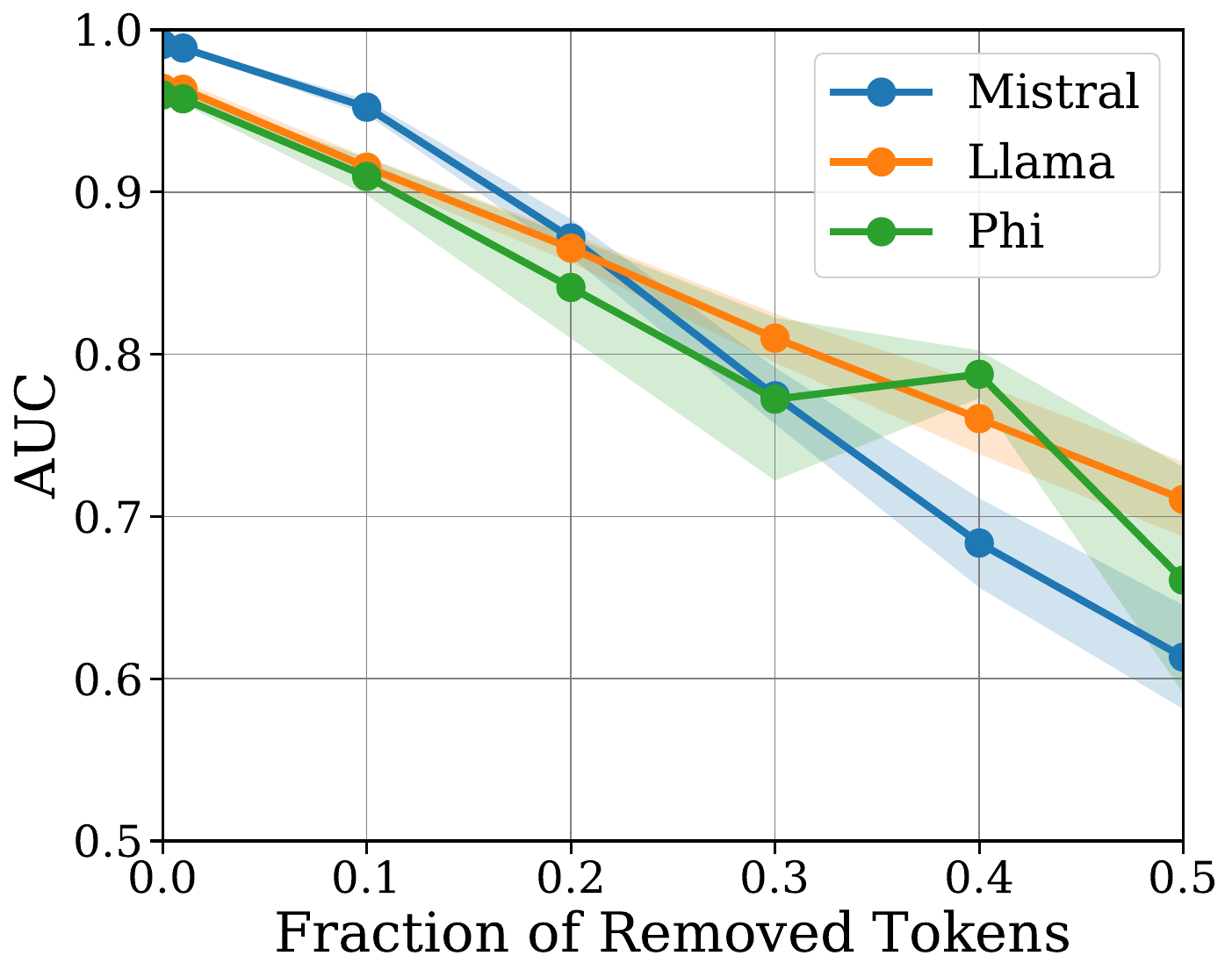}
        \label{sfig:lowrank_remove_random_auc} 
    }
    \hfill 
    \subfigure[Removing tokens from prompt.]{
        \includegraphics[width=0.45\textwidth]{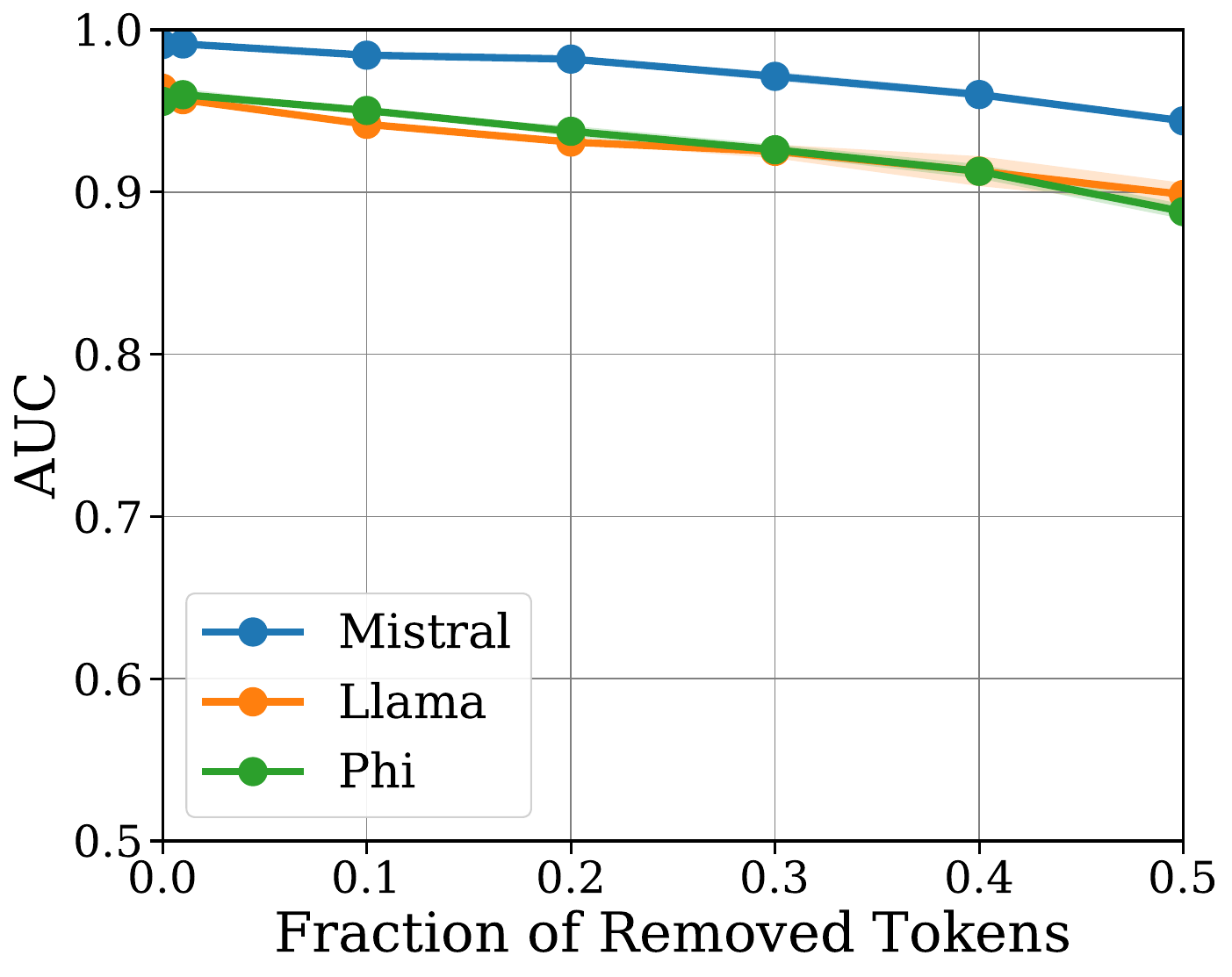}
        \label{sfig:lowrank_remove_start_auc} 
    }
    \hfill 
    \subfigure[Substituting random tokens.]{
        \includegraphics[width=0.45\textwidth]{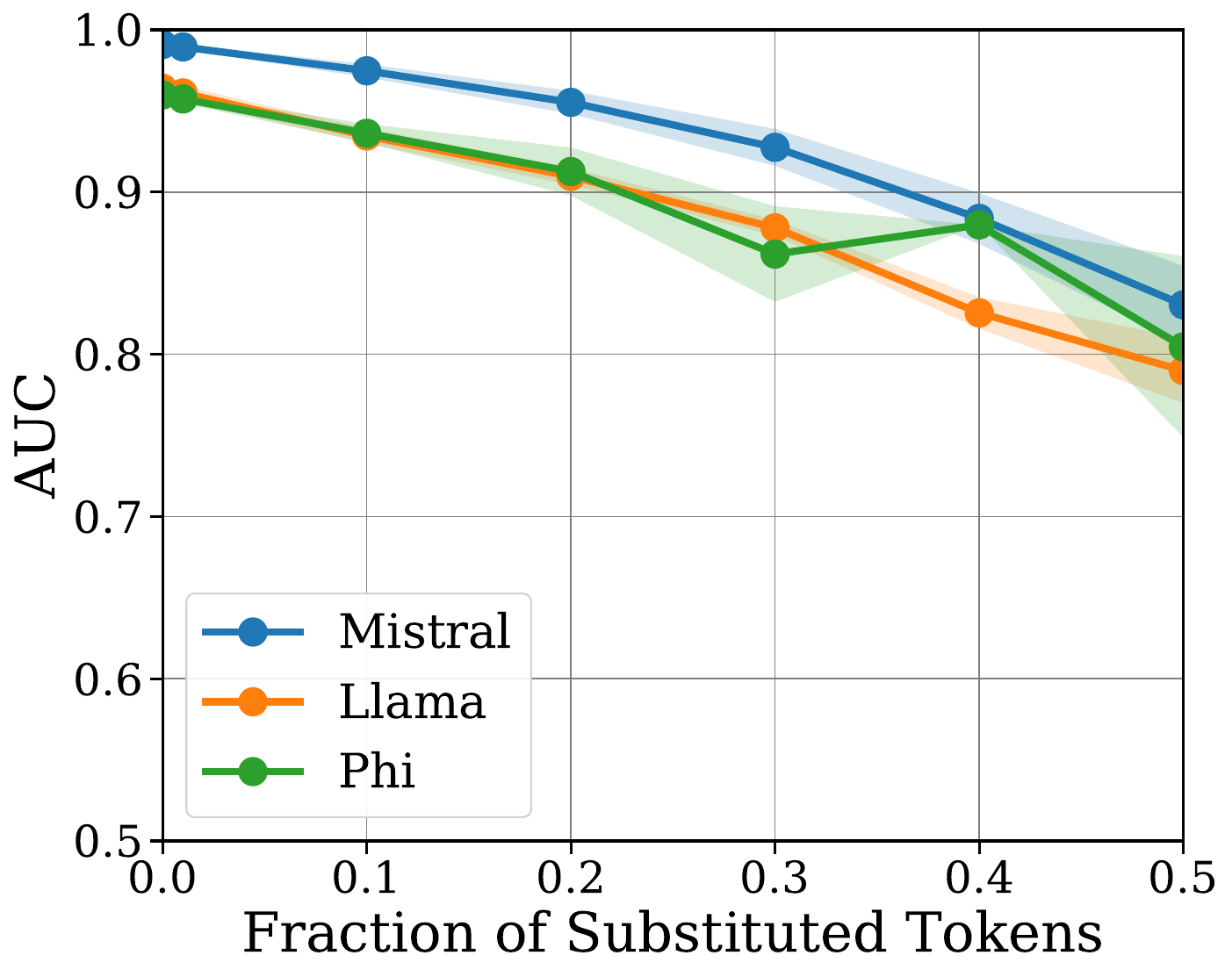}
        \label{sfig:lowrank_substitute_randomauc5} 
    }
    \hfill 
    \subfigure[Substituting tokens in prompt.]{
        \includegraphics[width=0.45\textwidth]{figs/low_rank_figs/corrupt-substitute-start-weak-auc.pdf}
        \label{sfig:lowrank_substitute_start_auc} 
    }
    \caption{Effect of token-level corruptions on rank-reduced $\gaussmark$ as measured by AUC.
    }
    \label{fig:lowrank-robustness-auc}
\end{figure}

\section{Empirical Comparison with Other Watermarking Schemes}\label{app:kgw}

In this section, we compare the Soft Red List approach of \citet{kirchenbauer2023watermark,kirchenbauer2023reliability} with our $\gaussmark$.  As we discussed in the main text, the approach of \citet{dathathri2024scalable} does not provide a statistically valid watermark as implemented and thus we do not compare with it.  While the approach of \citet{kuditipudi2023robust} \emph{does} provide statistically valid watermarks, the detection times as reported in the text are orders of magnitude greater than those of $\gaussmark$; furthermore, for realistic scenarios, wherein a provider may wish to service many clients with similar prompts and diverse responses, the corresponding growth in detection time of \citet{kuditipudi2023robust} is prohibitive (quadratic in terms of number of queries) and thus we restrict our empirical comparison to the unique pre-existing \emph{practical} watermarking approach of which we are aware that has formal statistical guarantees.\footnote{We observe that technically the guarantees of \citet{kirchenbauer2023watermark,kirchenbauer2023reliability} only give approximate $p$-values; we elide over this point and treat them as formal for the purposes of this comparison.}

The Soft Red List approach described by \citet{kirchenbauer2023watermark, kirchenbauer2023reliability} selects a pseudo-random hash function based on the previous $m$ tokens to identify a random $\gamma$-fraction of the token space. A constant $\delta$ is then added to the log probabilities of these selected tokens before sampling. Under this scheme, the unwatermarked text is expected to exhibit, on average, $\gamma T$ many  "upweighted" tokens, so if significantly more favored tokens appear in a given text it is likely that the text is watermarked; \citet{kirchenbauer2023watermark} proposes a $z$-test to detect this signal in a statistically quantifiable way. Following prior empirical studies \citep{dathathri2024scalable, kuditipudi2023robust}, we set $\gamma = 0.25$ and evaluate the approach with $\delta = 1.0$ and $\delta = 2.0$. Note that there is a tradeoff in increasing $\delta$; A higher $\delta$ strengthens the detectability of the watermark, though 
may lead to greater distortion of the generated text. 

As in \citet{kuditipudi2023robust}, we refer to this scheme as $\kgwone$ and $\kgwtwo$ depending on whether $\delta = 1.0$ or $2.0$, respectively.  In our timing plots (\Cref{fig:kgw-times}), we refer to this approach simply as $\kgw$.  As the performance of $\kgw$ is not the main focus of this work, for the sake of saving expensive computational resources, we evaluated $\kgw$ only on 100 prompts as opposed to the 1K prompts we used to evaluate $\gaussmark$. In \Cref{fig:kgw-medians-aucs,fig:kgw-numpassed}, we compare the detectability of $\gaussmark$ to $\kgwone$ and $\kgwtwo$ on $\llama$, $\mistral$, and $\phimini$. These comparisons are based on metrics such as the median $p$-value and AUC of the ROC curve (\Cref{fig:kgw-medians-aucs}), as well as the fraction of significant detections at FPR $0.05$ and $0.01$ (\Cref{fig:kgw-numpassed}), averaged across three seeds. Consistent with our experiments on detectability in \Cref{app:detectability,ssec:lowrank_detectability}, we see that each of these metrics broadly agrees with the others.  

The weaker of the soft Red list  implementations ($\kgwone$) expresses broadly similar performance as $\gaussmark$ with some variation in the model and the choice of measurement.  The stronger implementation ($\kgwtwo$) evinces superior detectability.  In both cases, however, this detectability comes at a serious computational cost in generation time, as shown in \Cref{fig:kgw-times}.  While the detection times are broadly similar to those of $\gaussmark$, the generation times are orders of magnitude greater.  While in all cases we use 40GB NVIDIA A100s in our experiments, one reason for this vast disparity is the use of Huggingface's generation code \citep{huggingface2020},  as opposed to the accelerated vLLM codebase \citep{kwon2023efficient}.  While this may not seem like a fair comparison at first glance, one of the key advantages of $\gaussmark$ is that LM decoding acceleration code can be used out of the box to generate high-quality watermarked text as opposed to requiring bespoke implementations for each watermarking scheme and model. This is a significant advantage in practice, as it allows for the rapid deployment of $\gaussmark$ across a wide variety of models and use cases.

We also compare the quality of text generated by $\kgwone$ and $\kgwtwo$ to that of $\gaussmark$ using $\alpaca$, with the results summarized in \Cref{tab:kgw_winrate}.  We see that in all cases either $\gaussmark$ or its rank-reduced version improves on the performance of $\kgw$, sometimes significantly.  The more distortionary scheme, $\kgwtwo$, fairs particularly poorly in this comparison.  We reiterate that we did not conduct a fully exhaustive search over the possible parameter combinations to apply $\gaussmark$ and we suspect that improved performance could be had with further search.

\begin{table}[t]
    \centering

    \begin{tabular}{lcccc}
        \toprule
        \textbf{Model} & $\kgwone$ & $\kgwtwo$  & $\gaussmark$ & \makecell{$\gaussmark$ \\ (Rank Reduced)} \\
        \midrule
        $\llama$ & $\mathbf{.47}$ & .42 & .45 & $\mathbf{.47}$ \\
        \rowcolor{lightgray} $\mistral$ & .47 & .43 & $\mathbf{.50}$ & $\mathbf{.50}$ \\
        $\phimini$ & .48 & .44 & $\mathbf{.49}$ & .47\\ 
        \bottomrule
    \end{tabular}
    \caption{Comparison of win-rate in $\alpaca$ between $\kgw$ and $\gaussmark$. Higher is better.}
    \label{tab:kgw_winrate}
\end{table}

\begin{table}[t]
    \centering

    \begin{tabular}{lccc}
        \toprule
        \textbf{Model} & %
        $\gaussmark$ & %
        $\kgwone$ &  %
        $\kgwtwo$ \\
        \midrule
        $\llama$ & 0.7983 & 0.7221 & 0.8519  \\
        \rowcolor{lightgray} $\mistral$ & 0.7284 & 0.7034 & 0.8618 \\
        $\phimini$ & 0.7198 & 0.7349 & 0.8587 \\ 
        \bottomrule
    \end{tabular}
    \caption{Comparison of AUC after roundtrip translation through French for $\gaussmark$, $\kgwone$, and \(\kgwtwo\)  on various models. Higher numbers are better.} 
    \label{tab:kgw_roundtrip}
\end{table}

Finally, we compare the robustness of $\kgwone$ and $\kgwtwo$ to $\gaussmark$ in \Cref{fig:kgw-corruption-tokens,fig:kgw-corruption-paraphrase}. In \Cref{sfig:llama-kgw-add-random,sfig:mistral-kgw-add-random,sfig:phi-kgw-add-random}, we present the effects of adding random tokens to each watermark. Similarly, \Cref{sfig:llama-kgw-remove-random,sfig:mistral-kgw-remove-random,sfig:phi-kgw-remove-random} illustrates the impact of removing random tokens, while \Cref{sfig:llama-kgw-substitute-random,sfig:mistral-kgw-substitute-random,sfig:phi-kgw-substitute-random} examines the effects of substituting random tokens. Across all cases, we evaluate the effect on the true positive rate (TPR) at a false positive rate (FPR) of $0.05$, as our earlier experiments have shown that all measurements of detectability generally agree.  We see that $\kgwone$ and $\kgwtwo$ are significantly more robust to token-level corruptions than $\gaussmark$.  This result is not altogether surprising in light of the fact that we used a context window of a single token for the $\kgw$ experiments, which allows for significant textual distortion; indeed, \citet{kuditipudi2023robust} report a considerable increase in perplexity when using $\kgw$ as well as a concurrent decline in the quality of the generated text, especially when prompted to follow instructions.  Furthermore, as we described in \Cref{ssec:robustness}, these token-level attacks are not very realistic given the strong negative effects they have on text quality.  A more realistic attack is the roundtrip translation attack, whose results we display in \Cref{sfig:llama-kgw-roundtrip,sfig:mistral-kgw-roundtrip,sfig:phi-kgw-roundtrip} as ROC curves, with corresponding AUC values summarized in \Cref{tab:kgw_roundtrip}.   Here we see that $\gaussmark$ performs better than $\kgwone$ on $\llama$ and $\mistral$ and comparably on $\phimini$.  Although $\kgwtwo$ demonstrates improved robustness, we again note that the amount of distortion introduced by $\kgwtwo$ is unacceptably high, as reported by \citet{kuditipudi2023robust}.

\begin{figure}[ht]
    \centering
    \subfigure[Median $p$-values for $\llama$.]{
        \includegraphics[width=0.45\textwidth]{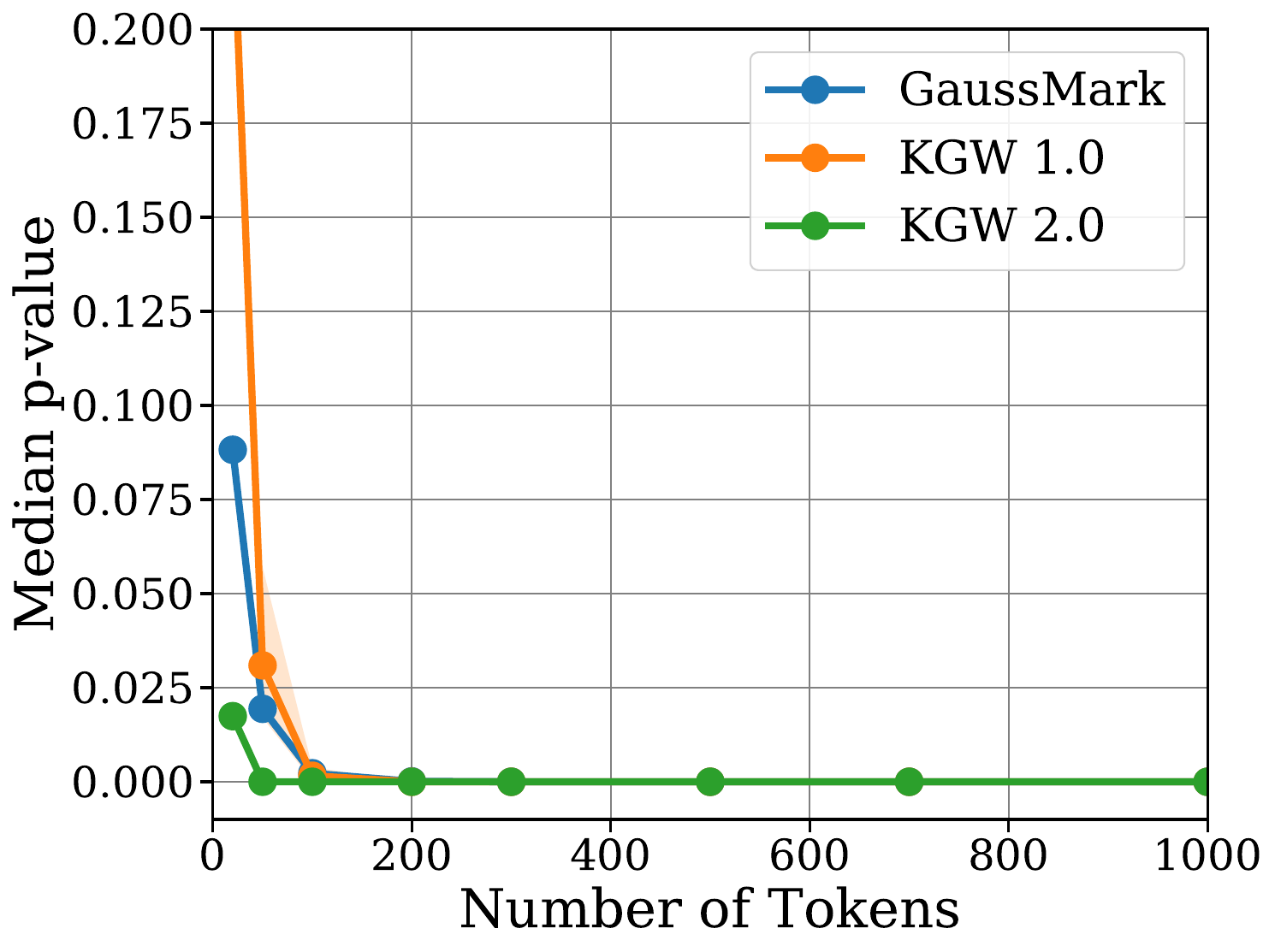}
        \label{sfig:llama-kgw-medians} 
    }
    \hfill \subfigure[AUCs for $\llama$.]{
        \includegraphics[width=0.45\textwidth]{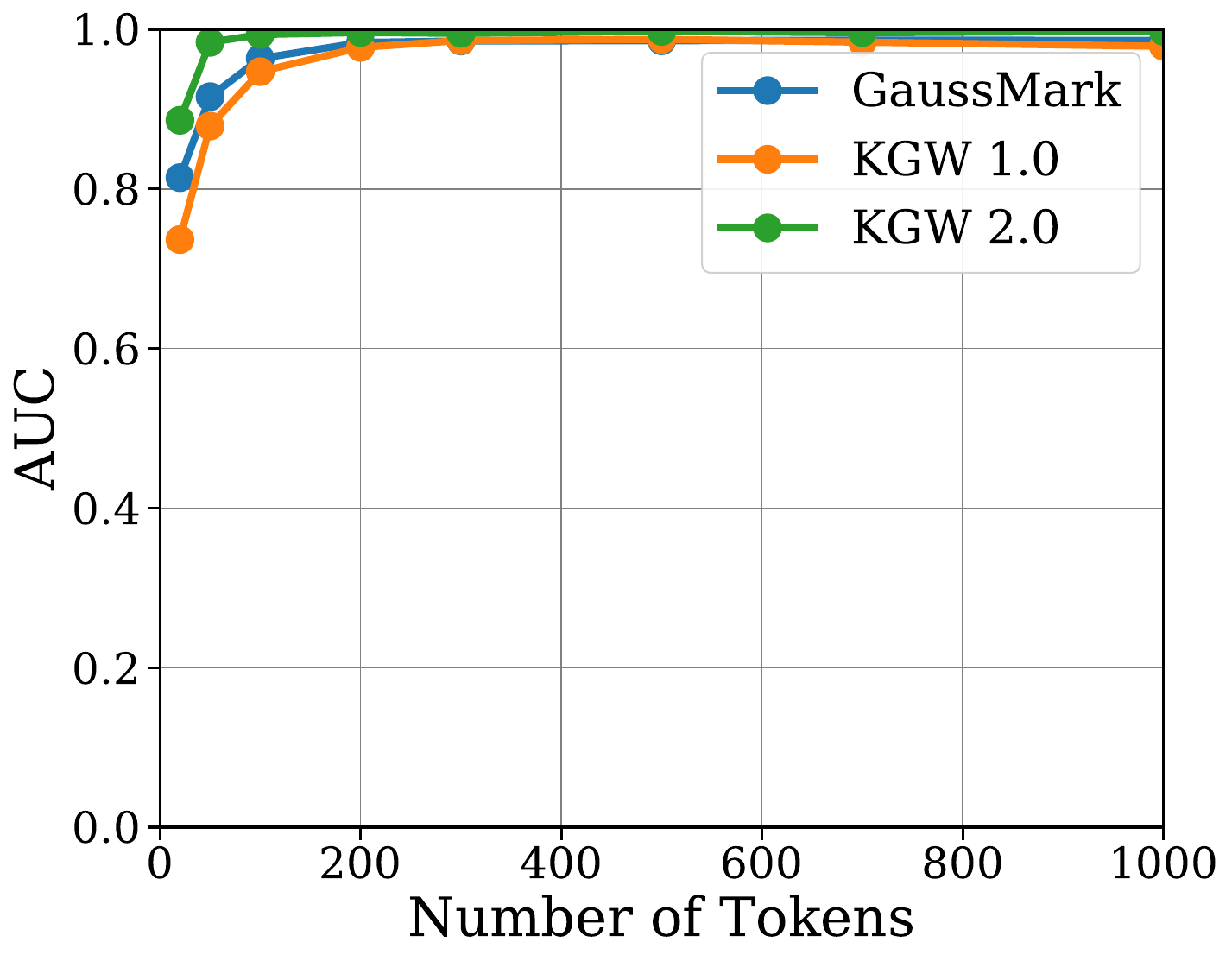}
        \label{sfig:llama-kgw-aucs} 
    }
    \hfill \subfigure[Median $p$-values for $\mistral$.]{
        \includegraphics[width=0.45\textwidth]{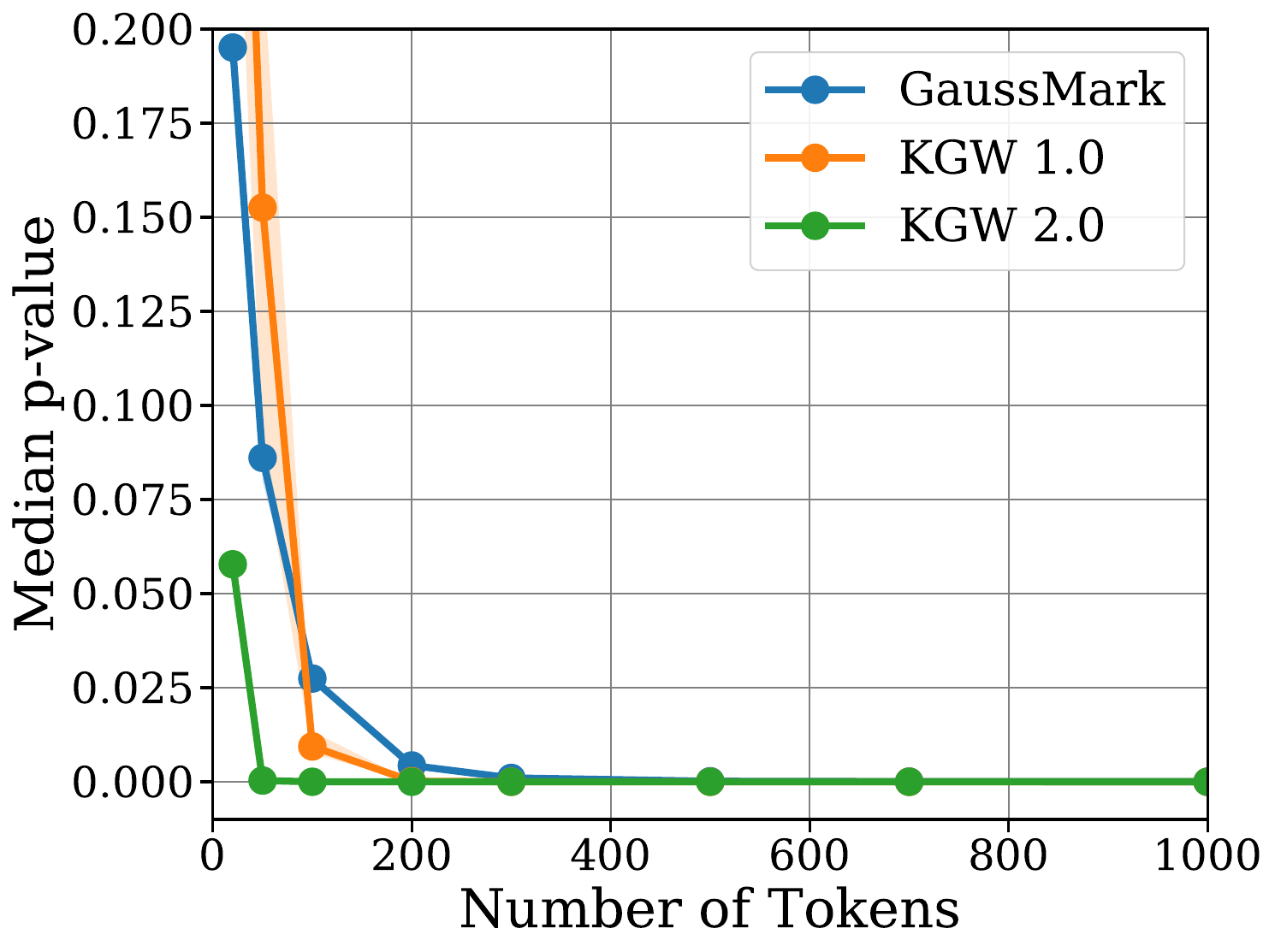}
        \label{sfig:mistral-kgw-medians} 
    }
    \subfigure[AUCs for $\mistral$.]{
        \includegraphics[width=0.45\textwidth]{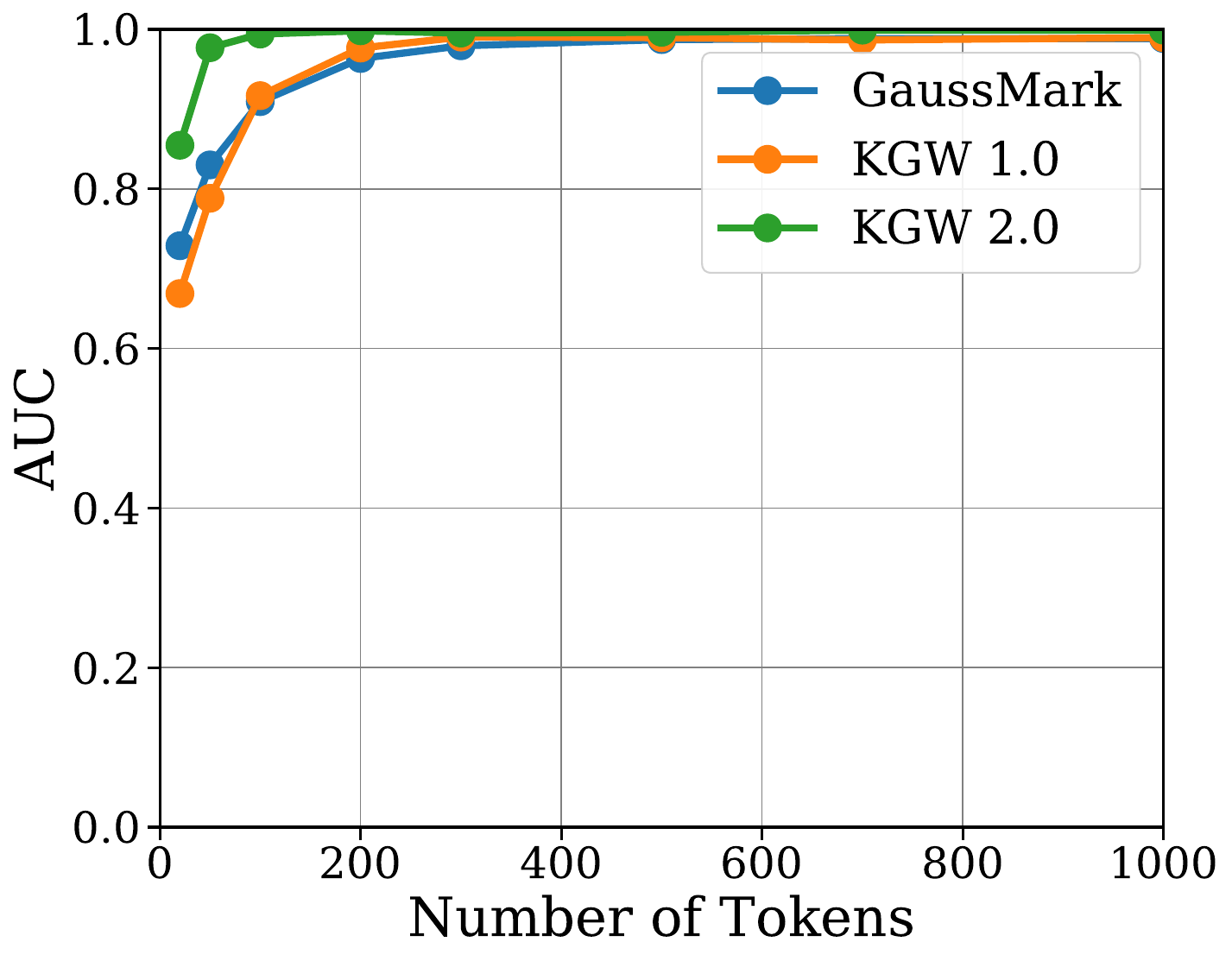}
        \label{sfig:mistral-kgw-aucs} 
    }
    \hfill \subfigure[Median $p$-values for $\phimini$.]{
        \includegraphics[width=0.45\textwidth]{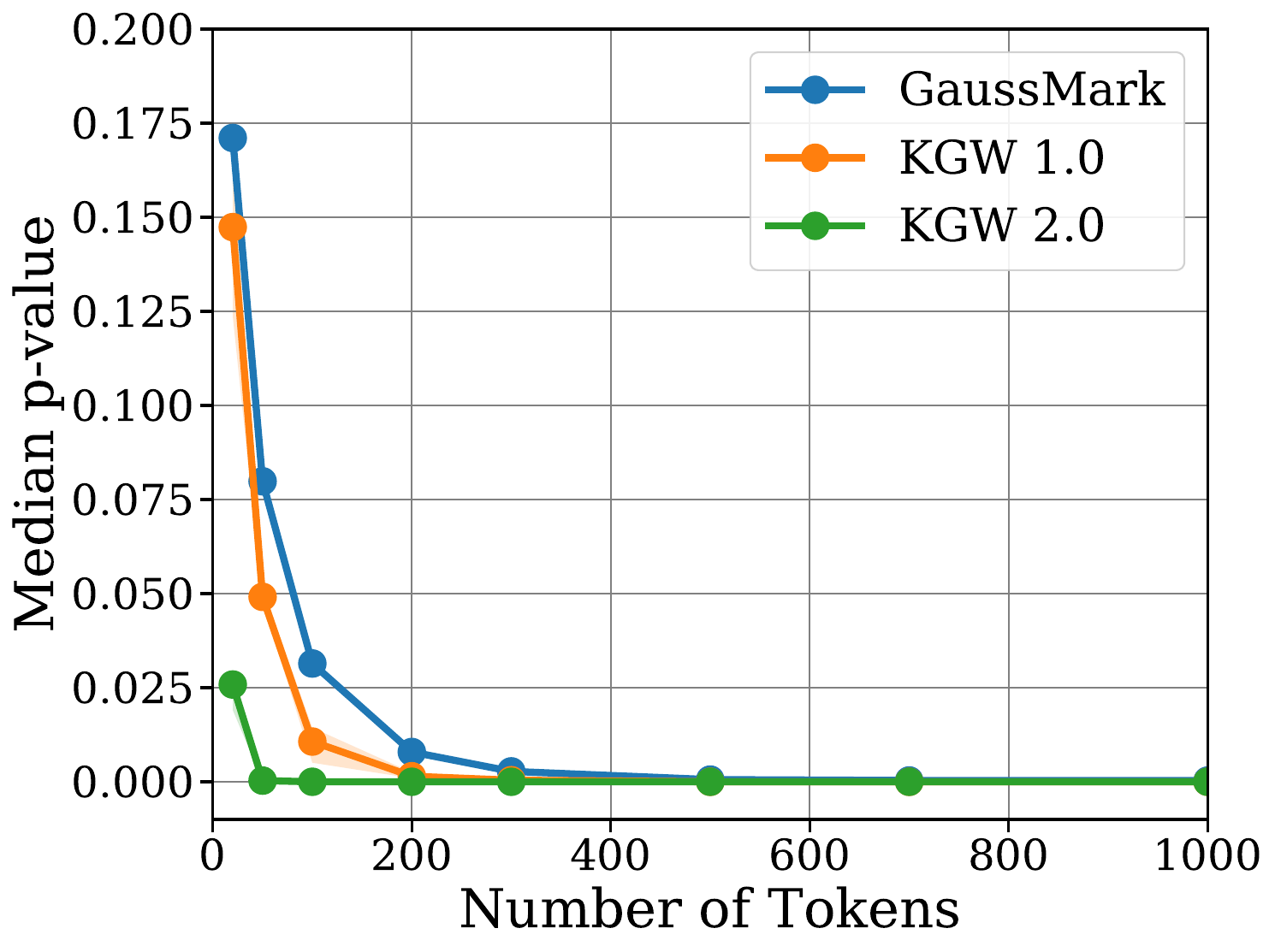}
        \label{sfig:phi-kgw-medians} 
    }
    \subfigure[AUCs for $\phimini$.]{
        \includegraphics[width=0.45\textwidth]{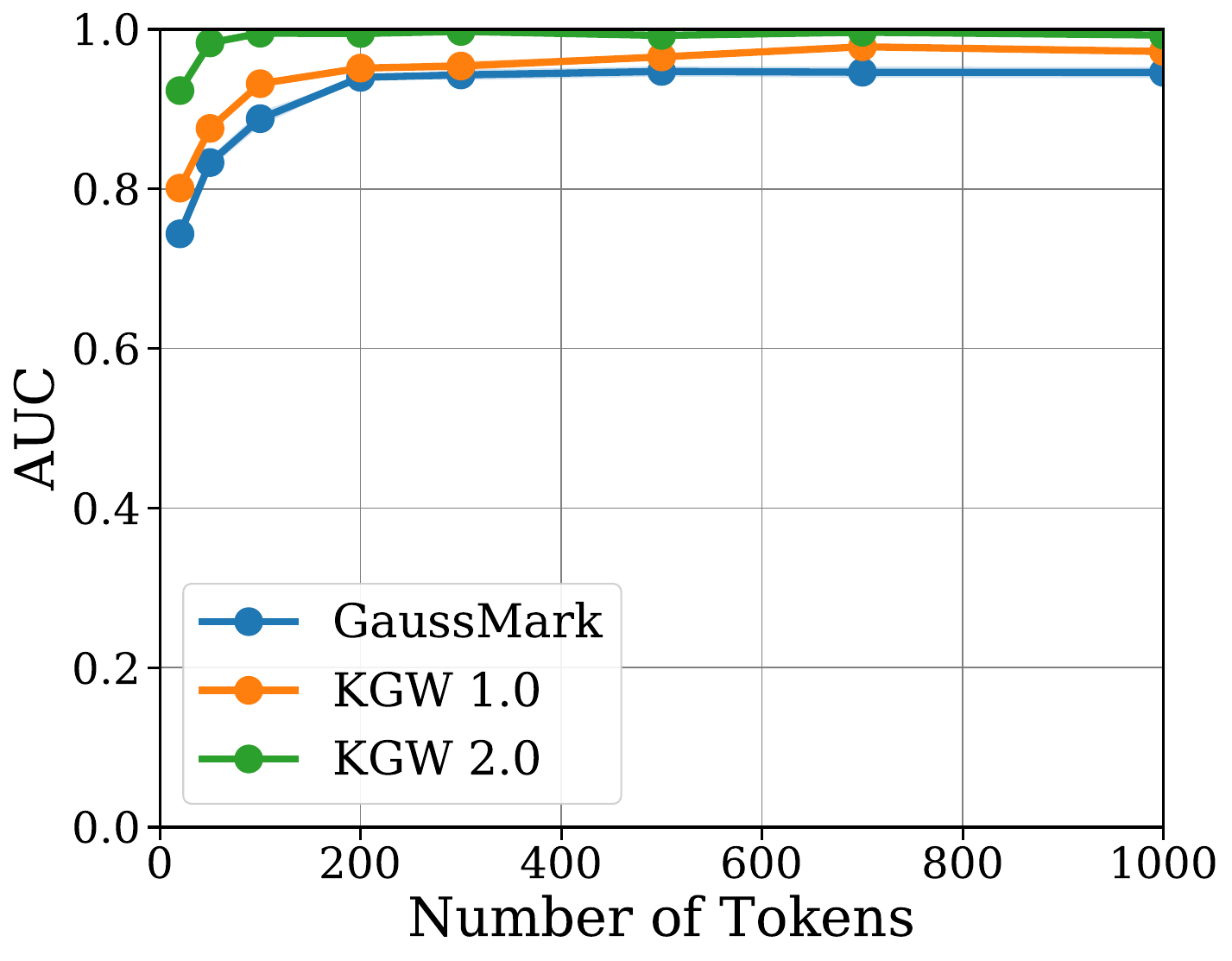}
        \label{sfig:phi-kgw-aucs} 
    }
    \caption{Comparison of detectability of $\gaussmark$ with $\kgwone$ and $\kgwtwo$ as measured by  median detection times and AUCs for $\llama$ (a-b), $\mistral$ (c-d), and $\phimini$  (e-f).
    }
    \label{fig:kgw-medians-aucs}
\end{figure}

\begin{figure}[ht]
    \centering
    \subfigure[TPR at $p=0.05$ for $\llama$.]{
        \includegraphics[width=0.45\textwidth]{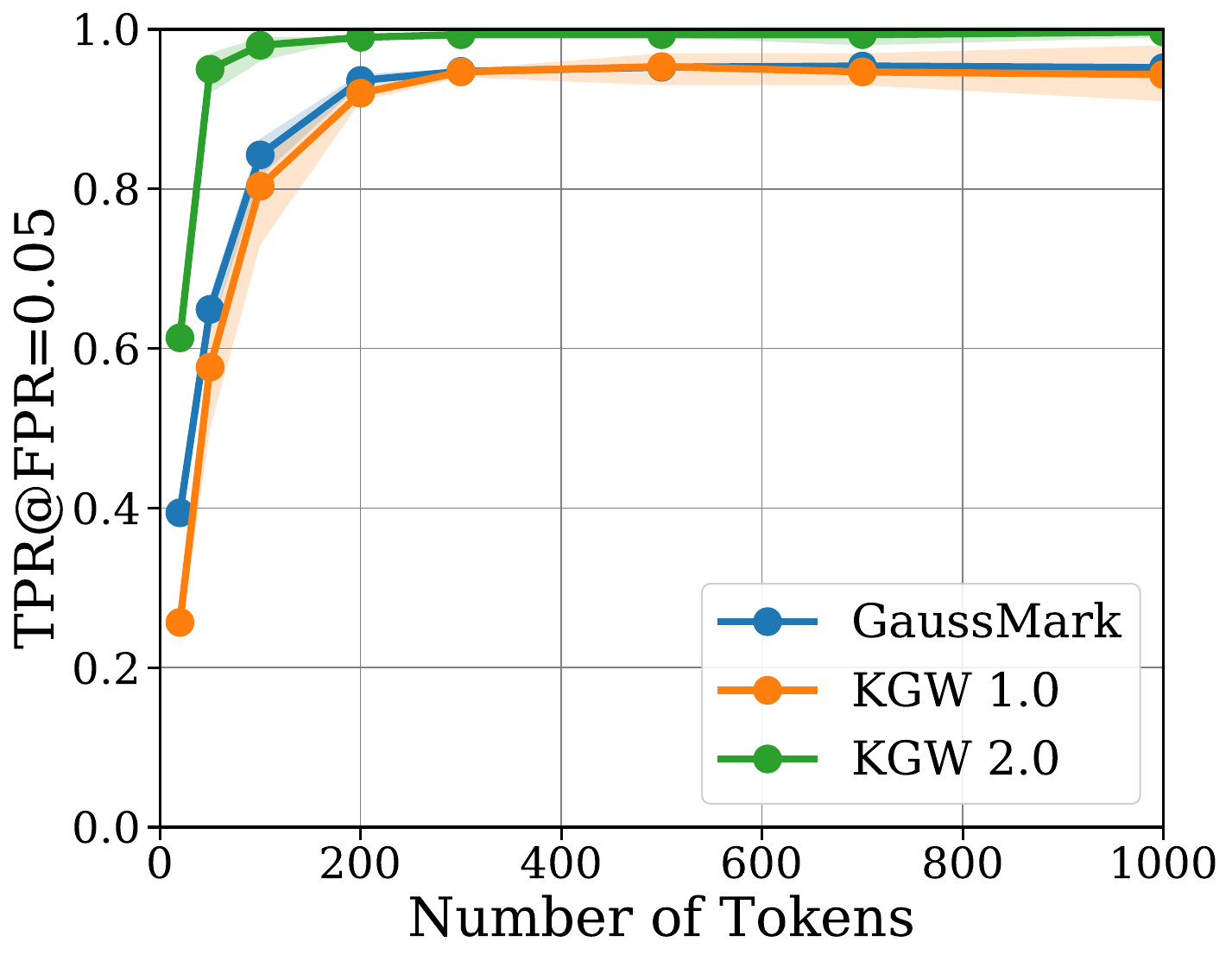}
        \label{sfig:llama-kgw-numpassed-0.05} 
    }
    \hfill \subfigure[TPR at $p=0.01$ for $\llama$.]{
        \includegraphics[width=0.45\textwidth]{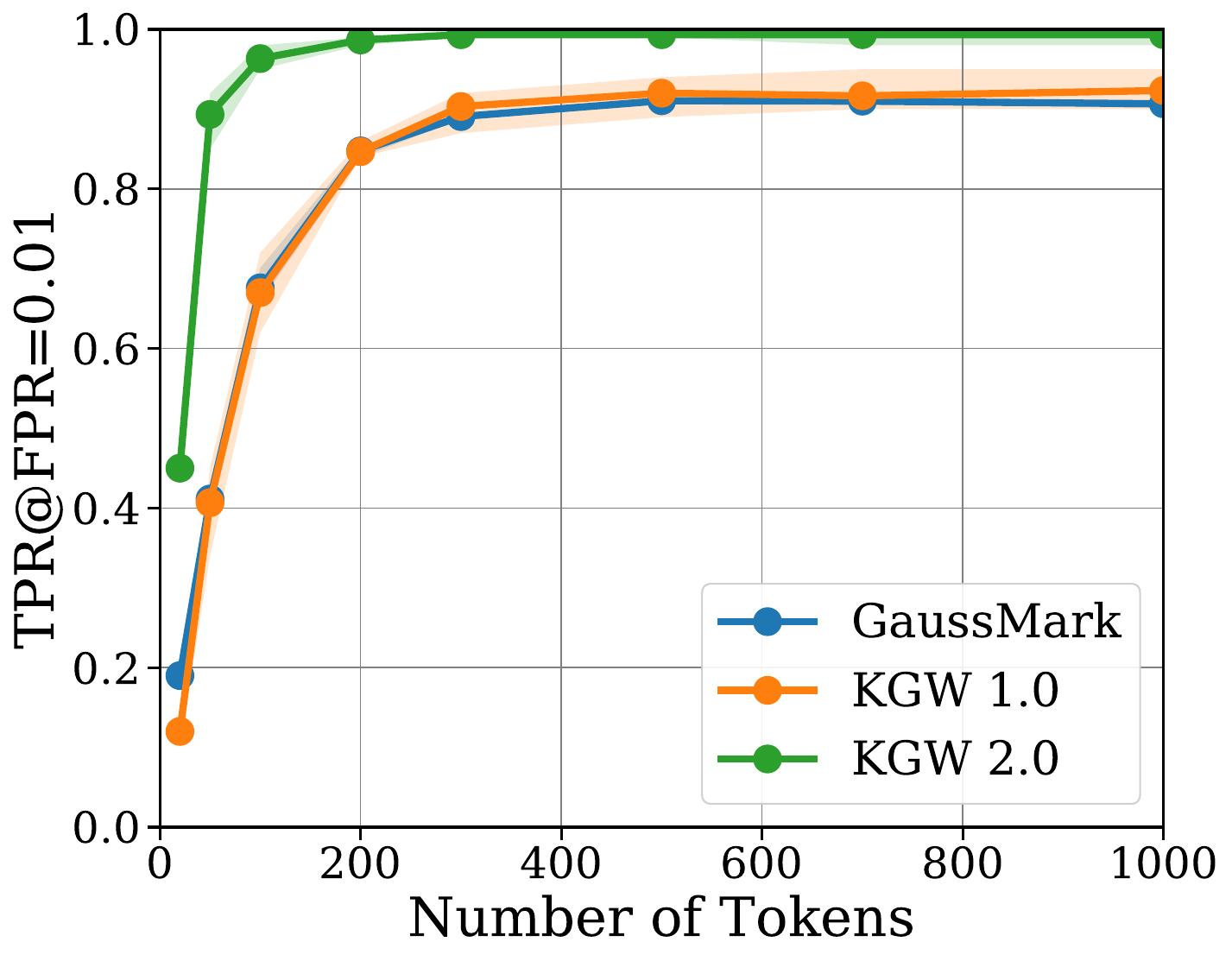}
        \label{sfig:llama-kgw-numpassed-0.01} 
    }
    \subfigure[TPR at $p=0.05$ for $\mistral$.]{
        \includegraphics[width=0.45\textwidth]{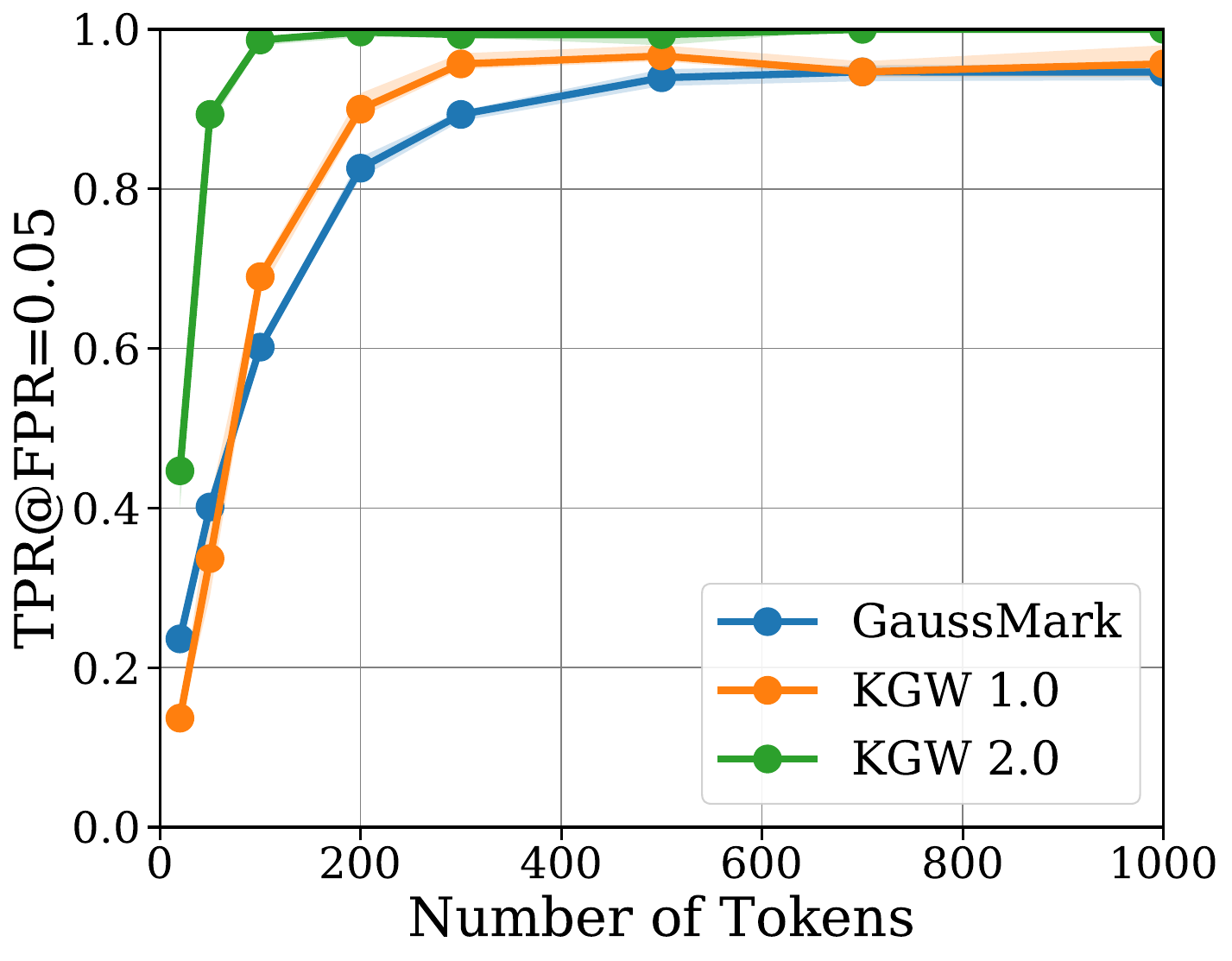}
        \label{sfig:mistral-kgw-numpassed-0.05} 
    }
    \hfill \subfigure[TPR at $p=0.01$ for $\mistral$.]{
        \includegraphics[width=0.45\textwidth]{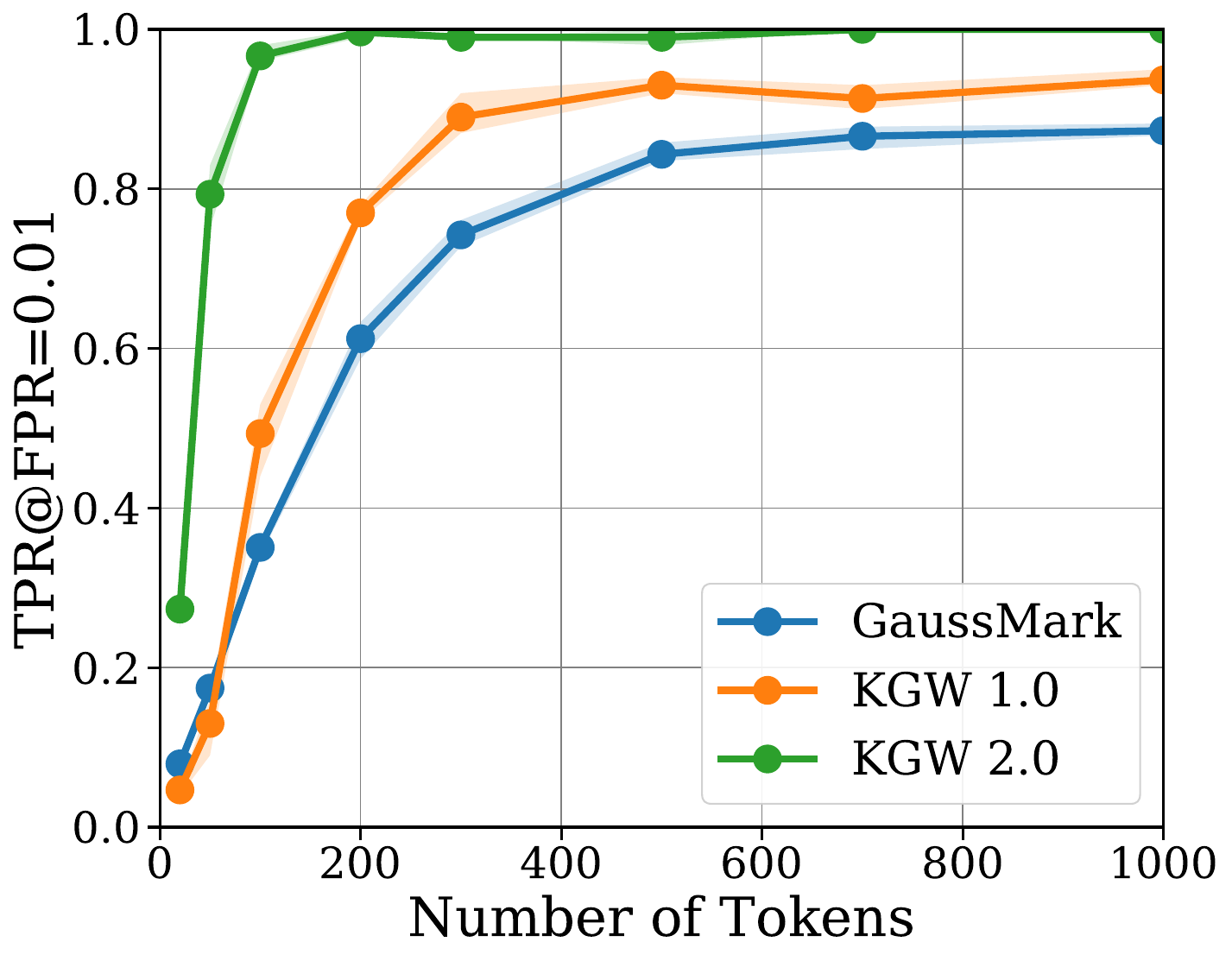}
        \label{sfig:mistral-kgw-numpassed-0.01} 
    }
    \subfigure[TPR at $p=0.05$ for $\phimini$.]{
        \includegraphics[width=0.45\textwidth]{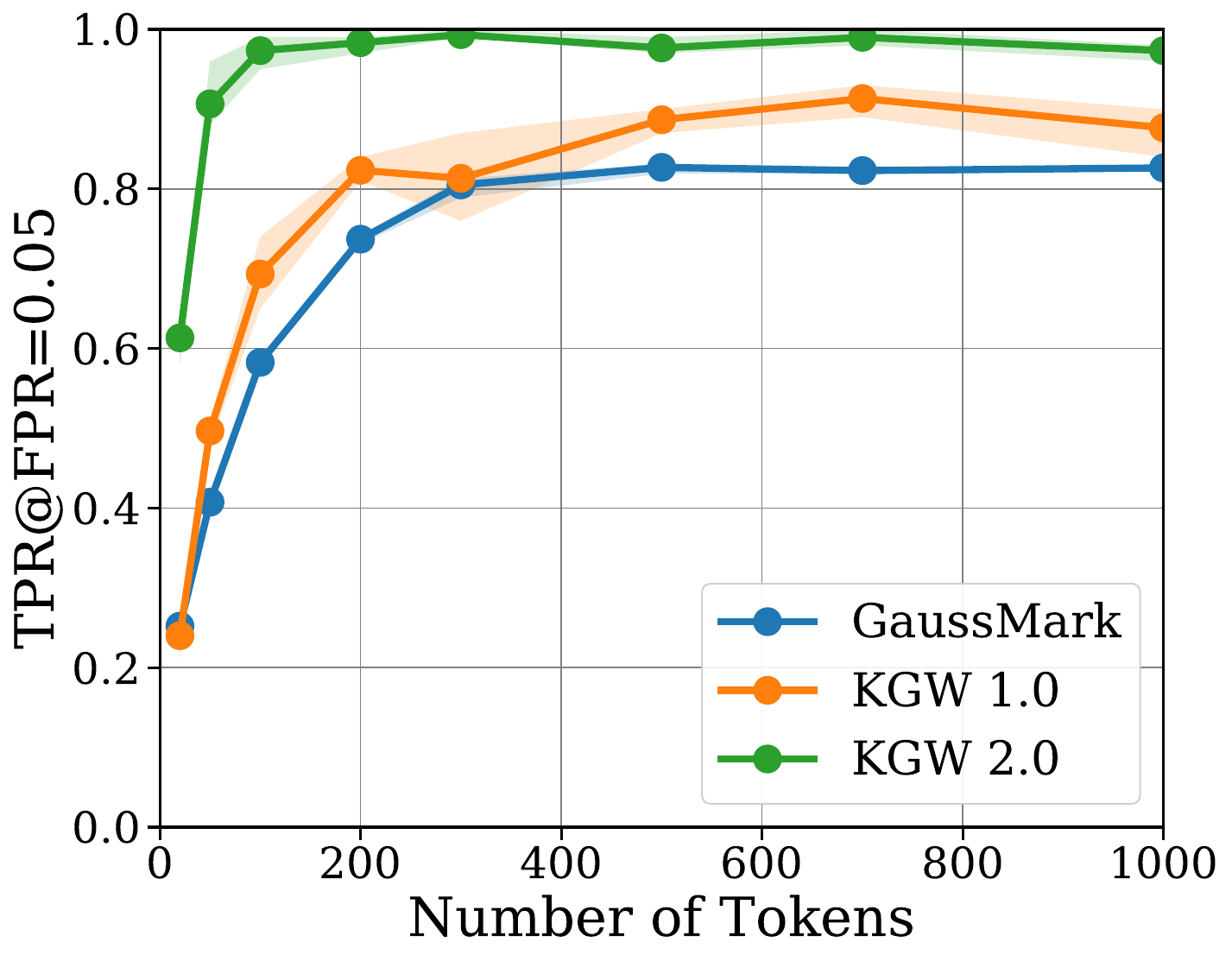}
        \label{sfig:phi-kgw-numpassed-0.05} 
    }
    \hfill \subfigure[TPR at $p=0.01$ for $\phimini$.]{
        \includegraphics[width=0.45\textwidth]{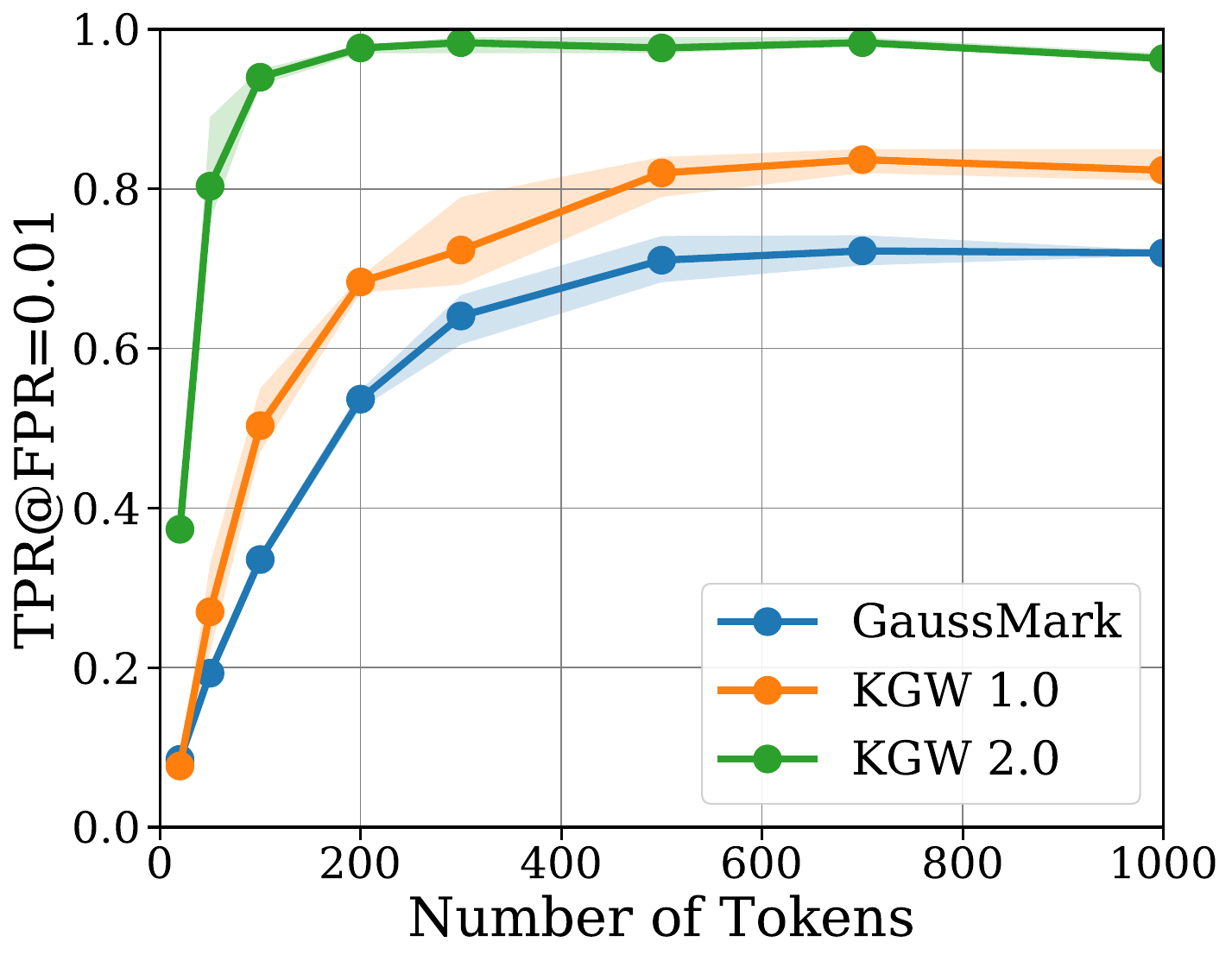}
        \label{sfig:phi-kgw-numpassed-0.01} 
    }
    \caption{Comparison of detectability of $\gaussmark$ with $\kgwone$ and $\kgwtwo$ as measured by  the TPR@FPR $0.05$ and $0.01$ for  $\llama$ (a-b), $\mistral$ (c-d), and $\phimini$  (e-f).
    }
    \label{fig:kgw-numpassed}
\end{figure}

\begin{figure}[ht]
    \centering
    \subfigure[Generation time ($\llama$).]{
        \includegraphics[width=0.45\textwidth]{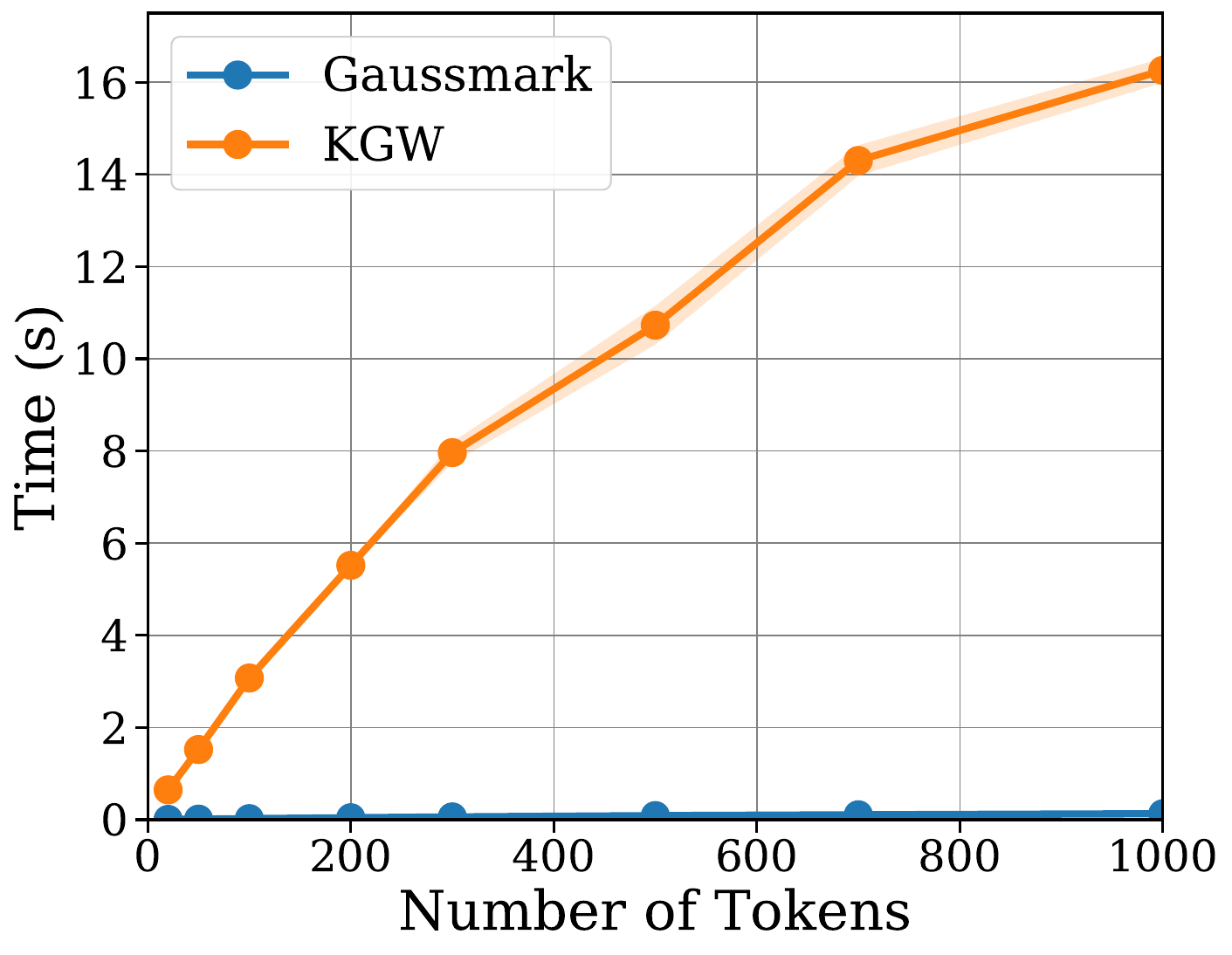}
        \label{sfig:llama-kgw-generation-time} 
    }
    \hfill \subfigure[Detection time ($\llama$).]{
        \includegraphics[width=0.45\textwidth]{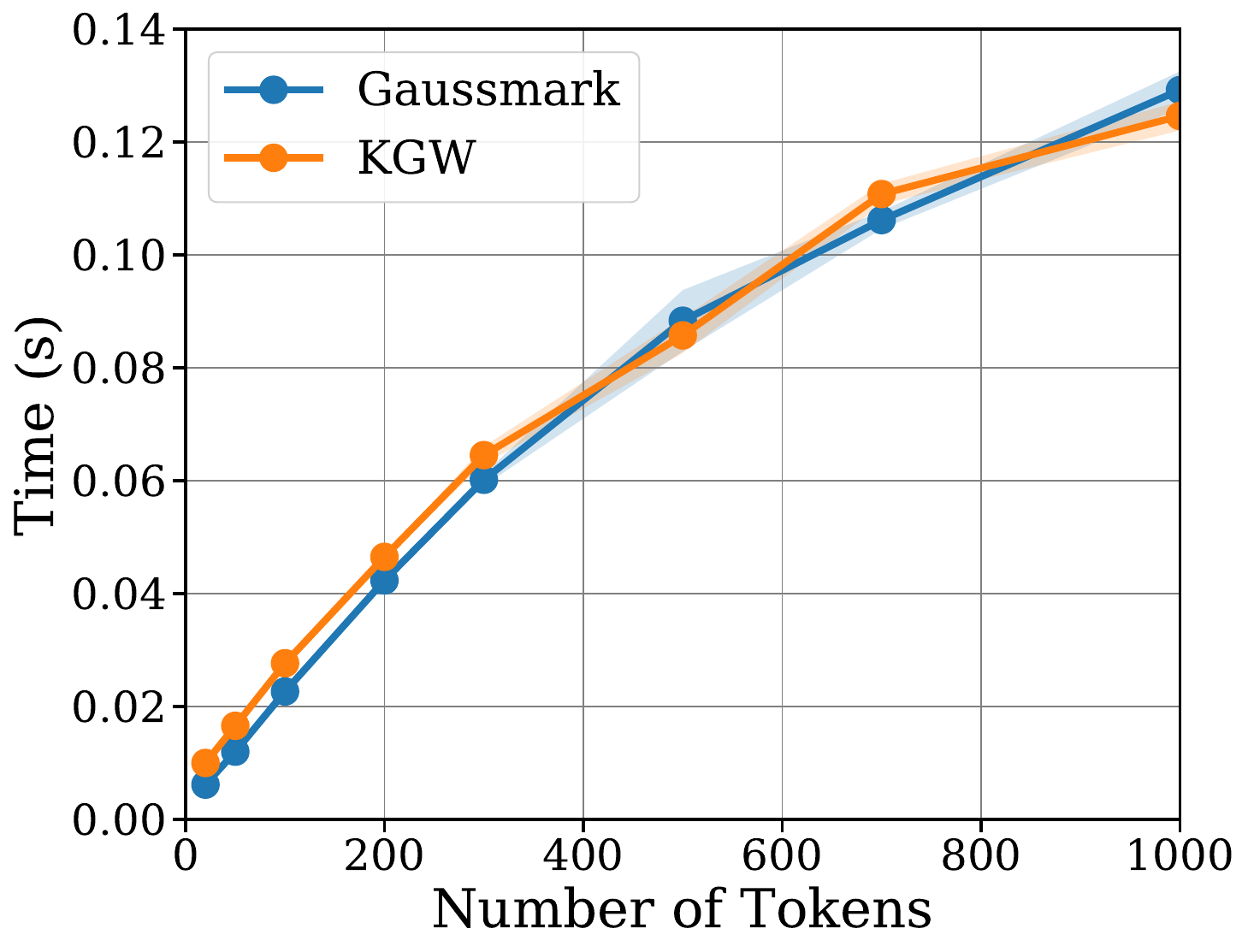}
        \label{sfig:llama-kgw-detection-time} 
    }
    \subfigure[Generation time ($\mistral$).]{
        \includegraphics[width=0.45\textwidth]{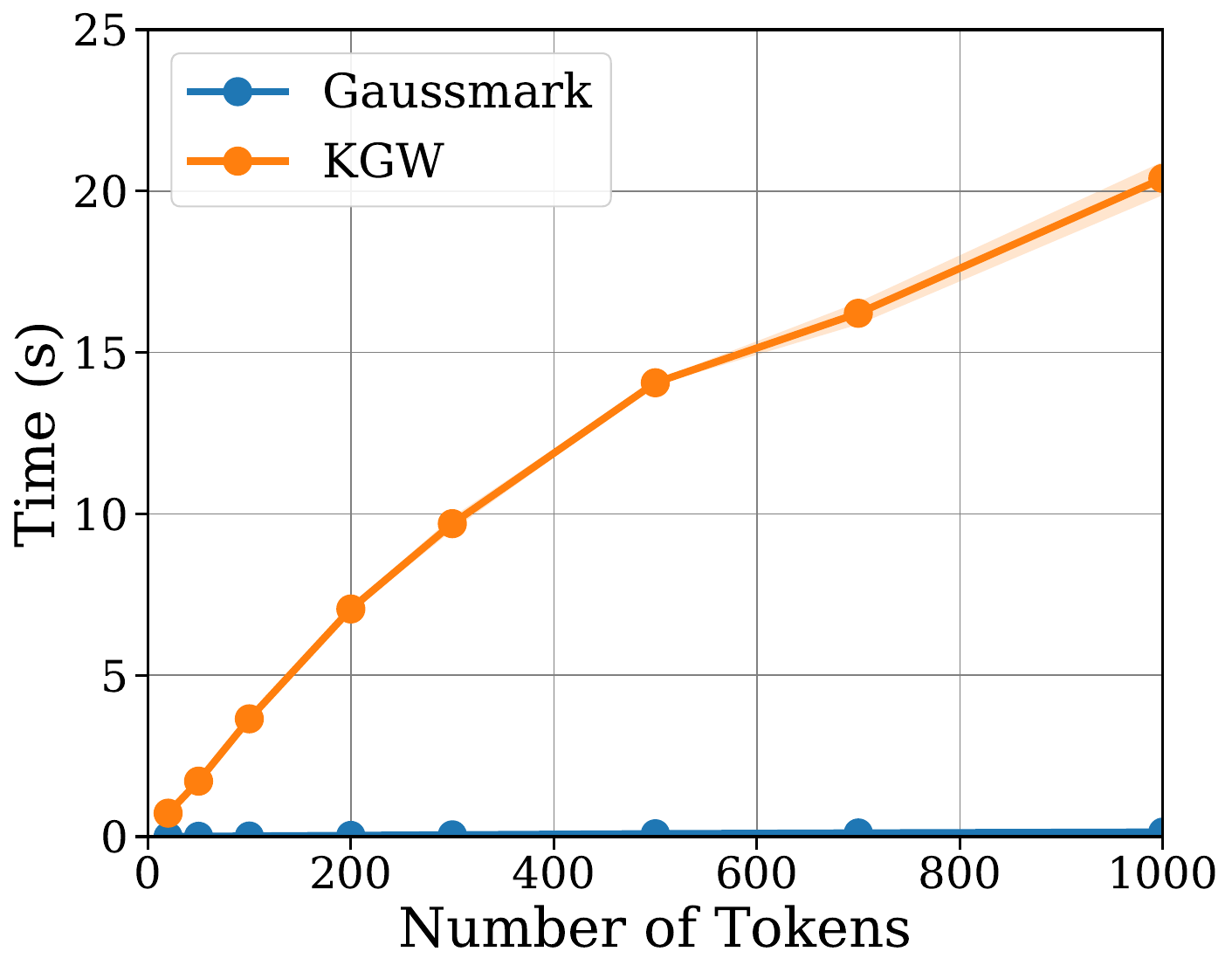}
        \label{sfig:mistral-kgw-generation-time} 
    }
    \hfill \subfigure[Detection time ($\mistral$).]{
        \includegraphics[width=0.45\textwidth]{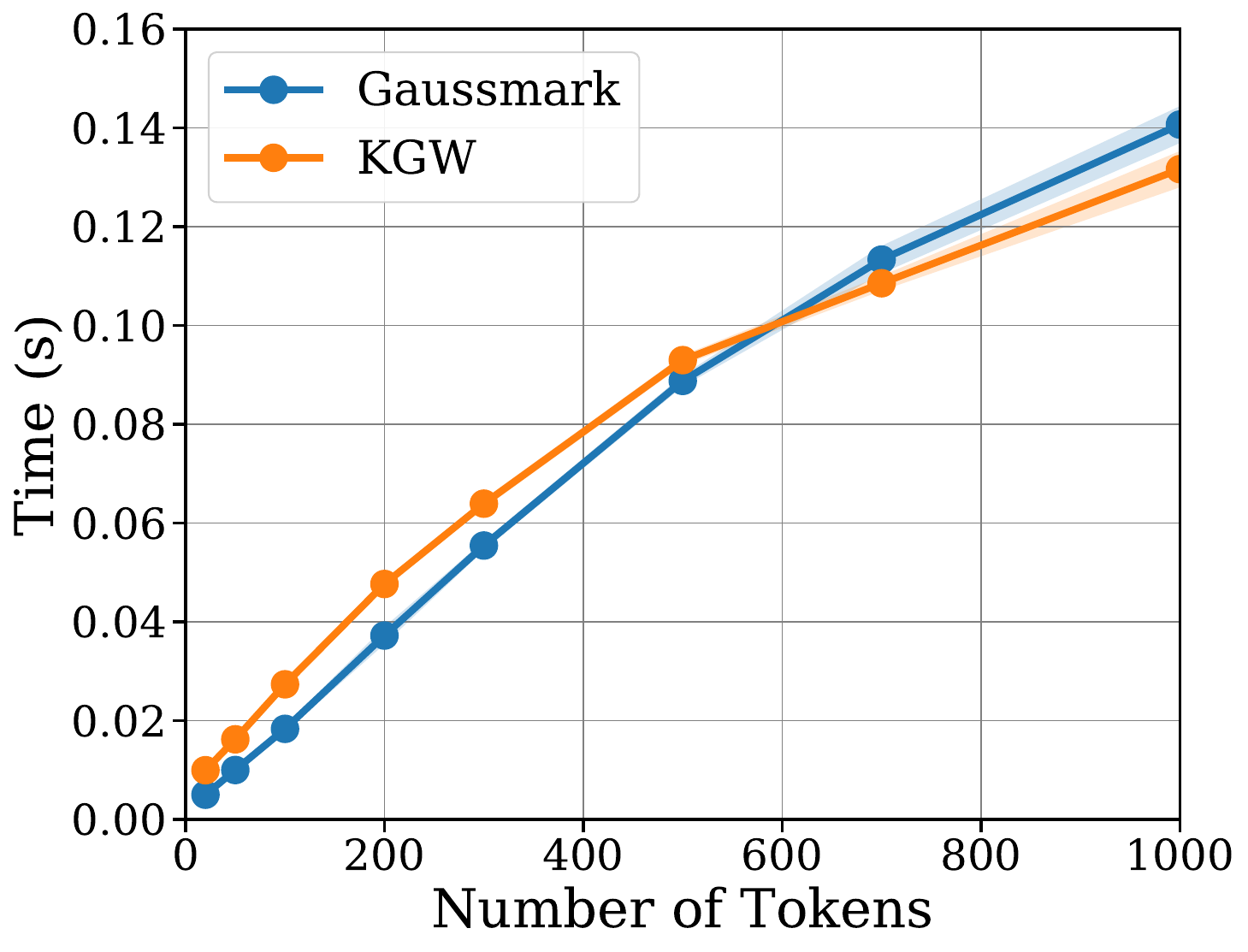}
        \label{sfig:mistral-kgw-detection-time} 
    }
    \subfigure[Generation time ($\phimini$).]{
        \includegraphics[width=0.45\textwidth]{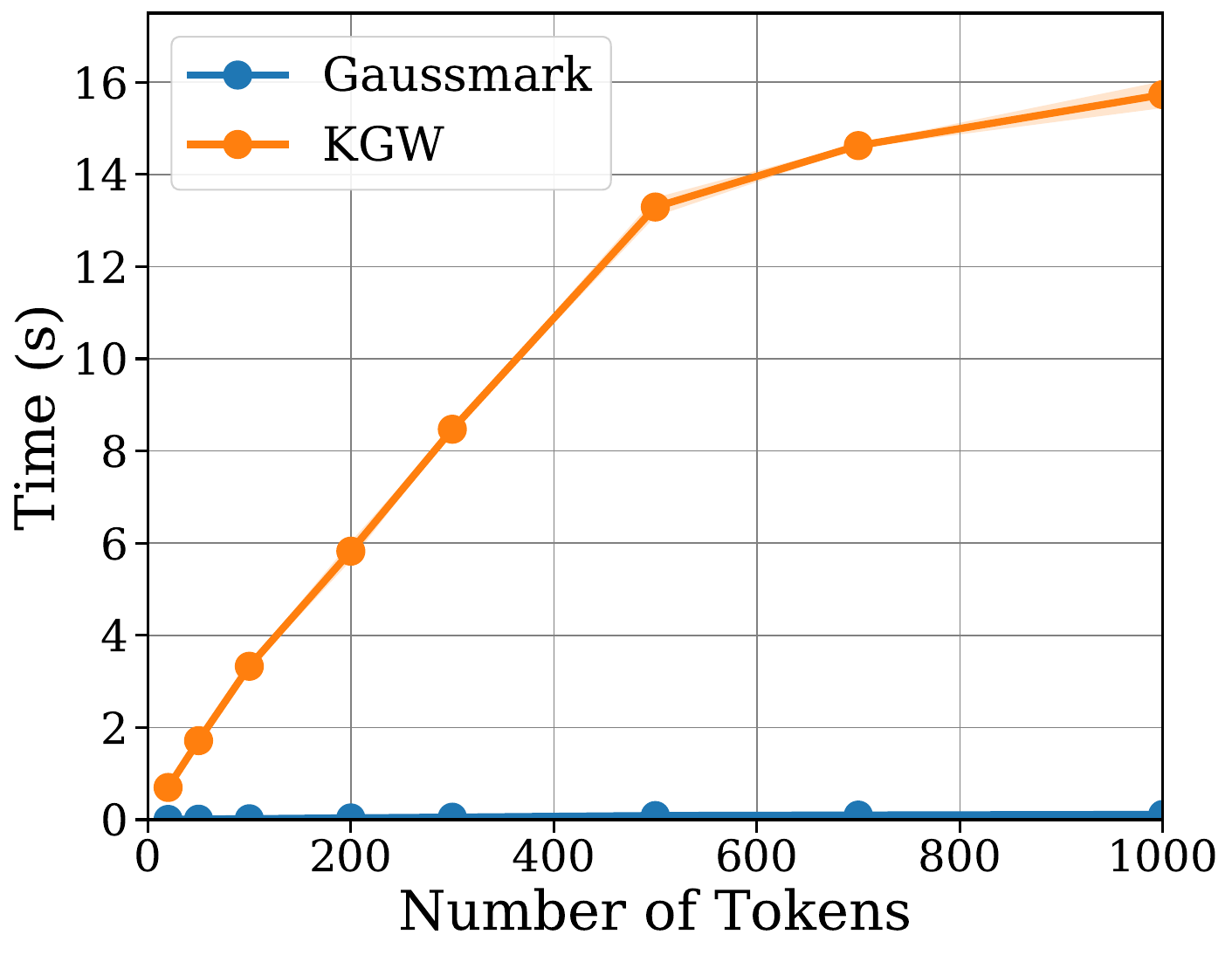} 
        \label{sfig:phi-kgw-generation-time} 
    }
    \hfill \subfigure[Detection time ($\phimini$).]{
        \includegraphics[width=0.45\textwidth]{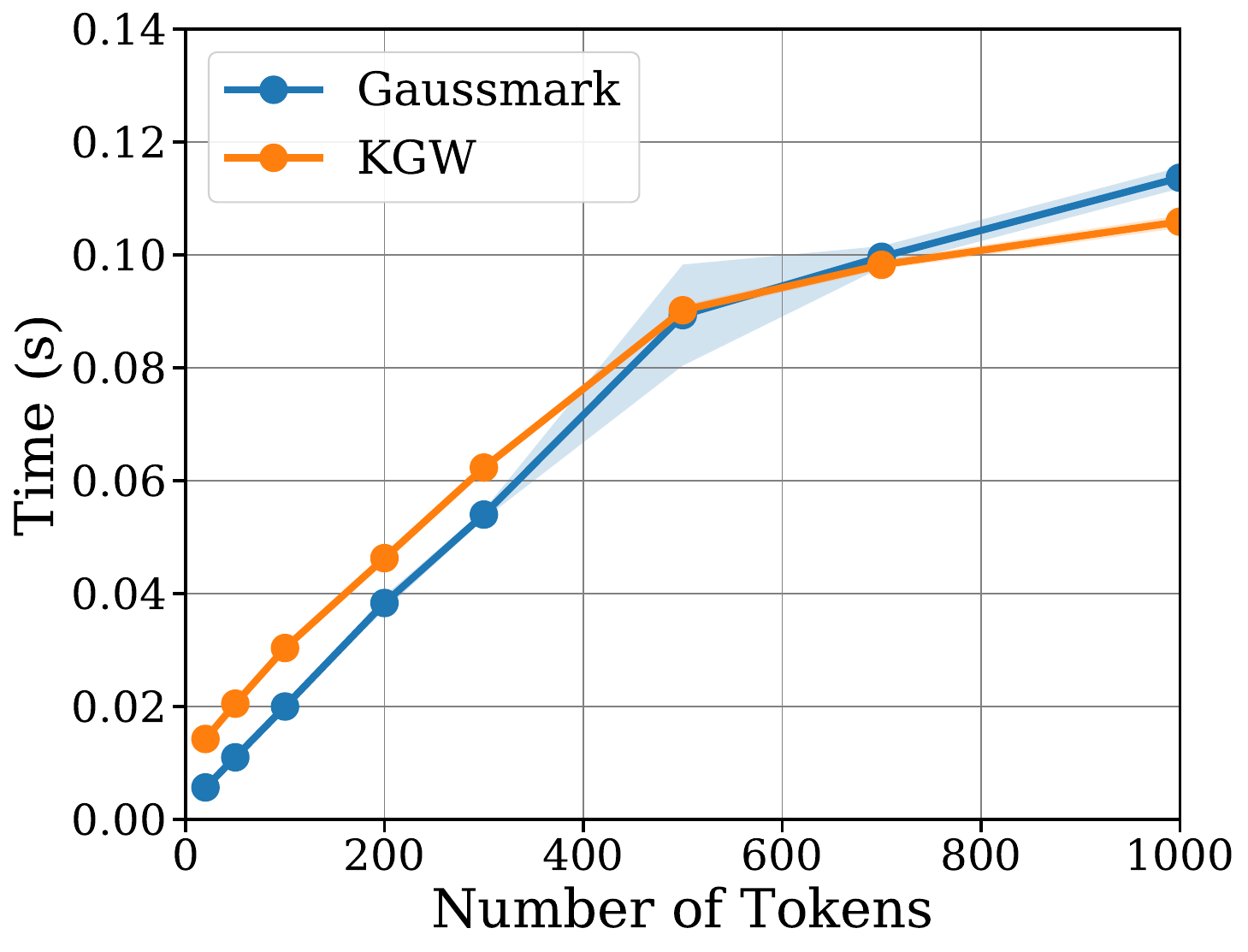}
        \label{sfig:phi-kgw-detection-time} 
    }
    \caption{Comparison between $\gaussmark$ and $\kgw$ generation and detection times averaged across 3 seeds and 100 generations for $\llama$ (a-b), $\mistral$ (c-d), and $\phimini$  (e-f).
    }
    \label{fig:kgw-times}
\end{figure}

\begin{figure}[ht]
    \centering
    \subfigure[Adding random tokens ($\llama$).]{
        \includegraphics[width=0.45\textwidth]{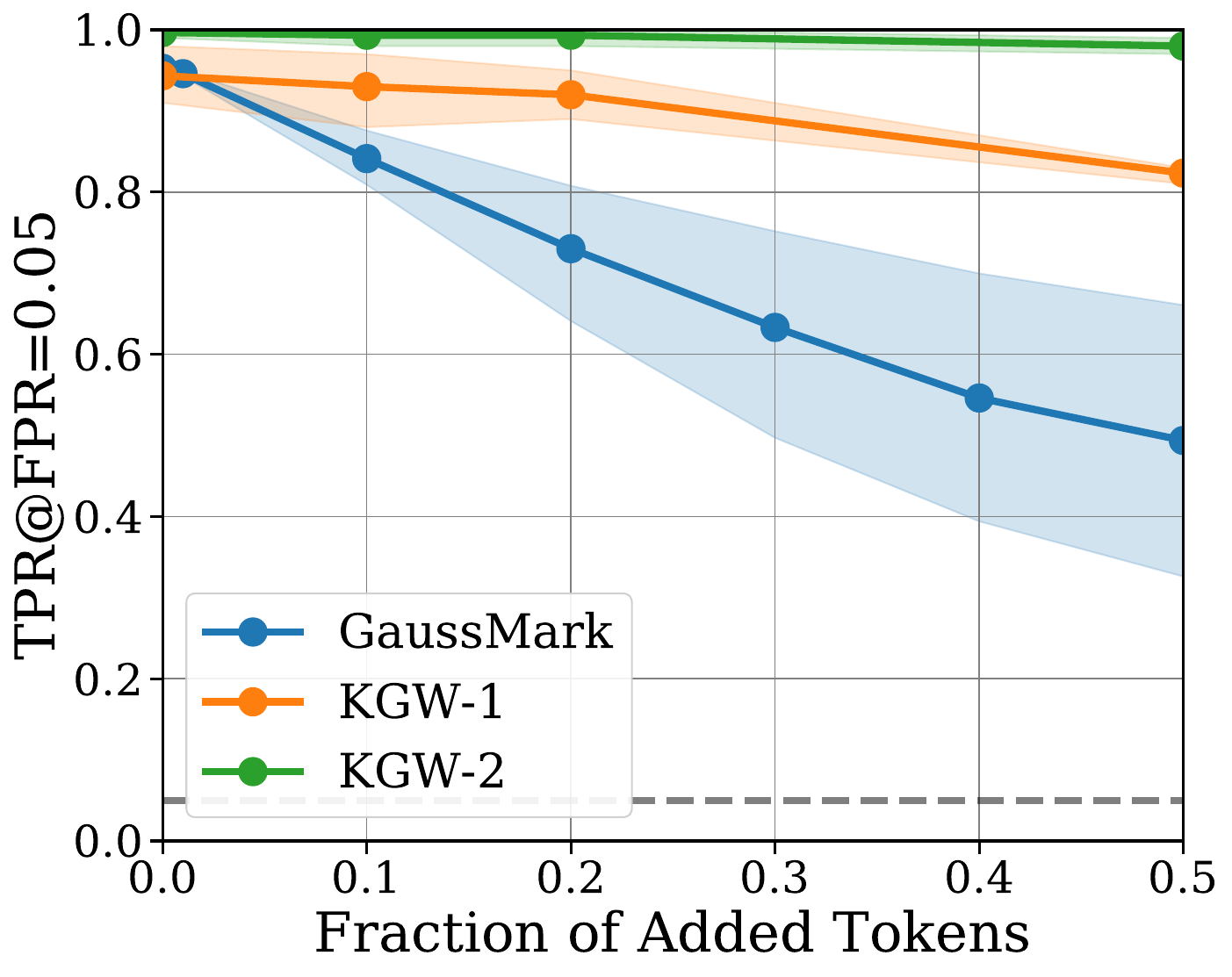}
        \label{sfig:llama-kgw-add-random} 
    }
    \hfill \subfigure[Removing random tokens ($\llama$).]{
        \includegraphics[width=0.45\textwidth]{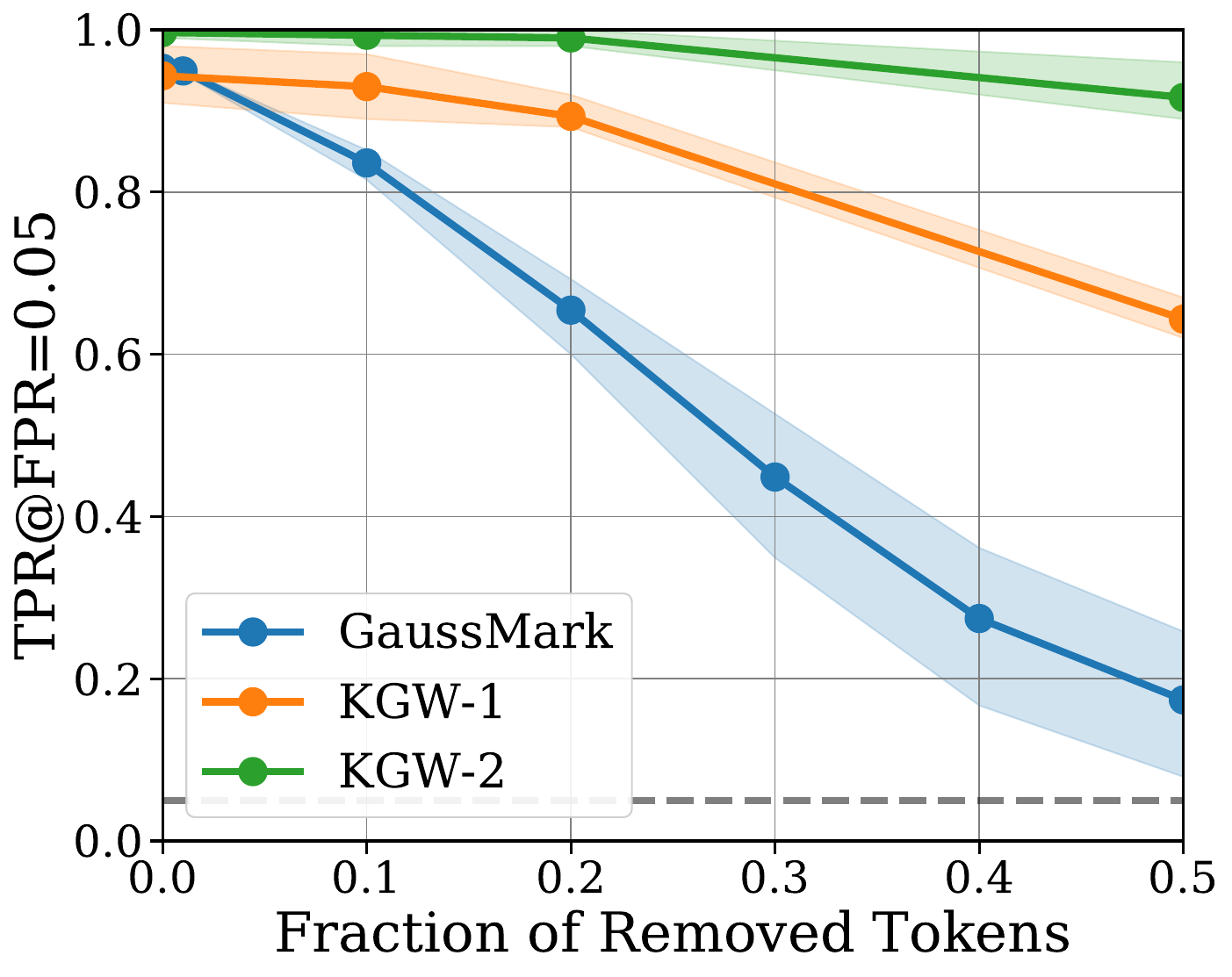}
        \label{sfig:llama-kgw-remove-random} 
    }
    \subfigure[Adding random tokens ($\mistral$).]{
        \includegraphics[width=0.45\textwidth]{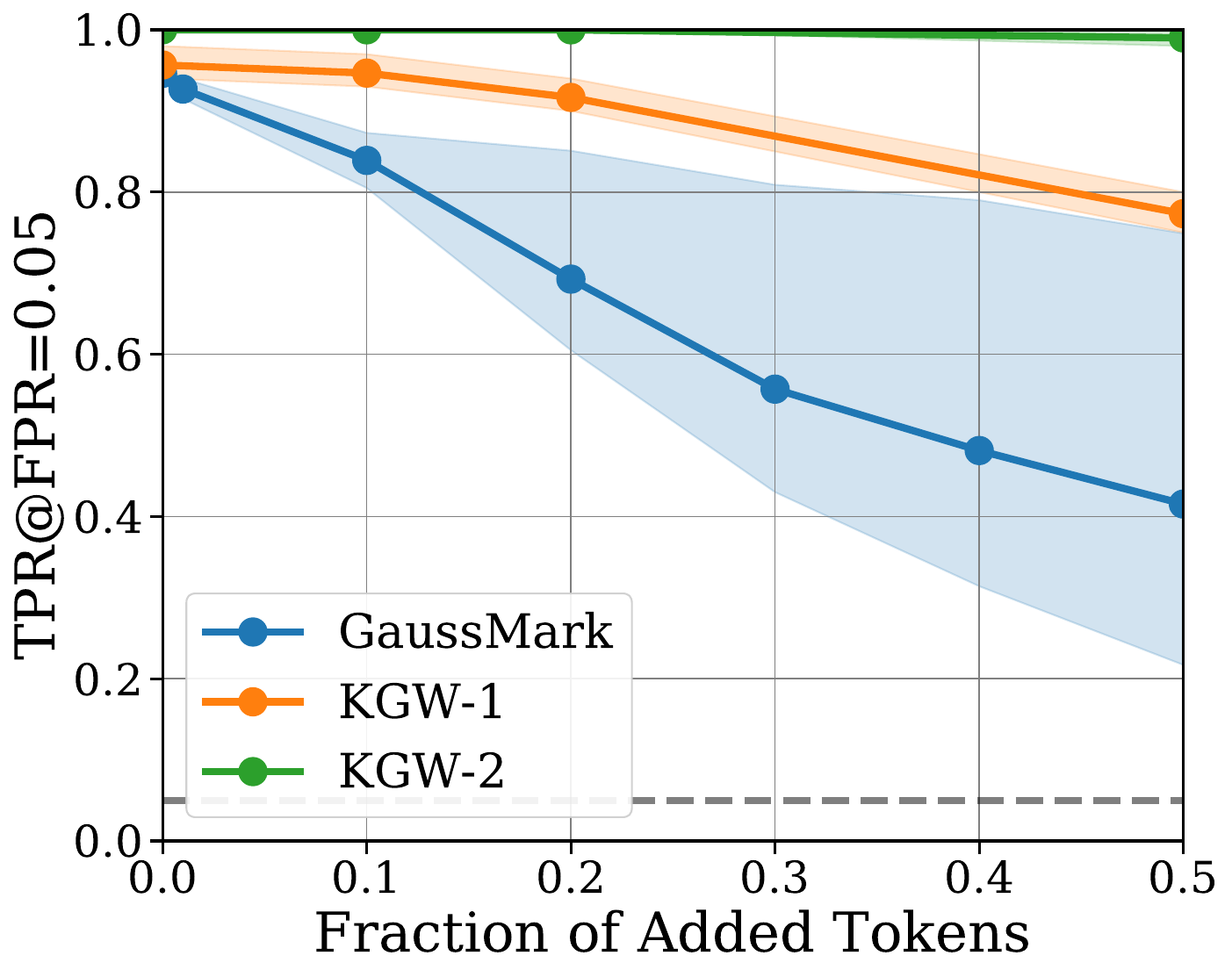}
        \label{sfig:mistral-kgw-add-random} 
    }
    \hfill \subfigure[Removing random tokens ($\mistral$).]{
        \includegraphics[width=0.45\textwidth]{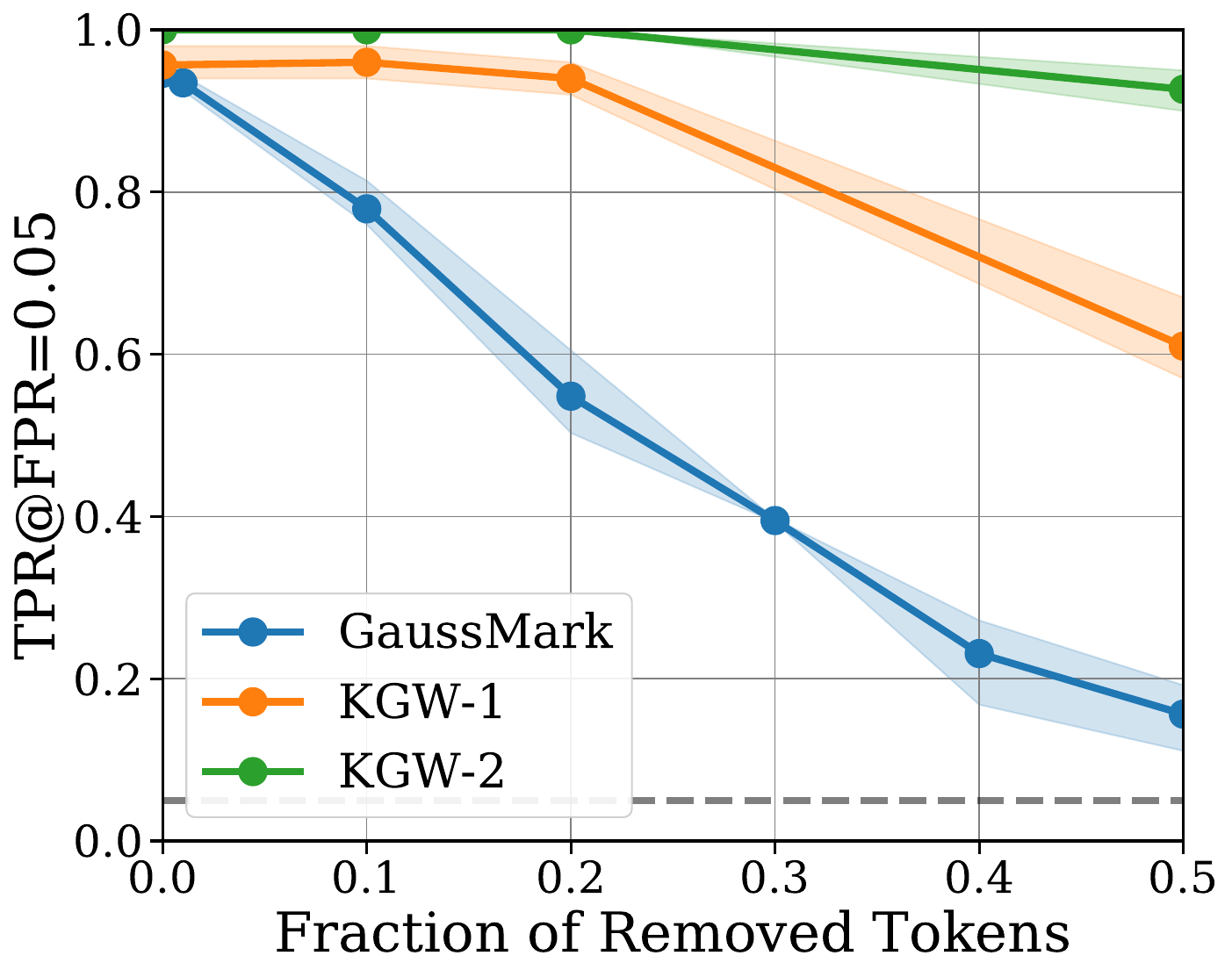}
        \label{sfig:mistral-kgw-remove-random} 
    }
    \subfigure[Adding random tokens ($\phimini$).]{
        \includegraphics[width=0.45\textwidth]{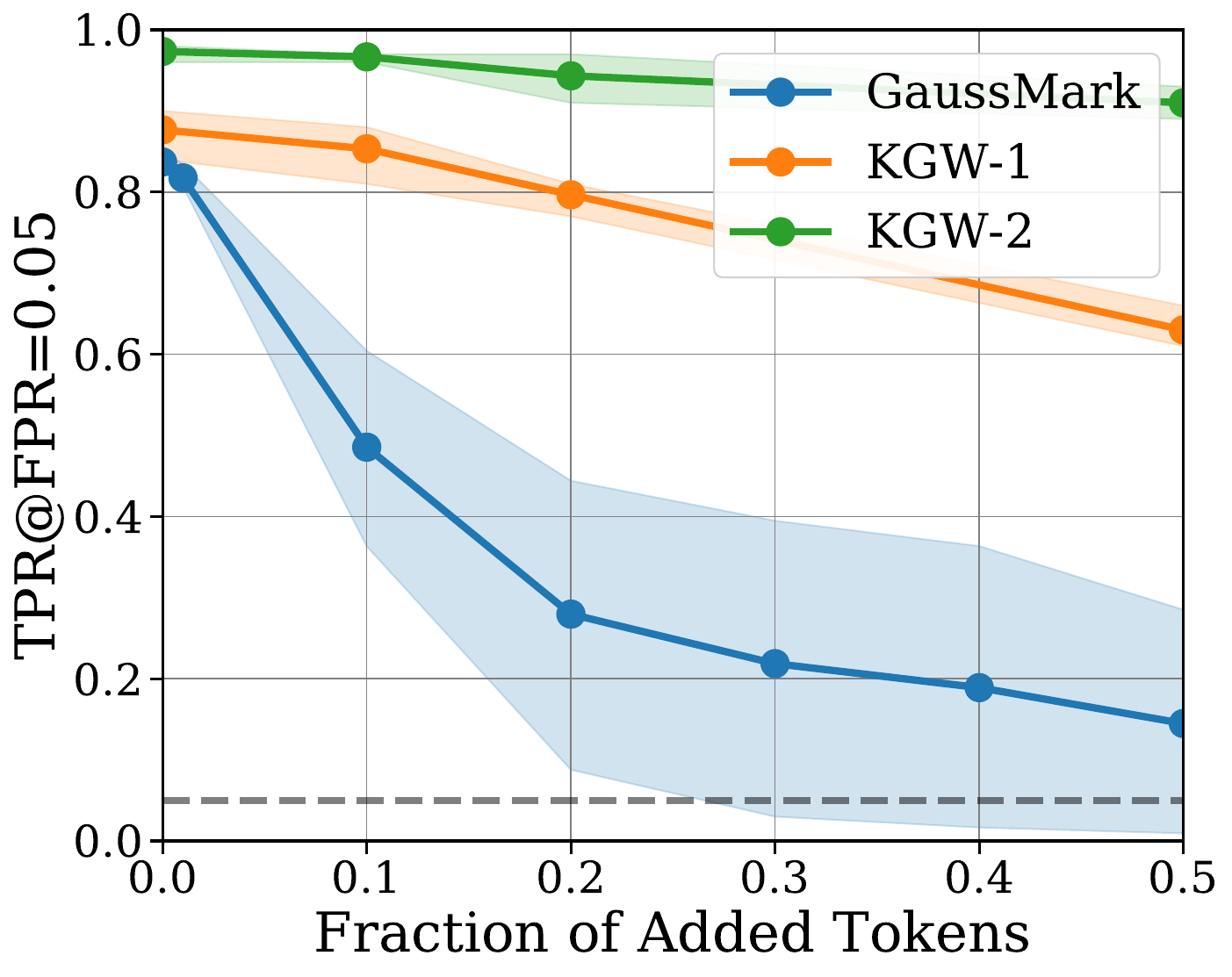}
        \label{sfig:phi-kgw-add-random} 
    }
    \hfill \subfigure[Removing random tokens ($\phimini$).]{
        \includegraphics[width=0.45\textwidth]{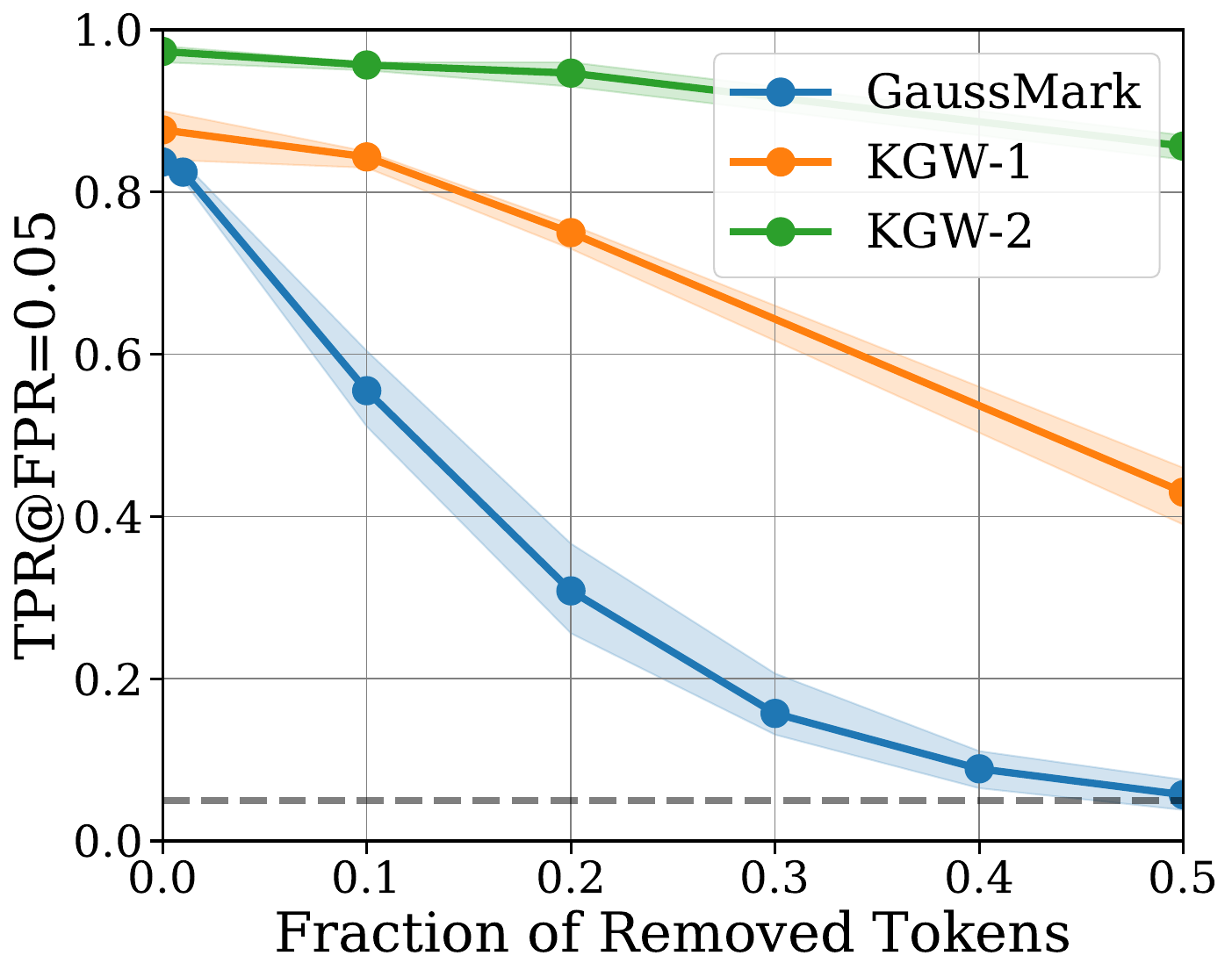}
        \label{sfig:phi-kgw-remove-random} 
    }
    \caption{Comparison between $\gaussmark$ and $\kgw$ robustness to token level corruptions (adding and removing at random points) as measured by detection at FPR = $0.05$ for $\llama$ (a-b), $\mistral$ (c-d), and $\phimini$  (e-f).
    }
    \label{fig:kgw-corruption-tokens}
\end{figure}

\begin{figure}[ht]
    \centering
    \subfigure[Substituting random tokens ($\llama$).]{
        \includegraphics[width=0.45\textwidth]{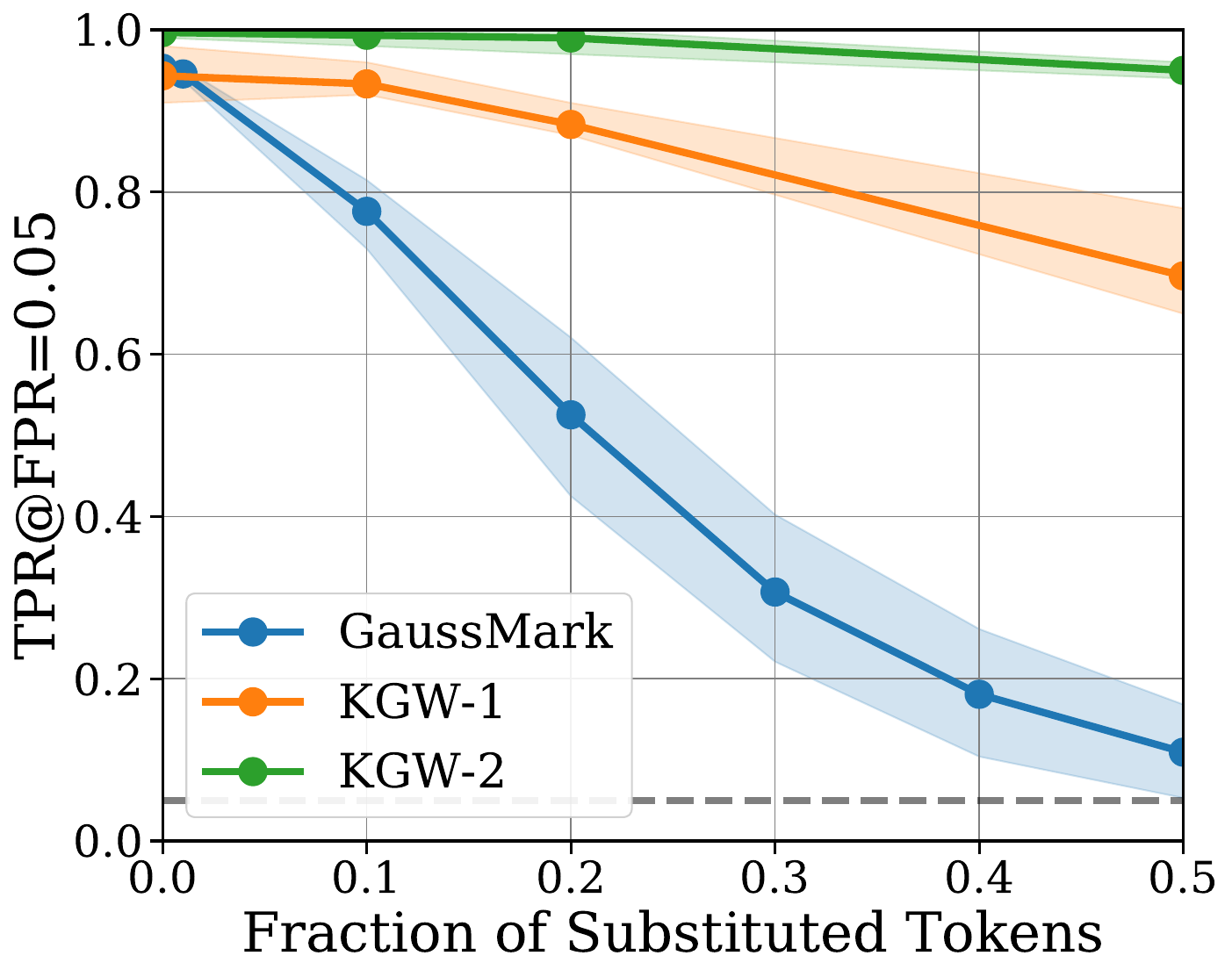}
        \label{sfig:llama-kgw-substitute-random} 
    }
    \hfill \subfigure[ROC after roundtrip translation ($\llama$).]{
        \includegraphics[width=0.45\textwidth]{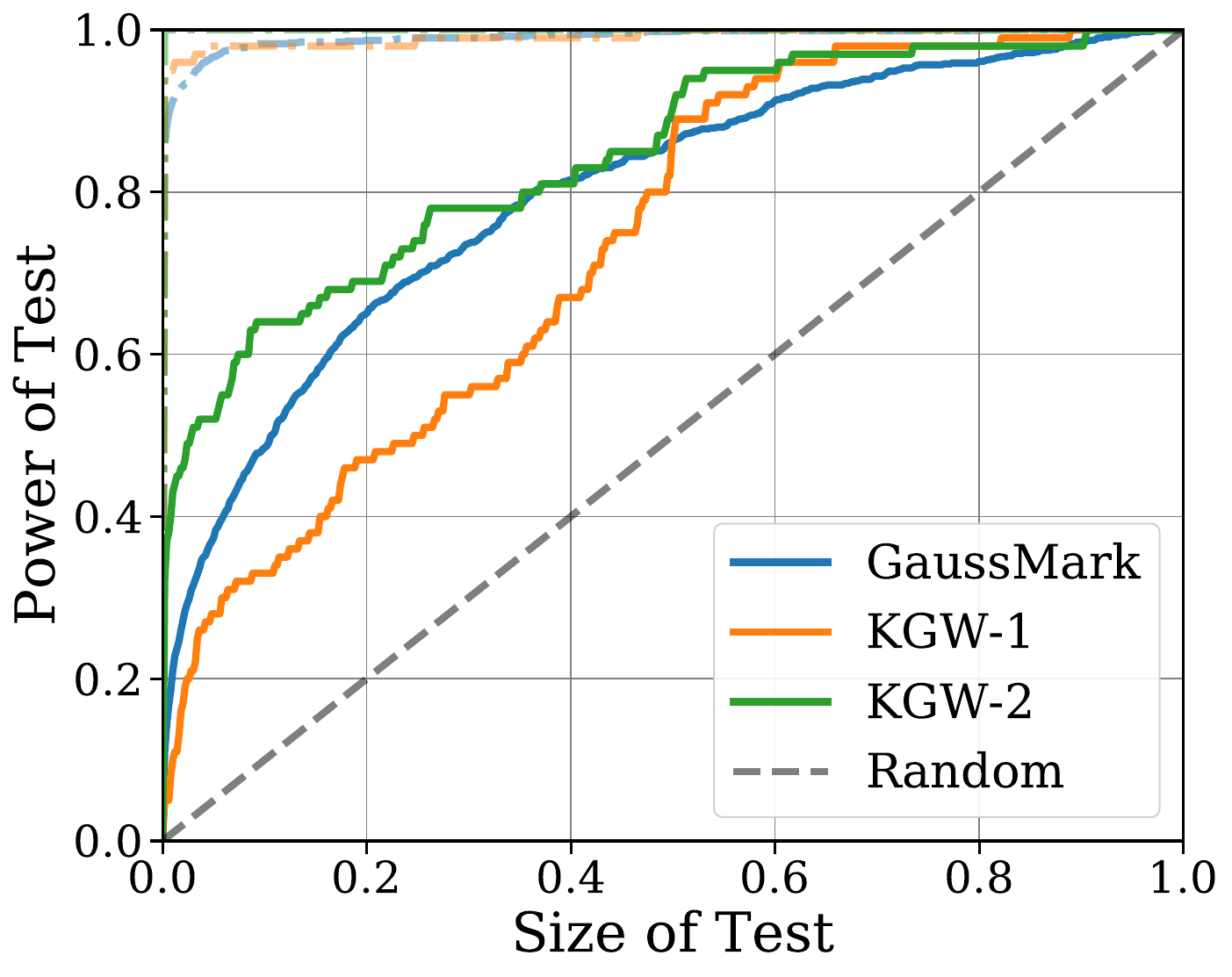}
        \label{sfig:llama-kgw-roundtrip} 
    }
    \subfigure[Substituting random tokens ($\mistral$).]{
        \includegraphics[width=0.45\textwidth]{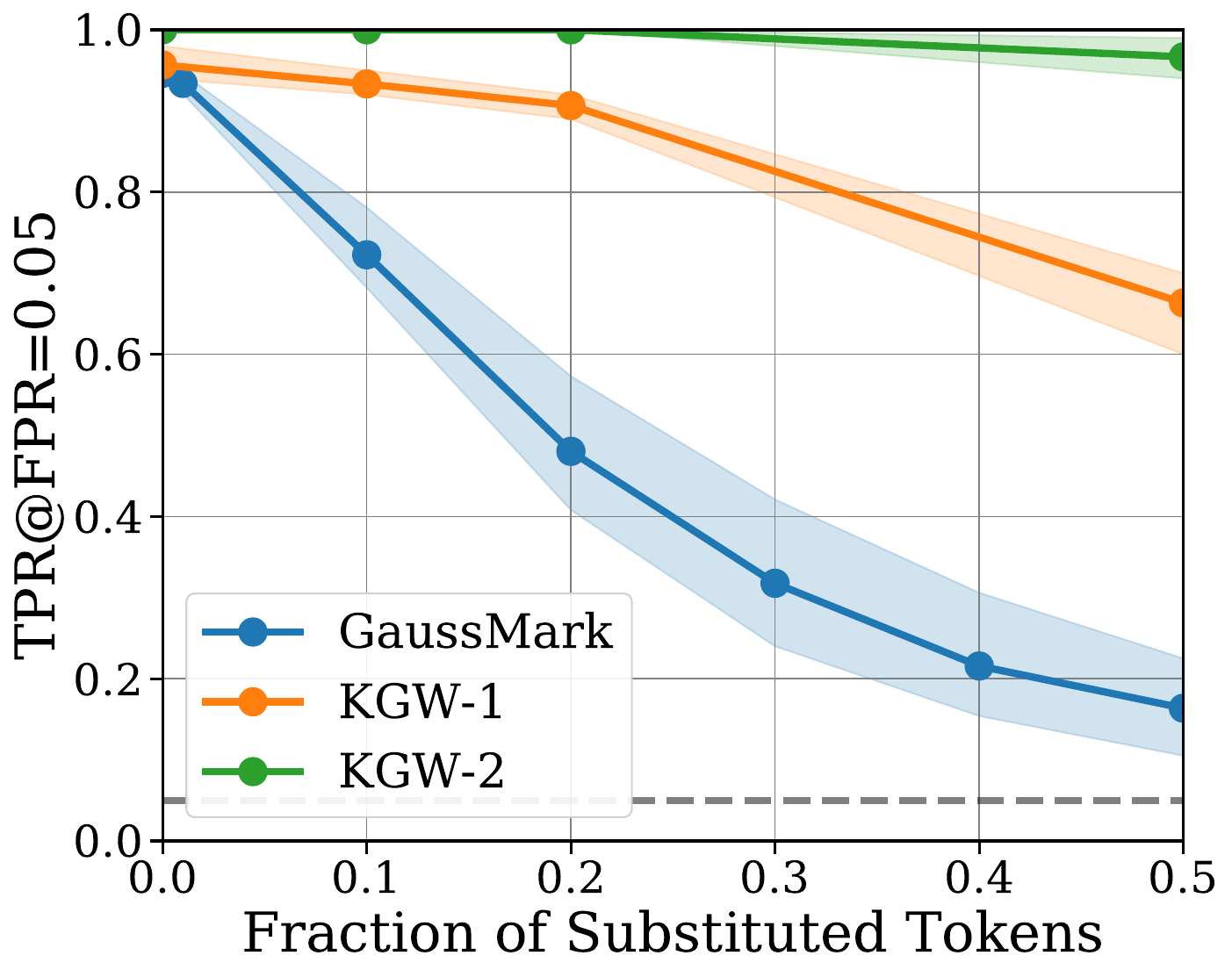}
        \label{sfig:mistral-kgw-substitute-random} 
    }
    \hfill \subfigure[ROC after roundtrip translation ($\mistral$).]{
        \includegraphics[width=0.45\textwidth]{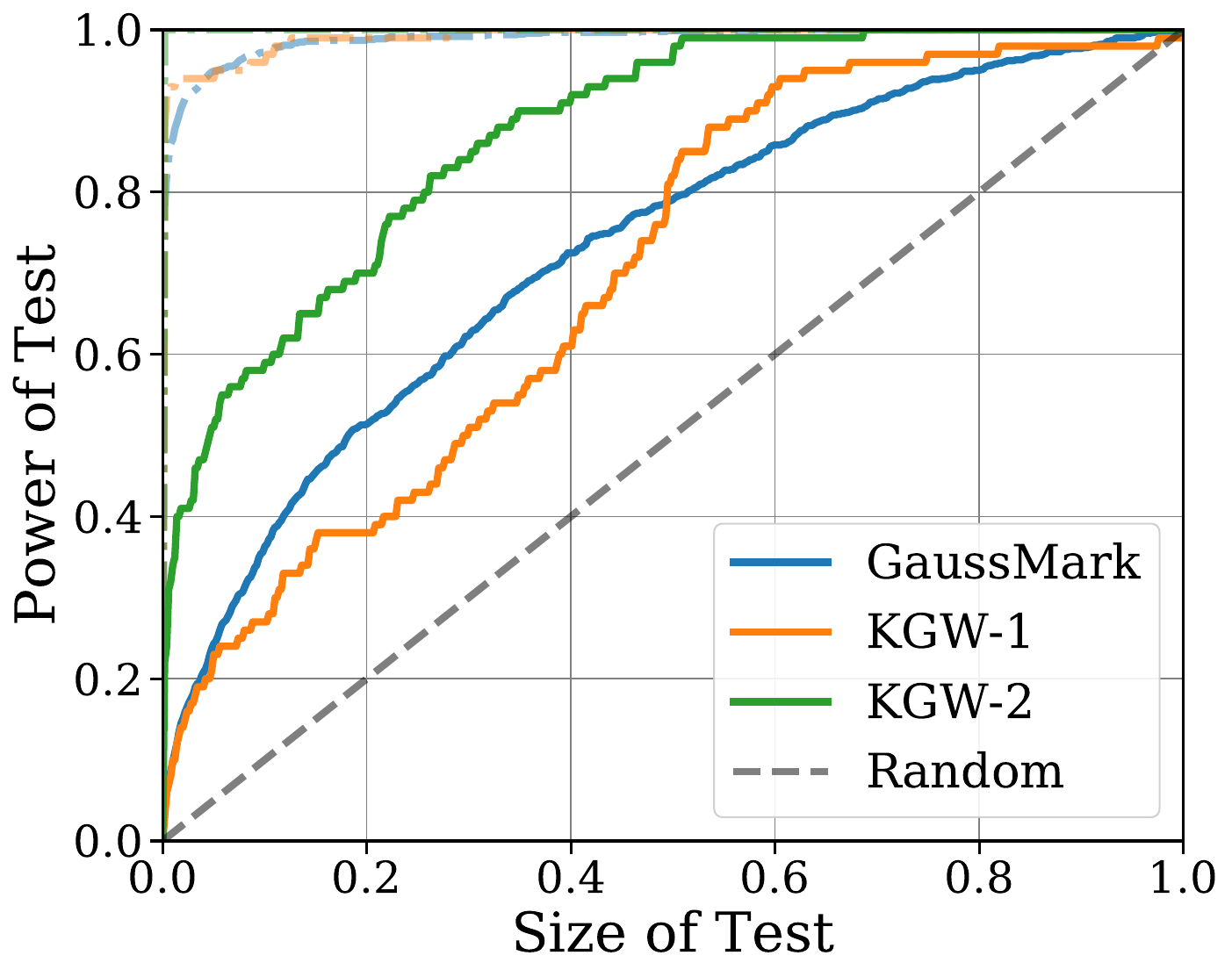}
        \label{sfig:mistral-kgw-roundtrip} 
    }
    \subfigure[Substituting random tokens ($\phimini$).]{
        \includegraphics[width=0.45\textwidth]{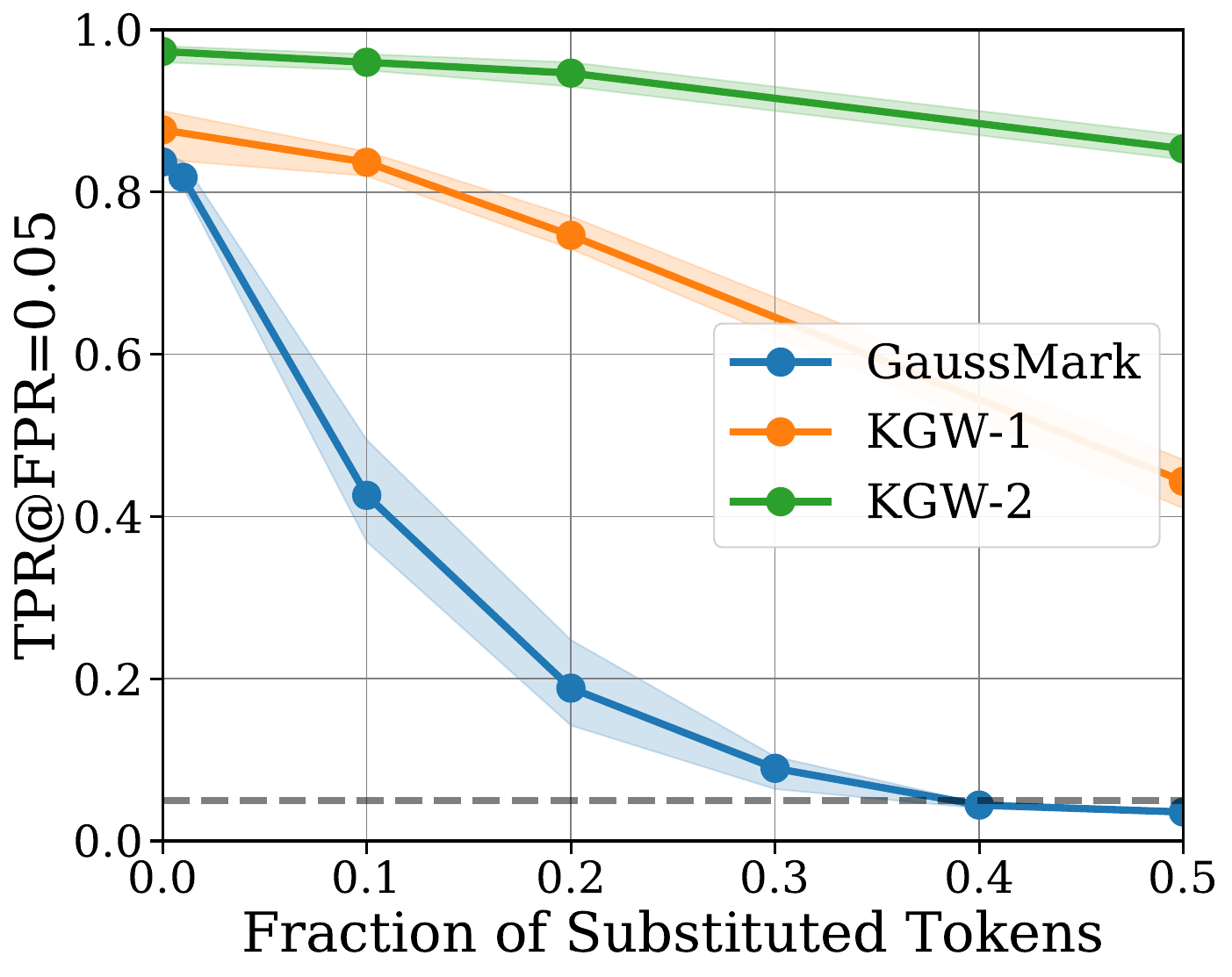}
        \label{sfig:phi-kgw-substitute-random} 
    }
    \hfill \subfigure[ROC after roundtrip translation ($\phimini$).]{
        \includegraphics[width=0.45\textwidth]{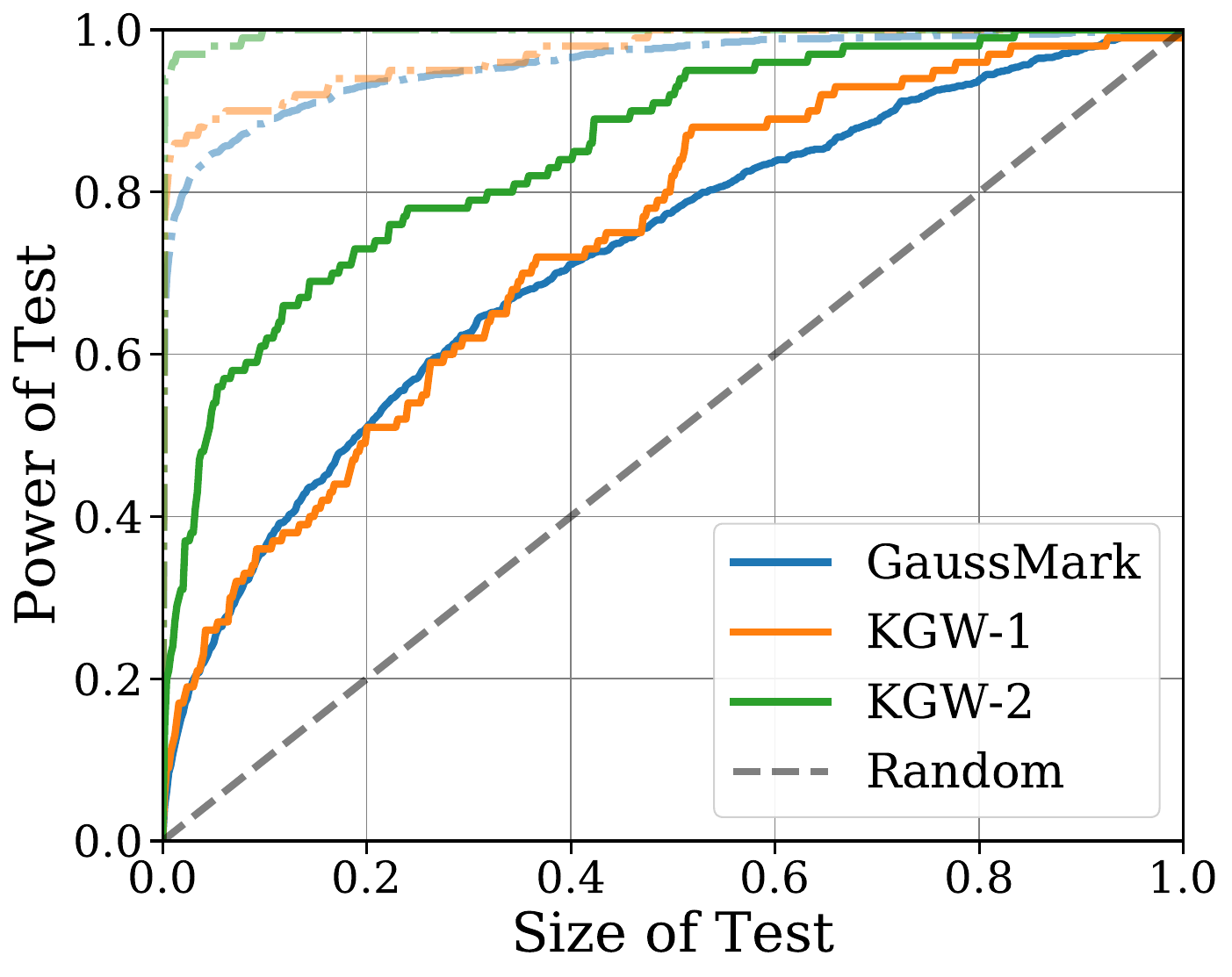}
        \label{sfig:phi-kgw-roundtrip} 
    }
    \caption{Comparison between $\gaussmark$ and $\kgw$ robustness to token level corruptions (substituting tokens) and roundtrip translation through French for $\llama$ (a-b), $\mistral$ (c-d), and $\phimini$  (e-f).
    }
    \label{fig:kgw-corruption-paraphrase}
\end{figure}

\section{Theoretical Analysis}

In \pref{app:motivation_appendix}, we provide the missing details from Section 3, including a formal proof of the bound on the expected deviation under the null hypothesis for the Gaussian example. We then present the proof of \pref{prop:closest_measure}, which highlights the form of the hypothesis in \(\hnull\) that is closest to \(\halt\). In \pref{app:proof_level}, we establish the statistical guarantees on the level of the test (\pref{prop:size_computation}). Next, \pref{app:proofs} contains proofs for the claims regarding the power of the test, with results on the linear softmax distribution detailed in \pref{app:linear_softmax}. Additionally, we discuss the role of entropy in ensuring the diversity of \(\crl{\grad \log p_\theta(y)}_{y \in \cY}\). In \ref{app:strongly_log_concave}, we further relax the linear softmax assumption and consider strongly log-concave distributions. Finally, in \ref{app:robustness}, we introduce a variant of \(\gaussmarkd\) and \(\gaussmarkg\) that is theoretically robust to random corruptions.

\subsection{Missing Details from \pref{sec:theory}}  \label{app:motivation_appendix}  

\begin{lemma} 
\label{lem:gaussian_details} 
Let \(\xi \sim \cN(0, \sigma^2 \bbI_d)\) and \(p_\theta(y \mid{} x) \sim \cN(\theta, \bbI_d)\). Then, with probability at least $1 - \delta$,
\begin{align}
    \psi(y, \xi \mid{} x) \gtrsim \frac{\sigma^2}{\sqrt{1 + \sigma^2}} \cdot \sqrt{d} 
\end{align}
as long as $\sigma \sqrt{d} \gtrsim \sqrt{\log(1/\delta)}$.  In particular, if $\sigma \sqrt{d} \gtrsim \sqrt{\log(\nicefrac{1}{\min(\alpha, \beta)})}$, then $\psi$ yields a test with level $\alpha$ and power $1-\beta$.
\end{lemma}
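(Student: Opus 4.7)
The plan is to compute the test statistic explicitly for the Gaussian location family, then reduce the problem to standard concentration of $\chi^2$ and Gaussian quadratic forms. First, since $p_\theta(y \mid x) = \cN(\theta, \bbI_d)$, a direct computation gives $\nabla_\theta \log p_\theta(y \mid x) = y - \theta$. Under $\halt$, $y \sim \cN(\theta + \xi, \bbI_d)$, so we can write $y - \theta = \xi + z$ where $z \sim \cN(0, \bbI_d)$ is independent of $\xi$. Substituting into the definition of $\psi$ yields
\begin{align}
    \psi(y, \xi \sep x) = \frac{\langle \xi, \xi + z\rangle}{\sigma \, \nrm{\xi + z}}.
\end{align}

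Next, to decouple the dependence between numerator and denominator, I would use the Gaussian regression decomposition $\xi = \tfrac{\sigma^2}{1+\sigma^2}(\xi + z) + \eta$, where $\eta$ is independent of $\xi + z$ and distributed as $\cN\!\prn*{0, \tfrac{\sigma^2}{1+\sigma^2} \bbI_d}$. Substituting, the test statistic splits as
\begin{align}
    \psi(y, \xi \sep x) = \frac{\sigma}{1 + \sigma^2}\nrm{\xi + z} + \frac{\langle \eta, \xi + z\rangle}{\sigma \nrm{\xi + z}}.
\end{align}
The first term I will lower-bound using the fact that $\nrm{\xi + z}^2 / (1+\sigma^2) \sim \chi^2_d$, together with the standard Laurent--Massart tail giving $\nrm{\xi+z} \ge \sqrt{(1+\sigma^2) d / 2}$ with probability at least $1 - \delta/2$ when $d \gtrsim \log(1/\delta)$. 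This yields a deterministic lower bound of order $\sigma \sqrt{d} / \sqrt{1 + \sigma^2}$ on the first term.

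For the second term, conditional on $\xi + z$, the inner product $\langle \eta, (\xi + z)/\nrm{\xi + z}\rangle$ is a zero-mean Gaussian with variance $\sigma^2/(1+\sigma^2)$ by the independence of $\eta$ and $\xi + z$. Applying a standard Gaussian tail bound, with probability at least $1 - \delta/2$ this cross term is bounded in absolute value by $\sqrt{\log(2/\delta)} / \sqrt{1 + \sigma^2}$. Combining via a union bound and absorbing constants, whenever $\sigma \sqrt{d} \gtrsim \sqrt{\log(1/\delta)}$ the leading term dominates the cross term and the advertised lower bound on $\psi(y, \xi \sep x)$ follows. The final claim on level and power is then immediate: level $\alpha$ was already established unconditionally in \pref{prop:size_computation}, and choosing $\delta = \beta$ and invoking $\Phi^{-1}(1-\alpha) \lesssim \sqrt{\log(1/\alpha)}$ shows that $\psi$ exceeds $\Phi^{-1}(1-\alpha)$ under $\halt$ with probability at least $1 - \beta$ whenever $\sigma \sqrt{d} \gtrsim \sqrt{\log(1/\min(\alpha,\beta))}$.

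The only real subtlety is the coupling between the numerator and denominator of $\psi$ (both contain $\xi$); the orthogonal decomposition of $\xi$ relative to $\xi + z$ is precisely what resolves this and reduces everything to independent Gaussian and chi-squared concentration. Everything else is routine algebra and application of standard tail bounds.
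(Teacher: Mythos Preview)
Your proposal is correct and takes a genuinely different route from the paper. The paper bounds the numerator $\langle\xi,\xi+z\rangle = \|\xi\|^2 + \langle\xi,z\rangle$ and the denominator $\sigma\|\xi+z\|$ separately, invoking four concentration statements (chi-squared lower tail for $\|\xi\|^2$, norm upper bound for $\|\xi\|$, a conditional Gaussian bound for $\langle\xi,z\rangle$ given $\xi$, and a triangle-inequality upper bound on $\|\xi+z\|$), then takes the ratio. Your Gaussian regression decomposition $\xi = \tfrac{\sigma^2}{1+\sigma^2}(\xi+z) + \eta$ with $\eta\perp(\xi+z)$ is the more elegant move: it makes the ``signal'' term $\tfrac{\sigma}{1+\sigma^2}\|\xi+z\|$ and the ``noise'' term $\tfrac{1}{\sigma}\langle\eta,(\xi+z)/\|\xi+z\|\rangle$ \emph{independent}, so only two tail bounds are needed (one $\chi^2_d$ lower tail and one scalar Gaussian tail). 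This is exactly the right way to resolve the numerator/denominator coupling you flag, and it also makes transparent that the leading-order size of $\psi$ under $\halt$ is $\tfrac{\sigma}{\sqrt{1+\sigma^2}}\sqrt{d}$. One small point: your chi-squared step asks for $d\gtrsim\log(1/\delta)$, while the stated hypothesis is $\sigma\sqrt{d}\gtrsim\sqrt{\log(1/\delta)}$; these coincide in the intended regime $\sigma\le 1$, and in any case using the sharper norm bound $\sqrt{\chi^2_d}\ge \sqrt{d}-\sqrt{2\log(2/\delta)}$ removes the discrepancy.
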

\begin{proof} 
Note that $\nabla \log p_{\theta}(y \mid{} x) = y - \theta$.  Letting $z \sim \cN(0, \bbI_d)$, we see that under $\halt$, we may take $y = \theta + \xi + z$.  Thus, 
\begin{align}
    \inprod{\xi}{\nabla \log p_{\theta}(y \mid{} x)} &= \norm{\xi}^2 + \inprod{\xi}{z}. 
\end{align}
By classical Gaussian concentration (e.g. \citet[Theorem 3.1.1]{vershynin2018high}), it holds that with probability at least $1 - \delta$
\begin{align}
    \norm{\xi}^2 \geq \sigma^2 d - C \sqrt{\sigma^2 d \log(1/\delta)} \qquad \text{and} \qquad \norm{\xi} \leq \sqrt{\sigma^2 d} + C \sqrt{\log(1/\delta)}. 
\end{align}
Noting that $\inprod{\xi}{z}$ is distributed as a centred one-dimensional Gaussian with variance $\norm{\xi}^2$, we see that 
\begin{align}
    \inprod{\xi}{\nabla \log p_{\theta}(y \mid{} x)} \geq \sigma^2 d - C \sqrt{\sigma^2 d } \cdot \log(1/\delta),
\end{align}
where we allow the universal constant $C$ to change from line to line.  On the other hand, it holds by the same concentration inequality that with probability at least $1 - \delta$, 
\begin{align}
    \norm{\xi + z} \leq \norm{\xi} + \norm{z} \leq \sqrt{C (1 + \sigma^2) d \log(1/\delta)} .
\end{align}
Combining these events and taking a union bound, we see that with probability at least $1 - \delta$,
\begin{align}
    \psi(y, \xi \mid{} x) \geq \frac{\sigma^2 d - C \sqrt{\sigma^2 d } \cdot \log(1/\delta)}{\sqrt{C (1 + \sigma^2) d \log(1/\delta)}} \geq c \frac{\sigma^2}{\sqrt{1 + \sigma^2}} \cdot \sqrt{d} -  C \sqrt{\frac{\sigma^2 \log(1/\delta)}{1 + \sigma^2}}.
\end{align}    
Note that if $\sigma d \gtrsim \sqrt{\log(1/\delta)}$, the first term dominates the second concludes the proof of the first statement.

For the second statement, note that under the null hypothesis, by \pref{prop:size_computation}, $\psi$ is normally distributed and thus we may set $\tau_\alpha \lesssim \sqrt{\log(1/\alpha)}$.  The result follows.
\end{proof}

\noindent 
Next, we outline the proof of \pref{prop:closest_measure}, which characterizes the hypothesis in $\hnull$ that is nearest to 
$\halt$.

\begin{proof}[Proof of \pref{prop:closest_measure}]
    Let \(x \in \cX\) be fixed. Further, let $\nu \in \Delta(\rr^d)$ and  $\mu \in \Delta(\cY)$ be arbitrary. For any $\xi \in \rr^d$, we have 
    \begin{align}
       \hspace{2in} &\hspace{-2in} \kld\left( p_{\theta + \xi}(\cdot \mid{} x)\,||\, \mu \right) - \kld\left( p_{\theta + \xi}(\cdot \mid{} x)\,||\, \En_{\xi' \sim \nu}\brk*{ p_{\theta + \xi'}(\cdot \mid{} x) } \right) \\ 
        &= \En_{y \sim p_{\theta+\xi}}\brk*{ \log\left( \frac{ \En_{\xi' \sim \nu}\brk*{ p_{\theta + \xi'}(y \mid{} x) }}{\mu(y)} \right) } \\
        &= \En_{p_{\theta + \xi}} \brk*{ \frac{p_{\theta + \xi}(y \mid{} x)}{ \En_{\xi' \sim \nu}\brk*{ p_{\theta + \xi'}(y \mid{} x) }} \log\left( \frac{ \En_{\xi' \sim \nu}\brk*{ p_{\theta + \xi'}(y \mid{} x) }}{\mu(y)} \right) }.
    \end{align}
    Taking the expectation with respect to $\xi \sim \nu$ and noting that $\xi \sim \nu$ for both $\prob \in \hnull$ and $\halt$, if $\prob = \nu \otimes \mu$ and $\probstar =   \nu \otimes \En_{\xi' \sim \nu}\brk*{ p_{\theta + \xi'}(\cdot \mid{} x)}$, then 
    \begin{align}
        \kld\left( \halt \,||\, \prob \right) - \kld\left( \halt \,||\, \probstar \right)  &= \En_{\xi \sim \nu} \brk*{ \kld\left( p_{\theta + \xi}(\cdot \mid{} x)\,||\, \mu \right) - \kld\left( p_{\theta + \xi}(\cdot \mid{} x)\,||\, \En_{\xi' \sim \nu}\brk*{ p_{\theta + \xi'} } \right) } \\
        &= \En_{\xi \sim \nu}\brk*{ \En_{ p_{\theta + \xi}} \brk*{ \frac{p_{\theta + \xi}(y \mid{} x)}{ \En_{\xi' \sim \nu}\brk*{ p_{\theta + \xi'}(y \mid{} x) }} \log\left( \frac{ \En_{\xi' \sim \nu}\brk*{ p_{\theta + \xi'}(y \mid{} x) }}{\mu(y)} \right) } } \\
        &= \kld\left( \En_{\xi' \sim \nu}\brk*{ p_{\theta + \xi'}(\cdot \mid{} x) } \,||\, \mu \right) \geq 0,
    \end{align}
    where the first equality follows from the chain rule.
\end{proof}

\subsection{Level of \(\gaussmark\)}  
\label{app:proof_level}
In this section, we provide the proof of \pref{prop:size_computation} that demonstrates that  $\gaussmark$ produces statistically valid p-values and qualifies as an \emph{exact test}, with almost no underlying assumptions. 

\label{app:level_proof} 
\begin{proof}[Proof of \pref{prop:size_computation}] 
Recall that under the null hypothesis $\hnull$, for any \(x \in \cX\), the key and the generated text \(y\) are independent of each other, i.e., \((\xi, y) \sim \cN(0, \sigma^2 \indicator_d) \otimes q\) for some \(q \in \Delta(\cY)\). Thus, the level of the test is  
    \begin{align}
        \Pr_{\hnull} \prn*{ \psi(y, \xi \sep x) = 1} &= \En_{\xi \sim  \cN(0, \sigma^2 \indicator_d),~y \sim q} \brk*{\indicator\crl*{ \frac{\inprod{\xi}{\nabla \log p_\theta(y \mid{} x)}}{\sigma \nrm*{\nabla \log p_\theta(y \mid{} x)}} > \tau_\alpha }} \\ 
        &= \En_{y \sim q}  \brk*{ \Pr_\xi\prn*{ \frac{\inprod{\xi}{\nabla \log p_\theta(y \mid{} x)}}{\sigma \nrm*{\nabla \log p_\theta(y \mid{} x)}} > \tau_\alpha }} \\  
        &= \En_{y \sim q} \brk*{\Pr_\xi(\psi(y, \xi \sep x) \geq \tau_\alpha)} \\ 
        &= 1 - \Phi(\tau_\alpha) = \alpha, 
    \end{align} 
    where in the last line, we used the fact that \(\psi(y, \xi \sep x) = \frac{\inprod{\xi}{\nabla \log p_\theta(y \mid{} x)}}{\sigma \nrm*{\nabla \log p_\theta(y \mid{} x)}} \sim \cN(0, 1)\) for any vector \(\grad \log p_\theta(y \mid{} x)\). The last equality simply plugs in  \(\tau_\alpha = \Phi^{-1}(1 - \alpha)\), concluding the proof of the first statement.  For the second statement, because \(\psi(y, \xi \sep x) \sim \cN(0,  1)\), it is immediate that \(\Phi(\psi(y, \xi \sep x)) \sim \mathrm{Uniform}([0, 1])\), and thus \(1 - \Phi\prn*{\psi(y, \xi \sep x)}\) is a valid \(p\)-value.  
\end{proof}

\subsection{Power of $\gaussmark$} \label{app:proofs}

We first note the following supporting technical lemma bounding the expected value of Gaussian density under a halfspace constraint.  

\renewcommand{\bu}{u} 
\begin{lemma}\label{lem:gaussian_halfspace}
    Let $a, \gamma \in \rr$ with $\gamma \geq 0$, and \(\bu \in \bbR^d\). Let  $Z \sim \cN(0, \eye_d)$. Then, 
    \begin{align}
        \En\brk*{ \eye\brk*{ \inprod{Z}{\bu} \leq a } e^{-\frac{\gamma}{2} \nrm*{Z}^2} } = (1 + \gamma)^{-d/2} \cdot \Phi\prn*{ \frac{\sqrt{1 + \gamma} \cdot a}{\nrm*{\bu}} }.
    \end{align}
\end{lemma}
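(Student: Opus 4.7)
The plan is to absorb the Gaussian penalty $e^{-\frac{\gamma}{2}\|Z\|^2}$ into the underlying Gaussian density and reduce the problem to a standard one-dimensional Gaussian tail probability. Concretely, I would write the expectation as the integral
\begin{align}
\int_{\bbR^d} \indicator\crl*{\inprod{z}{\bu} \leq a}\, e^{-\frac{\gamma}{2}\nrm{z}^2}\, (2\pi)^{-d/2} e^{-\frac{1}{2}\nrm{z}^2}\, \dif z
= (2\pi)^{-d/2} \int_{\bbR^d} \indicator\crl*{\inprod{z}{\bu} \leq a}\, e^{-\frac{1+\gamma}{2}\nrm{z}^2}\, \dif z,
\end{align}
and then combine the two quadratic terms in the exponent.

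Next, I would perform the change of variables $w = \sqrt{1+\gamma}\, z$, so that $\dif z = (1+\gamma)^{-d/2}\, \dif w$ and the exponent becomes $-\frac{1}{2}\nrm{w}^2$. The indicator transforms as $\inprod{z}{\bu} \leq a \iff \inprod{w}{\bu} \leq \sqrt{1+\gamma}\cdot a$. This pulls out the scalar prefactor $(1+\gamma)^{-d/2}$ and leaves
\begin{align}
(1+\gamma)^{-d/2} \cdot \Pr_{W \sim \cN(0, \eye_d)}\prn*{ \inprod{W}{\bu} \leq \sqrt{1+\gamma}\cdot a }.
\end{align}

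Finally, since $\inprod{W}{\bu} \sim \cN(0, \nrm{\bu}^2)$ for $W \sim \cN(0, \eye_d)$, the inner probability equals $\Phi\prn*{\sqrt{1+\gamma}\cdot a / \nrm{\bu}}$, yielding the claim. There is no real obstacle here; the only subtle step is confirming that the case $\bu = 0$ is either excluded or handled as a limit (when $\bu = 0$, the indicator is constant in $z$, and one can verify both sides agree provided $a \geq 0$, which is consistent with the stated hypothesis on $a$). The argument is essentially a Gaussian completion-of-squares followed by a one-dimensional marginalization.
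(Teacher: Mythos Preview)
Your proposal is correct and is essentially the paper's own proof: combine the two Gaussian exponents, recognize the result as a scaled standard Gaussian (the paper phrases this as identifying the density of $Z/\sqrt{1+\gamma}$, you phrase it as the change of variables $w=\sqrt{1+\gamma}\,z$, which is the same thing), and then reduce to a one-dimensional Gaussian CDF via $\inprod{W}{u}\sim\cN(0,\nrm{u}^2)$. One small slip in your parenthetical: the hypothesis is $\gamma\geq 0$, not $a\geq 0$, so the $u=0$ edge case is not actually covered by the stated assumptions (the paper implicitly assumes $u\neq 0$ as well).
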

\begin{proof} Note that 
    \begin{align}
        \En\brk*{ \eye\brk*{\inprod{Z}{\bu} \leq a } e^{-\frac{\gamma}{2} \nrm*{Z}^2} } &= (2 \pi)^{-d/2} \int_{\crl*{z~:~\inprod{z}{\bu} \leq a}} e^{-\frac{\gamma + 1}{2} \nrm*{z}^2} \dif z \\
        &= (1 + \gamma)^{-d/2} \cdot \int_{\crl*{z~:~\inprod{z}{\bu} \leq a}} \prn*{ \frac{2 \pi}{1 + \gamma} }^{-d/2}  e^{-\frac{1}{2 (1 + \gamma)^{-1}} \nrm*{z}^2} \dif z \\
        &= (1 + \gamma)^{-d/2} \cdot \pp\prn*{ \inprod{Z/\sqrt{1 + \gamma}}{\bu} \leq a } \\  
        &= (1 + \gamma)^{-d/2} \cdot \pp\prn*{ \inprod{Z}{\bu} \leq a \sqrt{1 + \gamma}} \\  
        &= (1 + \gamma)^{-d/2} \cdot \Phi\prn*{ \frac{a \cdot \sqrt{1 + \gamma}}{\nrm*{\bu}}}, 
    \end{align} 
    where the equality in the third line above is by noticing that the term inside the integral corresponds to the density function of \(\frac{Z}{\sqrt{1 + \gamma}}\). The last line follows by observing that \(\tri*{Z, \bu} \sim \cN(0, \nrm{\bu}^2)\) for \(Z \sim \cN(0, \eye_d)\). 
\end{proof} 

\subsubsection{Linear Softmax Distributions}  \label{app:linear_softmax}

Throughout this subsection, we use the notation $\En_{\theta}[\cdot]$ to denote $\En_{y\sim p_\theta(\cdot \mid{} x)}[\cdot]$.  

\begin{lemma}
\label{lem:radonnikodym_softmax} 
    Let $p_\theta(y \mid{} x)$ denote a linear softmax policy in dimension $d$ (\pref{def:linear_softmax}) w.r.t.~feature mapping $\phi$. Then, 
    \begin{align}
        \frac{p_{\theta + \xi}(y \mid{} x)}{p_\theta(y \mid{} x)} = \frac{e^{\inprod{\xi}{\nabla \log p_\theta(y \mid{} x)}}}{\En\left[ e^{\inprod{\xi}{\nabla \log p_\theta(y \mid{} x)}} \right]}
    \end{align} 
\end{lemma}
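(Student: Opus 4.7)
The plan is to prove the identity by direct computation, exploiting the exponential-family structure of the linear softmax model. The key observation is that the gradient of $\log p_\theta(y \mid x)$ with respect to $\theta$ has a particularly clean form in this parameterization: it is the centered feature, $\phi(y\mid x) - \En_{\theta}[\phi(\cdot \mid x)]$. This centering will turn out to be exactly the reweighting that forces the ratio to normalize.

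First, I would write the ratio directly from the definition in \pref{eq:linear_softmax}:
\begin{align}
    \frac{p_{\theta + \xi}(y \mid x)}{p_\theta(y \mid x)} = e^{\inprod{\xi}{\phi(y\mid x)}} \cdot \frac{Z(\theta)}{Z(\theta+\xi)},
\end{align}
where $Z(\theta) \ldef \sum_{y' \in \cY} e^{\inprod{\theta}{\phi(y'\mid x)}}$. Next I would rewrite the partition-function ratio as an expectation under $p_\theta$, namely
\begin{align}
    \frac{Z(\theta+\xi)}{Z(\theta)} = \sum_{y'\in\cY} \frac{e^{\inprod{\theta}{\phi(y'\mid x)}}}{Z(\theta)} \cdot e^{\inprod{\xi}{\phi(y'\mid x)}} = \En_{\theta}\brk*{ e^{\inprod{\xi}{\phi(y\mid x)}}}.
\end{align}
Combining these gives $\frac{p_{\theta+\xi}(y\mid x)}{p_\theta(y\mid x)} = \frac{e^{\inprod{\xi}{\phi(y\mid x)}}}{\En_{\theta}[e^{\inprod{\xi}{\phi(y\mid x)}}]}$.

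To connect this to the claimed expression involving $\nabla \log p_\theta$, I would compute the gradient explicitly: differentiating $\log p_\theta(y\mid x) = \inprod{\theta}{\phi(y\mid x)} - \log Z(\theta)$ yields
\begin{align}
    \nabla_\theta \log p_\theta(y\mid x) = \phi(y\mid x) - \En_{\theta}[\phi(\cdot \mid x)] \rdef \phi(y\mid x) - \bar\phi_\theta.
\end{align}
Then $\inprod{\xi}{\nabla \log p_\theta(y\mid x)} = \inprod{\xi}{\phi(y\mid x)} - \inprod{\xi}{\bar\phi_\theta}$, so the factor $e^{-\inprod{\xi}{\bar\phi_\theta}}$ appears in both the numerator and in every summand of $\En_\theta[e^{\inprod{\xi}{\nabla\log p_\theta}}]$ and cancels, yielding the desired identity.

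There is no real obstacle here; the proof is essentially a two-line manipulation once one recognizes that $p_{\theta+\xi}/p_\theta$ has the form of an exponential tilt, and that a linear softmax is closed under such tilts (shifting $\theta \mapsto \theta+\xi$ just reweights $\phi$ by $e^{\inprod{\xi}{\phi}}$). The only subtlety worth noting is that the centering constant $\bar\phi_\theta$ does not depend on $y$, which is exactly why it drops out of the normalized ratio—i.e.~the identity is invariant under the gauge freedom $\phi \mapsto \phi + c$ in the softmax parameterization.
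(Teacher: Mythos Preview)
Your proposal is correct and follows essentially the same approach as the paper's proof: both compute the ratio directly from the softmax definition, rewrite the partition-function ratio $Z(\theta+\xi)/Z(\theta)$ as $\En_\theta[e^{\inprod{\xi}{\phi}}]$, compute $\nabla\log p_\theta = \phi - \En_\theta[\phi]$, and observe that the centering constant cancels from numerator and denominator. The only cosmetic difference is that the paper works with $\log p_{\theta+\xi} - \log p_\theta$ and substitutes the gradient formula along the way, whereas you first obtain the ratio in terms of $\phi$ and then translate to $\nabla\log p_\theta$; the underlying computation is identical.
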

\begin{proof}
    Fix any \(x \in \cX\), and define the normalizing constant 
    \begin{align} 
        Z(\theta\sep x) \ldef{} \sum_{y \in \cY} e^{\inprod{\theta}{\phi(y \mid{} x)}}. 
    \end{align}
  Thus, $\log(p_\theta(y \mid{} x) )= \tri*{\theta, \phi(y \mid{} x)} - \log Z(\theta\sep x)$. A few simple calculations show that for any \(y \in \cY\), 
  \begin{align} 
        \nabla \log p_\theta(y \mid{} x) &= \phi(y \mid{} x) - \En_{\theta}[\phi(y \mid{} x)] \label{eq:rel2}
    \end{align} 
 This implies that 
    \begin{align}
 \log p_{\theta + \xi}(y \mid{} x) - \log p_\theta(y \mid{} x)  
                 &= \inprod{\theta + \xi}{\phi(y \mid{} x)} - \log Z(\theta + \xi\sep x) - \inprod{\theta}{\phi(y \mid{} x)} + \log Z(\theta\sep x)  \\
        &= \log Z(\theta\sep x) - \log Z(\theta + \xi\sep x) + \inprod{\xi}{\phi(y \mid{} x)} \\
        &= \log Z(\theta\sep x) - \log Z(\theta + \xi\sep x) + \inprod{\xi}{\En_{\theta} \brk*{\phi(y \mid{} x)}}  + \inprod{\xi}{\nabla \log p_\theta(y \mid{} x)},   
\end{align} 
where the last line plugs in \pref{eq:rel2} for \(\grad \log p_\theta(y \mid{} x)\). Next, note that 
\begin{align}
\log Z(\theta\sep x) - \log Z(\theta + \xi\sep x)  &= - \log \prn*{\frac{\sum_{y \in \cY} e^{\tri*{\xi, \phi(y \mid{} x)}} \cdot e^{\tri*{\theta, \phi(y \mid{} x)}}}{Z(\theta\sep x)}} = - \log\prn*{\En_{\theta} \brk*{e^{\tri*{\xi, \phi(y \mid{} x)}}}}.
\end{align}
Combining the above two bounds, we get 
\begin{align}
\log p_{\theta + \xi}(y \mid{} x) - \log p_\theta(y \mid{} x) &=  \inprod{\xi}{\nabla \log p_\theta(y \mid{} x)} + \inprod{\xi}{\En_{\theta} \brk*{\phi(y \mid{} x)}} - \log\prn*{\En_{\theta} \brk*{e^{\tri*{\xi, \phi(y \mid{} x)}}}} \\ 
&= \inprod{\xi}{\nabla \log p_\theta(y \mid{} x)} - \log\prn*{\En_{\theta} \brk*{e^{\tri*{\xi, \phi(y \mid{} x) - \En_{\theta} \brk*{\phi(y \mid{} x)}}}}}  \\ 
&= \inprod{\xi}{\nabla \log p_\theta(y \mid{} x)} - \log\prn*{\En_{\theta} \brk*{e^{\tri*{\xi, \grad \log p_\theta(y \mid{} x)}}}}, 
\end{align}
where the last equality again uses \pref{eq:rel2}. This concludes the proof. 
\end{proof}

We next prove  \pref{prop:softmax_power_adaptive}, which establishes bounds on the power of our test in $\gaussmarkd$ under the linear softmax modeling assumption.

\begin{proof}[Proof of \pref{prop:softmax_power_adaptive}] \pref{prop:size_computation}  shows that \(\psi\) yields a test at level \(\alpha\). Note that the power of the test  
\begin{align}
       \Pr_{\halt} \left( \psi(y, \xi \sep x) \leq  \tau_\alpha  \right)  &= \En_{\xi} \brk*{\En_{y \sim p_{\theta + \xi}} \brk*{       \indicator\crl*{ \psi(y, \xi \sep x) \leq  \tau_\alpha }}}    \\
        &= \En_{\xi} \brk*{\En_{y \sim p_{\theta}} \brk*{      \frac{p_{\theta + \xi}(y \mid{} x)}{p_\theta(y \mid{} x)} \cdot\indicator\crl*{
         \psi(y, \xi \sep x) \leq  \tau_\alpha}}} \\ 
        & = 
    \En_{\xi} \brk*{\En_{y \sim p_\theta} \brk*{ 
        \frac{e^{\langle \xi, \nabla \log p_\theta(y \mid{} x) \rangle}}
        {\En_{y \sim p_\theta}\brk*{ e^{\langle \xi, \nabla \log p_\theta(y \mid{} x) \rangle} }} \cdot 
        \indicator \crl*{ \psi(y, \xi \sep x) \leq  \tau_\alpha }
    }}, 
\end{align} 
where the last line plugs in the corresponding form for the density ratio $ \nicefrac{p_{\theta + \xi}(y \mid{} x)}{p_\theta(y \mid{} x)}$, obtained by simply utilizing the properties of linear-softmax distributions (\pref{lem:radonnikodym_softmax} in the appendix).  
\end{proof} 

As mentioned in the main body of the paper in the discussion proceeding \pref{prop:softmax_power_adaptive}, for the linear softmax model, entropy plays a key role in ensuing diversity of the vectors \(\grad \log p_\theta(y \mid x)_{y \in \cY}\) on the unit sphere, when the feature set \(\crl{\phi(y)}_{y \in \cY}\) is sufficiently rich. In the following, we provide intuition for this claim. Note that for any \(y\), under the linear softmax modeling,
\[
\nabla \log p_\theta(y \mid{} x) = \phi(y \mid{} x) - \mathbb{E}_{y \sim p_\theta} \left[\phi(y \mid{} x)\right].
\]

Thus, the spread of the set of vectors \(\{\nabla \log p_\theta(y \mid{} x)\}_{y \in \mathcal{Y}}\) is controlled by the spread of the vectors \(\{\phi(y \mid{} x)\}_{y \in \mathcal{Y}}\). Recall that \(\|\mathcal{Y}\| \leq c\) for some \(c > 0\). Consequently, \(\|\nabla \log p_\theta(y \mid{} x)\| \leq 2c\). On the other hand, the lower bound on \(\|\nabla \log p_\theta(y \mid{} x)\|\) is influenced by the entropy of \(p_\theta\). 

If the set \(\{\phi(y \mid{} x)\}\) is well-spread on the sphere and \(p_\theta\) has large entropy, then we expect \(\mathbb{E}_{y \sim p_\theta} \left[\phi(y \mid{} x)\right] \approx \vec{0}\). As a result, \(\langle \xi, \nabla \log p_\theta(y \mid{} x) \rangle \approx \langle \xi, \phi(y \mid{} x) \rangle \approx \|\xi\|\) when \(\{\phi(y \mid{} x)\}\) covers the sphere. Furthermore, this implies that \(\nrm*{\nabla \log p_\theta(y \mid{} x)} \gtrsim c\).  These approximations, however, break down when the entropy is small —for instance if $p_\theta(\cdot \mid{} x)$ is an atom on some \(y' \in \mathcal{Y}\)—then \(\nabla \log p_\theta(y' \mid{} x) = 0\), and we do not have a lower bound on \(\sup_y \|\nabla \log p_\theta(y \mid{} x)\|\). 

The above discussion gives intuition for why we can expect 
\(r \leq \nrm{\grad \log p_\theta(y \mid x)} \leq R\) when the underlying distribution \(p_\theta\) exhibits high  entropy. As discussed in the main body,  \pref{corr:alt2} quantifies the degree to which these idealized conditions hold and establishes bounds on the test's power under certain mild assumptions. We now proceed with its proof. 

\begin{proof}[Proof of \pref{corr:alt2}] Starting from the result of \pref{prop:softmax_power_adaptive}, note that under softmax parameterization, the power of the test, denoted by \(\wh \beta\), is given by 
\begin{align}
\wh \beta &=  \En_{\xi \sim \cN(0, \sigma^2 \bbI_d)} \brk*{ 
        \frac{\En_{\theta} \brk*{e^{\langle \xi, \nabla \log p_\theta(y \mid{} x) \rangle} 
         \cdot 
        \indicator \brk*{ \langle \xi, \nabla \log p_\theta(y \mid{} x) \rangle \leq \sigma \tau_\alpha \nrm{\grad \log p_\theta(y \mid{} x)} } }} {\En_{\theta} \brk*{ e^{\langle \xi, \nabla \log p_\theta(y \mid{} x) \rangle} }}},            \label{eq:adaptive_gap1}
\end{align}   
We can upper bound the numerator as: 
\begin{align}
\En_{\theta} \brk*{e^{\langle \xi, \nabla \log p_\theta(y \mid{} x) \rangle} 
         \cdot 
        \indicator \brk*{ \langle \xi, \nabla \log p_\theta(y \mid{} x) \rangle \leq \sigma \tau_\alpha \nrm{\grad \log p_\theta(y \mid{} x)}}} &\leq \En_{\theta} \brk*{e^{\sigma \tau_\alpha \nrm{\grad \log p_\theta(y \mid{} x)}}}. 
\end{align}

Next, for the denominator, note that 
\begin{align}
\En_{\theta} \brk*{ e^{\langle \xi, \nabla \log p_\theta(y \mid{} x) \rangle}}  &\geq  \En_{\theta} \brk*{  e^{\tri*{\xi, \grad \log p_\theta(y \mid{} x)}}  \cdot \indicator \brk*{\langle \xi, \nabla \log p_\theta(y \mid{} x) \rangle > \wt \Gamma_{\wt \alpha}(\xi\sep \theta, x)}} \\
&\geq e^{\wt \Gamma_{\wt \alpha}(\xi\sep \theta, x)} \cdot   \Pr \prn*{\langle \xi, \nabla \log p_\theta(y \mid{} x) \rangle > \wt \Gamma_{\wt \alpha}(\xi\sep \theta, x)}  \\ 
&= e^{\wt \Gamma_{\wt \alpha}(\xi\sep \theta, x)} \cdot \wt \alpha, 
\end{align} 
where the first inequality simply ignores the events where the indicator is \(0\). The second inequality follows by using the condition inside the indicator in the exponent. Finally, the last inequality simply uses the fact that $\wt \Gamma_{\wt \alpha}(\xi\sep \theta, x)$ is the \((1 - \wt \alpha)\)th quantile of the random variable \(\tri{\xi, \grad \log p_\theta(y \mid{} x)}\), and thus
\begin{align} 
 \Pr \prn*{\langle \xi, \nabla \log p_\theta(y \mid{} x) \rangle > \wt \Gamma_{\wt \alpha}(\xi\sep \theta, x)}  = \wt \alpha. 
\end{align}

Using the above bounds in \pref{eq:adaptive_gap1}, we get that 
\begin{align}
\Pr_{\halt}\prn*{\psi(y, \xi \sep x) \leq \sigma \tau_\alpha} &\leq \frac{1}{\wt \alpha} \cdot \En_{\xi} \brk*{\En_{\theta} \brk*{e^{\sigma \nrm{\grad \log p_\theta(y \mid{} x) - \wt \Gamma_{\wt \alpha}(\xi\sep \theta, x)}}}}  \\ 
&\leq \frac{e^{- \sigma \Lambda(\theta, x)}}{\wt \alpha}, 
\end{align}
where the last line uses \pref{eq:bound1}. 

We conclude the proof by observing that the term on the right-hand side above is smaller than \(\beta\) when   
$\sigma \geq \frac{1}{\Lambda(\theta, x)} \log \prn*{\frac{1}{\wt \alpha \beta}}$ 
\end{proof} 

The following lemma demonstrates that the final probability distribution over the token space 
$\cV$ can be expressed as a non-linear softmax distribution, where $\theta$
represents the weights of any feedforward layer in the model architecture. This result provides justification for the discussion following \ref{def:linear_softmax} and clarifies why, when \(\nrm{\xi}\) is small, a linear softmax model serves as a good approximation. 

\begin{lemma}
\label{lem:LM_parameterization} 
Let \(\cV\) be a token space, and suppose \(\theta\) denote the parameters in a feedforward layer of a language model to be watermarked. Then, \(p_\theta(y \mid{} x) \propto \exp\prn*{\chi(\theta, x), \phi(v)}\) for some fixed feature maps \(\chi\) and \(\phi\). 
\end{lemma}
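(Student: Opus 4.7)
}
The plan is to unpack the standard transformer architecture and simply read off the claimed factorization. A modern language model, on input context $x$, computes a final hidden representation $h(\theta, x) \in \bbR^{d_{\mathrm{model}}}$ at the last token position by composing token embeddings, transformer blocks (attention + MLP), and layer norms; by assumption, $\theta$ parametrizes a single feedforward layer sitting somewhere inside this composition, while all other parameters are held fixed. The final distribution over the next token $v \in \cV$ is obtained by applying a fixed \emph{unembedding} matrix $W \in \bbR^{\abs{\cV} \times d_{\mathrm{model}}}$ (and possibly an additive bias $b \in \bbR^{\abs{\cV}}$, which we suppress since it can be absorbed into the feature map) to $h(\theta, x)$ and then passing the result through the softmax.

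First I would set $\chi(\theta, x) \ldef{} h(\theta, x)$, noting that this map is a well-defined deterministic function of $\theta$ and $x$ even though it is, in general, a highly non-linear (and non-convex) function of $\theta$ obtained by composing all of the blocks of the network above and below the watermarked feedforward layer. Next I would set $\phi(v) \ldef{} W^{\top} e_v \in \bbR^{d_{\mathrm{model}}}$, i.e.\ the $v$-th row of the unembedding matrix, which depends only on $v$ and not on $\theta$ or $x$. With these definitions, the logit associated with token $v$ is
\begin{align}
\bigl(W \chi(\theta, x)\bigr)_v
\;=\; e_v^{\top} W \chi(\theta, x)
\;=\; \inprod{\chi(\theta, x)}{\phi(v)},
\end{align}
so that applying the softmax yields
\begin{align}
p_\theta(v \mid{} x)
\;=\; \frac{\exp\bigl(\inprod{\chi(\theta, x)}{\phi(v)}\bigr)}{\sum_{v' \in \cV} \exp\bigl(\inprod{\chi(\theta, x)}{\phi(v')}\bigr)}
\;\propto\; \exp\bigl(\inprod{\chi(\theta, x)}{\phi(v)}\bigr),
\end{align}
which is exactly the claimed non-linear softmax form.

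There is essentially no hard step here: the result is a syntactic rearrangement of the standard transformer output, and the only content is the observation that all of the $\theta$-dependence can be pushed into the single vector $\chi(\theta, x)$ while the token-dependence lives in the fixed feature map $\phi$. The closest thing to an obstacle is being careful that $\theta$ does not parametrize the unembedding itself (so that $W$ truly is fixed) — which is the case in our experiments, since we always watermark an MLP layer strictly inside a transformer block (cf.\ \pref{app:experiment_details}). For completeness, one could note that this immediately justifies the first-order Taylor expansion $\chi(\theta + \xi; x) \approx \chi(\theta; x) + J_\chi(\theta; x)\, \xi$ used in the discussion following \pref{def:linear_softmax}, under which $p_{\theta+\xi}$ becomes a linear softmax model in $\xi$ with effective features $J_\chi(\theta; x)^{\top} \phi(v)$.
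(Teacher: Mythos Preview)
Your proposal is correct and takes essentially the same approach as the paper: both unpack the transformer computation into a pre-logit representation depending on $(\theta, x)$ and a token-dependent feature map, then read off the softmax form. The only cosmetic difference is that the paper absorbs the unembedding matrix into $\chi$ (so $\phi(v)$ is the one-hot vector $e_v$ and $\chi$ is the full logit vector), whereas you keep $\chi$ as the final hidden state and push the unembedding rows into $\phi$; both factorizations are equally valid.
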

\newcommand{\relu}{\mathrm{ReLU}} 
\begin{proof} 

Consider an input prompt \(x\), and suppose we watermark the parameters \(\theta\) at a feedforward layer with index \(\ls\) that uses the ReLU activation function. Suppose the input to this layer is given by \(g_\mathrm{in}(x)\)  where the non-convex function \(g_\mathrm{in}\) hides all the parameters from layers \(1, \dots, \ls-1\) which are hidden since they are unaffected by the watermarking procedure. Thus, the output of the \(\ls\)th layer is given by \(\relu\prn*{\theta^\top g_\mathrm{in}(x)}\). The future laters (including feedforward and attention layers) would simply process this further into the logits \(g_\mathrm{fin}(\relu\prn*{\theta^\top g_\mathrm{in}(x)})\), where the function \(g_\mathrm{fin}\) is non-convex and hides all the parameters in the layers \(\ls+1, \dots, L\) that are again hidden since the watermarking procedure does not affect them. Finally, the output token is generated by sampling from a softmax distribution given by \(p_{\theta}\prn*{v \mid x} \propto \exp \prn*{g_\mathrm{fin}(\relu\prn*{\theta^\top g_\mathrm{in}(x)})^\top \phi(v)}\) for \(v \in \cV\), where \(\phi\) denotes the one-hot feature vectors corresponding to token \(v\). Defining \(\chi(\theta, x) = g_\mathrm{fin}(\relu\prn*{\theta^\top g_\mathrm{in}(x)})\) concludes the proof.  A similar proof can be given for other non-linearities such as GELU, Swish, or SwiGLU. 
\end{proof}

\renewcommand{\bu}{u}

\subsubsection{Strongly Log-Concave Distributions}    \label{app:strongly_log_concave}   

While the linear softmax model is useful for modeling the effect of noise added to a feedforward layer in a single transformer block in our experiments, we can also bound the power of our method under significantly more general assumptions. In the following,  we provide a bound on the power of the test when the distribution \(p_\theta\) is strongly log-concave: 

\begin{definition}[Strong log-concavity] 
For any \(x \in \cX\) and $y \in \cY$, the map $\theta \mapsto \log p_\theta(y \mid{} x)$ is $\eta$-strongly log-concave if for any $\xi \in \rr^d$, 
\begin{align}
\log p_{\theta + \xi}(y \mid{} x) \leq \log p_\theta(y \mid{} x) + \inprod{\xi}{\nabla \log p_\theta(y \mid{} x)} - \frac{\eta}{2}\nrm*{\xi}^2.
\end{align} 
\end{definition} 

Various probability distributions satisfy the strong log-concavity assumption above. For example, the \(d\)-dimensional multivariate Gaussian \(p_\theta = \cN(\theta, \Sigma)\) is strongly log-concave with \(\eta = \lambda_{\min}(\Sigma)\). Furthermore, the linear softmax model with feature map \(\phi\) is (locally) \(\eta\)-strongly concave with \(\eta = \lambda_{\min}\prn*{\mathrm{Cov}_{y \sim p_\theta}\brk*{\phi(y \mid{} x)}}\). The following theorem lower bounds the power of our test when \(p_\theta\)  is strongly log-concave. Again, for ease of notation, we will use $\En_{\theta}[\cdot]$ to denote $\En_{y\sim p_\theta(\cdot \mid{} x)}[\cdot]$.

\begin{theorem}\label{prop:main_sc}
    Let \(\alpha, \beta \in (0, 1)\) and $\tau_\alpha = \Phi^{-1}(1 - \alpha)$, where $\Phi$ is the {\normalfont CDF} of the standard normal random variable, and suppose for all \(x \in \cX\) and \(y \in \cY\), \(p_\theta(y \mid{} x)\) is \(\eta\)-strongly concave w.r.t.~\(\theta\). Then, for any \(x \in \cX\), the test $\indicator\brk*{ \frac{\inprod{\xi}{\nabla \log p_\theta(y \mid{} x)}}{\sigma \nrm*{\nabla \log p_\theta(y \mid{} x)}} > \tau_\alpha}$, utilized in \Cref{alg:detection}, has power \(1 - \beta\) whenever 
\begin{align}
d \geq  \frac{2\log \prn*{\En_{\theta} \brk*{e^{\tau_\alpha \sigma \nrm*{\nabla \log p_\theta(y \mid{} x)}}}}  + 2 \log \prn*{\frac{1}{\beta}}}{\log(1 + \eta \sigma^2)}. 
\end{align} 
\end{theorem}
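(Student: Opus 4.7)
\textbf{Proof plan for \pref{prop:main_sc}.}  The plan is to bound the Type-II error
\(\beta' \ldef{} \Pr_{\halt}(\psi(y,\xi\sep x) \leq \tau_\alpha)\) directly by an importance-sampling change of measure onto \(p_\theta\) (as in the proof of \pref{prop:softmax_power_adaptive}), replace the density ratio by its strong log-concavity upper bound, use the event defining the indicator to control the linear term, and finish by evaluating a single Gaussian integral.

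More concretely, I would first write
\begin{align}
\beta' = \En_{\xi \sim \cN(0,\sigma^2 \eye)} \En_{y \sim p_\theta} \left[ \frac{p_{\theta+\xi}(y\mid x)}{p_\theta(y\mid x)} \, \indicator\{\psi(y,\xi\sep x) \leq \tau_\alpha\} \right],
\end{align}
and then apply the \(\eta\)-strong log-concavity assumption to obtain the pointwise bound
\(\tfrac{p_{\theta+\xi}(y\mid x)}{p_\theta(y\mid x)} \leq \exp(\langle \xi, \nabla \log p_\theta(y\mid x)\rangle - \tfrac{\eta}{2}\|\xi\|^2)\).  On the event inside the indicator we have \(\langle \xi, \nabla \log p_\theta(y\mid x)\rangle \leq \tau_\alpha \sigma \|\nabla \log p_\theta(y\mid x)\|\), so this term can be pulled outside and the remaining quantity to control is \(\En_{\xi \sim \cN(0,\sigma^2 \eye)}[e^{-\tfrac{\eta}{2}\|\xi\|^2}]\), which by a standard Gaussian computation equals \((1+\eta \sigma^2)^{-d/2}\). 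Swapping the order of the expectations via Fubini then gives
\begin{align}
\beta' \leq (1+\eta\sigma^2)^{-d/2} \cdot \En_{y \sim p_\theta}\!\left[e^{\tau_\alpha \sigma \|\nabla \log p_\theta(y\mid x)\|}\right].
\end{align}

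The final step is to require the right-hand side to be at most \(\beta\), take logarithms, and rearrange for \(d\), yielding exactly the threshold in the statement.  I do not expect a major obstacle: the only subtlety is ordering the reductions so that the exponential \(e^{\langle \xi, \nabla \log p_\theta(y\mid x)\rangle}\) is upper bounded using the indicator event \emph{before} we try to evaluate any Gaussian integral, since otherwise one would need the sharper half-space computation of \pref{lem:gaussian_halfspace}. The \(-\tfrac{\eta}{2}\|\xi\|^2\) term produced by strong log-concavity is precisely what tilts the Gaussian density to yield the crucial \((1+\eta\sigma^2)^{-d/2}\) factor that drives the dimension dependence.
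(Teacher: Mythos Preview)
Your proposal is correct and follows essentially the same route as the paper: change of measure to \(p_\theta\), apply the \(\eta\)-strong log-concavity bound on the density ratio, use the indicator event to cap the linear term, and reduce to the Gaussian integral \((1+\eta\sigma^2)^{-d/2}\). The only cosmetic difference is that the paper retains the indicator through the Gaussian integral and invokes \pref{lem:gaussian_halfspace} to pick up an extra factor \(\Phi(\tau_\alpha\sqrt{1+\eta\sigma^2})\), which it then bounds by \(1\); your ordering of the steps simply drops the indicator first and arrives at the identical final bound without that lemma.
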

In particular, let the  function $\Lambda: \Theta \times \cX \to \bbR$ de defined such that 
\begin{align}
\Lambda(\theta, x) \ldef{} \sup_\lambda \frac{1}{\lambda}   \exp\prn*{\lambda \nrm{\grad \log p_\theta(y \mid{} x)}}, 
\end{align} then the power of the level $\alpha$ test $\phi$ is at least $1 - \beta$, whenever 
    \begin{align}
        d \geq \frac{2 \sigma \tau_\alpha \Lambda(\theta, x) + 2 \log\prn*{ \frac 1\beta }}{\log(1 + \eta \sigma^2)}.
    \end{align}

The key idea in the proof of \pref{prop:main_sc} is to use strong log-concavity once we have changed the measure to \(y \sim p_\theta\) followed by  \pref{lem:gaussian_halfspace} that measures the expected value of a gaussian density function, under halfspace constraints. 

\begin{proof}[Proof of \pref{prop:main_sc}] 
Due to \pref{prop:size_computation}, it is clear that \(\psi\) yields a test at level \(\alpha\). In the following, we compute the power of the test. Let \(x \in \cX\) be fixed. Under the alternate hypothesis \(\halt\), we have that \(\xi \sim \cN(0, \sigma^2 \indicator_d)\) and \(y \sim p_{\theta + \xi}(\cdot \mid{} x)\). The power of the test is given by: 
\begin{align}
        \Pr_{\halt}\prn*{ \psi(y, \xi \sep x) = 0} &= \Pr_{\halt}\prn*{ \inprod{\xi}{\nabla \log p_\theta(y \mid{} x)} \leq \tau_\alpha \sigma \nrm*{\nabla \log p_\theta(y \mid{} x)} } \\
        &= \En_{\xi \sim \cN(0, \sigma^2 \indicator_d)} \brk*{\En_{y \sim p_{\theta + \xi}} \brk*{ \indicator\brk*{ \inprod{\xi}{\nabla \log p_\theta(y \mid{} x)} \leq \tau_\alpha \sigma \nrm*{\nabla \log p_\theta(y \mid{} x)}}}} \\
        &=    \En_{\xi \sim \cN(0, \sigma^2 \indicator_d)} \brk*{\En_{y \sim p_{\theta}} \brk*{ \frac{p_{\theta + \xi}(y \mid{} x)}{p_\theta(y \mid{} x)}  \cdot \indicator\brk*{ \inprod{\xi}{\nabla \log p_\theta(y \mid{} x)} \leq \tau_\alpha \sigma \nrm*{\nabla \log p_\theta(y \mid{} x)}}}}, 
\end{align}
where in the last line, we changed the measure from \(p_{\theta + \xi}\) to \(p_\theta\) by paying for the density ratio \(\nicefrac{p_{\theta + \xi}(y \mid{} x)}{p_\theta(y \mid{} x)}\).  Using the \(\eta\)-strong concavity of \(\log p_\theta(y \mid{} x)\), for every \(y\), we have: 
\begin{align}
\log \prn*{\frac{p_{\theta + \xi}(y \mid{} x)}{p_\theta(y \mid{} x)}} &\leq \inprod{\xi}{\nabla \log p_\theta(y \mid{} x)} - \frac{\eta}{2}\nrm*{\xi}^2,
\end{align}
which implies that 
\begin{align}
\hspace{0.75in} & \hspace{-0.75in} \Pr_{\halt} \prn*{ \psi(y, \xi \sep x) = 0} \\ 
&\leq  \En_{\xi \sim \cN(0, \sigma^2 \indicator_d)} \brk*{\En_{y \sim p_{\theta}} \brk*{  e^{\tri*{\xi, \grad \log p_\theta(y \mid{} x)} - \frac{\eta}{2} \nrm{\xi}^2} \cdot \indicator\brk*{ \inprod{\xi}{\nabla \log p_\theta(y \mid{} x)} \leq \tau_\alpha \sigma \nrm*{\nabla \log p_\theta(y \mid{} x)}}}} \\
        &\leq   \En_{y \sim p_{\theta}} \brk*{ e^{\tau_\alpha \sigma \nrm*{\nabla \log p_\theta(y \mid{} x)}} \En_{\xi \sim \cN(0, \sigma^2 \indicator_d)} \brk*{   \indicator\brk*{ \inprod{\xi}{\nabla \log p_\theta(y \mid{} x)} \leq \tau_\alpha \sigma \nrm*{\nabla \log p_\theta(y \mid{} x)}} \cdot e^{- \frac{\eta}{2} \nrm{\xi}^2}}}, 
\end{align} where the second inequality follows by utilizing the constraints inside the indicator in the exponent, and by using the fact that \(p_\theta\) is independent of \(\xi\).  Using the bound in  \pref{lem:gaussian_halfspace} for the expected value of Gaussian density function under the halfspace constraint, we get that for any \(y\), 
\begin{align}
 \En_{\xi \sim \cN(0, \sigma^2 \indicator_d)} \brk*{   \indicator\brk*{ \inprod{\xi}{\nabla \log p_\theta(y \mid{} x)} \leq \tau_\alpha \sigma \nrm*{\nabla \log p_\theta(y \mid{} x)}} \cdot e^{- \frac{\eta}{2} \nrm{\xi}^2}} &\leq  (1 + \eta \sigma^2)^{-d/2} \cdot \Phi\prn*{ \tau_\alpha \cdot \sqrt{1 + \eta \sigma^2}},
\end{align}
which implies that 
\begin{align}
   \Pr_{\halt}\prn*{ \psi(y, \xi \sep x) = 0} &\leq (1 + \eta \sigma^2)^{-d/2} \cdot \Phi\prn*{ \tau_\alpha \cdot \sqrt{1 + \eta \sigma^2}} \En_{\theta} \brk*{e^{\tau_\alpha \sigma \nrm*{\nabla \log p_\theta(y \mid{} x)}}}  \\
   &\leq  (1 + \eta \sigma^2)^{-d/2}  \En_{\theta} \brk*{e^{\tau_\alpha \sigma \nrm*{\nabla \log p_\theta(y \mid{} x)}}},
\end{align}
where the last inequality follows since \(\Phi(\cdot) \leq 1\). The above bound implies that 
\begin{align}
\Pr_{\halt}\prn*{ \psi(y, \xi \sep x) = 0} &\leq \beta, 
\end{align}
whenever 
\begin{align}
d \geq  \frac{2\log \prn*{  \En_{\theta} \brk*{e^{\tau_\alpha \sigma \nrm*{\nabla \log p_\theta(y \mid{} x)}}}}  + 2 \log \prn*{\frac{1}{\beta}} }{(1 + \eta \sigma^2)}. 
\end{align} 

\end{proof} 

\subsection{Robustness to Corruptions of the Candidate Text}   \label{app:robustness} 
We have thus far demonstrated that $\gaussmarkd$ (\Cref{alg:detection}) enjoys high power under some mild assumptions (cf.~\pref{prop:softmax_power_adaptive} and \ref{prop:main_sc}), but in practice, an adversarial user may modify the generated text to avoid detection of the watermark; for example, the user may first generate text $y$ from the watermarked model, and then modify it to $\widetilde{y}$, by inserting, deleting or substituting tokens, or even by paraphrasing \(y\). Thus, a practical watermarking scheme must be at least somewhat robust to corruption of the generated test. Indeed, as we show in the following theorem, a slight variant of \(\gaussmarkg\) and \(\gaussmarkd\), given in \pref{alg:generation_robust} and \pref{alg:detection_robust} respectively, enjoys theoretical robustness guarantees against a constant fraction of corruptions under the mild assumption that a token-level test on the uncorrupted tokens has sufficiently small level and high power. While we do not empirically investigate this alternative variant, we do observe that the original $\gaussmark$ enjoys some degree of robustness against corruptions in our experiments (cf. \Cref{ssec:robustness,app:robustness}).

Formally, we devise a robust detection test that takes in an input prompt \(x'\), a watermarking key  \(\xi' = \prn*{
\xi'_1, \dots, \xi'_K} \in \bbR^{d \times K}\), and a candidate text \(y' = \prn*{v'_1, \dots, v'_K} \in \cV^{\times K}\) consisting of \(K\) tokens,  and tests the following null and alternative hypothesis: 
\begin{enumerate}[leftmargin=15mm] 
    \item[\(\hnull:\)] Given \(x\), the key and the sample are independent, i.e., $(\xi', y') \sim   \cN(0, \sigma^2 \bbI_d)^{\otimes K} \otimes q$  with $q \in \Delta(\cY)$. 
    \item[\(\halt:\)] The sample \(y'\) is produced by \pref{alg:generation_robust}, using the key \(\xi'\) and input prompt \(x'\). 
\end{enumerate}

Additionally, in \pref{alg:detection_robust}, the notation 
\(\mathrm{Quantile}_{\lambda'}\prn*{\crl*{\Gamma_1, \dots, \Gamma_K}}\) is used to denote the maximal (in a sorted list) element \(\Gamma' \in \crl{\Gamma_{(k)}}_{k \leq K}\) that satisfies the relation \(\frac{1}{K}\sum_{k=1}^K \indicator\crl*{\Gamma_{(k)} \leq \Gamma'} \leq \lambda'\), where \(\lambda'\) is the quantile parameter. Then, the following result holds. 
    
\begin{proposition}[Robustness] 
\label{prop:robustness}
Let \(\alpha_0, \beta_0 \in (0, 1)\) and \(K \geq 1\) denote the length of the candidate text. Suppose that for every \(k \in [K]\), input prompt \(x' \in \cX\), the prefix of tokens \(v'_{< k} \in \cV^{\times (k-1)}\), watermarking key \(\xi_k \in \bbR^d\), and the candidate next token \(v_k\), the token-level test given by $\indicator\crl*{\tfrac{\tri*{\xi_k, \nabla \log p_\theta(v'_k \mid{} x' \circ v'_{<k})} }{\sigma \nrm{\grad \log p_\theta(v'_k \mid{} x' \circ x_{<k})}} \geq \tau_{\alpha_0}}$ has level \(\alpha_0\) and power \(1 - \beta_0\) in  distinguishing between the following (token-level) null and alternative hypothesis:  

\begin{enumerate}[leftmargin=22mm]  
    \item[$\hnull^k:$]  The key \(\xi_k'\) and the token \(v'_k\) are independent, i.e.,  $(\xi'_k, v'_k) \sim   \cN(0, \sigma^2 \bbI_d) \otimes \mu_k$ with  $\mu_k \in \Delta(\cV)$.  
    \item[$\halt^k:$] The token \(v_k'\) is generated in (round \(k\) of) \pref{alg:generation_robust} using the input prompt \(x'\),  prefix of tokens \(v'_{<k}\), and watermarking key \(\xi_k'\). 
\end{enumerate}

Then, for any \(\alpha, \beta, \lambda_0 \in (0, 1)\), input prompt \(x\), candidate text \(y = \prn*{v_1, \dots, v_K}\), and quantile parameter \(\lambda'\), the test in \pref{alg:detection_robust} has: 
\begin{enumerate}[label=\(\bullet\)] 
	\item level \(\alpha\), whenever  
 \(\lambda' \leq 1 - \alpha_0\) and 
\(
K \geq \tfrac{\log\prn*{\nicefrac{1}{\alpha}}}{2(1 - \lambda' - \alpha_0)^2}, 
\)  
\item power \(1 - \beta\), whenever  \(\lambda' \geq \lambda_0 + \beta_0\) and  $K \geq  \frac{2}{(1 - \lambda_0) \prn*{\lambda' - \lambda_0 - \beta_0}^2} \log \prn*{\frac{1}{\beta}}$,    
\end{enumerate}
while being robust under random independent substitutions affecting up to a \(\lambda_0\)-fraction of the tokens in \(y\). 
\end{proposition}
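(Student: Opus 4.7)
The plan is to reduce the robust test to a count of conditionally-independent Bernoulli indicators---one per token position---and apply Hoeffding's inequality twice: once for the level under $\hnull$ and once for the power under $\halt$. Unpacking the definition of $\mathrm{Quantile}_{\lambda'}$ in \pref{alg:detection_robust}, rejecting the overall null whenever this quantile exceeds $\tau_{\alpha_0}$ is equivalent to demanding that at least a $1-\lambda'$ fraction of the token-level statistics $\psi_k \defeq \psi(v'_k, \xi'_k \sep x' \circ v'_{<k})$ satisfy $\psi_k \geq \tau_{\alpha_0}$. I would therefore set $Z_k \defeq \indicator\crl*{\psi_k \geq \tau_{\alpha_0}}$ and work with the aggregate $\sum_{k=1}^K Z_k$ throughout.

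For the level, under $\hnull$ the keys $\xi'_1, \dots, \xi'_K$ are i.i.d.\ $\cN(0, \sigma^2 \bbI_d)$ and independent of $y'$. Conditioning on $y'$ and recycling the rotation-invariance argument behind \pref{prop:size_computation}, each $\psi_k$ is a standard-Gaussian projection of $\xi'_k$ along a $y'$-measurable direction; since the $\xi'_k$ are mutually independent, the $Z_k$ are then jointly independent Bernoullis with mean at most $\alpha_0$ (using the token-level level hypothesis on $\hnull^k$). A one-sided Hoeffding bound gives
\begin{align}
\Pr_{\hnull}\prn*{\tfrac{1}{K}\sum_{k=1}^K Z_k \geq 1 - \lambda'} \leq \exp\prn*{-2K(1 - \lambda' - \alpha_0)^2},
\end{align}
which is at most $\alpha$ under the stated sample-size condition (and is meaningful precisely when $\lambda' \leq 1 - \alpha_0$, matching the hypothesis).

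For the power I would let $S \subseteq [K]$ denote the uncorrupted indices, so $|S| \geq (1-\lambda_0)K$. Using the assumption that the substitution process is independent of $\xi'$, conditioning on $S$ and on the values of corrupted tokens leaves $(Z_k)_{k \in S}$ jointly independent (by the same prefix-conditioning trick as before, now applied under $\halt^k$) with $\Pr(Z_k = 1) \geq 1 - \beta_0$. Hoeffding on the subsample of size $|S|$ yields $\sum_{k \in S} Z_k \geq |S|(1 - \beta_0 - \varepsilon)$ with probability at least $1 - \exp(-2(1-\lambda_0)K \varepsilon^2)$. Since corrupted positions can only add to the hit count, it suffices to choose $\varepsilon$ so that $(1-\lambda_0)(1 - \beta_0 - \varepsilon) \geq 1 - \lambda'$; this is feasible exactly when $\lambda' \geq \lambda_0 + \beta_0$, in which case $\varepsilon$ scales like $(\lambda' - \lambda_0 - \beta_0)$, and balancing the resulting exponent against $\log(1/\beta)$ recovers the claimed sample-size condition up to absolute constants.

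The main obstacle is justifying the conditional independence of $(Z_k)_{k \in S}$ under $\halt$, because the autoregressive generation makes $v'_k$ depend on $\xi'_{\leq k}$. The resolution is to condition on the prefix history $v'_{<k}$: then $\psi_k$ depends on $\xi'_k$ only through a unit-norm projection along the prefix-determined direction $\nabla \log p_\theta(v'_k \mid x \circ v'_{<k})$, and because each $\xi'_k$ is freshly sampled, a tower/filtration argument turns the independence of the keys into independence of the indicators across uncorrupted positions. Pinning down a filtration that is simultaneously compatible with the autoregressive sampler and with the corruption process is precisely the bookkeeping that the ``mild independence'' assumption in the statement is designed to enable.
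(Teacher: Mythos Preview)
Your level argument matches the paper's exactly: reduce the quantile test to the event $\sum_k Z_k \geq K(1-\lambda')$, use the token-level level assumption to bound each $\En[Z_k] \leq \alpha_0$, observe that under $\hnull$ the $Z_k$ are genuinely independent (since the $\xi'_k$ are i.i.d.\ and jointly independent of $y'$), and apply Hoeffding.

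For the power, your high-level strategy---restrict to the uncorrupted index set, lower-bound the hit count there, and note that corrupted positions can only help---is also what the paper does. The gap is in your claimed resolution of the ``main obstacle.'' Conditioning on the prefix $v'_{<k}$ does \emph{not} make the indicators $(Z_k)_{k \in S}$ jointly independent under $\halt$: the token $v'_k$ that enters $Z_k$ is itself part of the prefix for $Z_{k+1}$, so the dependence propagates forward regardless of how freshly $\xi'_k$ is sampled. What the filtration argument actually delivers is that $\{Z_k - \En[Z_k \mid \cF_{k-1}]\}_{k \in S}$, with $\cF_{k-1} = \sigma(v'_{<k}, \xi'_{<k})$, is a bounded martingale difference sequence, and the token-level power assumption gives $\En[Z_k \mid \cF_{k-1}] \geq 1-\beta_0$ on uncorrupted indices. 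The paper therefore invokes Azuma--Hoeffding rather than Hoeffding at this step, obtaining
\begin{align}
\Pr_{\halt}\prn*{\tfrac{1}{|\cK|}\sum_{k \in \cK}\bigl(Z_k - \En_{k,A}[Z_k]\bigr) < -(\lambda' - \lambda_0 - \beta_0)} \leq \exp\prn*{-\tfrac{|\cK|(\lambda'-\lambda_0-\beta_0)^2}{2}},
\end{align}
which, combined with $|\cK| \geq (1-\lambda_0)K$, gives exactly the stated sample-size condition. Once you swap Hoeffding for Azuma--Hoeffding your argument goes through; the constants you'd recover then match the paper's rather than being ``up to absolute constants.''
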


{We observe that \pref{prop:robustness} requires the token-level test to have the respective guarantees on its power and level for any prompt \(x'\) and prefix sequence \(v'_{<k}\). This aligns with the style of guarantees that we provided for the vanilla test in \(\gaussmarkd\). In fact, the results of  \pref{prop:size_computation} 
and \pref{prop:softmax_power_adaptive} already imply such a guarantee by setting  \(y = v'_k\) and \(x = x' \circ v_{<k}\)}. Furthermore, while  \pref{prop:robustness}  considers random {substitutions} of candidate text \(y\), our analysis can be easily extended to robustness guarantees against random additions and deletions. The quantile-based test in \pref{alg:detection_robust} can also boost the power of our test; notice that \(\beta\) can be made arbitrarily smaller than \(\beta_0\) by appropriately setting \(K\). 
However, since we find that \pref{alg:detection} is already quite powerful in our experiments, we leave empirical validation of utilizing \pref{alg:detection_robust} for future work. 

\begin{algorithm}[h]
    \begin{algorithmic}[1]
    \State{}\textbf{Input:}  Language Model $p_\theta:\cV^\star \to \Delta(\cV)$ with  parameters $\theta \in \Theta$, prompt \(x\), variance $\sigma^2 > 0$, generation length \(K\). 
    \State Sample watermarking key \(\xi = \crl{\xi_1, \dots, \xi_K}\) where \(\xi_k \overset{\mathrm{iid}}{\sim}  \cN(0, \sigma^2 \eye)\). 
    \State For \(k = \crl{1, \dots, K}\), generate next token via  $v_k \sim p_{\theta + \xi_k}(\cdot \mid{} x \circ v_{< k})$. 
    \State Return \(y = \prn*{v_1, \dots, v_K}\). 
       \end{algorithmic}
      \caption{$\gaussmarkg$: Robustly generating watermarked text} 
      \label{alg:generation_robust} 
\end{algorithm}

\begin{algorithm}[h]  
    \begin{algorithmic}[1]
    \State{}\textbf{Input}: Language Model $p_\theta:\cV^\star \to \Delta(\cY)$ with  parameters $\theta \in \Theta$, prompt \(x\), candidate text $y = \crl*{v_1, \dots, v_K}$, variance $\sigma^2 > 0$, watermark key $\xi = \prn*{\xi_1, \dots, \xi_K}$, robustness parameter \(\lambda_0 \in (0, 1)\), token-level test's level \(\alpha_0 \in (0, 1)\), quantile parameter \(\lambda'\). 
\State For \(k = \crl{1, \dots, K}\), evaluate the test statistic $\Gamma_k = \frac{\inprod{\xi_k}{\nabla_\theta \log p_\theta(v_k \mid x \circ v_{1:k-1})}}{\sigma \norm{\nabla_\theta \log p_\theta(v_k \mid x \circ v_{1:k-1})}}$. 
 \State Define \(\psi(y, \xi \sep x) \ldef{} \text{Quantile}_{\lambda'}\prn*{\crl*{\Gamma_1, \dots, \Gamma_k}}\).  
    \If{$ \psi(y, \xi \sep x) \geq  \Phi^{-1}(1 - \alpha_0)$}  \algcomment{$\Phi$ is the CDF of standard Normal distribution}  
    \State{}\textbf{Reject Null} and output "watermarked". 
    \Else 
    \State{}\textbf{Fail to reject Null}
    \EndIf 
    \end{algorithmic}
      \caption{$\gaussmarkd$: Robustly detecting watermarked text} 
      \label{alg:detection_robust} 
\end{algorithm}

\newcommand{\nullcase}{\hnull} 
\newcommand{\altcase}{\halt}

\begin{proof}[Proof of \pref{prop:robustness}]
For \(k \in [K]\), define the random variable \(Z_k = \indicator\crl{\Gamma_k \geq \Phi^{-1}(1 - \alpha_0)}\), where \(\Gamma_k\) is defined in \pref{alg:detection_robust}.

\paragraph{We first bound the level of the test in \pref{alg:detection_robust}.} Let \(q \in \Delta(\cY)\) be the distribution, independent of \(\xi\), from which the sample \(y\) is drawn under the null hypothesis \(\hnull\). Since, the token-level test has level \(\alpha_0\) for all \(k \in [K]\), we have 
\begin{align}
\En_{\substack{\xi_k \sim \cN(0, \sigma^2 \bbI_d),  v_k \sim q}} \brk*{Z_k \mid x \circ v_{<k}} \leq \alpha_0. \quad \label{eq:robust1}  
\end{align} 

For the ease of notation, we use  \(\En_{k, 0}\brk*{Z_k}\) to denote the expectation \(\En_{\substack{\xi_k \sim \cN(0, \sigma^2 \bbI_d), v_k \sim q}} \brk*{Z_k \mid x \circ v_{<k}}\) for \(k \in [K]\).  The level of the sequence level test is given by: 
\begin{align} 
\Pr_{\hnull}(\psi(y, \xi \sep x) \geq \Phi^{-1}(1 - \alpha_0)) &= \Pr_{\nullcase} \prn*{\text{Quantile}_{\lambda'}\crl*{\Gamma_1, \dots, \Gamma_k} \geq \Phi^{-1}(1 - \alpha_0)} \\  
&=  \Pr_{\nullcase} \prn*{\sum_{k \in [K]} Z_k \geq K(1 - \lambda')}  \\
&= \Pr_{\nullcase} \prn*{\sum_{k \in \cK} Z_k - \En_{k, 0}\brk*{Z_k} \geq K(1 - \lambda') - \sum_{k \in \cK} \En_{k, 0}\brk*{Z_k}}  \\ 
&\leq \Pr_{\nullcase} \prn*{\frac{1}{K} \prn*{\sum_{k \in \cK} Z_k - \En_{k, 0}\brk*{Z_k}} \geq 1 -  \lambda' - \alpha_0}, 
\end{align} 
where the equality in the second line follows from the definition of Quantile, and the inequality in the last line is due to \pref{eq:robust1}.   Next, note that under the null hypothesis \(\hnull\), \(\xi  = \crl{\xi_1, \dots, \xi_K} \sim \cN(0, \bbI_d)^{\otimes K}\), and \(y \sim q\) where $q$ is independent of \(\xi\). Additionally, because the substitutions are done randomly, and independently of \(y\), the random variables \(\crl{Z_1, \dots, Z_k}\) are independent (under the null hypothesis).  Thus, using Hoeffding's inequality, we get 

\begin{align}
\Pr_{\hnull}(\psi(y, \xi \sep x) \geq \Phi^{-1}(1 - \alpha_0))  &\leq e^{- 2 K \prn*{1 - \lambda' - \alpha_0}^2},  
\end{align}
whenever \(\lambda' \leq 1 - \alpha_0\). The above implies that the test is at level \(\alpha\) for $K \geq \frac{1}{2(1 - \lambda' - \alpha_0)^2} \log\prn*{\frac{1}{\alpha}}$.

\paragraph{We next bound the power of the test in \pref{alg:detection_robust}.} Let \(\cK\) denote the set of those indices \(k \in [K]\) for which the block \(v_k\) is not corrupted. Thus, \(\abs{\cK} \geq K (1 - \lambda_0)\).  Since, the token-level test has power \(1 - \beta_0\) for any uncorrupted token, we have
\begin{align}
\quad \En_{\substack{\xi_k \sim \cN(0, \sigma^2 \bbI_d), \\ v_k \sim p_{\theta + \xi_k}(\cdot \mid{} x \circ v_{<k})}} \brk*{Z_k \mid x \circ v_{< k}} \geq 1 - \beta_0. \label{eq:robust2}  
\end{align} 
for any \(k \in \cK\). We note that 
\begin{align} 
\Pr_{\altcase}(\psi(y, \xi \sep x) < \Phi^{-1}(1 - \alpha_0)) &= \Pr_{\altcase} \prn*{\text{Quantile}_{\lambda'}\crl*{\Gamma_1, \dots, \Gamma_k} < \Phi^{-1}(1 - \alpha_0)} \\  
&= \Pr_{\altcase} \prn*{ \sum_{k \in [K]} Z_k <  K(1 - \lambda')} \\  
&\leq \Pr_{\altcase} \prn*{ \sum_{k \in \cK} Z_k <  K(1 - \lambda')}, 
\end{align} 
where the second line follows by plugging in the definition of Quantile, and the inequality in the third line follows by ignoring the corrupted tokens. Next, for \(k \in [K]\), let \(\En_{k, A}\brk*{Z_k}\) denote the expectation \(\En_{\substack{\xi_k \sim \cN(0, \sigma^2 \bbI_d), v_k \sim p_{\theta + \xi_k}(\cdot \mid{} x \circ v_{<k})}} \brk*{Z_k \mid x \circ v_{< k}}\). Adding and subtracting \(\En_{k, A}\brk*{Z_k}\) on both sides, we get 
\begin{align} 
\Pr_{\altcase}(\psi(y, \xi \sep x) <  \Phi^{-1}(1 - \alpha_0))  &\leq \Pr_{\altcase} \prn*{\sum_{k \in \cK} Z_k - \En_{k, A} \brk*{Z_k} < K(1 - \lambda') - \sum_{k \in \cK} \En_{k, A} \brk*{Z_k}}  \\ 
&\leq \Pr_{\altcase} \prn*{\frac{1}{\abs{\cK}} \prn*{\sum_{k \in \cK} Z_k - \En_{k, A} \brk*{Z_k}} <  K(1 - \lambda') - \abs{\cK} \prn*{1 - \beta_0}} \\ 
&\leq \Pr_{\altcase} \prn*{\frac{1}{\abs{\cK}} \prn*{\sum_{k \in \cK} Z_k - \En_{k, A} \brk*{Z_k}} < -\lambda' + \lambda_0 + \beta_0} 
\end{align}
where the second line follows from \pref{eq:robust2}, and the third line uses the fact that \(K(1 - \lambda_0) \leq \abs{\cK} \leq K\).  Next, notice the fact that under \(\halt\), the tokens \(v_1, \dots, v_K\) are generated following the autoregressive process in \pref{alg:generation_robust}, and thus, the sequence, \(\crl{Z_k - \En_{k, A} \brk*{Z_k}}_{k \in \cK}\) forms a martingale difference sequence. Using Azuma-Hoeffding's inequality, we get 
\begin{align}
\Pr_{\altcase}(\psi(y, \xi \sep x) <  \Phi^{-1}(1 - \alpha_0))  &\leq  e^{- \abs{\cK} \prn*{\lambda' - \lambda_0 - \beta_0}^2 / 2}.  
\end{align}
whenever \(\lambda' \geq \lambda_0 + \beta_0\). The above implies that the test has power \(1 - \beta\) whenever \(\abs{\cK} \geq \frac{2}{\prn*{\lambda' - \lambda_0 - \beta_0}^2} \log \prn*{\frac{1}{\beta}}\), which happens when, 
$K \geq  \frac{2}{(1 - \lambda_0) \prn*{\lambda' - \lambda_0 - \beta_0}^2} \log \prn*{\frac{1}{\beta}}$. 

\end{proof}

\clearpage 

\section{Sample Generated Texts}\label{app:sample_text}

In this section, we provide samples of text completions via \(\mistral\), \(\llama\) and \(\phimini\) models. As above, the prompts are from the \texttt{realnewslike} split of the C4 dataset. For the $\phimini$ model, the prompts were provided along with the following instructions prior to each prompt: 
\newline 

\noindent 
\hfill \begin{minipage}{0.85\textwidth} 
\texttt{<|system|>} 

~\hspace{0.5cm} You are a helpful assistant. \texttt{<|end|>} \newline 
\texttt{<|user|>} 

~\hspace{0.5cm} Complete the following prompt: \texttt{<|end|>} \newline 
\texttt{<|assistant|>} 
\end{minipage}

~\newline 

In the examples below, the text under the heading "Base Model" represents a random text completion sampled from the original (unperturbed) model. The text under the heading "Watermarked Model" shows a random text completion produced by the watermarked model, where the watermark parameters are as described in \Cref{tab:parameters}.

\subsection{Example 1} \label{ssec:main_example}  
\promptbox{Sci-Fi fans will rejoice with today’s news as details about the upcoming game, Doctor Who: The Etern}

\mistralbox{ity Clock, have been released. The game is the first cross-platform adventure game in the series’ history and was co-created by TimeGate Studios and gaming legends, Doctor Who writers Steven Moffat and Steven Thompson.

Doctor Who: The Eternity Clock will have multiple release dates as the game will release on Android, iOS, and PC at the end of March with the release of PSN coming a week later on April 3. “The release date for the Xbox Live Arcade version is to be confirmed,” says TimeGate Studios.

TimeGate Studios also released a new trailer for the game, which you can check out below.}{ity Clock have been announced by BBC Worldwide and Ubisoft.

On the way for the PlayStation 3 and Xbox 360 in March, 2012, the game allows players to experience an original adventure in a new universe created by Doctor Who: The Eternity Clock. The Doctor, River Song and other key figures make an appearance in the game along with a new roster of characters.  On top of an epic adventure, the game features a variety of gameplay elements including third-person combat, puzzle-solving, and platforming.

Check out a variety of images and a trailer for the game below.}

\llamabox{ity Clock have just been revealed. Both iOS and Android versions of the game are set to be released in Spring 2012. Also set to be released in 2012 is the much anticipated Movie The Dark Knight Rises and the new Dr. Who series. The official game synopsis, gameplay video, and screenshots follow.
Wired, popular UK gadget site, has leaked some screenshots of the upcoming Android OS. According to the leaked screenshots Android 4.0 will feature an updated logo that looks to borrow a bit from HTC’s Sense user interface. The update should not come as a surprise to many users as Android 4.0 is expected to be released in November.
Motorola has released a new set of phones in hopes of reviving their lagging sales in the US. The new phones to their portfolio are the Milestone 3 and the Flipside. The Milestone 3 is the US version of the Milestone XT720 that was released earlier this year. The new phone will be running the latest version of Android, 2.2. It features a 3.7 inch touchscreen and runs on an 800Mhz processor. The phone is available for \$99 from Verizon. The second new Motorola phone is the Flipside which runs on Android 2.1. It features a slide-out QWERTY keyboard, 5MP camera, and stereo Bluetooth. The Flipside is available for \$30 from AT\&T.}{als, have been revealed. Cryptic Games have collaborated with BBC Studios to bring a life-like game based on the BBC’s popular Doctor Who.
Doctor Who: The Eternals is a massively multiplayer online role-playing game based in the Doctor Who universe and is being developed by Cryptic Games, the makers of Neverwinter and Lord of the Rings Online.
At the heart of the game are the Eternals, extra-dimensional beings that played a pivotal role in several key stories from Doctor Who. Among the Eternals are creatures known as the Oncoming Storm, whose mysterious plans for the universe have resulted in alien invasions and apocalyptic destruction.
In Doctor Who: The Eternals, players will enter into a proxy battle between the Oncoming Storm and the Eternals, who have recently come to Earth. All races from Earth and the universe are involved, and gamers will be able to select one of these races to fight alongside. Players can join friends and allies in battling the Oncoming Storm to defend Earth from annihilation and restore the stability of the universe.
This announcement is great news for the Doctor Who and Sci-Fi community. Doctor Who: The Eternals is set to be released in 2019. Keep up to date with all the details of this massive online game, here.
Find the perfect gift for the Whovian in your life. Shop the Doctor Who range now.}

\phibox{als. This action-packed title, released by Sid Meier Games, takes players on a thrilling journey through time and space alongside the beloved Time Lord, Doctor Who. Set in a visually stunning universe filled with futuristic technologies, cosmic threats, and unique alien races, the game promises to offer hours of engaging gameplay and a heart-pounding storyline.

The main character, players assume the role of Ewan, a new recruit who joins the iconic Time Lord's crew as they confront numerous interstellar villains. As an Eternal, Ewan can utilize various time-travel abilities and work alongside fellow companions to battle through challenges. The unique gameplay mechanics revolve around strategic time-travel combat, teamwork, and resource management. Players will have the opportunity to pilot advanced spaceships, command the TARDIS, and deploy time-portal gates to progress through the ever-evolving narrative.

Doctor Who: The Eternals is an exclusive PS5 and PC release. Players can expect to encounter thrilling game modes such as Story Mode, where players navigate the overarching storyline with an array of companions; Single-Player Mode, which allows players to tackle the game solo while engaging with rich lore and side quests; and Time Travel Puzzles, which puts players' logic and quick thinking to the test.

Developed by Sid Meier Games, known for their previous hits like Civilization VI and XCOM, Doctor Who: The Eternals is set to be a top-tier experience for fans of both strategy and science fiction. With its impressive graphics, dynamic storyline, and immersive gameplay, this game promises to captivate players for hours on end. Pre-orders for the game are available on both PlayStation Store and Steam, with a release date scheduled for early next year.}{ity Box, finally become available! Directed by showrunner Steven Moffat, who rose to fame with iconic Doctor Who screenplays, the video game will transport players into the whimsical and unpredictable universe of the beloved BBC series.

The Eternity Box, a meta-spiritual accessory essential to the TARDIS, serves as the pivotal centerpiece in Doctor Who: The Eternity Box, allowing players to explore its universe and manipulate its hidden powers. The TARDIS, an ancient spaceship that explodes into a stunning cosmic oak tree and transforms into the blue, British police box, has served as the conduit to countless adventures, and the game will similarly capitalize on its mystical presence.

Steven Moffat’s directorial expertise ensures that the narrative remains true to the characters and aesthetic of Doctor Who, delving deep into the universe’s lore and witty humor. With the Doctor (David Tennant's performance) and the TARDIS constantly on the move, players will engage in a multi-layered adventure, combining elements of time travel, science fiction, and intergalactic drama.

As Moffat has consistently proven in both his writing and filmmaking career, Doctor Who: The Eternity Box aims to elevate the character of the Doctor (voiced by Chris Addison) and the storyline of the game, setting it apart from other video games in the genre. With captivating character interactions, twisting plots, and immersive world-building, the Eternity Box pushes the boundaries of video gaming as we know it.

Whether you’re a die-hard fan or a casual adventurer, Doctor Who: The Eternity Box promises to be a timeless classic for fans of sci-fi and fantasy. The collaboration between Chris Addison and Steven Moffat will undoubtedly deliver an immersive gaming experience that pays homage to the essence of the beloved television series.}

\subsection{Example 2} 
\promptbox{Paris: Chris Froome is intent on defending his Tour de France title even if the owners of cycling's }

\mistralbox{21-day endurance race refuse to give it the 2018 edition in his adopted home country of Britain.

The 31-year-old British rider was talking Tuesday before the expected announcement from Tour organisers that the race will visit the Netherlands in 2018.

Chris Froome is hopeful that a decision on the 2018 Tour de France will include Britain as a stop. AP

The Dutch leg is part of a move by the Tour to focus on its most popular markets while cycling is booming there.

Froome's Sky team says it "fully respects" the Tour organisation's commercial decisions. But Froome wants to see Britain involved again.

"If you see how cycling is going in Britain at the moment, especially on the roads, there are a lot of people going to the races on a Sunday, there are lots of kids and people taking up the sport," Froome told The Associated Press.

The Londoner is defending his third Tour title and fourth overall in France.

"It would be fantastic if we could get a Tour start in the UK," Froome said, "for a start, to give people in the UK a chance to see the biggest race in the world, I think that'd be very good for our sport in Britain."

The British Isles, without Ireland, last hosted the Tour in 1994, when the race went to Manchester and London.

Froome won a prologue in Sheffield to seal his first Tour title in 2013 and completed a double British Tour win in 2014 with a stage victory in Cambridge.

The British time trial champion has three grand tour titles, including the Vuelta a Espana in 2011.

Froome has also won four of the past five editions of the Tour de France's final time trial, including his first stage in the 2013 edition.

He will need to add to that tally in this Tour de France to stay ahead of his rivals in the time trial on the penultimate stage.

After Saturday's epic mountain stage, the route from Pau to Bagneres-de-Luchon features a 34km stretch against the clock that is perfect for time trial specialists like Froome.

"It's definitely going to be an interesting time trial," Froome said. "Obviously I've had a lot of success in the time trials, and that's really going to be where I can make the difference."

Froome and closest rival Romain Bardet, who rides for French team AG2R La Mondiale, are 2:14 apart at the top of the general classification, with Fabio Aru a further 47 seconds behind.

Updated Date: Jul 19, 2016 16:35 PM}{2017 Grand Tour calendar will likely bar him from racing in the sport's other biggest race.

Organisers of the Giro d'Italia -- the first Grand Tour of the season in late May -- have decided to seek a new rule at the next International Cycling Union (UCI) congress in September that would ban cyclists who compete in races at the Olympics from the year's opening Grand Tour.

That would be a direct shot at the Tour and the one-week Olympics for the road race, set for August 6-14.

World Champion Peter Sagan of Slovakia is among the riders who will likely benefit from the rule.

"There are decisions to be made. If you want to ride the Tour de France and the Olympics you can't do both," said Mauro Vegni, director of RCS, the company behind the Giro.

Asked if the Giro could still be contested by the Tour winner -- his words the "ideal" solution -- Vegni suggested that was no longer likely.

"I don't think so, because there are too many restrictions and in any case there are too many compromises," he said.

If a new rule is passed, it would likely prevent riders from taking part in the Tour before the Olympics.

A growing number of races on the cycling calendar are seen as key preparation for the Tour, the sport's biggest race.

"I still think there are a few months to try to find a solution," said Vegni.

"I believe in a positive approach. We can see what we can do."

Tour boss Christian Prudhomme dismissed the Giro proposal, saying the Tour-Giro debate had been resolved and that France was a "priority" for all the cyclists.

He said Froome and the five-time Tour champion Alberto Contador could be invited to the 2016 Tour to defend their respective titles.

Froome's Sky team is also backing the Briton to defend his Tour title.

"It is always a privilege to be champion and it would be wonderful if we could do it for another year," Froome told reporters on Wednesday in Paris.

"This will be my first decision, to defend my title," said Froome, who added that he had received a letter from Giro bosses asking him if he intended to ride the race next year.

The four-time Tour runner-up Richie Porte has also received an invitation for the 2016 race.

"Chris is on my list for the 2016 Giro," Vegni told AFP.

"The Giro would be a second chance to ride the Tour in 2017," he said.}

\llamabox{100th edition want to deny him his chance.
Team Sky's Froome took the red jersey in Monday's Stage 1 team time trial in Dusseldorf and the defending champion said on Wednesday he was intent on winning for a fourth straight year despite the first concerns about race security after terrorism in and around the French capital.
"They will do everything to make sure I don't have a chance to be the winner of the 2016 Tour," said Froome, who took the red jersey after he won the 2015 race.
The leader of the Tour of Britain, Sky's Wout Poels, spoke more forcefully in a radio interview with France Info.
"Hopefully there will not be any incidents in Paris, but the Tour is much more than that. It is more about the safety and the well-being of all the people who have come here. They don't want to do something to cause problems for the fans who have come to support the Tour.
"It's possible to keep the parcours, but the Tour could have been very different. I don't know what they are going to do and it is difficult to predict."
The Tour is planning to go ahead with plans for a ceremonial finish on the Champs-Elysees in Paris on July 24.
However, the finish was subject to a vote on Tuesday of the assembly of the region surrounding the capital, the Ile-de-France. The finish was ratified by a vote of 104-47.}{21 top teams don't think he should.
The current Tour route favors climbers, a rider like Froome who rides for Team Sky. The peloton's riders' unions voted on Tuesday to remove him from the Tour roster because he is "abusing the current UCI rules" and "unfairly" dominating the race.
Froome has already won four Tour titles, plus another Tour de France in the form of the Olympics, and he and his team say there's no question of him standing down. Sky general manager is adamant that his 32-year-old rider will be riding in the Tour.
"The world's best cyclist, in my opinion, is the defending champion Chris Froome, so I hope he'll be riding," he told AP. "And it won't be just me who thinks that. All of his teammates will want him to be riding, and a lot of other riders in the peloton will want him to be riding.
"So we're all hoping that he will be allowed to ride."
Oriol Jazainc (Caterham), Rui Costa (Lampre), Joaquim Rodriguez (Katusha), Romain Rmadas (Lotto), Wouter Wentslauters (Lotto) and Ivan Vitsinov (Omette).}

\phibox{100-year-old world championship push back against his demands. Froome, a former British cyclist, was in the middle of an unprecedented appeal against a UCI ruling last week. The British rider sought to overturn a suspension handed down to him after his urine sample tested positive for a banned substance, meldonium.

Froome's main argument revolved around his years of competition before the drug was banned. He claimed that the ban, which was introduced by the Union Cycliste Internationale (UCI) only last year, should not retroactively affect his past performances. The UCI, however, maintained its stance on fair play and sought to uphold the integrity of the Tour de France, one of the most prestigious events in professional cycling.

The dispute led to public outcry among fans, commentators, and the cycling community, with discussions ranging from the ethics of retroactive bans to the role of personal conduct and performance in sports. Major sports brands and cycling events faced criticism, leading to a reevaluation of endorsements and partnerships to ensure they are not supporting racism or injustice in any form.

In a move to uphold the integrity of the sport, the UCI announced on June 10th that it would not alter its initial decision. Despite this, Chris Froome's determination to make a comeback has been a major talking point. As he prepares for the 2018 Tour de France, the road ahead remains uncertain for both Froome and the sport.

Amidst the heightened tension and debate around fair play and substance abuse in sports, Chris Froome's resilience and defiance exemplify the human spirit's perseverance in the face of adversity. While the outcome of his appeal remains uncertain, Froome's fight to prove himself continues to captivate the attention of the world.}{99.9 percent most famous bike decide not to let him ride, according to The Telegraph.

In the context of the Tour de France, Chris Froome, a British cyclist, has displayed determination to defend his championship title, regardless of whether the renowned 'La Madone' (The Madonna) bike, owned by cycling's 99.9 percent most famous bike, 'The Associated Press' or 'Agence France-Presse' (the world's oldest news agency) chooses not to provide it for use in the race. The Telegraph has reported this sentiment from Froome, highlighting his unwavering commitment to the competition.

La Madone bike, often referred to as La Madeline, is a cherished possession by the Tour de France organizers and holds immense value to the race's history and legacy. The bike was originally used in 1999 by Richard Virenque to win his third Tour stage and was subsequently modified and incorporated into the race organization as a symbol of the event's enduring traditions. This custom-built bike is considered an essential part of the Tour's identity, worn by cyclists throughout the race.

Chris Froome, a prominent figure in the sport and a three-time Tour de France winner, understands the significance of this tradition and the emotional attachment attached to La Madone. Despite potential complications arising from not having access to the coveted bike, Froome remains focused on his goal of reclaiming the Tour de France title. His dedication reflects the spirit and drive of professional athletes who often face challenges and setbacks but continue to pursue their goals with relentless determination.

The decision not to let Froome ride on La Madone may stem from various factors, such as financial concerns, legal disputes, or attempts to uphold the race's regulations. Still, it does not appear to deter Froome from his objective. His strong determination can be seen as a testament to the cycling community's enduring passion, which goes beyond personal setbacks and challenges.

Ultimately, Chris Froome's commitment to defending his title, even in the face of such circumstances, serves as an inspiration to athletes and fans alike. The Tour de France, one of the most prestigious and historic events in cycling, continues to captivate people's hearts worldwide, showcasing the extraordinary feats and the undying spirit of sportsmanship that is shared among the participants.}

\subsection{Example 3} 
\promptbox{Karl Kispert, principal of cyber and information security, has more than 28 years of experience in s}

\mistralbox{AP/GRC Security Consulting, Implementation and Support, Data Privacy and Cybersecurity Risk Management. His prior positions include running sAP and GRC consulting practices as well as data privacy and cybersecurity risk management advisory and managed services practices, with a strong focus on government contracting.

His experience includes supporting a variety of public and private sector clients, including many in highly regulated industries, in protecting their business interests and competitive position by securing and protecting data and networks. Mr. Kispert’s focus is helping companies in a wide variety of areas, including:

- sAP Security Best Practices
- sAP Access Controls
- sAP Security Testing
- Data Privacy Protection
- Cybersecurity Risk Management
- Information Security Management
- Information Governance
- Cybersecurity Risk Management
- Data Breach Prevention, Remediation and Management
- Cyber Threat Analysis and Detection
- Information System Vulnerability and Risk Assessments
- Cyber Security Risk Management
- Data Privacy Consulting

Mr. Kispert holds a Masters of Science in Information Technology and is an ISO 27001 Lead Implementer, CISSP, CISA, CRISC, SAP CRM, SAP PI/PO and SAP GRC Certified Consultant, among other security and risk certifications.}{izing, designing, and implementing security solutions for both physical and cyber domains, to include security engineering, command and control systems, telemetry, HVAC, lighting, fire suppression, access control, video surveillance, network, and wireless technologies. Karl has been active in the cyber and information security industry since 2004 and has extensive experience developing and delivering cyber and information security solutions for military and civil defense missions across a broad range of facilities including embassies, barracks, air and sea ports of entry, forwards operating bases, and mobile operating bases. He has provided cyber security engineering services to include independent verification and validation of security engineering documents, performed technical risk assessments, developed detailed design documents, conducted cyber attack exercises, and defined and implemented security policies and procedures to address federal information security and privacy regulations.}

\llamabox{...
5 Reasons Why Americans Fail at Cybersecurity
By Karl Kispert | 5 min read on February 24, 2017
Karl Kispert, principal of cyber and information security, has more than 28 years of experience in systems, security and compliance. He is also a certified information security manager (CISM) and certified information systems auditor (CISA).
Cybersecurity. If there’s one thing that I’ve learned over the years in working with corporations, small businesses, academia, military, federal government, state government and even the entertainment industry, it is that everyone loves to talk about the importance of cybersecurity, but no one wants to talk about how they go about doing it.
Just like in the movie "The Wizard of Oz," the truth is that you won’t find the answer in cyber security.
1. They Don’t Know What They Don’t Know
Most people have a pretty strong grasp on the way that their organization works. They see the day-to-day operations and watch as departments work together to perform their respective tasks to achieve the organization’s overall mission and objectives. This makes it easy to explain how your department helps the overall process run smoother, and how you help to achieve the overall mission and objectives. In short, you do this.
2. Failure to Define the Threat
If we look at it in military terms, there is always a human adversary who you are fighting against to protect our way of life. It’s not as much as it was about keeping them out of our country, it’s about ensuring that the person is not a threat to our citizens or infrastructure. But if we don’t know the threat, then we can’t protect against it. While a threat is commonly defined as a viable opponent of an organization, it also is the potential danger posed by the person, event or circumstance that could affect your organization’s operations.
3. The Human Element Is the Problem
Most people will tell you that they care about cybersecurity, but they really don’t care enough to take any preventative measures to protect themselves. According to the 2016 Bitglass report, "The Human Factor: The Hidden Threat to Security," human error remains a prominent source of data breaches, and nearly half of all businesses admit that human error has caused at least one data breach in the last 12 months. This information not only reflects on the organization’s response to security measures, but also reflects on how employees view the seriousness of their personal responsibilities for cybersecurity. Failure to take a proactive stance against cyber security represents a dangerous attitude. It’s not a matter of if you will have a data breach; it’s a matter of when.
One problem with the "when" part is that, by the time you realize that you have a problem, it’s usually too late. In one study, 72 percent of organizations did not discover a breach until more than six months later. In another study, a breach was identified in more than 33 percent of organizations within 24 hours, while 37 percent discovered a breach within seven to 30 days and 30 percent discovered a breach more than 30 days after an attack. It seems that the last of those three groups, 33 percent, had the worst security posture.
Why are we waiting so long? It seems like the longer you wait, the longer you will wait. Cybersecurity should be about prevention and not reaction.
5. The CISO as a Bridge
If you think that a breach will never happen to you, then you should probably just go out of business right now. But if you understand the threat, and you’ve done everything you can to protect your assets, then what is it that you do now? In the classic film "The Wizard of Oz," the wizard was just a man behind the curtain. When the character Dorothy found him and realized he was just a man, she revealed that she had all of the answers in her pocketbook. In a similar way, when you understand the threat and how to protect yourself from it, you realize that all the answers have been in your pocketbook the whole time.
When it comes to cyber security, most of us think that there must be a secret door with a secret password in order to find the answers to our problems. But what if there isn’t a secret door and there isn’t a secret password? What if the answers are in your pocketbook all along? What if it was just a matter of unlocking the door and learning how to use the password?
cybersecurity advice
cybersecurity lessons}{ ecure cloud and other mission-critical computing environments, providing technical direction, ensuring operational supportability, and driving the development of specialized products and security solutions for the U.S. Air Force, other Department of Defense (DoD) agencies and private sector customers. As a technical leader of a secure systems architecture and software development organization, Karl has developed technical and system solutions for several mission critical programs, both in the U.S. and abroad, including the Naval Data Link in Stuttgart, Germany. Karl’s technical expertise includes secure cyber operations (National Security Agency classified programs, including Command, Control, Communications, and Intelligence (C3I)), as well as tactical, enterprise and high performance computing. He has developed and delivered enterprise solutions for command and control systems and messaging systems; and technical architectures for large, high performance computing systems. His background includes Secure Cloud solutions, virtualization, agile development, process integration, and data analytics and cyber.
Karl is the chief engineer and project lead for two National Security Agency classified projects involving development of a Command and Control System using Microsoft Azure and the Joint Target Library (JTL) comprised of two large systems:  a Windows Server virtualization cluster that hosts a relational database and associated server processes and a high-performance compute cluster that contains an in-memory database and computational processes.  In these roles, he leads the technical development for cyber solutions and security solutions, developing concepts, designing architectures, and implementing technical solutions for Cyber Security/Information Assurance, which entails the design and development of secure solutions in accordance with Department of Defense (DoD) and National Security Agency (NSA) policies and regulations.  Other roles include development of technologies for secure mobile device solutions using AppCircus and other secure mobile solutions for DoD customers. He has a passion for process integration, bringing together stakeholders from government and industry to leverage knowledge to promote high performance computing and secure mobile solutions for DoD.}

\phibox{elling cybersecurity and technology solutions across all industries. Kispert's approach to problem-solving is data-driven and solution-oriented, ensuring that he crafts innovative and effective strategies to safeguard businesses from evolving threats. As a strong advocate for continuous learning, he believes in staying abreast with the latest advancements in the field to deliver cutting-edge services. His expertise in navigating complex technologies makes him an invaluable asset to both clients and the industry.}{ophisticated threat mitigation. With a comprehensive background in security systems design, he has worked with notable institutions like the Homeland Security Department, securing classified government data and protecting critical infrastructures. Prior to his role at Secure IT Solutions, Kispert directed the Cyber Security Division at Global Tech Solutions, where he oversaw the development and implementation of their state-of-the-art network protection solutions, including intrusion detection systems and secure communication channels. Kispert earned a master's degree in computer science from George Washington University, which focused on advanced cryptography and network security. His deep understanding of both theoretical and applied cybersecurity practices has allowed him to lead teams through the evolving landscape of cyber threats, ensuring robust and adaptive defenses.}

\subsection{Example 4} 
\promptbox{Quite frankly, thoughts and prayers can only go so far. They have limited ability to protect our fam}

\mistralbox{iles, our homes, our jobs, and our lives.  Words and thoughts will not stop the chaos that is happening in our country.  Although they can bring comfort to those whose hearts ache and ears are bleeding from the pain and injustice we are experiencing on an hourly basis.

That is why our goal as a nation must be to do more than just offer up words and prayers for those who are struggling, although this is a very important and loving way to make our presence known to those who suffer. It is just the beginning of what needs to be done. We must also commit to following up our prayers with actions.

This past weekend, one man rose to the occasion and made a huge difference in his community. James Richardson, a local “activist” in Minneapolis took matters into his own hands after an early morning shooting left several people injured, one man dead, and several children present. Mr. Richardson took it upon himself to stop the violence in his community, and it is clear that he has the respect of the community.

On Saturday morning, while the city of Minneapolis was having a “Stop the Violence” rally, and George Floyd’s murder was still the topic of conversation, there was another incident that took place just one block from the rally. This was one of the latest of the many shootings that have taken place in the Twin Cities, Minnesota area. Three people were injured, and one man was killed.

James Richardson has set up his own camp in the middle of a violent neighborhood, but he was able to successfully navigate the dangerous situation that could have resulted in a violent conflict. He went to the scene, and he was able to de-escalate the situation.  The Minneapolis Police Department arrived shortly after the incident occurred.

He was able to calmly and peacefully speak with the police officers and the community members in order to get everyone on the same page. He also provided medical attention to those who were injured. He used his position and influence to connect the community and law enforcement. As a result of his involvement, the shooter was arrested.

This is a story of unity and teamwork between the community and the police. Mr. Richardson is a man who is dedicated to the betterment of his community. His example should be a lesson to others who are attempting to create change.

Although these protests are very important in the sense that they help bring awareness to injustice and police brutality, we cannot forget about the message that the protests are supposed to deliver. These protests are supposed to be a catalyst for change, not just a way to make our voices heard.

Let’s try to avoid violence at all costs, and try to put a stop to the violence before it happens.

We are all a part of the same team, and there is a way for us to make the world a better place together.

Together, we can create the kind of world that we want to see.

Join The Insurgents newsletter to keep up with the latest on the Black Lives Matter protests and other political events. We will keep you up to date and provide you with resources on how to get involved.}{iliars, to combat crimes against them, and to seek justice for the dead.  If we want a true solution, a permanent solution, a solution that works, then we need to demand more from the leaders in our society.  We need to advocate for federal bills such as the Preventing Animal Cruelty and Torture Act (PACT) and the PROTECT Act.

First, we need to protect the animals we love.  That means seeking out the most aggressive measures to protect our friends from theft, from violent crime, from abuse, and from senseless massacre.  We need a variety of measures, because not every situation is the same.  The owner of the dog run in Oakland should have had video cameras to deter the robbery, but the victim in Chopper’s story (read it here) needed laws preventing owning exotic animals from the very beginning.

Second, we need to make sure that the criminals get caught, get prosecuted, and get convicted.  That means treating animal crime seriously and appropriately.  We need to pass laws against the most brutal forms of animal crime, because not every form of animal abuse is handled the same.  We need to do this at a federal level because it’s too easy for criminals to use interstate commerce to sell dogs across state lines and to prevent out of state animals from receiving the protection of the law they need and deserve.

Luckily, I have good news for you!  Bills are currently under consideration that would do both of these things.  Both would crack down on animal crime and punish the criminals who commit it.  We just have to get them passed, and we need your help to do it.

\#\# The Preventing Animal Cruelty and Torture Act

Congress introduced the Preventing Animal Cruelty and Torture Act, or PACT Act, in early 2019.  The bill would not make certain acts illegal, since those are already illegal under state law.  Instead, the bill would crack down on the worst forms of animal cruelty.  Specifically, it would make it illegal to crush, slit, impale, drown, burn, and otherwise torture animals.  It would also extend the coverage of federal animal cruelty statutes to all animals, no matter how small or invisible.  Instead of requiring a visual, the new statute would simply require “sustained non-transitory mental state.”  In short, the new statute would make animal cruelty a federal offense for the worst forms of animal abuse.  It would make it so that a person who roasts a pit bull alive or tortures a sick animal for fun would go to jail.  PACT would be yet another statute to help us in our fight against abuse and neglect.  This bill is long overdue and would bring us closer to the day when every person in the United States who loves animals is protected by the law.

Luckily, our elected officials don’t have to worry about that.  While they were busy insulting one another, they somehow managed to pass both a 2018 and 2019 farm bill, without taking a moment to do something for the animals we care about.

We can’t rely on our elected officials to do the right thing and protect our animals.  Our only recourse is to take action ourselves, to hold our officials accountable, and to take back the country.  As such, we have to get the PACT Act passed.  We need to call our senators and representatives and tell them to support the PACT Act and to pass it in the next session of Congress.  While we’re at it, we should also thank them for supporting this bill, which would go a long way towards putting an end to animal abuse.

\#\# The Preventing Realistic, Organized Threats through Exploiting Children (PROTECT) Act

Not only should we pass laws to protect our animals, but we should also pass laws to stop their slaughter.  While it might sound like we’re asking for the impossible, it isn’t.  Federal law prohibits killing dogs and other animals in interstate commerce.  That is, you can’t take a dog from Kentucky and drive her to Ohio to kill her.  The Supreme Court has even ruled that this prohibition is constitutional, even when it comes to an incidental killing of animals that happens during the interstate transport of another good.  While the ruling does not prohibit the incidental killing of animals, it does prohibit the intentional killing of animals during interstate transport.

But what does that have to do with the interstate}

\llamabox{lies, neighbors and ourselves. It's a tough, complicated issue that cuts across religious, ethnic and national lines and divides the masses. It takes stronger, more effective action than good intentions or prayers to truly stop terrorism. The power to address and prevent terrorism can only come from our representatives, both locally and federally, who must do more to protect and defend our nation.}{iles, friends and our nation from the atrocities of war. The best thing that we as American citizens can do is to put an end to the ‘just cause’ of the war.
To remove our forces, place a moratorium on the war, and take a stand as Americans and support a definite plan of action to stop the escalation.
We must stop and ask ourselves, ‘Do we support the war in the Middle East?’ Do we support the dismemberment of a country and the murder of countless innocent civilians? If not, let’s act on it.
The Westboro Baptist Church will be holding a protest in front of your homes in Redding on Memorial Day, May 28th, 2018, in the Oak Grove area.
The Westboro Baptist Church is the organization that recently filed a lawsuit against the now deceased Clayton Nickens, a veteran, claiming that a grieving mother could not honor her son’s death and should remove his flag.
This organization is associated with racism, homophobia and hatred. They thrive on the hurt and grief of others and travel around the country protesting at the funeral services of our soldiers and children of those who are murdered in mass shootings.
Our veterans need to be protected by those who support their freedom and their right to a peaceful burial.
In the spirit of Memorial Day, which is the remembrance of our fallen soldiers, let’s gather in a way to support our veterans and their families.
We suggest all military families, active service members, veterans and their supporters to come out in numbers to counter the protestors.
With this, we want to respectfully ask that you come out to a location to be determined later and bring your flags and signs to counter the protest.
This is not just a local issue but a national issue that affects us all.
I support an event called Peace Rally in Fair Lawn, N.J. — 12 noon, May 27, 2018. Please help spread the word.}

\phibox{ished people from the destructive and malevolent circumstances they face every day. Yet we continue to preach spiritual healing as the first step to ending their pain, as though we're attempting to negotiate peace with a foe. 

"Empathy is understanding someone else's struggles without pitying them."

It's not about making assumptions or superficially sympathizing with their plight, nor about telling them to just believe harder. It's about recognizing that their adversities are deeply rooted in systemic failings that not even their spirit and faith can resolve. They need our support, resources, and tangible solutions to help alleviate the hardships they're dealing with.

We can't afford to remain in a delusional comfort zone of spiritual platitudes. Instead, we must address the urgent problems confronting our vulnerable populations head-on, actively working towards systemic change that allows them to prosper, while not letting their struggles go unnoticed. It's high time we truly prioritize the well-being of the most vulnerable among us, and let our actions reflect that prioritization.}{ously endangered species, including the black rhino. This assertion might seem a bit strong, but hear me out. If we want to save the black rhino and other endangered species, we need to go beyond words and take concrete actions.

Thoughts and prayers, as crucial as they may be for many people, don't directly translate to conservation actions. They don't fund anti-poaching efforts, don't help enforce wildlife trafficking laws, don't build the necessary infrastructure to protect habitats, and certainly don't address the root causes of species endangerment, such as habitat loss and poaching.

Here's what we need:

1. **Stronger Legislation and Enforcement**: Poaching and illegal wildlife trade remain substantial threats to the black rhino. Laws need to be enacted and enforced to protect wildlife and punish those who engage in these activities.

2. **Community Involvement and Education**: Communities living near the habitats of endangered species are crucial to conservation efforts. By involving these communities and educating them about the importance of biodiversity, we can cultivate a local culture of conservation.

3. **Habitat Restoration and Expansion**: The loss of habitat is a primary reason behind species endangerment. Restoration of these habitats and, where possible, expansion, are essential to provide a safe environment for these species to thrive.

4. **Support for Conservation Organizations**: Conservation organizations are at the forefront of protecting endangered species. By supporting these organizations financially, through donations or volunteering, we can contribute to their ongoing work.

5. **Promote Responsible Tourism**: Tourism can be a double-edged sword. If managed well, it can generate revenue for conservation efforts while creating awareness among tourists about the importance of biodiversity and the threats faced by endangered species.

6. **Research and Monitoring**: Continued research and monitoring are essential to understanding the needs of endangered species and developing effective conservation strategies.

While we shouldn't discount the power of thoughts and prayers, we must recognize that they are not enough. If we want to see a tangible impact on endangered species, we need to take a more active role in conservation.}

\end{document}